\documentclass[aps,prx,reprint,superscriptaddress,longbibliography,nofootinbib]{revtex4-2}

\makeatletter
\newif\ifeqcontrib@this
\newif\ifeqcontrib@any
\eqcontrib@thisfalse
\eqcontrib@anyfalse

\newcommand{\eqcontrib}{\global\eqcontrib@thistrue\global\eqcontrib@anytrue}

\newcommand{\eqcontribmark}{\textsuperscript{\ensuremath{,\S}}}

\newcommand{\eqcontrib@maybe}{%
  \ifeqcontrib@this
    \global\eqcontrib@thisfalse
    \eqcontribmark
  \fi
}

\newcommand{\printEqContrib}{%
  \ifeqcontrib@any
    \begingroup
      \renewcommand{\thefootnote}{\ensuremath{\S}}%
      \footnotetext{These authors contributed equally to this work.}%
    \endgroup
  \fi
}

\def\doauthor#1#2#3{%
  \ignorespaces#1\unskip\@listcomma
  \begingroup
    #3%
  \@if@empty{#2}{\endgroup{}{}}{\endgroup{\comma@space}{}\frontmatter@footnote{#2}}%
  \eqcontrib@maybe
  \space \@listand
}%
\makeatother

\usepackage{comment}
\usepackage[T1]{fontenc}
\usepackage{lmodern}
\usepackage{microtype}
\usepackage{amsmath,amssymb,mathtools,bm}
\usepackage{amsthm}
\usepackage{graphicx}
\usepackage{booktabs}
\usepackage{xcolor}
\usepackage{hyperref}
\usepackage[nameinlink,capitalise]{cleveref}
\usepackage[switch]{lineno}

\nolinenumbers
\hypersetup{
    colorlinks=true,
    linkcolor=blue,
    citecolor=blue,
    urlcolor=blue
}

\newtheorem{theorem}{Theorem}
\newtheorem{proposition}[theorem]{Proposition}
\newtheorem{corollary}[theorem]{Corollary}
\newtheorem{lemma}[theorem]{Lemma}
\newtheorem{definition}[theorem]{Definition}

\newtheorem{example}[theorem]{Example}

\makeatletter
\newcommand{\supplementtableofcontents}{%
    \section*{Contents}%
    \@starttoc{stoc}%
}
\makeatother
\newcommand{\suppsection}[1]{%
    \section{#1}%
    \addcontentsline{stoc}{section}{\protect\numberline{\thesection}#1}%
}

\usepackage{physics}

\newcommand{\cP}{\mathcal P}
\newcommand{\cL}{\mathcal L}
\newcommand{\cD}{\mathcal D}
\renewcommand{\Tr}{\operatorname{Tr}}
\newcommand{\supp}{\operatorname{supp}}
\newcommand{\poly}{\operatorname{poly}}
\newcommand{\e}{\mathrm e}
\renewcommand{\norm}[1]{\left\lVert #1\right\rVert}
\renewcommand{\abs}[1]{\left\lvert #1\right\rvert}

\begin{document}
\title{Dissipation accelerates quantum and classical simulation of open-system dynamics}

\author{Armando Angrisani\eqcontrib}
\affiliation{Institute of Physics, Ecole Polytechnique F\'ed\'erale de Lausanne, 1015 Lausanne, Switzerland}
\affiliation{Centre for Quantum Science and Engineering, Ecole Polytechnique F\'{e}d\'{e}rale de Lausanne (EPFL),   Lausanne, Switzerland}
\author{Ricard Puig\eqcontrib}
\affiliation{Institute of Physics, Ecole Polytechnique F\'ed\'erale de Lausanne, 1015 Lausanne, Switzerland}
\affiliation{Centre for Quantum Science and Engineering, Ecole Polytechnique F\'{e}d\'{e}rale de Lausanne (EPFL),   Lausanne, Switzerland}
\author{Yanting Teng}
\affiliation{Institute of Physics, Ecole Polytechnique F\'ed\'erale de Lausanne, 1015 Lausanne, Switzerland}
\affiliation{Centre for Quantum Science and Engineering, Ecole Polytechnique F\'{e}d\'{e}rale de Lausanne (EPFL),   Lausanne, Switzerland}
\author{Zo\"{e} Holmes}
\affiliation{Institute of Physics, Ecole Polytechnique F\'ed\'erale de Lausanne, 1015 Lausanne, Switzerland}
\affiliation{Centre for Quantum Science and Engineering, Ecole Polytechnique F\'{e}d\'{e}rale de Lausanne (EPFL),   Lausanne, Switzerland}

\date{\today}

\begin{abstract}
Simulating open quantum systems reveals how environmental coupling shapes relaxation, excitation transport, and the dynamics of quantum correlations. On quantum hardware, dissipative channels add operations and might seem to increase cost. However, we show that a broad class of Pauli noise, including depolarization, can ease quantum and classical simulation by exponentially suppressing high-weight components of Heisenberg-evolved observables. For bounded-degree Lindblad dynamics, this permits system-size-independent Trotter steps when estimating local observables. Combining this compression with Richardson extrapolation, we provide classical and quantum error bounds with respect to the 2-norm of the propagated observable. These give high-probability expectation-value guarantees for random input states or random Hamiltonians acting on a fixed state and control (out-of-time-order) correlators. Concretely, we show that expectation values of local observables at time $t$ can be estimated to accuracy $\varepsilon$ using circuits of maximum depth $\mathcal{O}\!\left(\left[1+\left(t/\gamma\right)^{3/2}\right]\log^2(1/\varepsilon)\right)$, independent of the system size $n$. Classically, we show that sparse Pauli propagation runs in time polynomial in $n$, $t$ and $1/\varepsilon$ for every fixed $\gamma>0$, with polynomial degree $\mathcal O(1/\gamma)$ in the weak dissipation limit. The gap between our quantum and classical upper bounds leaves room for a substantial polynomial quantum advantage as dissipation weakens, an observation further supported by our numerical results.
\end{abstract}

\maketitle
\printEqContrib

\section{Introduction}

Simulating open quantum systems is essential for understanding phenomena in which coupling to an environment shapes the observed dynamics. Such simulations connect NMR relaxation to molecular tumbling and internal motion~\cite{redfield1957theory,chen2018ab, seetharamDigitalQuantumSimulation2023}, describe excitation transport through molecular and biological complexes~\cite{mohseni2008environment,ishizaki2009theoretical}, and characterize transport and nonequilibrium steady states~\cite{prosen2011exact}. They also help explain decoherence and the emergence of classical behavior~\cite{joos1985emergence,habib1998decoherence}, while quantifying how noise limits quantum sensing~\cite{smirne2016ultimate,puig2024dynamical}.

On quantum hardware, environmental coupling appears to add cost because open-system evolution requires dissipative channels alongside coherent operations. However, the resulting dissipation can also simplify the observables whose dynamics we seek to approximate. To see why this might help quantum simulation, consider a product-formula approximation built from local coherent and dissipative operations. Coherent interactions generally spread an initially local observable through the system. As the observable spreads, more commutators can contribute to its product-formula error in Hamiltonian dynamics~\cite{childs2021theory}. Here we show that dissipation can counter this growth to permit larger time steps and thereby reduce simulation depths. 

Classically, direct evolution of an $n$-qubit density operator requires up to $4^n$ coefficients, compared with up to $2^n$ amplitudes for a state vector~\cite{donvil2022quantum,sander2025large}. Several methods reduce this cost by exploiting additional structure. Quantum-trajectory methods replace density-matrix evolution with stochastic sampling of state-vector trajectories~\cite{pocklington2025efficient,cao2026dynamically, chen2024optimized}, while low-rank and tensor-network methods compress the dynamics when the required rank or bond dimension remains controlled~\cite{dalibard1992wave,zwolak2004mixed,sander2025large, werner2016positive,cichy2026classical}. Their efficiency therefore depends on the sampling cost or the compressibility of the evolving states and operators.

Spatial locality provides a complementary route to reducing both quantum and classical simulation costs. Locality of Lindbladians can sharpen product-formula error bounds \cite{wang2026lindbladian,barthel2012quasilocality}, while light-cone bounds allow classical simulations of local observables to be restricted to a finite spatial neighborhood~\cite{barthel2012quasilocality,wild2023classical}. The cost of this spatial truncation grows with the neighborhood's volume, particularly in higher dimensions and at longer times. Our approach exploits a different structure: dissipation suppresses Pauli weight, which can remain controlled even as the observable's spatial support expands.

Pauli propagation provides a natural representation for exploiting this structure classically. It expands a Heisenberg-evolved observable in the Pauli basis and tracks its coefficients on a classical computer \cite{rudolph2025pauli,beguvsic2023fast}. Under interacting dynamics, an initially local observable can become a sum of exponentially many Pauli operators, making exact tracking expensive. Noisy-circuit algorithms show that local noise can suppress this growth and, in some regimes, permit polynomial-time simulation~\cite{gao2018efficient,noh2020efficient,fontana2023classical,aharonov2023polynomial,schuster2024polynomial,gonzalez2024pauli,angrisani2025simulating, martinez2025efficient}. However, these guarantees assume noise of fixed strength per layer or gate. A finer product-formula approximation contains more layers with proportionally weaker noise in each layer, so these arguments do not usefully apply to simulating continuous-time dynamics. The challenge is to obtain a guarantee that remains valid as the time step tends to zero.

In this work, we show that dissipation can ease both quantum and classical open quantum system simulation by compressing Heisenberg-evolved observables. For a class of local Pauli-diagonal noise that includes depolarization, high-weight components of initially local observables are exponentially suppressed, and only a controlled number of Pauli coefficients remain appreciable. We call this \emph{dissipation-induced compression}. Our bounds compare the continuous coherent and dissipative rates directly, so they remain valid as the product-formula step shrinks. 

On quantum hardware, observable compression allows larger Trotter steps than error bounds that ignore dissipation would suggest. For local observables in bounded-degree systems, the step-size guarantee is independent of system size and holds with high probability over input states drawn from a 1-design. Thus, the improvement lies in the error bound for a standard Trotter simulation. We then apply standard Richardson extrapolation, combining simulations at several step sizes to cancel leading errors. Theorem~\ref{thm:main-quantum-simulation} establishes that for fixed evolution time and other local parameters, the maximum circuit depth scales as $\gamma^{-3/2}\log^2(1/\varepsilon)$ in the weak-damping regime, where $\gamma$ is the minimum local damping rate. 

Classically, the observable compression allows Pauli propagation to discard small coefficients even as the product formula is refined. With Richardson extrapolation, we obtain in Theorem~\ref{thm:main-classical-simulation} an error $\varepsilon$ in time \(    \widetilde{\mathcal O}\left[        n+\left(            \frac{\e+Jt(1+J/\gamma)}{\varepsilon}        \right)^{\mathcal O(1+J/\gamma)}    \right] \), for evolution time $t$ and fixed local parameters, where $J$ bounds the local coherent strength. The runtime is polynomial in the system size and inverse accuracy for every fixed $\gamma>0$. For gate-based local depolarizing noise, the same framework improves the prior general runtime upper bound from quasipolynomial to polynomial \cite{schuster2024polynomial}.

Our results follow from a bound on the normalized Hilbert--Schmidt error between exact and simulated Heisenberg observables. While Theorems~\ref{thm:main-quantum-simulation} and~\ref{thm:main-classical-simulation} state high-probability expectation-value guarantees for input states drawn from a 1-design, our results also apply directly to infinite-temperature correlations, out-of-time ordered correlators (OTOCs)~\cite{roberts2015diagnosing,maldacena2016bound, roberts2017chaos, google2025observation, barron2026observable}, and one clean qubit model (DQC1)~\cite{knill1998power} expectation values without changing the analysis.

The two bounds leave room for a widening quantum advantage as dissipation weakens: the quantum depth bound grows through a polynomial prefactor in $1/\gamma$, while the classical runtime bound pays an increasing power of $1/\varepsilon$, proportional to $J/\gamma$. Our numerics show a complementary separation in how the two resources grow with system size. In an all-to-all Heisenberg benchmark at fixed error, increasing the system from $3$ to $12$ qubits raises the estimated ideal circuit depth by only a factor of about $5$--$10$, while the measured Pauli-propagation time rises by roughly four to five orders of magnitude (Fig.~\ref{fig:relative_resource}). This difference in relative growth is generally greatest at weaker noise. Although the curves do not compare absolute architecture-dependent end-to-end runtimes, they suggest weakly dissipative dynamics is a promising setting for quantum advantage. 

\section{Framework}
\label{sec:framework}

\subsection{Model}
We consider classical and quantum algorithms for simulating the dynamics of an $n$-qubit system governed by a time-independent local Lindbladian. To analyse our algorithms it is convenient to view both simulation tasks in the Heisenberg picture, where an observable $O$ evolves under the generator $\cL$ as
\begin{equation}
    O(t)=\e^{t\cL}(O).
\end{equation}
For an initial state $\rho$, this evolution determines expectation values of the form $\Tr[\rho O(t)]$. We will also consider correlation functions built directly from evolved observables, including OTOCs.

We assume that the Lindbladian decomposes into local blocks,
\begin{align}
    \cL&=\sum_{a=1}^{m}\cL_a,\\
    \cL_a&=\mathcal H_a+\cD_a.
    \label{eq:main-local-block-decomposition}
\end{align}
Here the index $a$ labels a local block containing a coherent generator $\mathcal H_a$ and a dissipative contribution $\cD_a$. To make the coherent local structure explicit, we expand the Hamiltonian in the phase-free Pauli basis
\[
    \cP_n\coloneqq\{I,X,Y,Z\}^{\otimes n}.
\]
For $P\in\cP_n$, its support $\supp(P)$ is the set of qubits on which it acts nontrivially, and its Pauli weight is
\[
    \abs{P}\coloneqq\abs{\supp(P)}.
\]

The coherent part of the dynamics is specified by the Hamiltonian operator
\begin{equation}
    H=\sum_{a=1}^{m}H_a,
    \qquad
    H_a\coloneqq J_aP_a \, ,
\end{equation}
where $P_a\in\cP_n$ and $J_a\in\mathbb R$, with $\abs{P_a}\leq k$ and $\abs{J_a}\leq J$. In the main text, we assume $k \in \mathcal{O}(1)$ to simplify our bounds. We further assume that each qubit lies in $\supp(P_a)$ for at most $d$ values of $a$. Thus, $k$ bounds the size of each interaction, $J$ bounds its strength, and $d$ bounds the number of interactions meeting at any qubit. Here $H_a$ is a local Hermitian operator on the system Hilbert space. In the Heisenberg picture, it induces the linear generator
\begin{equation}
    \mathcal H_a(O)
    \coloneqq i[H_a,O]
    =iJ_a[P_a,O] 
\end{equation}
on observables, as appears in the coherent part of Eq.~\eqref{eq:main-local-block-decomposition} and acts on observables as $e^{t\mathcal H}(O)=e^{itH}Oe^{-itH}$.

Each coherent interaction $\mathcal H_a$ is paired with a local Markovian noise process $\cD_a$ acting on the same region. Writing
\[
    S_a\coloneqq\supp(P_a),
\]
we describe this process in the Heisenberg picture by a valid Lindblad generator $\cD_a$ supported on $S_a$. We assume that $\cD_a$ is diagonal in the Pauli basis. That is, every Pauli operator is an eigenoperator:
\begin{equation*}
    \cD_a(P)=-\lambda_a(P)P,
    \qquad
    P\in\cP_n.
\end{equation*}
Pauli diagonality means that the noise does not mix different Pauli operators. Instead, it damps the coefficient of each Pauli operator at a rate $\lambda_a(P)$. We assume that such rate is non-zero for every non-identity Pauli with support in $S_a$ and $a \in [m]$ , and suppose 
\begin{align}
   \gamma \leq  \lambda_a(P) \leq \bar \gamma 
\end{align}
for two positive values $\gamma, \bar \gamma$.

The rate $\bar\gamma$ bounds the local dissipative strength and enters our product-formula and truncation error bounds. To simplify these bounds, we assume $\bar\gamma \in \mathcal{O}(J)$ in the main text, meaning that the noise strength is at most of the same order as the interaction strength. A reader who is more familiar with noise channels used in circuit models might find it helpful to consider an evolution interval of duration $\delta$. The Heisenberg action of the corresponding local noise channel is
\begin{equation*}
     \mathcal N_{a,\delta}
    \coloneqq \e^{\delta\cD_a}.
\end{equation*}
Since the Pauli operators are eigenoperators of $\cD_a$, the channel acts as
\begin{equation*}
    \mathcal N_{a,\delta}(P)
    =
    \e^{-\delta\lambda_a(P)}P
    =
    \left(1-\delta\lambda_a(P)+O(\delta^2)\right)P.
\end{equation*}
Thus, $\mathcal N_{a,\delta}$ has the Pauli-diagonal form familiar from circuit `Pauli noise' models, but $\lambda_a(P)$ is a continuous-time decay rate, not the probability of a Pauli error in one step. Over a short step, the coefficient of $P$ is linearly in $\delta$ suppressed to leading order.

Local depolarizing noise is the canonical example of this setting. More generally, the setting encompasses stochastic Pauli processes, including spatially correlated Pauli noise, provided that every nonidentity Pauli mode is damped. Our results do not cover Pauli noise with undamped directions, such as single-axis dephasing and bit-flip noise. They also exclude amplitude damping, which is not Pauli diagonal. 

\subsection{Product-formula approximation}
\label{sec:product-formula}

To approximate the continuous-time evolution by products of local evolutions, we group local blocks into parallel color layers~\cite{tranter2019ordering}. Specifically, we color the labels $a \in [m]$ so that blocks with overlapping supports receive different colors. This yields color classes $C_1,\ldots,C_\chi$, with $\chi \leq k(d-1)+1$, and we define
\begin{equation}
    \mathcal H^{(c)}
    \coloneqq \sum_{a\in C_c}\mathcal H_a,
    \qquad
    \cD^{(c)}
    \coloneqq \sum_{a\in C_c}\cD_a.
\end{equation}
Within each color class, the local maps have disjoint supports and can therefore be applied in parallel. In particular, for distinct $a,b \in C_c$, the Hamiltonian terms $H_a$ and $H_b$ act on disjoint sets of qubits, allowing the unitaries $e^{-iH_a\delta}$ and $e^{-iH_b\delta}$ to be applied simultaneously. For example, the nearest-neighbor Hamiltonian on an open chain, $H = \sum_{i=1}^{n-1} Z_i Z_{i+1}$, admits a two-color decomposition: one class contains the terms with odd $i$, and the other contains those with even $i$. The color layers sum to the full generator, $\cL=\sum_{c=1}^{\chi}(\mathcal H^{(c)}+\cD^{(c)})$.

The familiar first-order Trotter formula approximates the Lindbladian evolution by applying the coherent and dissipative parts of each color layer in succession. For a step of duration $h$, let
\begin{align}
    T_{c,h}
    &\coloneqq
    \e^{h\cD^{(c)}}\e^{h\mathcal H^{(c)}},
    \\
    \Phi_h
    &\coloneqq
    T_{\chi,h}\cdots T_{1,h}.
    \label{eq:main-first-order-product-formula}
\end{align}
Thus $T_{c,h}$ first evolves the observable coherently and then dissipatively within color $c$. For $h=t/N$, the repeated approximation $\Phi_h^N$ converges to $\e^{t\cL}$ as $N\to\infty$. Taking shorter steps reduces the ``Trotter error'', but reaching the same evolution time then requires more layers.

To reduce the Trotter error, and thereby simulation costs, both algorithms below will use Richardson extrapolation~\cite{wang2026lindbladian,low2019well, rendon2024improved, watson2024randomly, watson2025exponentially} (as detailed in Section~\ref{sec:methods}). When doing so, it is advantageous to use a symmetric product formula, rather than the first-order one, because this accelerates the convergence of Richardson extrapolation. Namely, we symmetrize the coherent and dissipative evolution within each color and arrange the color layers in a palindromic order. Define the update for color $c$ and the full step by
\begin{align}
    Q_{c,h}
    &\coloneqq
    \e^{h\cD^{(c)}/2}
    \e^{h\mathcal H^{(c)}}
    \e^{h\cD^{(c)}/2},\\
    S_h
    &\coloneqq
    Q_{1,h/2}\cdots Q_{\chi-1,h/2}
    Q_{\chi,h}
    Q_{\chi-1,h/2}\cdots Q_{1,h/2}.
    \label{eq:main-symmetric-product-formula}
\end{align}
For $h=t/N$, the approximation $S_h^N$ converges to $\e^{t\cL}$ as $N\to\infty$. On quantum hardware this is the usual symmetric product-formula circuit. The Heisenberg-picture form is simply the representation used in our analysis.

\subsection{Pauli propagation}
\label{sec:pauli-propagation}

Pauli propagation~\cite{rudolph2025pauli, beguvsic2023fast, rudolph2023classical, shao2023simulating, angrisani2024classically, lerch2024efficient, fontana2023classical, beguvsic2024real, fuller2025improved, rall2019simulation, beguvsic2023simulating, rudolph2026thermal, aharonov2022polynomial, schuster2024polynomial, gonzalez2024pauli, cirstoiu2024fourier, angrisani2025simulating, teng2025leveraging, xu2026classical, martinez2025efficient} is a classical simulation method for simulating quantum circuits. The algorithm works in the Pauli basis in the Heisenberg picture by writing an observable as \(    O=\sum_{P\in\cP_n}c_P P \) and storing its Pauli strings and coefficients. The action of a circuit is then computed via applying each gate as local update rules in the Pauli basis. Coherent evolution leaves $P$ unchanged when it commutes with $P_a$. When it anticommutes, it rotates $P$ into a second Pauli direction,
\begin{equation}
    \e^{s\mathcal H_a}(P)
    =
    \cos(2J_as)P+\sin(2J_as)Q_{a,P},
    \label{eq:main-pauli-propagation-rotation}
\end{equation}
where $Q_{a,P}\coloneqq iP_aP$ is a signed Pauli operator. Dissipative evolution rescales each Pauli operator,
\begin{equation}
    \e^{s\cD_a}(P)=\e^{-s\lambda_a(P)}P.
    \label{eq:main-pauli-propagation-noise}
\end{equation}
Thus, noise rescales Pauli coefficients, whereas coherent evolution can create new ones.

The difficulty is that repeated coherent updates can make the number of Pauli terms grow exponentially and therefore truncation schemes are essential. Contributions to the same Pauli operator are first combined (i.e., merged), after which we discard coefficients smaller than a threshold $\tau$ and Pauli operators of weight greater than a cutoff $w$. The coefficient cutoff controls the number of stored terms. The weight cutoff controls the cost of propagating each retained term through the next layer.

Previous guarantees for Pauli propagation in noisy circuits rely on every circuit layer reducing the magnitude of nonidentity Pauli coefficients by a fixed factor. As these coefficients become small, they can be discarded without introducing much error. This prevents the number of coefficients that must be tracked from growing uncontrollably.

This argument does not directly apply to a product-formula approximation of continuous-time dynamics. If $h=t/N$, each dissipative sub-step has duration $s=O(h)$. An active Pauli coefficient is then multiplied by $\e^{-s\lambda_a(P)}\leq\e^{-\gamma s}$, so the fraction by which it is guaranteed to decrease is only
\begin{equation}
    1-\e^{-\gamma s} \in O(\gamma h) \in O(\gamma t/N) \, .
\end{equation}
Thus we face two interrelated barriers: (i) each layer suppresses the coefficients less and less as the Trotter error is reduced by taking $N\to\infty$ and (ii) the number of layers at which truncation is performed grows as $N$. Substituting this vanishing noise strength, and growing depths, into existing noisy-circuit bounds gives estimates that deteriorate as the Trotter error is reduced. 

We resolve these issues via the observation that the same small step also limits how much amplitude the coherent evolution can transfer into new Pauli coefficients. During a step of duration $h$, a local Hamiltonian term changes the coefficients it couples by $O(Jh)$, while the paired dissipator suppresses each active coefficient by a relative amount $O(\gamma h)$. Thus, refining the product formula weakens coefficient generation and suppression in tandem. Their competition is governed by the continuous rates $J$ and $\gamma$, and in particular by the ratio $J/\gamma$, rather than by the vanishing noise in any one Trotter layer. Our key observation is that this balance limits how much weight can spread into a growing collection of small coefficients. We capture this balance through a Pauli-norm bound that is uniform in $h$. Consequently, small coefficients can be discarded after successive layers without the truncation error deteriorating as the Trotter resolution is increased. Below we state the resulting algorithm complexities with the intuitions and methods that lead to them provided in more detail in Section~\ref{sec:methods}.

\section{Results}
\label{sec:results}

\subsection{Assumptions and performance metric}

We work throughout with the local Lindbladian and damping assumptions of Section~\ref{sec:framework} and the symmetric product formula described in Section~\ref{sec:product-formula}. Let $O=O^\dagger$ be supported on at most $w_0$ qubits and normalized so that $\norm{O}_2=(2^{-n}\Tr[O^2])^{1/2}=1$. We estimate
\begin{equation}
    \langle O(t)\rangle_\rho
    \coloneqq
    \Tr\!\left[\rho\,O(t)\right]
    =
    \Tr\!\left[\rho\,\e^{t\cL}(O)\right].
    \label{eq:main-target-expectation}
\end{equation}
We draw $\rho$ from an ensemble $\mathsf E$ satisfying $\mathbb E_{\rho\sim\mathsf E}[\rho]=I/2^n$ and measure the root-mean-square error over this ensemble. For example, $\mathsf E$ may be uniform over any orthonormal product basis. By Markov's inequality, a root-mean-square (RMS) error of $\varepsilon\sqrt{\delta}$ implies an additive error of at most $\varepsilon$ with probability at least $1-\delta$ over $\rho\sim\mathsf E$.

We state the main theorems for expectation values here, but their proofs first bound the error between the exact and simulated Heisenberg observables in normalized Hilbert--Schmidt norm. We state and prove these observable-level bounds in the appendices, then use them to obtain the expectation-value guarantees above. As detailed in Appendix~\ref{supp:operational}, the same bounds also control infinite-temperature correlation functions, OTOCs, and DQC1 output expectations without changing the underlying simulation analysis. They further imply high-probability expectation-value guarantees over input states sampled from a state $1$-design. For a fixed Pauli observable, analogous guarantees hold over Hamiltonians drawn from a suitable Pauli-invariant ensemble, for any input state fixed independently of the Hamiltonian.

For the classical result, we assume that the local terms are specified classically, and that $\Tr(\rho P)$ can be evaluated in time $\mathcal O(\abs{P})$ for every retained Pauli operator $P$. For the quantum result, we assume that $\rho$ can be prepared and that each local coherent or dissipative evolution can be implemented in constant depth for fixed $k$. The case $J=0$ is Pauli diagonal and can be treated exactly, so below we take $J>0$.

\subsection{Quantum simulation}

The quantum algorithm uses a standard symmetric product formula to simulate the local, time-independent Lindblad evolution. Our contribution is a stronger guarantee for this procedure. Noise suppresses the high-weight Pauli terms that would otherwise cause the product-formula error to grow as the observable spreads. Consequently, with high probability over $\rho\sim\mathsf E$, one can use a larger time step than worst-case bounds allow. The required step size is set by local parameters rather than by $n$. We further improve the precision dependence using Richardson extrapolation, an existing error-cancellation technique. The algorithm simulates the same total time $t$ with several different product-formula step sizes and combines the resulting expectation values with signed weights. These weights cancel the leading powers of the step-size error.

\begin{theorem}[Quantum simulation]
\label{thm:main-quantum-simulation}
Under the target problem assumptions and quantum access model above, for every $0<\varepsilon\leq1$, there exists a quantum algorithm that outputs an estimate $\widehat f$ of $\langle O(t)\rangle_\rho$ satisfying
\begin{equation}
    \left(
        \mathbb E_{\rho\sim\mathsf E}
        \abs{\widehat f-\langle O(t)\rangle_\rho}^2
    \right)^{1/2}
    \leq \varepsilon.
\end{equation}
The maximum circuit depth obeys
\begin{equation}
    \begin{aligned}
    D_{\mathrm Q}
    &\in
    \mathcal O\!\Bigg[
        \Bigg(d+d^{5/2}J^{3/2}t^{3/2}\\[-2pt]
    &\hspace{4.2em}{}
        \times\left(1+w_0+\frac{J}{\gamma}\right)^{3/2}
       \Bigg) \log^2\!\left(\frac{\e}{\varepsilon}\right)
    \Bigg].
    \end{aligned}
    \label{eq:main-quantum-depth}
\end{equation}
\end{theorem}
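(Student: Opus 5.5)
We reduce the expectation-value guarantee to an observable-level bound in normalized Hilbert--Schmidt norm. Since $\mathbb E_{\rho\sim\mathsf E}[\rho]=I/2^n$ and $\rho$ is a density operator, for any Hermitian $A$ we have $\mathbb E_\rho\abs{\Tr[\rho A]}^2\leq \norm{A}_2^2$. We will justify this inequality in the appendix via the 1-design/orthonormal-basis structure, e.g.\ $\sum_j \langle j|A|j\rangle^2\leq \Tr A^2$. It therefore suffices to build a Richardson combination $\widetilde O=\sum_{j}b_j S_{h_j}^{N_j}(O)$ with $\norm{\e^{t\cL}(O)-\widetilde O}_2\leq\varepsilon/2$ and $\sum_j\abs{b_j}\in\mathcal O(\mathrm{polylog}(1/\varepsilon))$. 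Each term $\Tr[\rho S_{h_j}^{N_j}(O)]$ is then estimated by sampling circuits of depth $\mathcal O(\chi N_j)=\mathcal O(dN_j)$. Shot noise is added in quadrature and contributes only to sample count, not depth.

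The core step is a dissipation-induced compression lemma. Write $\norm{\cdot}_{2,w}$ for the 2-norm restricted to Pauli strings of weight $\geq w$. We will show that both $\e^{s\cL}(O)$ and $S_h^{N}(O)$ satisfy a uniform-in-$h$ bound of the form
\[
\sum_P \e^{\mu\abs{P}}\abs{c_P(s)}^2\leq \e^{\mu w_0},
\qquad \mu\sim \gamma/(dJ).
\]
To prove it, we differentiate this weighted norm along the flow. Each coherent term $\mathcal H_a$ moves amplitude between $P$ and $iP_aP$, whose weights differ by at most $k$. It is antisymmetric in the unweighted inner product, so its contribution to the weighted norm is bounded by roughly $J(\e^{\mu k}-1)$ times the active mass. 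Each dissipator removes at least $2\gamma$ times the active mass. Choosing $\mu$ so that $d\,J(\e^{\mu k}-1)\leq\gamma$ makes the weighted norm nonincreasing. For the product formula, the same computation applies to each exponential factor separately, because each is an exact flow of a single color's generator. This gives a bound that is uniform in $h$, which is the key point.

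Next we obtain a Trotter error expansion. By the variation-of-parameters formula $\e^{t\cL}-S_h^N=\sum_\ell S_h^{N-\ell}(\e^{h\cL}-S_h)\e^{(\ell-1)h\cL}$, and since $S_h^{N-\ell}$ is a 2-norm contraction (a unital Pauli-diagonal channel composed with unitaries), it suffices to bound the local error $(\e^{h\cL}-S_h)$ on $\e^{s\cL}(O)$. The symmetric formula has local error given by nested commutators of order $h^3$ and higher; more precisely, there is an asymptotic expansion $S_h^N(O)=\e^{t\cL}(O)+\sum_{p\geq1}h^{2p}E_{2p}(t)+R$. Each nested commutator of order $q$ applied to a Pauli $P$ of weight $\abs{P}$ gives at most $(c\,d\,J(\abs{P}+1))^{q}$ terms of bounded norm. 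Weighting by the compression bound, $\sum_P\abs{c_P}^2\abs{P}^{2q}\lesssim (q/\mu)^{2q}\e^{\mu w_0}$ gives effective strength $\Lambda\sim dJ(1+w_0+J/\gamma)$ independent of $n$. The coefficients therefore satisfy $\norm{E_{2p}}_2\leq t\,(C\Lambda p)^{2p+1}$ up to factorial-type growth, which is controlled by truncating the expansion at order $2K$.

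Finally, we perform Richardson extrapolation with $K\sim\log(1/\varepsilon)$ nodes $h_j=t/N_j$, using the standard well-conditioned choice of \cite{low2019well,watson2024randomly}. This cancels $E_2,\dots,E_{2K-2}$, leaving error roughly $t\Lambda(C\Lambda K h)^{2K}$. We need $h\lesssim 1/(\Lambda K)$ adjusted so that the residual is at most $\varepsilon$. Balancing against the $t$ prefactor, specifically taking the base step with $N\sim K\,(t\Lambda)^{3/2}$-type scaling from the third-order local error, gives maximum $N_j\in\mathcal O\big((1+(t\Lambda)^{3/2}/d)\,K^2\big)$ and hence the depth in~\eqref{eq:main-quantum-depth}. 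The main obstacle is the step bounding $E_{2p}$ rigorously and uniformly in $h$. This requires combining the BCH/commutator expansion for nonunitary Lindbladian flows with the weighted compression bound at every intermediate time, with constants that grow only like $p^{\mathcal O(p)}$. I would first try the approach of \cite{wang2026lindbladian}: expand via iterated integrals and apply the weighted-norm lemma to each intermediate flow.
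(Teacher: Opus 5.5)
Your overall architecture matches the paper's: reduction to the Hilbert--Schmidt norm, weighted-norm compression, an even-power expansion of the symmetric step, and well-conditioned Richardson nodes. However, the step that actually produces the depth bound is missing, and the estimates you write for it are wrong in a way that matters.

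The correction coefficients are not linear in $t$. In your normalization, $h^{2p}E_{2p}(t)$ is a sum over $r\leq p$ ordered insertions of the logarithm coefficients $\mathcal K_{2j}$. These are separated by exact evolutions and integrated over a simplex of volume $t^r/r!$, so $h^{2p}E_{2p}(t)$ contains pieces of size $(\Lambda h)^{2p}(t\Lambda)^r/r!$. Already at order $h^4$ there is the double insertion $\int\int\e^{(t-s_2)\cL}\mathcal K_2\e^{(s_2-s_1)\cL}\mathcal K_2\e^{s_1\cL}\,\mathrm ds_1\,\mathrm ds_2$, of size $\sim t^2\Lambda^6$. Your bound $\|E_{2p}\|\leq t(C\Lambda p)^{2p+1}$ retains only the $r=1$ term. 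The remainder $t\Lambda(C\Lambda Kh)^{2K}$ you infer from it would give $N\sim Kt\Lambda$, essentially linear in $t$.

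The exponent $3/2$ comes precisely from the multi-insertion terms. The paper proves $\|R_{q,N}(t)\|_{2,\theta_w\to2,0}\leq6q\bigl(\sqrt{\e}\,qx\sqrt{1+x}/N\bigr)^{2q}$ with $x\propto\chi(2J\e^{k\theta_w}+\bar\gamma)t/\theta_w$. This forces every $N_j\gtrsim qx^{3/2}$. With nodes $N_j=b\kappa_j$ and $q\leq\kappa_j\leq3q^2$, it gives $N_{\max}\sim q^2x^{3/2}$ and depth $\sim\chi N_{\max}$, hence $d+d^{5/2}(\cdots)^{3/2}$. Your ``$N\sim K(t\Lambda)^{3/2}$-type scaling'' is therefore asserted rather than derived. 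The extra $1/d$ in your $N_j$ is also unjustified; it would give $d^{3/2}$.

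An asymptotic expansion for a fixed finite system is also not enough. Richardson needs a non-asymptotic remainder bound at the actual $N_j$, uniform in $n$. Such a bound can only hold in the mixed induced norm $\|\cdot\|_{2,\theta\to2,0}$, not as an unweighted $2\to2$ norm.

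The paper closes this gap in several steps:
\begin{itemize}
\item It uses weighted contractivity of $\e^{s\cL}$ for arbitrary inputs. This is needed because after the first insertion the operator is no longer of the form $\e^{s\cL}(O)$.
\item It uses weighted Schur-test bounds $\|\mathcal H^{(c)}\|_{2,\alpha\to2,\beta}\leq2J\e^{k\theta_w}/(\e(\alpha-\beta))$ and $\|\mathcal D^{(c)}\|_{2,\alpha\to2,\beta}\leq\bar\gamma/(\e(\alpha-\beta))$. It splits the weight budget evenly over the $2\ell+r$ generator factors, which produces $(2\ell+r)^{2\ell+r}$ growth.
\item It proves a one-step remainder bound from the Taylor series inside a radius $\tau_p\propto(\alpha-\beta)/[\chi(2J\e^{k\theta_w}+\bar\gamma)(p+1)]$.
\item It telescopes $S_h^N$ against the truncated expansion $F_q(s,h)=\e^{s\cL}+\sum_{a<q}(h/s)^{2a}E_a(s)$. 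Your telescoping against $\e^{h\cL}$ only yields the un-extrapolated $O(th^2)$ bound and loses the structure Richardson must cancel. The paper uses the semigroup identity for $\e^{s(\log S_\xi)/\xi}$ to write each local defect as $R_{q,1}(h)\e^{s\cL}+\sum_b(h/s)^{2b}R_{q-b,1}(h)E_b(s)$.
\end{itemize}

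Three smaller corrections to your compression step:
\begin{itemize}
\item The factor $d$ in $\mu\sim\gamma/(dJ)$ is unnecessary and costly. Each $\mathcal H_a$ comes with its own $\mathcal D_a$, which damps the whole active sector $\mathsf K_a$ at rate at least $\gamma$, so the balance is per block: $2J\sinh(k\theta)\leq\gamma$. Keeping $d$ would give $(1+w_0+dJ/\gamma)^{3/2}$, which is inconsistent with your $\Lambda$.
\item The individual factors of $S_h$ are not contractive, since $\e^{h\mathcal H^{(c)}}$ alone can increase the weighted norm. Contractivity must be shown for each block map $\e^{h\mathcal D_a/2}\e^{h\mathcal H_a}\e^{h\mathcal D_a/2}$ on $\mathsf K_a$ and then tensorized within a colour.
\item You must cap the weight parameter at order $1/(k+w_0)$, as in $\theta_w=\min\{\theta_\star,1,1/(2k+w)\}$, so that $\e^{\mu w_0}=O(1)$. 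This cap is where the $w_0$ in $1+w_0+J/\gamma$ comes from.
\end{itemize}
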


The system-size-independent step-size guarantee is the new ingredient in this result, compared to other similar algorithms~\cite{wang2026lindbladian, barthel2012quasilocality, wang2026query,chen2026query,chen2025randomized,kato2026exponentially,ding2024simulating,ding2024single,cleve2016efficient}. For fixed $w_0$, its weak-damping dependence scales as $\gamma^{-3/2}$. Richardson extrapolation then gives the polylogarithmic precision dependence using $O(\!\log(1/\varepsilon))$ circuit variants. Appendix~\ref{sec:quantum_compar} gives a more detailed comparison with state of the art algorithms.

\subsection{Classical simulation}

The classical algorithm applies the sparse Pauli-propagation procedure of Section~\ref{sec:pauli-propagation} to the same symmetric product-formula circuits. After each layer, it combines identical Pauli strings and discards small coefficients and high-weight terms. Our contribution is to show that dissipation-induced compression controls the accumulated truncation error uniformly in the product-formula resolution. However, the retained Pauli expansion still grows as the target error decreases, so the runtime remains polynomial in $1/\varepsilon$. 

\begin{theorem}[Classical simulation]
\label{thm:main-classical-simulation}
Under the target problem assumptions and classical access model above, for every $0<\varepsilon\leq1$, there exists a classical algorithm that outputs an estimate $\widehat f$ of $\langle O(t)\rangle_\rho$ satisfying
\begin{equation}
    \left(
        \mathbb E_{\rho\sim\mathsf E}
        \abs{\widehat f-\langle O(t)\rangle_\rho}^2
    \right)^{1/2}
    \leq \varepsilon.
\end{equation}
Suppressing multiplicative prefactors that depend only on the fixed physical parameters, its total runtime obeys
\begin{equation}
    T_{\mathrm C}
    \in
    \widetilde{\mathcal O}\!\left[
        n+(1+t^{3/2})^{1+\beta_\star}
        \varepsilon^{-\beta_\star}
        \log^{3+5\beta_\star}\!\left(\frac{\e}{\varepsilon}\right)
    \right],
    \label{eq:main-classical-runtime}
\end{equation}

where 
\begin{equation}
    \beta_\star
    \coloneqq
    \max\!\left\{2,\frac{4J}{\gamma}\right\}
    +\frac{k\log 2}
           {\operatorname{arsinh}(\gamma/(2J))}
    =\mathcal O\!\left(1+\frac{J}{\gamma}\right).
    \label{eq:main-classical-exponent}
\end{equation}
In the weak-damping regime, this scaling simplifies to 
\begin{equation}
T_{\mathrm C}\in\widetilde{\mathcal O}
(n+\varepsilon^{-\mathcal O(J/\gamma)}) \, .
\end{equation}
\end{theorem}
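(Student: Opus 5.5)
The proof reduces the expectation-value guarantee to an observable-level bound and then controls the Pauli-propagation cost uniformly in the Trotter resolution. Since $\mathbb E_{\rho\sim\mathsf E}[\rho]=I/2^n$, for any operator $A$ we have $\mathbb E_\rho\abs{\Tr[\rho A]}^2\le \mathbb E_\rho \Tr[\rho A^2]=\norm{A}_2^2$ for Hermitian $A$. It therefore suffices to output a classically represented observable $\widetilde O$ with $\norm{\widetilde O-\e^{t\cL}(O)}_2\le\varepsilon$ and set $\widehat f=\Tr[\rho\widetilde O]$. We evaluate this sum term by term, at cost $\mathcal O(\abs P)$ per retained Pauli string, plus $\mathcal O(n)$ to read the input. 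We split the error into three parts: the Richardson-extrapolated Trotter bias, and the truncation errors from coefficient thresholding and from weight cutoff.

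\textbf{Step 1 (norm control uniform in $h$).} Define the weighted Pauli norm $\norm{A}_{\mu}\coloneqq\sum_P \e^{\mu\abs P}\abs{c_P}$, restricted to non-identity $P$. We show that each layer $Q_{c,h}$ is contractive up to a factor $\e^{\mathcal O(h)}$ in this norm, with $\mu$ chosen so that generation and damping balance. A rotation by $\mathcal H_a$ moves amplitude at rate $\le 2J$ onto strings whose weight changes by at most $k$, which increases the weighted norm at rate $\lesssim 2J(\e^{k\mu}\!-1)$ or via $\sinh$. Meanwhile $\cD_a$ damps every active string at rate $\ge\gamma$. Choosing $\mu$ with $2J\sinh(\mu)\lesssim\gamma/2$, i.e.\ $\mu\approx\operatorname{arsinh}(\gamma/(2J))$, yields $\norm{S_h^N(O)}_\mu\le \e^{\mu w_0}2^{w_0}$ uniformly in $N$. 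The $k\log2/\operatorname{arsinh}$ term in $\beta_\star$ then comes from counting strings of a given weight near a fixed support. By Markov-type reasoning, this gives an exponential tail in weight: the $\ell_2$ mass above weight $w$ is at most $\e^{-\mu w}$ times a constant. Hence truncating at weight $w\sim\mu^{-1}\log(1/\varepsilon)$ costs $\varepsilon$.

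\textbf{Step 2 (coefficient truncation and term count).} After each layer we merge strings and drop coefficients below $\tau$. Because truncation commutes with the contractive bound in the same weighted $\ell_1$ norm, the accumulated dropped mass in $\ell_2$ is bounded by $\sqrt{\tau\cdot(\text{$\ell_1$ mass})}$ per layer. The key improvement over per-gate noise arguments is to charge the dropped mass against the continuous-rate damping: a coefficient that is dropped must subsequently decay at rate $\gamma$, so summing over $N$ layers yields a geometric sum in $h$ that stays $\mathcal O(\gamma^{-1}\cdot\text{rate})$ rather than $\mathcal O(N)$. The number of retained terms is at most $\ell_1/\tau$. With $\tau\approx\varepsilon^{\max\{2,4J/\gamma\}}$ up to polylogs, this gives the first term of $\beta_\star$. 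I expect this to be the main obstacle: showing that the accumulated truncation error does not scale with $N$. My first attempt would be a Duhamel/telescoping argument, $\sum_j S_h^{N-j}(\text{dropped}_j)$, using that $S_h^{N-j}$ contracts non-identity components in $\norm{\cdot}_2$ by $\e^{-\gamma(N-j)h}$, since identity components are never dropped.

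\textbf{Step 3 (Richardson).} Using the symmetric formula, the bias $S_h^{t/h}(O)-\e^{t\cL}(O)$ has an even expansion in $h$ whose coefficients are nested commutators. Their $\norm{\cdot}_2$ is bounded via the Step 1 weighted norm, independently of $n$. We therefore take $L=\mathcal O(\log(\e/\varepsilon))$ step sizes with $h_\ell\sim t^{-1/2}$ scale. This fixes the largest $N\sim(1+t^{3/2})$ times polylog, and the Richardson weights are bounded by $\operatorname{polylog}$. Running the truncated propagation for each $h_\ell$ to accuracy $\varepsilon/\operatorname{polylog}$ and combining the results gives the total error $\varepsilon$. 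The runtime is $\sum_\ell N_\ell\cdot(\#\text{terms})\cdot\operatorname{poly}(w,d)$, which produces $(1+t^{3/2})^{1+\beta_\star}\varepsilon^{-\beta_\star}\log^{3+5\beta_\star}$ after substituting $\tau$ and $w$.
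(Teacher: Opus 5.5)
Your Step~1 fails in exactly the regime the theorem targets. The weighted Pauli-$1$ norm $\sum_P \e^{\mu|P|}|c_P|$ is not contracted by the layers when $\gamma<2J$, for any $\mu\geq0$. On an anticommuting pair, $\e^{s\mathcal H_a}(P)=\cos(2J_as)P+\xi_{a,P}\sin(2J_as)Q$ multiplies this norm by $|\cos(2J_as)|+\e^{\mu(|Q|-|P|)}|\sin(2J_as)|$. So it grows at rate up to $2J\e^{k\mu}\geq 2J$, not $2J\sinh(k\mu)$. The $\sinh$ rate belongs to the weighted Pauli-$2$ norm, where the rotation is an isometry at zero weight. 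Hence for $\gamma<2J$ the per-layer factor exceeds one, and iterating over $N=t/h$ steps gives only a bound exponential in $t$, which is incompatible with the claimed polynomial $t$-dependence. No $\ell_1$ argument can produce the sparsity exponent $4J/\gamma$; a uniform $\ell_1$ bound would give exponent $2$.

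The missing idea is to split the two roles between two different norms.
\begin{itemize}
\item For sparsity, use the unweighted Pauli-$p$ norm with $p_\star\leq p<2$ and $p_\star=2/(1+\min\{1,\gamma/(2J)\})$. It is exactly contracted: Riesz--Thorin interpolation between the local $\ell_1$ bound $\e^{t(2|J_a|-\gamma)}$ and the $\ell_2$ bound $\e^{-\gamma t}$ on the active sector gives rate $2|J_a|(2/p-1)-\gamma\leq0$. The estimates dropped mass $\leq\tau^{1-p/2}\|A\|_p^{p/2}$ and retained count $\leq\tau^{-p}\|A\|_p^p$ then give $\alpha_\star=2p_\star/(2-p_\star)=\max\{2,4J/\gamma\}$.
\item For weight, use the weighted Pauli-$2$ norm $\sum_P\e^{2\theta|P|}|c_P|^2$ with $\theta\leq\theta_\star=k^{-1}\operatorname{arsinh}(\gamma/(2J))$ (note the $1/k$). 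Its contractivity controls only the weight tail.
\end{itemize}

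Your Step~2 obstacle is a red herring, and the proposed fix would not remove $N$-dependence anyway. Even granting $\e^{-\gamma h}$ contraction of traceless parts per step, $\sum_j\e^{-\gamma(N-j)h}\approx1/(\gamma h)=N/(\gamma t)$, which is still linear in $N$. What is actually needed is only that the \emph{per-layer} truncation bound be independent of $h$. This holds because every symmetric colour update contracts both norms by a factor $\leq1$, with no $\e^{\mathcal O(h)}$ slack, and truncation can only decrease them. Every pre-truncation iterate therefore inherits the initial norms of $O$.

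Linear accumulation over the $L$ layers via Pauli-$2$ contractivity is then harmless. Richardson extrapolation gives $L_j=(2\chi-1)N_j=\mathcal O((1+t^{3/2})q^2)$, with $q=\mathcal O(\log(\e/\varepsilon))$ and $\sum_j|\omega_j|\leq48q^3$. In fact the factors $(1+t^{3/2})^{1+\beta_\star}$ and $\log^{3+5\beta_\star}(\e/\varepsilon)$ in the statement come precisely from this linear accumulation. The per-layer budget is $\eta/(2L)$ and the total cost is $q\,L^{1+\beta_\star}\eta^{-\beta_\star}$ with $\eta=\varepsilon/(2\sum_j|\omega_j|)$. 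An $N$-independent accumulation would therefore contradict the runtime you assert.

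Finally, the $k\log2/\operatorname{arsinh}(\gamma/(2J))$ term is not a string-counting effect. It is the branching factor $2^w$: a string of weight $\leq w$ splits into up to $2^w$ strings in one colour layer. Evaluated at $w\approx\theta_\star^{-1}\log(2L\|O\|_{2,\theta_\star}/\eta)$, this becomes $(L/\eta)^{\log 2/\theta_\star}$. Your per-term cost $\operatorname{poly}(w,d)$ omits this factor.
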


The new compression bound controls the cost of the truncated propagation. Applying the same Richardson extrapolation then reduces the number of propagated layers from $O(\varepsilon^{-1/2})$ to $O(\log^2(1/\varepsilon))$ and improves the precision dependence from $\varepsilon^{-(3\beta_\star+1)/2}$ to $\varepsilon^{-\beta_\star}$, up to logarithmic factors. The polynomial degree therefore grows as the damping weakens. Compared to other continuous-time classical algorithms this means that there is no exponential dependence in time or system~\cite{xu2026classical, wild2023classical, barthel2012quasilocality}. A more in-depth comparison is in Appendix~\ref{sec:quantum_compar}.

\medskip

\noindent\emph{Improved classical simulation of noisy circuits.} Our classical simulation guarantees extend to discrete-time dynamics, yielding improved runtime bounds for quantum circuits with local depolarizing noise. For estimating expectation values on randomly sampled input states, Ref.~\cite{schuster2024polynomial} establishes a polynomial runtime under uniform noise and a quasipolynomial runtime under gate-based noise. Gate-based noise acts only on the qubits participating in each gate and therefore requires no noise on idle qubits. Uniform noise, by contrast, acts on every qubit after each unitary layer, including idle qubits.

Our framework improves the gate-based-noise runtime from quasipolynomial to polynomial time. More generally, our runtime bound explicitly captures the competition between unitary evolution and noise through the ratio $\delta/\gamma$ defined below. While we state the result for gate-based noise, we remark that uniform noise is recovered as a special case by inserting noisy identity gates on the idle qubits of each layer.

\begin{theorem}[Classical simulation with gate-based noise]
\label{thm:main-gate-noise}
Let $\mathcal C$ be a noisy $n$-qubit circuit of $\poly(n)$ gates of the form $U_j=\e^{-i\delta H_j}$, with $\delta\geq0$ and $\norm{H_j}_{\mathrm{op}}\leq1$. Each gate acts on at most $k=O(1)$ qubits and is followed by independent local depolarizing noise on those qubits, with $\mathcal N_\gamma(P)=\e^{-\gamma}P$ for $P\in\{X,Y,Z\}$ and $\gamma>0$. Let $O$, be as in the target problem assumptions. Under the stated input-ensemble assumptions, for every $0<\varepsilon\leq1$, a deterministic classical algorithm that outputs an estimate $\widehat f$ of $\langle \mathcal{C}(O)\rangle_\rho$ satisfying
\begin{equation}
 \left(
 \mathbb E_{\rho\sim\mathsf E}
 \left|\widehat f- \langle \mathcal{C}(O)\rangle_\rho\right|^2
 \right)^{1/2}
 \leq\varepsilon
 \label{eq:main-gate-rms}
\end{equation}
in time
\begin{equation}
T_C \in \left(\frac{n}{\varepsilon}\right)^{
 \mathcal O(1+\delta/\gamma)}.
 \label{eq:main-gate-runtime}
\end{equation}
For bounded $\delta/\gamma$, this is polynomial in $n$ and $\varepsilon^{-1}$.
\end{theorem}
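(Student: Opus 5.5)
We propose to reuse the compression-plus-truncation machinery behind Theorem~\ref{thm:main-classical-simulation}, now on a discrete circuit instead of a product formula. We work in the Heisenberg picture with $O$ expanded as $O=\sum_P c_P P$. The quantity we control is the normalized Hilbert--Schmidt error $\norm{\widetilde O-\mathcal C(O)}_2$. Because $\mathbb E_\rho[\rho]=I/2^n$, the RMS error over $\mathsf E$ is at most this $2$-norm error, via the same second-moment argument used for the main theorems. Each layer of the algorithm does three things: apply $\mathcal N_\gamma\circ\mathcal U_j$ exactly to every retained Pauli, merge identical strings, and discard coefficients below a threshold $\tau$ and strings of weight above $w$. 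By Cauchy--Schwarz and contractivity of unital channels in the $2$-norm, the total error is at most the sum over gates of the $2$-norm of what is discarded at each step.

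The key step replaces the continuous-time compression bound with a per-gate weighted-norm inequality. For $\mu>0$ define
\[
\norm{A}_{\mu}^2=\sum_P \e^{2\mu\abs{P}}\abs{a_P}^2 .
\]
Take a gate $U_j=\e^{-i\delta H_j}$ on a set $S$ with $\abs{S}\le k$, and write $\mathcal U_j(P)-P=\int_0^\delta \e^{s\mathcal H_j}(i[H_j,P])\,ds$, where $\norm{i[H_j,P]}_2\le 2$. This shows that the amplitude leaving the weight sector of $P$ is $\mathcal O(\delta)$. The weight of $P$ can change by at most $k$, and only Paulis with $\supp(P)\cap S\neq\emptyset$ are affected. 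The depolarizing channel then multiplies every Pauli that is nontrivial on $S$ by $\e^{-\gamma}$ per active qubit. Combining the two, the goal is
\[
\norm{\mathcal N_\gamma\mathcal U_j(A)}_\mu\le\norm{A}_\mu \quad\text{whenever } \e^{2\mu k}\,\mathcal O(\delta)\lesssim 1-\e^{-\gamma},
\]
that is, for $\mu=\Theta(\min\{1,\gamma/\delta\})/k$. This is the discrete analogue of the $\operatorname{arsinh}(\gamma/2J)$ exponent. Since the weighted norm does not grow, the $2$-norm mass on weight $>w$ is at most $\e^{-\mu w}\norm{O}_{\mu}\le \e^{-\mu(w-w_0)}$ at every step. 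We choose $w=w_0+\mu^{-1}\log(\poly(n)/\varepsilon)$ so that the weight-truncation error summed over $\poly(n)$ gates is at most $\varepsilon/2$.

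For coefficient truncation we need the number of retained strings to be bounded. Rather than counting terms directly, we would bound an $\ell_1$-type weighted norm $\sum_P \e^{\mu\abs{P}}\abs{c_P}$. This is again monotone under each gate, with the gate's $\ell_1$-expansion bounded by $\e^{\mathcal O(\delta)}$ per active Pauli and compensated by the noise in the same way. It then follows that at most $\e^{\mathcal O(w)}/\tau$ coefficients have magnitude $\ge\tau$. Dropping the coefficients below $\tau$ costs at most $\sqrt{\tau\cdot\ell_1}$ in $2$-norm per step. Choosing $\tau=(\varepsilon/\poly(n))^2$ and accounting for the per-term propagation cost $\e^{\mathcal O(k)}$ gives runtime $\poly(n/\varepsilon)\cdot \e^{\mathcal O(w)}=(n/\varepsilon)^{\mathcal O(1/\mu k)}=(n/\varepsilon)^{\mathcal O(1+\delta/\gamma)}$.

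The main obstacle is the $\ell_1$ monotonicity. A unitary gate can spread one Pauli into up to $4^k$ Paulis with $\ell_1$ growth bounded only by $\e^{\mathcal O(\delta\cdot 4^k)}$, not $1+\mathcal O(\delta)$. We would first try to absorb this into the constant through $k=O(1)$, using the fact that noise damps by at least $\e^{-\gamma}$ every string touching $S$. If that is too lossy for large $\delta$, we would fall back on the $2$-norm bound alone and count retained terms via a weight-$w$ light-cone enumeration rooted at $\supp(O)$, which is polynomial because $w=\mathcal O(\log(n/\varepsilon)/\mu)$.
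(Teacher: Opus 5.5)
There is a genuine gap, located where you suspected: the step that bounds how many strings survive truncation.

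Your weighted-$2$ contraction is salvageable. The correct condition is $2^{k+1}\delta\sinh(\mu k)\lesssim\gamma$, not $\e^{2\mu k}\mathcal O(\delta)\lesssim 1-\e^{-\gamma}$; the latter already fails at $\mu=0$ once $\delta>\gamma$, although the unitary is an exact Pauli-$2$ isometry.

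The $\ell_1$-type norm, however, is monotone only when the unitary's Pauli-$1$ growth $\e^{2^{k+1}\delta}$ is beaten by the damping $\e^{-\gamma}$, roughly for $\delta/\gamma\lesssim 2^{-(k+1)}$. Outside that regime the per-gate excess factor $\e^{2^{k+1}\delta-\gamma}>1$ compounds over the $\poly(n)$ gates acting on the evolving strings. The Pauli-$1$ norm can then grow exponentially in depth, so this cannot be absorbed into a $k$-dependent constant.

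Your fallback does not give a polynomial bound either. Gates act on arbitrary $k$-subsets and idle qubits are undamped, so the backward light cone can be all $n$ qubits. It can then contain up to $\binom{n}{w}3^w$ strings of weight $w$. With $w=\Theta(\mu^{-1}\log(n/\varepsilon))$ this is $n^{\Theta((1+\delta/\gamma)\log(n/\varepsilon))}$, which is quasipolynomial. That is exactly the gate-based-noise bound of Ref.~\cite{schuster2024polynomial} that the theorem claims to improve.

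The missing idea is to interpolate between the two endpoints you already have.
\begin{itemize}
\item Riesz--Thorin applied to $\norm{\mathcal V_j}_{1\to1}\leq\e^{2^{k+1}\delta}$ and to the Pauli-$2$ isometry gives $\norm{\mathcal V_j}_{p\to p}\leq\exp[2^{k+1}\delta(2/p-1)]$. This tends to $1$ as $p\to 2$ at fixed $\delta$.
\item Combined with the factor $\e^{-\gamma}$ on the locally traceless sector, every noisy gate is Pauli-$p$ contractive for all $p\geq p_\star=2a/(a+2)$, with $a=\max\{2,2^{k+2}\delta/\gamma\}$. This holds whatever the ratio $\delta/\gamma$.
\item The $\ell_p$ counting bound $\#\{\abs{c_P}\geq\tau\}\leq\tau^{-p}\norm{X}_p^p$ and the pruning bound $\tau^{1-p/2}\norm{X}_p^{p/2}$ then replace your $\ell_1$ estimates.
\item With $\norm{O}_p\leq q^{1/a}$ and $\tau=q^{-1/2}(\varepsilon/q)^{(a+2)/2}$, each truncation costs at most $\varepsilon/q$ in Pauli-$2$ norm. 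At most $q(q/\varepsilon)^{a}$ terms are retained, where $a=2p_\star/(2-p_\star)=\mathcal O(1+\delta/\gamma)$. This is the origin of the exponent in the theorem.
\end{itemize}

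Finally, propagating gate by gate produces at most $4^k$ branches per string regardless of its weight. So neither a weight cutoff nor the weighted-$2$ norm is needed for this result.
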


We remark that the dependence on $\delta/\gamma$ sharpens the runtime bound for small-angle Pauli rotations at fixed noise per gate, a regime relevant to the product-formula circuits used in digital quantum simulation.
\section{Methods}
\label{sec:methods}

The two algorithms in Section~\ref{sec:results} use different simulation procedures, but their guarantees rest on a common bound on the evolving observable. We first explain how the Pauli-$2$ error is a relevant quantity. We then show how local dissipation controls both the number and the weight of appreciable Pauli terms. Finally, we use that control to bound the two algorithms and explain the additional gain from Richardson extrapolation.

\subsection{Why Pauli-$2$ error matters}
The normalized Hilbert--Schmidt norm measures error directly in the Pauli coefficients. For $A=\sum_{P\in\cP_n}c_P P$, we define
\begin{equation}
    \norm{A}_2^2
    := \frac{\Tr[A^\dagger A]}{2^n}
    = \sum_{P\in\cP_n}\abs{c_P}^2.
    \label{eq:methods-pauli-two}
\end{equation}
Several operational interpretations of this norm are detailed in Appendix~\ref{supp:operational}. In particular, it controls the root-mean-square error in expectation values over any state $1$-design (i.e. any ensemble with maximally mixed average). To see this, let $X:=O(t)-\widetilde O$ be the difference between the target observable and its approximation. For any state $1$-design $\mathsf E$,
\begin{equation}
    \mathbb E_{\rho\sim\mathsf E}
    \abs{\Tr(\rho X)}^2
    \leq \norm{X}_2^2.
\end{equation}
Thus, a Pauli-$2$ error bound of $\norm{X}_2\leq\varepsilon$ implies a root-mean-square expectation-value error of at most $\varepsilon$.

For classical Pauli propagation, Pauli-$2$ error is precisely the mass discarded during truncation. After combining contributions to the same Pauli string, we discard coefficients smaller than a threshold $\tau>0$. Writing $A=\sum_{P\in\cP_n}c_PP$, the truncated observable is
\[
    \mathcal T_\tau(A)
    \coloneqq
    \sum_{\abs{c_P}\geq\tau}c_PP.
\]
The resulting Pauli-$2$ error is
\begin{equation}
    \norm{A-\mathcal T_\tau(A)}_2^2
    =\sum_{\abs{c_P}<\tau}\abs{c_P}^2.
    \label{eq:methods-coefficient-error}
\end{equation}
The difficulty is to keep this error small after many layers while storing only a manageable number of strings.

For quantum simulation, Pauli weight controls the error of the symmetric product formula in Section~\ref{sec:product-formula}. Its leading error contains nested commutators of local generators. A local term can contribute only if it meets the current Pauli string or a region reached by an earlier commutator. At each fixed order, the number of relevant terms is governed by the string's weight and the local degree $d$, rather than directly by the total number of qubits. However, an ordinary Pauli-$2$ bound does not rule out an observable whose coefficients are concentrated on very high-weight strings. We need a bound that also controls this weight.

\subsection{Dissipation-induced compression}

Two stronger coefficient norms capture the forms of spreading relevant to the algorithms. For $1\leq p\leq2$ and $\theta\geq0$, define
\begin{equation}
    \norm{A}_p^p
    \coloneqq\sum_{P\in\cP_n}\abs{c_P}^p,
    \qquad
    \norm{A}_{2,\theta}^2
    \coloneqq\sum_{P\in\cP_n}
    \e^{2\theta\abs{P}}\abs{c_P}^2.
    \label{eq:methods-suppression-norms}
\end{equation}
Here $\norm{\cdot}_p$ is a norm of the Pauli coefficients, not the Schatten norm. A bound with $p<2$ limits how many coefficients can be appreciable. A bound with $\theta>0$ limits the total Pauli-$2$ mass at high weight.

Our key result is that both stronger norms are nonincreasing at all times for suitable values of $p$ and $\theta$, determined by the balance between dissipation and unitary scrambling.

\begin{theorem}[Dissipation-induced compression]
\label{thm:methods-suppression}
For $J>0$, under the local Lindbladian and damping assumptions of Section~\ref{sec:framework}, there are parameters $1\leq p_\star<2$ and $\theta_\star>0$, depending only on $J$, $\gamma$, and $k$, such that for every observable $A$ and $t\geq0$,
\begin{align}
    \norm{\e^{t\cL}(A)}_p
    &\leq\norm{A}_p,
    &&p_\star\leq p\leq2,
    \label{eq:methods-p-contraction}\\
    \norm{\e^{t\cL}(A)}_{2,\theta}
    &\leq\norm{A}_{2,\theta},
    &&0\leq\theta\leq\theta_\star.
    \label{eq:methods-weight-contraction}
\end{align}
Both inequalities also hold with $\e^{t\cL}$ replaced by the symmetric product-formula step $S_h$, for every $h\geq0$.
\end{theorem}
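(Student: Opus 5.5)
The plan is to reduce both contraction statements to a two-dimensional estimate for a single local block, and then assemble the blocks. I will first prove the claims for the symmetric step $S_h$. The semigroup $\e^{t\cL}$ then follows from $\e^{t\cL}=\lim_{N\to\infty}S_{t/N}^N$, which is stated in Section~\ref{sec:product-formula}, because all the norms involved are continuous on the finite-dimensional operator space. Since $S_h$ is a product of the maps $Q_{c,s}$, and each $Q_{c,s}$ is a product of commuting single-block maps
\[
Q_{a,s}\coloneqq\e^{s\cD_a/2}\e^{s\mathcal H_a}\e^{s\cD_a/2}, \qquad a\in C_c,
\]
(their supports are disjoint, so they can be composed in any order), it suffices to show that every $Q_{a,s}$ with $s\geq0$ is a contraction in $\norm{\cdot}_p$ for $p_\star\le p\le 2$ and in $\norm{\cdot}_{2,\theta}$ for $0\le\theta\le\theta_\star$.

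\textbf{Block decomposition.} Fix $a$ and split $\cP_n$ according to how each string interacts with $P_a$. If $P$ commutes with $P_a$, then $Q_{a,s}(P)=\e^{-s\lambda_a(P)}P$, and $\lambda_a(P)=0$ is allowed when $P$ acts as the identity on $S_a$. Either way this is a scalar contraction. If $P$ anticommutes with $P_a$, then $P$ and $Q_{a,P}=iP_aP$ span an invariant two-dimensional subspace, and these subspaces partition the anticommuting strings. Both strings restrict to non-identity Paulis on $S_a$, so both are damped at rates in $[\gamma,\bar\gamma]$. On such a pair, by Eqs.~\eqref{eq:main-pauli-propagation-rotation}--\eqref{eq:main-pauli-propagation-noise}, $Q_{a,s}$ acts as $\Lambda_1 R(\phi)\Lambda_2$. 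Here $\phi=2J_as$, $R(\phi)$ is a $2\times2$ rotation (up to signs), and $\Lambda_{1,2}$ are diagonal with entries at most $\e^{-\gamma s/2}$. The norms $\norm{\cdot}_p^p$ and $\norm{\cdot}_{2,\theta}^2$ are sums over these invariant pieces, so it is enough to check each $2\times 2$ piece. For that it suffices to prove
\[
\e^{-\gamma s}\,\norm{R(\phi)}_{\mathrm{op}}\le 1
\]
in the relevant norm, for all $\abs{\phi}\le 2Js$.

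\textbf{The $p$-norm.} The operator norm of $R(\phi)$ on $\ell_2^2$ is $1$, and on $\ell_1^2$ it is $\abs{\cos\phi}+\abs{\sin\phi}$. Riesz--Thorin interpolation then gives
\[
\norm{R(\phi)}_{p\to p}\le(\abs{\cos\phi}+\abs{\sin\phi})^{2/p-1}.
\]
Using $\log(\abs{\cos\phi}+\abs{\sin\phi})\le\abs{\phi}$, which holds because the derivative is at most $1$ and the function is bounded by $\tfrac12\log2$, the required inequality reduces to $(2/p-1)\,2J\le\gamma$. So I would take $p_\star=\max\{1,\,4J/(\gamma+2J)\}<2$.

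\textbf{The weighted norm.} The strings $P$ and $Q_{a,P}$ differ only on $S_a$, so $\abs{\,\abs{P}-\abs{Q_{a,P}}\,}\le k$. In the weighted basis the rotation becomes $DR(\phi)D^{-1}$ with $D=\mathrm{diag}(1,r)$ and $\e^{-\theta k}\le r\le \e^{\theta k}$. For the matrix $\begin{pmatrix}c& rs\\ -s/r& c\end{pmatrix}$, a singular-value computation gives a largest singular value of
\[
\sqrt{1+s^2\sinh^2\mu}+\abs{s}\sinh\mu=\e^{\operatorname{arsinh}(\abs{s}\sinh\mu)}\le \e^{\abs{\phi}\sinh(\theta k)},
\]
where $s=\sin\phi$ and $\mu=\abs{\log r}\le\theta k$. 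The condition becomes $2J\sinh(\theta k)\le\gamma$, so
\[
\theta_\star=\operatorname{arsinh}(\gamma/(2J))/k,
\]
which is consistent with the exponent in Theorem~\ref{thm:main-classical-simulation}.

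\textbf{Main obstacle.} The hardest step is this $2\times2$ weighted-rotation estimate, which must hold uniformly in $\phi$ rather than only to first order. My first attempt would be the explicit singular-value formula above. If the exact formula becomes messy, the fallback is to show that the logarithm of the norm is subadditive in $\phi$ and bounded by its derivative at $0$.

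Two further points need care. First, the rates $\lambda_a$ can differ between the two members of a pair, so the split damping $\Lambda_1,\Lambda_2$ is not a scalar. This causes no problem, since only the bound $\norm{\Lambda_i}\le \e^{-\gamma s/2}$ is used and diagonal matrices commute with the weight conjugation by $D$. Second, the signs in $Q_{a,P}$ do not affect any of these norms.
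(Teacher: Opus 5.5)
Your proposal is correct and follows essentially the same route as the paper. You split each block into commuting singlets and anticommuting pairs with both members damped at rate at least $\gamma$, interpolate (Riesz--Thorin) between the Pauli-$1$ bound $\abs{\cos\phi}+\abs{\sin\phi}\le\e^{\abs{\phi}}$ and the Pauli-$2$ isometry, compose over disjoint blocks and colours, and pass to $\e^{t\cL}$ by a product-formula limit; this yields exactly the paper's $p_\star$ and $\theta_\star$.

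The only difference is in the weighted rotation estimate. You bound it through its exact largest singular value $\e^{\operatorname{arsinh}(\abs{\sin\phi}\sinh\mu)}\le\e^{\abs{\phi}\sinh(k\theta)}$. This computation is correct, so your fallback is not needed. The paper instead runs a Gronwall argument on $\abs{y_P}^2+\abs{y_Q}^2$ to reach the same factor $\exp(2\abs{J_a}s\sinh(k\theta))$.
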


\begin{proof}[Proof sketch]
Fix a local block $a$ and consider a discretized step $\e^{h\mathcal H_a}\e^{h\cD_a}$. The dissipative term $\e^{h\cD_a}$ leaves Pauli operators outside $S_a$ unchanged and damps every Pauli operator that meets $S_a$ at rate at least $\gamma$. The coherent term $\e^{h\mathcal H_a}$ couples pairs of Pauli strings through two-dimensional rotations. Each rotation changes the weight by at most $\abs{P_a}\leq k$.

In the weighted Pauli-$2$ norm, a rotation can increase the norm at rate at most $2J\sinh(k\theta)$. The local damping offsets this increase whenever $2J\sinh(k\theta)\leq\gamma$. Thus we may take
\begin{equation}
    \theta_\star
    \coloneqq\frac{1}{k}
    \operatorname{arsinh}\!\left(\frac{\gamma}{2J}\right).
    \label{eq:methods-theta-star}
\end{equation}
For the unweighted Pauli-$p$ norm, the unitary rotation $\e^{h\mathcal H_a}$ increases the Pauli-$1$ norm at a rate at most $2J$, while preserving the Pauli-$2$ norm. The dissipative term $\e^{h\cD_a}$ damps the coefficients mixed by the rotation at a rate at least $\gamma$. Thus, we can interpolate between $p=1$ and $p=2$ to determine a coefficient $p_\star$ such that Pauli-$p_\star$ norm is nonincreasing. In particular, the Riesz-Thorin theorem gives
\begin{equation}
    p_\star
    \coloneqq
    \frac{2}{1+\min\{1,\gamma/(2J)\}}.
    \label{eq:methods-p-star}
\end{equation}

These rate balances also apply to the symmetric local map $\e^{h\cD_a/2}\e^{h\mathcal H_a}\e^{h\cD_a/2}$: its two dissipative half steps supply the full damping over time $h$. Maps on disjoint blocks tensorize within a color layer, and composing the color layers preserves the contraction. This proves the claim for $S_h$. Taking the product-formula limit gives the claim for $\e^{t\cL}$.
\end{proof}

The theorem gives two complementary forms of compression. For an initially local observable, the Pauli-$p$ bound limits the number of appreciable coefficients, while the weighted Pauli-$2$ bound makes high-weight contributions small. Both bounds hold for exact Lindblad evolution and after any number of symmetric product-formula steps, independently of the step size. We use weight suppression to control quantum Trotter error, then combine both bounds to justify sparse classical Pauli propagation.

\subsection{Consequences for quantum simulation}
\label{sec:methods-quantum-sim}
For quantum simulation, weight suppression allows us to choose a Trotter step size independent of system size at fixed target accuracy. For a step of duration $h$, define the error on an observable $A$ by
\begin{equation}
    \mathcal E_h(A)
    \coloneqq
    \bigl(S_h-\e^{h\cL}\bigr)(A),
    \label{eq:methods-one-step-product-error}
\end{equation}
where $S_h$ is defined in Eq.\ \eqref{eq:main-symmetric-product-formula}. The symmetric ordering cancels the error at order $h^2$, thus the leading term has the form 
\begin{equation}
    \mathcal E_h(A)
    =
    h^3\sum_{\mu,\nu,\lambda}
    \kappa_{\mu\nu\lambda}
    [\mathcal G_\mu,[\mathcal G_\nu,\mathcal G_\lambda]](A)
    +O_A(h^4).
    \label{eq:methods-nested-commutator-error}
\end{equation}
Here each $\mathcal G_\mu$ is a local coherent or dissipative generator, $\mathcal H_a$ or $\cD_a$. The brackets are commutators of generators, and the numerical coefficients $\kappa_{\mu\nu\lambda}$ depend on the symmetric ordering.

A nested commutator can act on a Pauli string $P$ only when its local blocks form an overlapping chain that reaches $\supp(P)$. At any fixed order, the number of relevant blocks is therefore governed by the weight $\abs{P}$ and the local degree $d$, rather than directly by $n$. However, as the observable evolves, it develops components of higher Pauli weight. Writing $A=\sum_P c_P P$, the weighted Pauli-$2$ norm controls their total squared coefficient mass:
\begin{equation}
    \sum_{P:\,|P|\geq w}\abs{c_P}^2
    \leq \e^{-2\theta w}\norm{A}_{2,\theta}^2.
    \label{eq:methods-weight-tail}
\end{equation}
Together with the uniform bound on $\norm{A}_{2,\theta}$, this exponential tail controls the weight moments entering the commutator estimates independently of system size.

The finite-step commutator bounds then control the one-step error, including the remainder beyond the leading term in \cref{eq:methods-nested-commutator-error}. Summing the errors over $N=t/h$ steps gives, for sufficiently small $h$ and a suitable $0<\theta\leq\theta_\star$,
\begin{equation}
    \norm{
        \bigl(S_h^N-\e^{t\cL}\bigr)(O)
    }_2
    \leq
    C_{\mathrm{loc},\theta}\,t h^2\norm{O}_{2,\theta},
    \label{eq:methods-quantum-product-error}
\end{equation}
where $C_{\mathrm{loc},\theta}$ depends on local parameters but not on $n$. For an observable initially supported on at most $w_0$ qubits, $\norm{O}_{2,\theta}\leq\e^{\theta w_0}\norm{O}_2$. Thus the step size needed to control Trotter error depends on local parameters and $w_0$, not on the system size.

\subsection{Consequences for classical simulation}

After each symmetric color update, the classical algorithm uses Theorem~\ref{thm:methods-suppression}'s Pauli-$p_\star$ bound to limit the number of retained strings and its weighted Pauli-$2$ bound to limit their weight.

Recall that $\mathcal T_\tau(A)$ discards coefficients smaller than $\tau$. Let $\Pi_{\leq w}$ discard strings of weight greater than $w$, and let $N_\tau(A)$ count the coefficients retained by $\mathcal T_\tau$. For $1\leq p<2$ and $\theta>0$,
\begin{align}
    \norm{A-\mathcal T_\tau(A)}_2
    &\leq \tau^{1-p/2}\norm{A}_p^{p/2},
    \label{eq:methods-coefficient-cutoff}\\
    N_\tau(A)
    &\leq \tau^{-p}\norm{A}_p^p,\\
    \norm{A-\Pi_{\leq w}(A)}_2
    &\leq \e^{-\theta(w+1)}\norm{A}_{2,\theta}.
    \label{eq:methods-weight-cutoff}
\end{align}
The first and third bounds control the errors caused by the two cutoffs. The middle bound controls how many strings remain.

The bounds in Eq.~\eqref{eq:methods-weight-cutoff} continue to apply as the algorithm repeatedly truncates. Each symmetric color update cannot increase the Pauli-$p_\star$ or weighted Pauli-$2$ norm (from Theorem~\ref{thm:methods-suppression}), and discarding coefficients cannot increase either norm (Eq.~\eqref{eq:methods-weight-cutoff}). Thus the expansion before every truncation has both norms bounded by their initial values on $O$. If the circuit contains $L$ color updates, its accumulated truncation error is therefore at most
\begin{equation}
    E_{\mathrm{tr}}
    \leq L\left[
        \tau^{1-p_\star/2}\norm{O}_{p_\star}^{p_\star/2}
        +\e^{-\theta_\star(w+1)}
         \norm{O}_{2,\theta_\star}
    \right].
    \label{eq:methods-accumulated-truncation}
\end{equation}

To make $E_{\mathrm{tr}}\leq\eta$, we give each cutoff an error budget of $\eta/(2L)$ at every update. The coefficient-truncation threshold can be chosen as
\begin{equation}
    \tau=
    \left(
        \frac{\eta}
             {2L\norm{O}_{p_\star}^{p_\star/2}}
    \right)^{2/(2-p_\star)} \, 
    \label{eq:methods-classical-coefficient-threshold}
\end{equation}
and weight-truncation level $w$ as
\begin{equation}
    w+1\geq
    \frac{1}{\theta_\star}
    \log\!\left(
        \frac{2L\norm{O}_{2,\theta_\star}}{\eta}
    \right).
    \label{eq:methods-classical-weight-cutoff}
\end{equation}
For a normalized $O$ supported on at most $w_0$ qubits, the initial norms do not depend on $n$. The required weight is then $w=O(w_0+\theta_\star^{-1}\log(\e L/\eta))$. Substituting the chosen $\tau$ into the counting bound shows that every retained expansion contains at most
\begin{equation}
    N_\tau
    \leq
    \left(
        \frac{2L\norm{O}_{p_\star}}{\eta}
    \right)^{\alpha_\star},
    \label{eq:methods-retained-count}
\end{equation}
where
\begin{equation}
    \alpha_\star
    \coloneqq\frac{2p_\star}{2-p_\star}
    =\max\!\left\{2,\frac{4J}{\gamma}\right\}.
    \label{eq:methods-sparsity-exponent}
\end{equation}
describes how the retained-string count grows as the target error decreases. This bound becomes weaker as damping decreases.

The cost of updating each string also depends on its weight. A string of weight at most $w$ meets at most $w$ disjoint blocks in one color layer. Each coherent rotation can produce two Pauli terms, while the dissipative half steps only rescale them, so one string produces at most $2^w$ branches. The chosen weight cutoff makes this factor polynomial in $L/\eta$. The resulting propagation cost is
\begin{equation}
    \widetilde{\mathcal O}\!\left(
        L^{1+\beta_\star}\eta^{-\beta_\star}
    \right),
    \qquad
    \beta_\star
    =\alpha_\star+\frac{\log 2}{\theta_\star}.
    \label{eq:methods-classical-cost}
\end{equation}
Here we suppress prefactors depending on the local parameters and $w_0$. The additional term in $\beta_\star$ is the cost of branching, which is why controlling the number of strings alone would not suffice. The preprocessing can be reused for every Richardson circuit.

\subsection{Richardson extrapolation}
The weight suppression argument in Section~\ref{sec:methods-quantum-sim} already permits a Trotter step size independent of system size at fixed accuracy $\varepsilon$. The resulting depth bound, however, scales as $\mathcal O(\varepsilon^{-1})$ (Appendix~\ref{supp:quantum}) and therefore becomes polynomial in $n$ when inverse-polynomial accuracy is required.

Richardson extrapolation reduces the maximum circuit depth to $\mathcal O(\log^2(\e/\varepsilon))$ at fixed time and local parameters, using interpolation techniques developed for Hamiltonian~\cite{low2019well, rendon2024improved, watson2024randomly, watson2025exponentially}, and Lindbladian simulation~\cite{wang2026lindbladian}. It combines simulations at different step sizes to cancel leading Trotter errors. For our symmetric formula, the fixed-time error expands in even powers of $h=t/N$. For any finite order $q$ and sufficiently small $h$,
\begin{equation}
    S_{t/N}^{N}
    =\e^{t\cL}
    +\sum_{\ell=1}^{q-1}N^{-2\ell}E_\ell(t)
    +R_{q,N}(t).
    \label{eq:methods-symmetric-expansion}
\end{equation}
(Note, this equality is an equality between \textit{maps} acting on observables.) For a fixed finite system, the correction maps are defined recursively by
\begin{equation}
    E_\ell(t)
    \coloneqq
    \lim_{N\to\infty}N^{2\ell}
    \left[
        S_{t/N}^{N}-\e^{t\cL}
        -\sum_{j=1}^{\ell-1}N^{-2j}E_j(t)
    \right] ,
    \label{eq:methods-richardson-corrections}
\end{equation}
where the sum is empty for $\ell=1$. At finite order $q$, the remainder map is
\begin{equation}
    R_{q,N}(t)
    \coloneqq
    S_{t/N}^{N}-\e^{t\cL}
    -\sum_{\ell=1}^{q-1}N^{-2\ell}E_\ell(t).
    \label{eq:methods-richardson-remainder}
\end{equation}
For fixed $q$, its controlled norm decreases as $N^{-2q}$ under the small-step condition, with a bound independent of the number of qubits.

On quantum hardware, we run $q$ different circuits. Circuit $j$ uses $N_j$ symmetric steps of duration $t/N_j$, so every circuit approximates evolution to the same time $t$. We combine their expectation values in classical postprocessing as
\begin{equation}
    f_{\mathrm{Rich}}(\rho)
    \coloneqq
    \sum_{j=1}^{q}\omega_j
    \Tr\!\left[\rho\,S_{t/N_j}^{N_j}(O)\right].
    \label{eq:methods-richardson-expectation}
\end{equation}
We choose the step counts and weights to cancel the leading terms in \cref{eq:methods-symmetric-expansion}. The condition $\sum_{j=1}^{q}\omega_j=1$ retains the exact evolution, while
\begin{equation}
    \sum_{j=1}^{q}\omega_jN_j^{-2\ell}=0
    \qquad (1\leq\ell<q)
    \label{eq:methods-richardson-conditions}
\end{equation}
cancels the first $q-1$ error corrections. With $q=O(\log(1/\varepsilon))$, the deepest circuit needs at most $O(\log^2(1/\varepsilon))$ steps at fixed $t$, $w_0$, and local parameters. The signed weights can amplify measurement and implementation errors, which are separate from the ideal circuit-depth bound.

The classical algorithm applies the same signed combination to its truncated Pauli-propagation estimates. The chosen weights satisfy $\sum_j\abs{\omega_j}=O(q^3)$, so it suffices to make each circuit's truncation error $\eta=O(\varepsilon/q^3)$. There are $q$ circuits with at most $L=O(q^2)$ color updates each, up to fixed physical-parameter factors. Substituting these choices into \cref{eq:methods-classical-cost} gives the precision factor
\begin{equation}
    q\,L^{1+\beta_\star}\eta^{-\beta_\star}
    =O\!\left[
        \varepsilon^{-\beta_\star}
        \log^{3+5\beta_\star}\!\left(\frac{\e}{\varepsilon}\right)
    \right],
    \label{eq:methods-classical-richardson-cost}
\end{equation}
as stated in \cref{thm:main-classical-simulation}. Thus, extrapolation improves the precision dependence of the existing propagation procedure.

\section{Numerical Implementations}

\begin{figure}
    \centering
    \includegraphics[width=\linewidth]{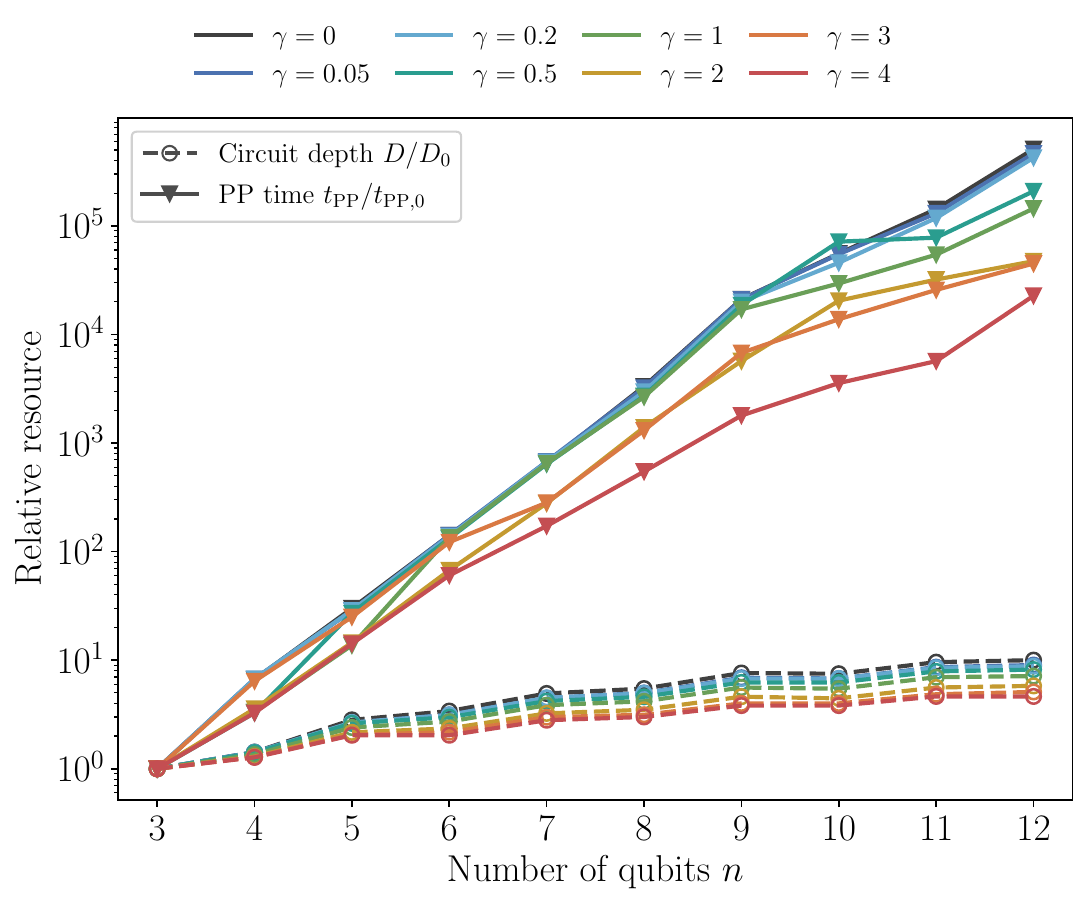}
    \caption{\textit{Relative resource growth for all-to-all Heisenberg dynamics.}    At final evolution time $T=0.3$ and normalized Hilbert--Schmidt    error target $\varepsilon=0.01$, dashed lines with open circles show    circuit depth $D/D_0 :=D(n,\gamma)/D(3,\gamma)$, and solid lines with filled    downward triangles show PP propagation time    $t_{\mathrm{PP}}/t_{\mathrm{PP},0}:=t_{\mathrm{PP}}(n,\gamma)/t_{\mathrm{PP}}(3,\gamma)$.    Colors identify the depolarization rate $\gamma$; each resource is    normalized to its own three-qubit value at the same noise rate.    Depth uses the nearest-integer interpolated Trotter-step estimate    from untruncated simulations. PP time is the fastest measured time    among sampled step counts and cutoffs meeting the total-error target.    Lines guide the eye; the ratios compare relative growth, not absolute    quantum and classical runtimes.}
    \label{fig:relative_resource}
\end{figure}

The gap between our asymptotic scaling analysis leaves room for a quantum advantage that increases as dissipation weakens. However, comparing upper bounds is clearly inconclusive. We therefore proceed to numerically explore the contrasting quantum and classical resource growth suggested by our theoretical analysis. In our long-range Heisenberg numerical benchmark~\cite{Fr_rot_2017}, classical propagation time grows substantially faster with system size than ideal quantum circuit depth. This benchmark uses a collective observable and a more physically relevant fixed input state~\cite{heyl2019quantum}, hence extends beyond the average case prediction in our theorems.

We compare the resources of an ideal quantum product-formula circuit and a classical Pauli-propagation (PP) implementation for systems of $n=3,\ldots,12$ qubits. At these sizes, direct integration of the continuous dynamics using \texttt{QuantumToolbox.jl} provides a numerical reference for the Heisenberg-evolved magnetization operator~\cite{mercurio2025quantumtoolbox}. We simulate the product-formula circuits using \texttt{PauliPropagation.jl}~\cite{rudolph2025pauli}, both without coefficient truncation, to isolate Trotter error, and with truncation, to measure the classical cost at a fixed total error. This benchmark uses a first-order product formula without Richardson extrapolation. In this section, $t$ denotes elapsed evolution time and $T$ the final time of a numerical run.

We consider the anisotropic Heisenberg Hamiltonian~\cite{Fr_rot_2017}:
\begin{equation}
    H=-J\sum_{i<j}\frac{X_iX_j+Y_iY_j+(1+J_z)Z_iZ_j}{d_{ij}^{3}},
    \label{eq:numerical-heisenberg-model}
\end{equation}
with $J=1$ and $J_z=-1.8$. The qubits reside on compact patches of the square lattice with open boundaries and unit lattice spacing; $d_{ij}$ denotes their Euclidean separation. Every pair interacts, with no system-size-dependent rescaling of the couplings. Independent single-qubit depolarization acts at rate $\gamma$, with Heisenberg generator $\cD_\gamma(P)=-\gamma\abs{P}P$. We propagate the magnetization operator,
\begin{equation}
    M_x(0)=\frac{1}{\sqrt n}\sum_{i=1}^n X_i.
\end{equation}
For the size scan in \cref{fig:relative_resource}, we evolve $M_x(t)$ to the fixed final time $T=0.3$, for noise strengths ranging from $\gamma=0$ to $\gamma=4$. For an approximation $\widetilde M_{x;N,\tau}(T)$ using $N$ Trotter steps and Pauli coefficient cutoff $\tau$, we measure the total operator error
\begin{equation}
    \begin{aligned}
        \varepsilon_{\mathrm{tot}}(N,\tau)
        &=\norm{\widetilde M_{x;N,\tau}(T)-M_x(T)}_2.
    \end{aligned}
    \label{eq:numerical-total-error}
\end{equation}

The quantum resource underlying \cref{fig:relative_resource} is the circuit depth estimated from untruncated simulations ($\tau=0$) at the target Trotter error. We count coherent layers of disjoint two-qubit gates, assuming all-to-all hardware connectivity, and for $\gamma>0$, one parallel layer of local depolarizing channels per Trotter step. For the classical calculation, PP discards Pauli coefficients with magnitude below $\tau$ after individual gates. At each $(n,\gamma)$, we scan both $N$ and $\tau$ and report the fastest measured PP wall-clock propagation time $t_{\mathrm{PP}}$ among runs satisfying $\varepsilon_{\mathrm{tot}}(N,\tau)\leq\varepsilon$ where we take $\varepsilon=0.01$. The comparison is directly against the continuous reference, so the criterion includes both Trotter and truncation errors. The classical optimum within the sampled grid can use a different step count from the quantum circuit estimate because it is chosen to jointly minimize truncation and Trotter errors. The implementation and interpolation details for the resource comparison are given in Supplementary Information~\ref{supp:numerics}.

Figure~\ref{fig:relative_resource} shows a substantially faster relative growth of classical propagation time $t_{\mathrm{PP}}(n,\gamma)$ than of circuit depth $D(n,\gamma)$ over the sampled sizes. To compare their growth despite their different units, we plot $D(n,\gamma)/D(3,\gamma)$ and $t_{\mathrm{PP}}(n,\gamma)/t_{\mathrm{PP}}(3,\gamma)$. From $n=3$ to $n=12$, the depth grows by approximately $5-10$ times across the sampled noise strengths, whereas PP time grows by roughly $10^4-10^5$ times. Increasing dissipation reduces the observed relative growth of PP time, consistent with more efficient classical simulation through suppression of Pauli coefficients. The connecting lines are guides to the eye.

Over $n=3,\ldots,12$, power-law fits to the data in \cref{fig:relative_resource} give $D\propto n^{1.15\text{--}1.73}$ and $t_{\mathrm{PP}}\propto n^{7.31\text{--}9.70}$ across the plotted noise strengths (Supplementary Information~\ref{supp:numerics}), hinting towards a substantial polynomial speed-up on quantum hardware. Although additional quantum hardware runtime considerations (in particular measurement repetitions and circuit reset times) would likely significantly reduce any gap in practice, we are cautiously optimistic that for sufficiently large $n$ that this seeming polynomial gap could leave room for advantage on near-term hardware. Architecture-specific resource estimates are needed to determine whether this potential speedup persists once fault-tolerant overheads are included.

Finally, we extend the PP calculation to a $7\times7$ lattice of 49 qubits, using the same Hamiltonian, open boundaries, and depolarizing noise as above. For the initial product state $\ket{+}^{\otimes n}$, we monitor the magnetization per spin, \(    m_x(t)=\langle M_x(t)\rangle/\sqrt n \). Figure~\ref{fig:heisenberg-7x7-dynamics} compares four noise strengths and three coefficient cutoffs, using the fixed step size $\delta=3/320$. We show the saved trajectories in $0\leq t\leq0.9$. The same normalization of $M_x$ and the same coefficient-truncation convention are used as in the size scan. At higher noise strength, the magnetization converges faster with a higher truncation threshold, whereas with lower noise strength, the predictions converge more slowly with smaller truncation $\tau$. This is consistent with our expectation that noise enables classical simulation to converge faster. 

\begin{figure}
    \centering
    \includegraphics[width=\linewidth]{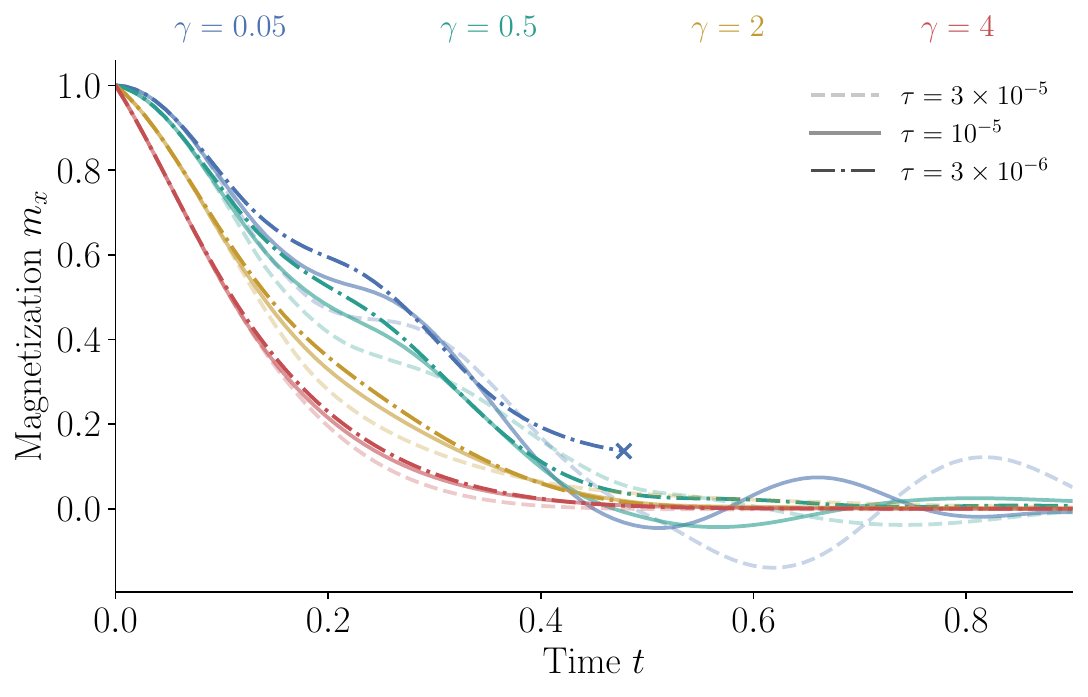}
    \caption{\textit{Magnetization $m_x(t)$ on a $7\times7$ lattice.}    Colors indicate $\gamma=0.05,0.5,2,4$ for the Hamiltonian in    \cref{eq:numerical-heisenberg-model}. Dashed, solid, and dash-dotted curves use coefficient cutoffs $\tau=3\times10^{-5}$, $10^{-5}$, and $3\times10^{-6}$. Coarser cutoffs are drawn more faintly for visualization. Crosses mark early endpoints at the memory limit.}
    \label{fig:heisenberg-7x7-dynamics}
\end{figure}

\section{Discussion}

We have shown that dissipation can make open-system dynamics easier to simulate on both quantum and classical hardware. By suppressing high-weight Pauli components of evolving observables, it permits larger Trotter steps when estimating local observables. This in turn yields circuit-depth bounds independent of system size for typical input states. The same compression makes Pauli propagation polynomial in system size and inverse accuracy for fixed local parameters and any positive damping rate. Our upper bounds leave room for a substantial quantum advantage as dissipation weakens, a possibility also suggested by our numerics. Determining whether this separation translates into a practical advantage will require end-to-end resource analyses for specific hardware architectures, which we leave for future work.

The main limitation of our results is their restriction to a particular class of dissipation. Namely, the local dissipator must damp every nonidentity Pauli direction on the region of its paired interaction. Although depolarization satisfies these conditions, many physically relevant environments do not. For example, $Z$-dephasing leaves $Z$ unchanged, correlated reservoirs can protect collective modes, and amplitude damping describes energy relaxation outside the Pauli-diagonal setting. Since these mechanisms shape relaxation, excitation transport, and reservoir-driven steady states, understanding whether they also compress evolving observables would substantially broaden the physical scope of our results.

A reason to expect such compression is provided by Schuster and Yao~\cite{schuster2023operator}, who conjectured that, in ergodic systems with weak local noise, operator growth can be described effectively by replacing the physical dissipation by local depolarization. This suggests that the mechanism underlying our results may persist beyond the noise models covered by our proofs. Turning this effective description into rigorous simulation guarantees for more general dissipation is an important next step.

Fermionic dynamics offer another interesting extension. Majorana propagation represents observables as sums of Majorana monomials and truncates them by length~\cite{miller2025simulation,d2025majorana}. It would be interesting to ask whether dissipation similarly suppresses high-degree monomials, enabling larger quantum steps and sparse classical propagation. Beyond fermions, one could study time-dependent random Hamiltonians (e.g. Brownian dynamics~\cite{lashkari2013towards, bentsen2024complexity}) and continuous-variable systems~\cite{guseynov2026coherent,upreti2025quantum}. Compression may also aid the inverse task of learning Lindblad generators from data~\cite{romanov2026learning,ivashkov2026ansatz, lewis2026learning,arad2026near}.

\begin{acknowledgments}
AA thanks Daniel Stilck França for useful discussions and for suggesting the use of $p$-norms to quantify sparsity in the Pauli basis. AA and ZH thank Kushal Seetharam and Alexander Schuckert for interesting discussions. AA and ZH acknowledge support from the Sandoz Family Foundation--Monique de Meuron program for Academic Promotion. RP acknowledges the support of the SNF Quantum Flagship Replacement Scheme (Grant No.~215933). YT acknowledges support from NCCR spin, a National Centre of Competence in Research, funded by the Swiss National Science Foundation (grant number 565785).
\end{acknowledgments}

\section*{AI Usage Statement}
The authors thank Claude and GPT who worked like diligent and unusually responsive students. The human authors supplied the core ideas, intuitions, methods, and proof strategies. Through much back-and-forth, Claude and GPT then fleshed out the details, ran the numerics, and drafted parts of the manuscript. As with student supervision, we sometimes wondered whether this saved time, but hope the training was useful. The human authors carefully checked and refined the final manuscript.

\clearpage
\bibliographystyle{unsrt}
\bibliography{quantum,temporal}

\begin{thebibliography}{10}

\bibitem{redfield1957theory}
Alfred~G Redfield.
\newblock On the theory of relaxation processes.
\newblock {\em IBM Journal of Research and Development}, 1(1):19--31, 1957.

\bibitem{chen2018ab}
Po-chia Chen, Maggy Hologne, Olivier Walker, and Janosch Hennig.
\newblock Ab initio prediction of nmr spin relaxation parameters from molecular dynamics simulations.
\newblock {\em Journal of Chemical Theory and Computation}, 14(2):1009--1019, 2018.

\bibitem{seetharamDigitalQuantumSimulation2023}
Kushal Seetharam, Debopriyo Biswas, Crystal Noel, Andrew Risinger, Daiwei Zhu, Or~Katz, Sambuddha Chattopadhyay, Marko Cetina, Christopher Monroe, Eugene Demler, and Dries Sels.
\newblock Digital quantum simulation of {{NMR}} experiments.
\newblock {\em Science Advances}, 9(46):eadh2594, November 2023.

\bibitem{mohseni2008environment}
Masoud Mohseni, Patrick Rebentrost, Seth Lloyd, and Alan Aspuru-Guzik.
\newblock Environment-assisted quantum walks in photosynthetic energy transfer.
\newblock {\em The Journal of chemical physics}, 129(17), 2008.

\bibitem{ishizaki2009theoretical}
Akihito Ishizaki and Graham~R Fleming.
\newblock Theoretical examination of quantum coherence in a photosynthetic system at physiological temperature.
\newblock {\em Proceedings of the National Academy of Sciences}, 106(41):17255--17260, 2009.

\bibitem{prosen2011exact}
Toma{\v{z}} Prosen.
\newblock Exact nonequilibrium steady state of a strongly driven open xxz chain.
\newblock {\em Physical review letters}, 107(13):137201, 2011.

\bibitem{joos1985emergence}
Eric Joos and H~Dieter Zeh.
\newblock The emergence of classical properties through interaction with the environment.
\newblock {\em Zeitschrift f{\"u}r Physik B Condensed Matter}, 59(2):223--243, 1985.

\bibitem{habib1998decoherence}
Salman Habib, Kosuke Shizume, and Wojciech~Hubert Zurek.
\newblock Decoherence, chaos, and the correspondence principle.
\newblock {\em Physical review letters}, 80(20):4361, 1998.

\bibitem{smirne2016ultimate}
Andrea Smirne, Jan Ko{\l}ody{\'n}ski, Susana~F Huelga, and Rafa{\l} Demkowicz-Dobrza{\'n}ski.
\newblock Ultimate precision limits for noisy frequency estimation.
\newblock {\em Physical review letters}, 116(12):120801, 2016.

\bibitem{puig2024dynamical}
Ricard Puig, Pavel Sekatski, Paolo~Andrea Erdman, Paolo Abiuso, John Calsamiglia, and Mart{\'\i} Perarnau-Llobet.
\newblock From dynamical to steady-state many-body metrology: Precision limits and their attainability with two-body interactions.
\newblock {\em PRX Quantum}, 6(3):030309, 2025.

\bibitem{childs2021theory}
Andrew~M Childs, Yuan Su, Minh~C Tran, Nathan Wiebe, and Shuchen Zhu.
\newblock Theory of trotter error with commutator scaling.
\newblock {\em Physical Review X}, 11(1):011020, 2021.

\bibitem{donvil2022quantum}
Brecht Donvil and Paolo Muratore-Ginanneschi.
\newblock Quantum trajectory framework for general time-local master equations.
\newblock {\em Nature communications}, 13(1):4140, 2022.

\bibitem{sander2025large}
Aaron Sander, Maximilian Fr{\"o}hlich, Martin Eigel, Jens Eisert, Patrick Gel{\ss}, Michael Hinterm{\"u}ller, Richard~M Milbradt, Robert Wille, and Christian~B Mendl.
\newblock Large-scale stochastic simulation of open quantum systems.
\newblock {\em Nature Communications}, 16(1):11074, 2025.

\bibitem{pocklington2025efficient}
Andrew Pocklington and Aashish~A Clerk.
\newblock Efficient simulation of nontrivial dissipative spin chains via stochastic unraveling.
\newblock {\em PRX Quantum}, 6(3):030349, 2025.

\bibitem{cao2026dynamically}
Yu~Cao, Mingfeng He, and Xiantao Li.
\newblock Dynamically optimal unraveling schemes for simulating lindblad equations.
\newblock {\em Journal of Physics A: Mathematical and Theoretical}, 59(16):165301, 2026.

\bibitem{chen2024optimized}
Zhuo Chen, Yimu Bao, and Soonwon Choi.
\newblock Optimized trajectory unraveling for classical simulation of noisy quantum dynamics.
\newblock {\em Physical Review Letters}, 133(23):230403, 2024.

\bibitem{dalibard1992wave}
Jean Dalibard, Yvan Castin, and Klaus M\o{}lmer.
\newblock Wave-function approach to dissipative processes in quantum optics.
\newblock {\em Phys. Rev. Lett.}, 68:580--583, Feb 1992.

\bibitem{zwolak2004mixed}
Michael Zwolak and Guifr{\'e} Vidal.
\newblock Mixed-state dynamics in one-dimensional quantum lattice systems: A time-dependent superoperator renormalization algorithm.
\newblock {\em Physical review letters}, 93(20):207205, 2004.

\bibitem{werner2016positive}
A.~H. Werner, D.~Jaschke, P.~Silvi, M.~Kliesch, T.~Calarco, J.~Eisert, and S.~Montangero.
\newblock Positive tensor network approach for simulating open quantum many-body systems.
\newblock {\em Phys. Rev. Lett.}, 116:237201, Jun 2016.

\bibitem{cichy2026classical}
Simon Cichy, Paul~K Faehrmann, Lennart Bittel, Jens Eisert, and Hakop Pashayan.
\newblock Classical simulation of noisy quantum circuits via locally entanglement-optimal unravelings.
\newblock {\em Physical Review A}, 114(2):022430, 2026.

\bibitem{wang2026lindbladian}
Xinzhao Wang, Shuo Zhou, Xiaoyang Wang, Yi-Cong Zheng, Shengyu Zhang, and Tongyang Li.
\newblock Lindbladian simulation with commutator bounds.
\newblock {\em arXiv preprint arXiv:2603.28602}, 2026.

\bibitem{barthel2012quasilocality}
Thomas Barthel and Martin Kliesch.
\newblock Quasilocality and efficient simulation of markovian quantum dynamics.
\newblock {\em Physical review letters}, 108(23):230504, 2012.

\bibitem{wild2023classical}
Dominik~S Wild and {\'A}lvaro~M Alhambra.
\newblock Classical simulation of short-time quantum dynamics.
\newblock {\em PRX Quantum}, 4(2):020340, 2023.

\bibitem{rudolph2025pauli}
Manuel~S Rudolph, Tyson Jones, Yanting Teng, Armando Angrisani, and Zo{\"e} Holmes.
\newblock Pauli propagation: A computational framework for simulating quantum systems.
\newblock {\em arXiv preprint arXiv:2505.21606}, 2025.

\bibitem{beguvsic2023fast}
Tomislav Begu{\v{s}}i{\'c}, Johnnie Gray, and Garnet Kin-Lic Chan.
\newblock Fast and converged classical simulations of evidence for the utility of quantum computing before fault tolerance.
\newblock {\em Science Advances}, 10(3), 2024.

\bibitem{gao2018efficient}
Xun Gao and Luming Duan.
\newblock Efficient classical simulation of noisy quantum computation.
\newblock {\em arXiv preprint arXiv:1810.03176}, 2018.

\bibitem{noh2020efficient}
Kyungjoo Noh, Liang Jiang, and Bill Fefferman.
\newblock Efficient classical simulation of noisy random quantum circuits in one dimension.
\newblock {\em Quantum}, 4:318, 2020.

\bibitem{fontana2023classical}
Enrico Fontana, Manuel~S Rudolph, Ross Duncan, Ivan Rungger, and Cristina C{\^\i}rstoiu.
\newblock Classical simulations of noisy variational quantum circuits.
\newblock {\em npj Quantum Information}, 11(1):1--12, 2025.

\bibitem{aharonov2023polynomial}
Dorit Aharonov, Xun Gao, Zeph Landau, Yunchao Liu, and Umesh Vazirani.
\newblock A polynomial-time classical algorithm for noisy random circuit sampling.
\newblock In {\em Proceedings of the 55th Annual ACM Symposium on Theory of Computing}, pages 945--957, New York, NY, USA, 2023. Association for Computing Machinery.

\bibitem{schuster2024polynomial}
Thomas Schuster, Chao Yin, Xun Gao, and Norman~Y Yao.
\newblock A polynomial-time classical algorithm for noisy quantum circuits.
\newblock {\em Physical Review X}, 15(4):041018, 2025.

\bibitem{gonzalez2024pauli}
Guillermo Gonz{\'a}lez-Garc{\'\i}a, J~Ignacio Cirac, and Rahul Trivedi.
\newblock Pauli path simulations of noisy quantum circuits beyond average case.
\newblock {\em Quantum}, 9:1730, 2025.

\bibitem{angrisani2025simulating}
Armando Angrisani, Antonio~A Mele, Manuel~S Rudolph, M~Cerezo, and Zoe Holmes.
\newblock Simulating quantum circuits with arbitrary local noise using pauli propagation.
\newblock {\em PRX Quantum}, 7(2):020313, 2026.

\bibitem{martinez2025efficient}
Victor Martinez, Armando Angrisani, Ekaterina Pankovets, Omar Fawzi, and Daniel Stilck~Fran{\c{c}}a.
\newblock Efficient simulation of parametrized quantum circuits under nonunital noise through pauli backpropagation.
\newblock {\em Physical Review Letters}, 134(25):250602, 2025.

\bibitem{roberts2015diagnosing}
Daniel~A Roberts and Douglas Stanford.
\newblock Diagnosing chaos using four-point functions in two-dimensional conformal field theory.
\newblock {\em Physical review letters}, 115(13):131603, 2015.

\bibitem{maldacena2016bound}
Juan Maldacena, Stephen~H Shenker, and Douglas Stanford.
\newblock A bound on chaos.
\newblock {\em Journal of High Energy Physics}, 2016(8):1--17, 2016.

\bibitem{roberts2017chaos}
Daniel~A Roberts and Beni Yoshida.
\newblock Chaos and complexity by design.
\newblock {\em Journal of High Energy Physics}, 2017(4):121, 2017.

\bibitem{google2025observation}
Observation of constructive interference at the edge of quantum ergodicity.
\newblock {\em Nature}, 646(8086):825--830, 2025.

\bibitem{barron2026observable}
Samantha~V Barron, Bradley Mitchell, Vinay Tripathi, Francesco Grieco, Ilan Rosen, Francesca Pietracaprina, Davide Materia, Alireza Seif, Darvin Wanisch, Ram{\'o}n~L Panad{\'e}s-Barrueta, et~al.
\newblock Observable estimation in the absence of classical verification.
\newblock {\em arXiv preprint arXiv:2607.25998}, 2026.

\bibitem{knill1998power}
Emanuel Knill and Raymond Laflamme.
\newblock Power of one bit of quantum information.
\newblock {\em Physical Review Letters}, 81(25):5672, 1998.

\bibitem{tranter2019ordering}
Andrew Tranter, Peter~J Love, Florian Mintert, Nathan Wiebe, and Peter~V Coveney.
\newblock Ordering of trotterization: Impact on errors in quantum simulation of electronic structure.
\newblock {\em Entropy}, 21(12):1218, 2019.

\bibitem{low2019well}
Guang~Hao Low, Vadym Kliuchnikov, and Nathan Wiebe.
\newblock Well-conditioned multiproduct hamiltonian simulation.
\newblock {\em arXiv preprint arXiv:1907.11679}, 2019.

\bibitem{rendon2024improved}
Gumaro Rendon, Jacob Watkins, and Nathan Wiebe.
\newblock Improved accuracy for trotter simulations using chebyshev interpolation.
\newblock {\em Quantum}, 8:1266, 2024.

\bibitem{watson2024randomly}
James~D Watson.
\newblock Randomly compiled quantum simulation with exponentially reduced circuit depths.
\newblock {\em arXiv preprint arXiv:2411.04240}, 2024.

\bibitem{watson2025exponentially}
James~D Watson and Jacob Watkins.
\newblock Exponentially reduced circuit depths using trotter error mitigation.
\newblock {\em PRX quantum}, 6(3):030325, 2025.

\bibitem{rudolph2023classical}
Manuel~S Rudolph, Enrico Fontana, Zo{\"e} Holmes, and Lukasz Cincio.
\newblock Classical surrogate simulation of quantum systems with {LOWESA}.
\newblock {\em arXiv preprint arXiv:2308.09109}, 2023.

\bibitem{shao2023simulating}
Yuguo Shao, Fuchuan Wei, Song Cheng, and Zhengwei Liu.
\newblock Simulating noisy variational quantum algorithms: A polynomial approach.
\newblock {\em Physical Review Letters}, 133(12):120603, 2024.

\bibitem{angrisani2024classically}
Armando Angrisani, Alexander Schmidhuber, Manuel~S. Rudolph, M.~Cerezo, Zo\"e Holmes, and Hsin-Yuan Huang.
\newblock {Classically Estimating Observables of Noiseless Quantum Circuits}.
\newblock {\em Phys. Rev. Lett.}, 135:170602, Oct 2025.

\bibitem{lerch2024efficient}
Sacha Lerch, Ricard Puig, Manuel~S Rudolph, Armando Angrisani, Tyson Jones, Marco Cerezo, Supanut Thanasilp, and Zo{\"e} Holmes.
\newblock {Efficient Quantum-Enhanced Classical Simulation for Patches of Quantum Landscapes}.
\newblock {\em PRX Quantum}, 7(2):020359, 2026.

\bibitem{beguvsic2024real}
Tomislav Begu{\v{s}}i{\'c} and Garnet Kin-Lic Chan.
\newblock Real-time operator evolution in two and three dimensions via sparse pauli dynamics.
\newblock {\em PRX Quantum}, 6(2):020302, 2025.

\bibitem{fuller2025improved}
Bryce Fuller, Minh~C Tran, Danylo Lykov, Caleb Johnson, Max Rossmannek, Ken~Xuan Wei, Andre He, Youngseok Kim, DinhDuy Vu, Kunal Sharma, et~al.
\newblock Improved quantum computation using operator backpropagation.
\newblock {\em arXiv preprint arXiv:2502.01897}, 2025.

\bibitem{rall2019simulation}
Patrick Rall, Daniel Liang, Jeremy Cook, and William Kretschmer.
\newblock Simulation of qubit quantum circuits via pauli propagation.
\newblock {\em Physical Review A}, 99(6):062337, 2019.

\bibitem{beguvsic2023simulating}
Tomislav Begu{\v{s}}i{\'c}, Kasra Hejazi, and Garnet~Kin Chan.
\newblock Simulating quantum circuit expectation values by {C}lifford perturbation theory.
\newblock {\em The Journal of Chemical Physics}, 162(15), 2025.

\bibitem{rudolph2026thermal}
Manuel~S Rudolph, Armando Angrisani, Andrew Wright, Iwo Sanderski, Ricard Puig, and Zo{\"e} Holmes.
\newblock {Thermal State Simulation with Pauli and Majorana Propagation}.
\newblock {\em arXiv preprint arXiv:2602.04878}, 2026.

\bibitem{aharonov2022polynomial}
Dorit Aharonov, Xun Gao, Zeph Landau, Yunchao Liu, and Umesh Vazirani.
\newblock A polynomial-time classical algorithm for noisy random circuit sampling.
\newblock {\em Proceedings of the 55th Annual ACM Symposium on Theory of Computing}, page 945, 2023.

\bibitem{cirstoiu2024fourier}
Cristina Cirstoiu.
\newblock A fourier analysis framework for approximate classical simulations of quantum circuits.
\newblock {\em arXiv preprint arXiv:2410.13856}, 2024.

\bibitem{teng2025leveraging}
Yanting Teng, Su~Yeon Chang, Manuel~S Rudolph, and Zo{\"e} Holmes.
\newblock Leveraging symmetry merging in pauli propagation.
\newblock {\em arXiv preprint arXiv:2512.12094}, 2025.

\bibitem{xu2026classical}
Jue Xu, Chu Zhao, Xiangran Zhang, Shuchen Zhu, and Qi~Zhao.
\newblock Classical simulation of noiseless quantum dynamics without randomness.
\newblock {\em arXiv preprint arXiv:2601.15770}, 2026.

\bibitem{wang2026query}
Chunhao Wang and Christopher Ye.
\newblock Query-optimal quantum simulation of lindblad evolution.
\newblock {\em arXiv preprint arXiv:2609.17490}, 2026.

\bibitem{chen2026query}
Boyang Chen, Minbo Gao, Xinzhao Wang, and Shuo Zhou.
\newblock Query-optimal and gate-efficient lindbladian simulation.
\newblock {\em arXiv preprint arXiv:2609.18757}, 2026.

\bibitem{chen2025randomized}
Hongrui Chen, Bowen Li, Jianfeng Lu, and Lexing Ying.
\newblock A randomized method for simulating lindblad equations and thermal state preparation.
\newblock {\em Quantum}, 9:1917, 2025.

\bibitem{kato2026exponentially}
Jumpei Kato, Kaito Wada, Kosuke Ito, and Naoki Yamamoto.
\newblock Exponentially accurate open quantum simulation via randomized dissipation with minimal ancilla.
\newblock {\em PRX Quantum}, 7(3):033046, 2026.

\bibitem{ding2024simulating}
Zhiyan Ding, Xiantao Li, and Lin Lin.
\newblock Simulating open quantum systems using hamiltonian simulations.
\newblock {\em PRX quantum}, 5(2):020332, 2024.

\bibitem{ding2024single}
Zhiyan Ding, Chi-Fang Chen, and Lin Lin.
\newblock Single-ancilla ground state preparation via lindbladians.
\newblock {\em Physical Review Research}, 6(3):033147, 2024.

\bibitem{cleve2016efficient}
Richard Cleve and Chunhao Wang.
\newblock Efficient quantum algorithms for simulating lindblad evolution.
\newblock {\em arXiv preprint arXiv:1612.09512}, 2016.

\bibitem{Fr_rot_2017}
Irénée Frérot, Piero Naldesi, and Tommaso Roscilde.
\newblock Entanglement and fluctuations in the xxz model with power-law interactions.
\newblock {\em Physical Review B}, 95(24), 2017.

\bibitem{heyl2019quantum}
Markus Heyl, Philipp Hauke, and Peter Zoller.
\newblock Quantum localization bounds trotter errors in digital quantum simulation.
\newblock {\em Science Advances}, 5(4):eaau8342, 2019.

\bibitem{mercurio2025quantumtoolbox}
Alberto Mercurio, Yi-Te Huang, Li-Xun Cai, Yueh-Nan Chen, Vincenzo Savona, and Franco Nori.
\newblock Quantumtoolbox. jl: An efficient julia framework for simulating open quantum systems.
\newblock {\em Quantum}, 9:1866, 2025.

\bibitem{schuster2023operator}
Thomas Schuster and Norman~Y Yao.
\newblock Operator growth in open quantum systems.
\newblock {\em Physical Review Letters}, 131(16):160402, 2023.

\bibitem{miller2025simulation}
Aaron Miller, Joachim Favre, Zo{\"e} Holmes, {\"O}zlem Salehi, Rahul Chakraborty, Anton Nyk{\"a}nen, Zoltan Zimboras, Adam Glos, and Guillermo Garc{\'\i}a-P{\'e}rez.
\newblock {Simulation of Fermionic circuits using Majorana Propagation}.
\newblock {\em arXiv preprint arXiv:2503.18939}, 2025.

\bibitem{d2025majorana}
Matteo D'Anna, Jannes Nys, and Juan Carrasquilla.
\newblock Majorana string simulation of nonequilibrium dynamics in two-dimensional lattice fermion systems.
\newblock {\em arXiv:2511.02809}, 2025.

\bibitem{lashkari2013towards}
Nima Lashkari, Douglas Stanford, Matthew Hastings, Tobias Osborne, and Patrick Hayden.
\newblock Towards the fast scrambling conjecture.
\newblock {\em Journal of High Energy Physics}, 2013(4):1--33, 2013.

\bibitem{bentsen2024complexity}
Gregory Bentsen, Bill Fefferman, Soumik Ghosh, Michael~J Gullans, and Yinchen Liu.
\newblock On the complexity of sampling from shallow brownian circuits.
\newblock {\em arXiv preprint arXiv:2411.04169}, 2024.

\bibitem{guseynov2026coherent}
Nikita Guseynov, Zo{\"e} Holmes, and Armando Angrisani.
\newblock Coherent-state propagation: A computational framework for simulating bosonic quantum systems.
\newblock {\em arXiv preprint arXiv:2604.19625}, 2026.

\bibitem{upreti2025quantum}
Varun Upreti, Ulysse Chabaud, Zo{\"e} Holmes, and Armando Angrisani.
\newblock When quantum resources backfire: Non-gaussianity and symplectic coherence in noisy bosonic circuits.
\newblock {\em arXiv preprint arXiv:2510.07264}, 2025.

\bibitem{romanov2026learning}
Nikita Romanov, Petr Ivashkov, Weiyuan Gong, Ishaan Kannan, Andi Gu, Hong-Ye Hu, and Susanne~F Yelin.
\newblock Learning arbitrary lindbladians with quantum error correction.
\newblock {\em arXiv preprint arXiv:2606.18188}, 2026.

\bibitem{ivashkov2026ansatz}
Petr Ivashkov, Nikita Romanov, Weiyuan Gong, Andi Gu, Hong-Ye Hu, and Susanne~F Yelin.
\newblock Ansatz-free learning of lindbladian dynamics in situ.
\newblock {\em arXiv preprint arXiv:2603.05492}, 2026.

\bibitem{lewis2026learning}
Laura Lewis, Ewin Tang, and John Wright.
\newblock Learning the structure of open quantum systems.
\newblock {\em arXiv preprint arXiv:2606.30358}, 2026.

\bibitem{arad2026near}
Itai Arad, Zhili Chen, Naixu Guo, Patrick Rebentrost, and Zhan Yu.
\newblock Near-optimal learning of local lindbladians.
\newblock {\em arXiv preprint arXiv:2606.20535}, 2026.

\bibitem{hartman2002ordinary}
Philip Hartman.
\newblock {\em Ordinary differential equations}.
\newblock SIAM, 2002.

\bibitem{descombes2010exact}
St{\'e}phane Descombes and Mechthild Thalhammer.
\newblock An exact local error representation of exponential operator splitting methods for evolutionary problems and applications to linear schr{\"o}dinger equations in the semi-classical regime.
\newblock {\em BIT Numerical Mathematics}, 50(4):729--749, 2010.

\bibitem{bergh2012interpolation}
J{\"o}ran Bergh and J{\"o}rgen L{\"o}fstr{\"o}m.
\newblock {\em Interpolation spaces: an introduction}.
\newblock Springer Science \& Business Media, 1976.

\bibitem{mcclean2017openfermion}
Jarrod~R. McClean, Kevin~J. Sung, Ian~D. Kivlichan, Yudong Cao, Chengyu Dai, E.~Schuyler Fried, Craig Gidney, Brendan Gimby, Pranav Gokhale, Thomas Häner, Tarini Hardikar, Vojtěch Havlíček, Oscar Higgott, Cupjin Huang, Josh Izaac, Zhang Jiang, Xinle Liu, Sam McArdle, Matthew Neeley, Thomas O'Brien, Bryan O'Gorman, Isil Ozfidan, Maxwell~D. Radin, Jhonathan Romero, Nicholas Rubin, Nicolas P.~D. Sawaya, Kanav Setia, Sukin Sim, Damian~S. Steiger, Mark Steudtner, Qiming Sun, Wei Sun, Daochen Wang, Fang Zhang, and Ryan Babbush.
\newblock Openfermion: The electronic structure package for quantum computers.
\newblock {\em arXiv preprint arXiv:1710.07629}, 2017.

\bibitem{monoprop}
{Algorithmiq}.
\newblock {monoprop}: Software for {Majorana} and {Pauli} propagation.
\newblock \url{https://github.com/Algorithmiq/monoprop}.
\newblock Accessed 30 September 2026.

\bibitem{dowling2026classical}
Neil Dowling.
\newblock Classical simulability from operator entanglement scaling.
\newblock {\em arXiv preprint arXiv:2603.05656}, 2026.

\bibitem{ransford2025helios}
Anthony Ransford, MS~Allman, Jake Arkinstall, JP~Campora~III, Samuel~F Cooper, Robert~D Delaney, Joan~M Dreiling, Brian Estey, Caroline Figgatt, Alex Hall, et~al.
\newblock Helios: A 98-qubit trapped-ion quantum computer.
\newblock {\em arXiv preprint arXiv:2511.05465}, 2025.

\end{thebibliography}

\clearpage
\newpage
\onecolumngrid

\setcounter{section}{0}
\setcounter{subsection}{0}
\setcounter{equation}{0}
\setcounter{figure}{0}
\setcounter{table}{0}
\setcounter{theorem}{0}

\renewcommand{\thesection}{S\arabic{section}}
\renewcommand{\thesubsection}{S\arabic{section}.\arabic{subsection}}
\makeatletter
\renewcommand{\p@subsection}{}
\makeatother
\renewcommand{\theequation}{S\arabic{equation}}
\renewcommand{\thefigure}{S\arabic{figure}}
\renewcommand{\thetable}{S\arabic{table}}
\renewcommand{\thetheorem}{S\arabic{theorem}}

\renewcommand{\theHsection}{supp.\arabic{section}}
\renewcommand{\theHsubsection}{supp.\arabic{section}.\arabic{subsection}}
\renewcommand{\theHequation}{supp.\arabic{equation}}
\renewcommand{\theHfigure}{supp.\arabic{figure}}
\renewcommand{\theHtable}{supp.\arabic{table}}
\providecommand{\theHtheorem}{}
\renewcommand{\theHtheorem}{supp.\arabic{theorem}}

\clearpage
\newpage
\onecolumngrid

\begin{center}
    {\Large\bfseries Supplementary Information\par}
    \vspace{0.6em}
    {\large\bfseries  Dissipation accelerates quantum and classical simulation of open-system dynamics \par}
    \vspace{1em}
    {\normalsize Armando Angrisani, Ricard Puig, Yanting Teng, and Zo\"{e} Holmes\par}
\end{center}

\vspace{1em}
\setcounter{tocdepth}{0}
\supplementtableofcontents
\clearpage

\suppsection{Summary of notation}
\label{supp}

The following table summarizes the notation used throughout the paper. We emphasize that, in our convention, \(\norm{O}_p\) denotes the \(\ell_p\)-norm of the coefficients in the Pauli expansion of \(O\), rather than the Schatten \(p\)-norm commonly used in the quantum information literature. For \(p=2\), the two notions coincide up to a normalization factor.

\begin{table}[h]
\centering
\begin{tabular}{ll}
\toprule
Symbol & \qquad Meaning \\
\midrule
$\cP_n$ & set of $n$-qubit Pauli operators \\
$|P|$ & Pauli weight of $P$ \\
$\norm{v}_{\ell_p}$ & Vector $p$-norm \\
$\norm{O}_{p}$ & Unweighted Pauli-$p$ norm (Eq.\ \ref{eq:pauli-p-norm}) \\
$\norm{O}_{2,\theta}$ & Weighted Pauli-$2$ norm (Eq.\ \ref{eq:weighted-pauli-two-norm}) \\
$\supp(P_a)$ & Support of the Pauli interaction $P_a$ \\
$J$ & Uniform bound on the coherent strengths $|J_a|$ \\
$\gamma$ & Minimum dissipative rate of each $\mathcal D_a$ \\
$\bar\gamma$ & {Maximum dissipative rate of each $\mathcal D_a$} \\
$k$ & {Maximum interaction support size} \\
$d$ & {Maximum interaction degree} \\
$p_\star,\theta_\star$ & Coefficient and weight regularity parameters \\
\bottomrule
\end{tabular}
\end{table}

\suppsection{Useful lemmas}

We collect here some standard analytic tools used throughout the supplemental material. We begin with the following form of Gronwall's inequality.
\begin{lemma}[Gronwall inequality {\cite[Chapter~III,
Theorem~1.1]{hartman2002ordinary}}]
\label{lem:gronwall}
Let \(f:[0,T]\to[0,\infty)\) be continuous, and suppose that there are constants \(a,b\geq 0\) such that, for every \(t\in[0,T]\),
\begin{align}
    f(t)
    \leq
    a
    +
    b\int_0^t f(s)\,ds .
\end{align}
Then
\begin{align}
    f(t)\leq a e^{bt}
    \qquad
    \text{for all } t\in[0,T].
\end{align}
\end{lemma}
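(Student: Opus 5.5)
This is the classical Gronwall inequality, so the plan is the standard integrating-factor argument. First dispose of the trivial case $b=0$, where the hypothesis reads $f(t)\le a = a\e^{0}$ directly. For $b>0$, introduce the auxiliary function
\begin{align}
    g(t)\coloneqq a+b\int_0^t f(s)\,ds ,
\end{align}
which is continuously differentiable on $[0,T]$ because $f$ is continuous (fundamental theorem of calculus). It satisfies $g(0)=a$ and $g'(t)=bf(t)$. The hypothesis says exactly $f(t)\le g(t)$, so we obtain the differential inequality $g'(t)\le b\,g(t)$, using $b\ge0$.

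Next, multiply by the integrating factor $\e^{-bt}$. Consider $u(t)\coloneqq \e^{-bt}g(t)$. Then
\begin{align}
    u'(t)=\e^{-bt}\bigl(g'(t)-b\,g(t)\bigr)\le 0 .
\end{align}
Hence $u$ is nonincreasing on $[0,T]$ by the mean value theorem, which gives $u(t)\le u(0)=a$, i.e. $g(t)\le a\e^{bt}$. Combining this with $f\le g$ yields $f(t)\le a\e^{bt}$ for all $t\in[0,T]$, as claimed.

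There is no genuine obstacle. The only points needing care are the regularity that makes $g$ differentiable, which is supplied by the continuity of $f$, and the sign conditions $a,b\ge0$ and $f\ge0$. The assumption $b\ge0$ is used when passing from $f\le g$ to $bf\le bg$. The nonnegativity of $f$ is not actually needed for the argument but is harmless. Alternatively, one could iterate the inequality to get the Neumann-series bound $f\le a\sum_j (bt)^j/j!$, but the integrating-factor route is shorter and is the one I would write.
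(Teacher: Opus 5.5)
Your proof is correct: the integrating-factor argument applied to $g(t)=a+b\int_0^t f(s)\,ds$ is the standard proof. The paper gives no proof of its own and only cites Hartman, whose argument is essentially this one; your remark that $f\geq 0$ is not needed is also right.
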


We also use the standard exact local-error representation for the first-order Lie--Trotter product formula.

\begin{lemma}[Lie--Trotter local-error identity
{\cite[Eq.~(5.3)]{descombes2010exact}}]
\label{lem:lie-trotter-local-error}
Let \(A\) and \(B\) be linear maps on a finite-dimensional vector space. Then, for every \(t\geq0\),
\begin{align}
    \e^{tB}\e^{tA}-\e^{t(A+B)}
    =
    \int_0^t\int_0^s
    &
    \e^{(t-s)(A+B)}
    \e^{(s-u)B}
    [B,A]
    \e^{uB}
    \e^{sA}
    \,\mathrm du\,\mathrm ds.
    \label{eq:lie-trotter-local-error}
\end{align}
\end{lemma}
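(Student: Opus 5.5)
We prove the identity by differentiating in an auxiliary time variable and integrating twice, which is the standard variation-of-constants argument. Define, for $0\leq s\leq t$,
\begin{align}
    F(s)\coloneqq \e^{(t-s)(A+B)}\e^{sB}\e^{sA}.
\end{align}
Then $F(0)=\e^{t(A+B)}$ and $F(t)=\e^{tB}\e^{tA}$, so the left-hand side of \cref{eq:lie-trotter-local-error} equals $\int_0^t F'(s)\,\mathrm ds$. Everything is finite-dimensional, so all exponentials are entire functions of their arguments and term-by-term differentiation is justified.

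\textbf{Computing $F'(s)$.} By the product rule,
\begin{align}
    F'(s)
    =\e^{(t-s)(A+B)}\bigl[-(A+B)\e^{sB}\e^{sA}+B\e^{sB}\e^{sA}+\e^{sB}A\e^{sA}\bigr].
\end{align}
The $B$ terms cancel, leaving
\begin{align}
    F'(s)=\e^{(t-s)(A+B)}\bigl[\e^{sB}A-A\e^{sB}\bigr]\e^{sA}.
\end{align}

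\textbf{Rewriting the bracket as an integral of $[B,A]$.} Set $G(u)\coloneqq \e^{(s-u)B}A\e^{uB}$ for $0\leq u\leq s$. Its endpoints are $G(0)=\e^{sB}A$ and $G(s)=A\e^{sB}$, and its derivative is
\begin{align}
    G'(u)=-\e^{(s-u)B}[B,A]\e^{uB}.
\end{align}
Hence
\begin{align}
    \e^{sB}A-A\e^{sB}=G(0)-G(s)=\int_0^s \e^{(s-u)B}[B,A]\e^{uB}\,\mathrm du.
\end{align}
Substituting this into $F'(s)$ and integrating over $s\in[0,t]$ gives exactly \cref{eq:lie-trotter-local-error}, with $\e^{uB}$ followed by $\e^{sA}$ on the right.

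No step is a real obstacle. The only care needed is in the ordering of noncommuting factors and in the signs when differentiating $\e^{(t-s)(A+B)}$ and $\e^{(s-u)B}$. Those two places are where one would double-check the computation.
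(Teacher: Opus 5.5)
Your argument is correct: the derivatives of $F$ and $G$ check out, including the cancellation of the $B$ terms and the ordering of the noncommuting factors, and integrating $F'$ over $[0,t]$ gives exactly \cref{eq:lie-trotter-local-error}. The paper does not prove this lemma itself but cites Descombes--Thalhammer, and your route (variation of constants, followed by the integral representation of $\e^{sB}A-A\e^{sB}$) is essentially the standard derivation of that identity.
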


We will also need a convenient way to bound an induced \(\ell_2\)-operator norm using only the absolute row and column sums of the corresponding matrix.

\begin{lemma}[Schur's test]
\label{lem:quantum-schur-test}
Let \(M=(M_{Q,P})_{Q,P}\) be a complex matrix. Then
\begin{equation}
    \norm{M}_{\ell_2\to\ell_2}
    \leq
    \sqrt{
        \left(
            \sup_P \sum_Q \abs{M_{Q,P}}
        \right)
        \left(
            \sup_Q \sum_P \abs{M_{Q,P}}
        \right)
    }.
\label{eq:quantum-schur-test}
\end{equation}
\end{lemma}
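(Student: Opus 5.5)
This is the classical Schur test. The plan is to prove it from Cauchy--Schwarz applied with a symmetric splitting of each entry. Write $R\coloneqq\sup_Q\sum_P\abs{M_{Q,P}}$ for the maximal absolute row sum and $C\coloneqq\sup_P\sum_Q\abs{M_{Q,P}}$ for the maximal absolute column sum. If either is infinite, the claim is vacuous, so we assume both are finite. Fix a vector $v=(v_P)_P$; the goal is to show $\norm{Mv}_{\ell_2}^2\leq RC\,\norm{v}_{\ell_2}^2$.

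For each row index $Q$, bound
\begin{equation*}
    \abs{(Mv)_Q}\leq\sum_P\abs{M_{Q,P}}^{1/2}\cdot\abs{M_{Q,P}}^{1/2}\abs{v_P},
\end{equation*}
and apply Cauchy--Schwarz to the two factors. This gives
\begin{equation*}
    \abs{(Mv)_Q}^2\leq\Bigl(\sum_P\abs{M_{Q,P}}\Bigr)\Bigl(\sum_P\abs{M_{Q,P}}\abs{v_P}^2\Bigr)\leq R\sum_P\abs{M_{Q,P}}\abs{v_P}^2 .
\end{equation*}
Summing over $Q$ and exchanging the order of summation (Tonelli, since every term is nonnegative), we obtain
\begin{equation*}
    \norm{Mv}_{\ell_2}^2\leq R\sum_P\abs{v_P}^2\sum_Q\abs{M_{Q,P}}\leq RC\norm{v}_{\ell_2}^2 .
\end{equation*}
Taking the supremum over unit vectors $v$ yields the stated bound.

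The only point needing care is the infinite-index case, for example when the matrix is indexed by the full Pauli set with unbounded support. There, convergence of $(Mv)_Q$ follows from the same Cauchy--Schwarz estimate once $R<\infty$, and the interchange of sums is justified by nonnegativity. In the finite-dimensional setting used in the paper ($\abs{\cP_n}=4^n$) there is no obstacle at all.
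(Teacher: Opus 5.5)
Your proposal is correct and matches the paper's proof step for step: you split each entry as $\abs{M_{Q,P}}^{1/2}\cdot\abs{M_{Q,P}}^{1/2}\abs{v_P}$, apply Cauchy--Schwarz to bound each row by the maximal row sum $R$, then sum over $Q$, exchange the order of summation, and bound by the maximal column sum $C$. Your remarks on the infinite-index case are a harmless addition, since the paper works only with the finite Pauli basis.
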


\begin{proof}
Set
\begin{align}
    C:=\sup_P\sum_Q\abs{M_{Q,P}},
    \qquad
    R:=\sup_Q\sum_P\abs{M_{Q,P}}.
\end{align}
For any \(x=(x_P)_P\), the triangle inequality followed by the Cauchy--Schwarz inequality gives, for every \(Q\),
\begin{align}
    \abs{(Mx)_Q}^2
    &=
    \abs{\sum_P M_{Q,P}x_P}^2
    \leq
    \left(
        \sum_P \abs{M_{Q,P}}\abs{x_P}
    \right)^2
    \\
    &\leq
    \left(
        \sum_P \abs{M_{Q,P}}
    \right)
    \left(
        \sum_P \abs{M_{Q,P}}\abs{x_P}^2
    \right)
    \leq
    R\sum_P\abs{M_{Q,P}}\abs{x_P}^2.
\end{align}
Summing over \(Q\), we obtain
\begin{align}
    \norm{Mx}_{\ell_2}^2
    &\leq
    R\sum_{Q,P}
    \abs{M_{Q,P}}\abs{x_P}^2
    \\
    &=
    R\sum_P
    \left(
        \sum_Q\abs{M_{Q,P}}
    \right)
    \abs{x_P}^2
    \\
    &\leq
    RC\sum_P\abs{x_P}^2
    =
    RC\norm{x}_{\ell_2}^2.
\end{align}
Taking square roots and then the supremum over \(x\neq0\) proves \eqref{eq:quantum-schur-test}.
\end{proof}

Finally, the following interpolation result allows us to transfer operator-norm estimates between different \(\ell_p\) spaces.

\begin{lemma}[Riesz--Thorin interpolation
{\cite[Theorem~1.1.1]{bergh2012interpolation}}]
\label{lem:interpolation}
Let \(1\leq r\leq s\leq t\leq\infty\), and let \(\theta\in[0,1]\) satisfy
\begin{align}
    \frac{1}{s}
    =
    \frac{\theta}{r}
    +
    \frac{1-\theta}{t}.
\end{align}
For a linear map \(T:\mathbb C^N\to\mathbb C^M\), define the induced \(\ell_q\)-operator norm by
\begin{align}
    \norm{T}_{q\to q}
    :=
    \sup_{x\neq0}
    \frac{\norm{Tx}_q}{\norm{x}_q}.
\end{align}
Then
\begin{align}
    \norm{T}_{s\to s}
    \leq
    \norm{T}_{r\to r}^{\theta}
    \norm{T}_{t\to t}^{1-\theta}.
\end{align}
\end{lemma}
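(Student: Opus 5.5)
The statement to prove is the Riesz--Thorin interpolation lemma (Lemma~\ref{lem:interpolation}). Since it is cited from the literature, the natural plan is the classical complex-interpolation argument via Hadamard's three-lines theorem, specialized to finite-dimensional $\ell_q$ spaces. Throughout we may assume $r<t$; otherwise $r=s=t$ and there is nothing to prove. By homogeneity and duality, the statement reduces to showing
\begin{equation*}
\abs{\langle Tx,y\rangle}\leq \norm{T}_{r\to r}^{\theta}\norm{T}_{t\to t}^{1-\theta}
\end{equation*}
for all $x$ with $\norm{x}_{s}=1$ and all $y$ with $\norm{y}_{s'}=1$, where $1/s+1/s'=1$ and $\langle u,v\rangle=\sum_i u_iv_i$. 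It suffices to treat finitely supported $x,y$ with nonzero entries on their supports, which is automatic in finite dimensions after discarding zero coordinates.

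Next I build the analytic family. Define exponents on the strip $0\leq \mathrm{Re}\,z\leq 1$ by
\begin{equation*}
\frac{1}{q(z)}=\frac{z}{r}+\frac{1-z}{t},\qquad \frac{1}{q'(z)}=1-\frac{1}{q(z)},
\end{equation*}
so that $q(\theta)=s$. Write $x_j=\abs{x_j}\e^{i\alpha_j}$ and $y_i=\abs{y_i}\e^{i\beta_i}$, and set
\begin{equation*}
x_j(z)=\abs{x_j}^{s/q(z)}\e^{i\alpha_j},\qquad y_i(z)=\abs{y_i}^{s'/q'(z)}\e^{i\beta_i}.
\end{equation*}
Then $F(z)\coloneqq\langle Tx(z),y(z)\rangle$ is a finite sum of exponentials $c\,\e^{\lambda z}$, hence entire and bounded on the strip. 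At $z=\theta$ it equals $\langle Tx,y\rangle$. On the line $\mathrm{Re}\,z=1$ one has $\abs{x_j(z)}=\abs{x_j}^{s/r}$, so $\norm{x(z)}_r=1$ and likewise $\norm{y(z)}_{r'}=1$. By Hölder's inequality this gives $\abs{F(z)}\leq\norm{T}_{r\to r}$. The same computation on $\mathrm{Re}\,z=0$ gives $\abs{F(z)}\leq\norm{T}_{t\to t}$.

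Hadamard's three-lines theorem then yields $\abs{F(\theta)}\leq\norm{T}_{t\to t}^{1-\theta}\norm{T}_{r\to r}^{\theta}$, which is the claim. The main care point is the endpoint cases $t=\infty$ or $r=1$, where $s'$ or $q'$ can be infinite and the power $\abs{y_i}^{s'/q'(z)}$ degenerates. I would handle these by interpreting $1/\infty=0$: with $q'=\infty$, set that factor to the constant phase $\e^{i\beta_i}$ and check the endpoint bounds directly. Equivalently, one can prove the result for $t<\infty$ and pass to the limit, which is harmless in finite dimensions since all norms are continuous in the exponent.

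Finally, three-lines itself follows from the maximum principle applied to $F(z)\,M_0^{z-1}M_1^{-z}\e^{\epsilon z^2}$, letting $\epsilon\to0$. If either endpoint norm vanishes, then $T=0$ and the inequality is trivial.
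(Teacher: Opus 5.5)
Your proposal is correct. It is the standard Thorin complex-interpolation proof via Hadamard's three-lines theorem, which is the argument behind the cited Theorem~1.1.1 of Bergh--L\"ofstr\"om; the paper gives no proof of its own beyond that citation. One simplification: once you set aside the trivial cases $r=t$ and $\theta\in\{0,1\}$, you have $1<s<\infty$, so the exponents $s/q(z)$ and $s'/q'(z)=s'(1-1/q(z))$ are always finite, and the endpoints $r=1$ or $t=\infty$ need no special treatment or limiting argument.
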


These lemmas will be invoked repeatedly to control norm growth, estimate weighted transfer matrices, and interpolate between endpoint bounds.

\suppsection{Pauli notation and norms}
\label{supp:preliminaries}

\subsection{Pauli basis}
\label{supp:pauli-representation}

Let
\begin{align}
    \cP_n:=\{I,X,Y,Z\}^{\otimes n}
\end{align}
denote the set of $n$-qubit Pauli operators. The Pauli operators form an orthonormal basis with respect to the normalized Hilbert--Schmidt inner product
\begin{align}
    \langle A,B\rangle
    :=
    2^{-n}\Tr(A^\dagger B).
\end{align}
Hence every operator $O$ has a unique expansion
\begin{equation}
    O=\sum_{P\in\cP_n}c_P P,
    \qquad
    c_P
    =
    2^{-n}\Tr(PO).
    \label{eq:pauli-expansion}
\end{equation}
For $P=P_1\otimes\cdots\otimes P_n\in\cP_n$, we write
\begin{align}
    \supp(P)
    :=
    \{x\in[n]:P_x\neq I\}
\end{align}
for its support and $|P|:=|\supp(P)|$ for its Pauli weight. Parseval's identity gives the normalized Hilbert--Schmidt norm
\begin{equation}
    \norm{O}_2
    :=
    \left(2^{-n}\Tr(O^\dagger O)\right)^{1/2}
    =
    \left(\sum_{P\in\cP_n}|c_P|^2\right)^{1/2}.
    \label{eq:normalized-HS-norm}
\end{equation}

\subsection{Pauli-$p$ and weighted Pauli-$2$ norms}
\label{supp:pauli-norms}

For $1\leq p<\infty$, define the unweighted Pauli-$p$ norm
\begin{equation}
    \norm{O}_{p}
    :=
    \left(
        \sum_{P\in\cP_n}|c_P|^p
    \right)^{1/p}.
    \label{eq:pauli-p-norm}
\end{equation}
For $\theta\geq0$, define the weighted Pauli-$2$ norm
\begin{equation}
    \norm{O}_{2,\theta}
    :=
    \left(
        \sum_{P\in\cP_n}
        \e^{2\theta|P|}|c_P|^2
    \right)^{1/2}.
    \label{eq:weighted-pauli-two-norm}
\end{equation}
Thus $\norm{O}_{2,0}=\norm{O}_2$. The exponent $p$ and the weight parameter $\theta$ quantify different properties of the Pauli expansion. Control of $\norm{O}_p$ for $p<2$ implies that only a limited number of coefficients can be appreciable, whereas control of $\norm{O}_{2,\theta}$ for $\theta>0$ implies exponential suppression of the high-weight tail. Quantitative statements are given in \cref{supp:truncation-preliminaries}.

We will repeatedly use two monotonicity properties. If $1\leq p\leq q<\infty$, then
\begin{equation}
    \norm{O}_{q}
    \leq
    \norm{O}_{p}.
    \label{eq:p-monotonicity}
\end{equation}
Indeed, the two sides are the $\ell_q$ and $\ell_p$ norms of the same finite coefficient vector $(|c_P|)_P$, and finite-dimensional sequence norms decrease with their exponent. Moreover, if $0\leq\eta\leq\theta$, then
\begin{equation}
    \norm{O}_{2,\eta}
    \leq
    \norm{O}_{2,\theta},
    \label{eq:theta-monotonicity}
\end{equation}
because $\e^{\eta|P|}|c_P|\leq\e^{\theta|P|}|c_P|$ for every $P$.

\subsection{Induced and absolute induced norms}
\label{supp:induced-norms}

Let $\Phi$ be a linear map on operators. For $1\leq p,q<\infty$, define
\begin{equation}
    \norm{\Phi}_{p\to q}
    :=
    \sup_{O\neq0}
    \frac{\norm{\Phi(O)}_q}{\norm{O}_p}.
    \label{eq:pauli-induced-norm}
\end{equation}
For weighted Pauli-$2$ spaces, define
\begin{equation}
    \norm{\Phi}_{2,\theta\to 2,\eta}
    :=
    \sup_{O\neq0}
    \frac{\norm{\Phi(O)}_{2,\eta}}
         {\norm{O}_{2,\theta}}.
    \label{eq:weighted-two-induced-norm}
\end{equation}

Write the Pauli transfer matrix of $\Phi$ as
\begin{equation}
    \Phi(P)
    =
    \sum_{Q\in\cP_n}\Phi_{Q,P}Q,
    \qquad
    \Phi_{Q,P}
    =
    2^{-n}\Tr\!\left(Q\Phi(P)\right).
    \label{eq:pauli-transfer-matrix}
\end{equation}
We define the absolute Pauli transfer map by
\begin{equation}
    \Phi^{\mathrm{abs}}(P)
    :=
    \sum_{Q\in\cP_n}|\Phi_{Q,P}|Q
    \label{eq:absolute-transfer-map}
\end{equation}
and extend it linearly. Its unweighted and weighted induced norms are
\begin{equation}
    \norm{\Phi}_{p\to q}^{\mathrm{abs}}
    :=
    \norm{\Phi^{\mathrm{abs}}}_{p\to q},
    \qquad
    \norm{\Phi}_{2,\theta\to 2,\eta}^{\mathrm{abs}}
    :=
    \norm{\Phi^{\mathrm{abs}}}_{2,\theta\to 2,\eta}.
    \label{eq:absolute-induced-norms}
\end{equation}
The map $\Phi^{\mathrm{abs}}$ is generally not a physical quantum map; it simply removes cancellations between Pauli transitions. If $O=\sum_Pc_PP$, then the coefficient of $Q$ in $\Phi(O)$ satisfies
\begin{equation}
    \left|\sum_P\Phi_{Q,P}c_P\right|
    \leq
    \sum_P|\Phi_{Q,P}|\,|c_P|.
    \label{eq:absolute-coefficient-domination}
\end{equation}
Taking the relevant output sequence norm and using that replacing $c_P$ by $|c_P|$ leaves the input norm unchanged gives
\begin{equation}
    \norm{\Phi}_{p\to q}
    \leq
    \norm{\Phi}_{p\to q}^{\mathrm{abs}},
    \qquad
    \norm{\Phi}_{2,\theta\to 2,\eta}
    \leq
    \norm{\Phi}_{2,\theta\to 2,\eta}^{\mathrm{abs}}.
    \label{eq:ordinary-bounded-by-absolute}
\end{equation}

The ordinary induced norms are submultiplicative:
\begin{equation}
    \norm{\Psi\circ\Phi}_{p\to r}
    \leq
    \norm{\Psi}_{q\to r}\norm{\Phi}_{p\to q},
    \label{eq:unweighted-induced-submultiplicativity}
\end{equation}
and
\begin{equation}
    \norm{\Psi\circ\Phi}_{2,\theta\to 2,\zeta}
    \leq
    \norm{\Psi}_{2,\eta\to 2,\zeta}\norm{\Phi}_{2,\theta\to 2,\eta}.
    \label{eq:weighted-induced-submultiplicativity}
\end{equation}
The absolute induced norms obey the same composition bounds. Indeed,
\begin{equation}
    |(\Psi\circ\Phi)_{R,P}|
    \leq
    \sum_Q|\Psi_{R,Q}|\,|\Phi_{Q,P}|,
    \label{eq:absolute-composition-coefficients}
\end{equation}
so $(\Psi\circ\Phi)^{\mathrm{abs}}$ is entrywise dominated by $\Psi^{\mathrm{abs}}\circ\Phi^{\mathrm{abs}}$.

\subsection{Interpolation of induced Pauli norms}
\label{supp:pauli-interpolation}

We use the standard Riesz--Thorin interpolation theorem. Suppose that
\begin{equation}
    \norm{\Phi}_{p_i\to q_i}
    \leq
    M_i,
    \qquad i\in\{0,1\}.
    \label{eq:riesz-thorin-endpoints}
\end{equation}
For $0\leq\lambda\leq1$, define
\begin{equation}
    \frac{1}{p_\lambda}
    :=
    \frac{1-\lambda}{p_0}+\frac{\lambda}{p_1},
    \qquad
    \frac{1}{q_\lambda}
    :=
    \frac{1-\lambda}{q_0}+\frac{\lambda}{q_1}.
    \label{eq:riesz-thorin-exponents}
\end{equation}
Then
\begin{equation}
    \norm{\Phi}_{p_\lambda\to q_\lambda}
    \leq
    M_0^{1-\lambda}M_1^\lambda.
    \label{eq:riesz-thorin-pauli}
\end{equation}
We will apply this result only to unweighted Pauli coefficient spaces, interpolating between $p=1$ and $p=2$.

\subsection{Weight and coefficient truncation}
\label{supp:truncation-preliminaries}

For an integer $w\geq0$, let
\begin{equation}
    \Pi_{\leq w}(O)
    :=
    \sum_{\substack{P\in\cP_n\\|P|\leq w}}c_PP
    \label{eq:weight-projector}
\end{equation}
be the orthogonal projection onto Pauli operators of weight at most $w$.

\begin{lemma}[High-weight tail bound]
\label{lem:weight-tail}
For every $\theta>0$,
\begin{equation}
    \norm{(I-\Pi_{\leq w})O}_2
    \leq
    \e^{-\theta(w+1)}\norm{O}_{2,\theta}.
    \label{eq:L2-weight-tail}
\end{equation}
\end{lemma}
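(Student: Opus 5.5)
The plan is to work directly from the Pauli expansion $O=\sum_P c_PP$ and compare the discarded mass with the weighted norm term by term. By orthogonality of the Pauli basis, $(I-\Pi_{\leq w})O=\sum_{|P|\geq w+1}c_PP$. Parseval's identity \eqref{eq:normalized-HS-norm} then gives
\begin{equation*}
    \norm{(I-\Pi_{\leq w})O}_2^2=\sum_{P:\,|P|\geq w+1}|c_P|^2 .
\end{equation*}

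The key step is a pointwise inequality on the retained index set. For every $P$ with $|P|\geq w+1$ and every $\theta>0$, we have $1\leq \e^{2\theta(|P|-(w+1))}$. Hence
\begin{equation*}
    |c_P|^2\leq \e^{-2\theta(w+1)}\,\e^{2\theta|P|}|c_P|^2 .
\end{equation*}
Summing this over $|P|\geq w+1$ bounds the tail by $\e^{-2\theta(w+1)}\sum_{|P|\geq w+1}\e^{2\theta|P|}|c_P|^2$. Since every summand is nonnegative, extending the sum to all of $\cP_n$ only increases it, and by \eqref{eq:weighted-pauli-two-norm} the extended sum is $\norm{O}_{2,\theta}^2$. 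Taking square roots then yields \eqref{eq:L2-weight-tail}.

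There is no real obstacle here. The only care needed is the indexing: the projector keeps weights $\leq w$, so the smallest discarded weight is $w+1$, and this is what produces the factor $\e^{-\theta(w+1)}$ rather than $\e^{-\theta w}$. The bound also holds trivially at $\theta=0$, but the statement only requires $\theta>0$.
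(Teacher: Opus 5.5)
Your proof is correct and follows the paper's argument essentially line for line: Parseval on the discarded terms, the pointwise bound $|c_P|^2\leq \e^{-2\theta(w+1)}\e^{2\theta|P|}|c_P|^2$ for $|P|\geq w+1$, extending the sum to all of $\cP_n$, and taking square roots. The only slip is wording: the index set $\{|P|\geq w+1\}$ is the \emph{discarded} set, not the retained one.
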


\begin{proof}
Every discarded Pauli operator satisfies $|P|\geq w+1$, and therefore
\begin{align}
    \norm{(I-\Pi_{\leq w})O}_2^2
    &=
    \sum_{|P|\geq w+1}|c_P|^2
    \\
    &\leq
    \e^{-2\theta(w+1)}
    \sum_{P}\e^{2\theta|P|}|c_P|^2.
\end{align}
Taking the square root proves the claim.
\end{proof}

For a threshold $\tau>0$, define the nonlinear pruning map
\begin{equation}
    \mathcal T_\tau(O)
    :=
    \sum_{\substack{P\in\cP_n\\|c_P|\geq\tau}}c_PP.
    \label{eq:coefficient-pruning-map}
\end{equation}

\begin{lemma}[Coefficient pruning from Pauli-$p$ control]
\label{lem:coefficient-pruning}
Let $1\leq p<2$. Then
\begin{equation}
    \norm{O-\mathcal T_\tau(O)}_2
    \leq
    \tau^{1-p/2}\norm{O}_{p}^{p/2}.
    \label{eq:coefficient-pruning-error}
\end{equation}
Moreover,
\begin{equation}
    \left|
        \left\{P\in\cP_n:|c_P|\geq\tau\right\}
    \right|
    \leq
    \tau^{-p}\norm{O}_{p}^{p}.
    \label{eq:number-retained-coefficients}
\end{equation}
\end{lemma}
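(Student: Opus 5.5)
Both bounds follow from elementary manipulations of the coefficient sequence $(c_P)_P$, so the proof is a direct computation with no appeal to earlier results. For the pruning error \eqref{eq:coefficient-pruning-error}, we write the discarded mass via Parseval as
\begin{equation*}
    \norm{O-\mathcal T_\tau(O)}_2^2=\sum_{|c_P|<\tau}|c_P|^2 .
\end{equation*}
For every discarded term we factor $|c_P|^2=|c_P|^{2-p}|c_P|^{p}$. Since $2-p>0$ and $|c_P|<\tau$, we have $|c_P|^{2-p}\leq\tau^{2-p}$. This gives
\begin{equation*}
    \sum_{|c_P|<\tau}|c_P|^2\leq\tau^{2-p}\sum_{|c_P|<\tau}|c_P|^p\leq\tau^{2-p}\norm{O}_p^p .
\end{equation*}
Taking square roots yields $\tau^{1-p/2}\norm{O}_p^{p/2}$, as claimed.

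For the counting bound \eqref{eq:number-retained-coefficients}, we use a Markov/Chebyshev-type argument. Every retained $P$ satisfies $|c_P|\geq\tau$, so $1\leq\tau^{-p}|c_P|^p$. Summing this over the retained set $\{P:|c_P|\geq\tau\}$ and extending the sum to all $P\in\cP_n$ (every term is nonnegative) gives
\begin{equation*}
    \left|\{P:|c_P|\geq\tau\}\right|\leq\tau^{-p}\sum_{|c_P|\geq\tau}|c_P|^p\leq\tau^{-p}\norm{O}_p^p .
\end{equation*}

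No step is a real obstacle. The only point needing care is the exponent in the first bound: the strict inequality $p<2$ makes $2-p>0$, which is what allows the factor $|c_P|^{2-p}$ to be bounded by $\tau^{2-p}$ and what makes the final bound vanish as $\tau\to0$. The counting bound does not actually use $p<2$; it holds for every $p\geq1$.
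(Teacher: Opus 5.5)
Your proof is correct and follows the paper's argument: it uses the same factorization $|c_P|^2=|c_P|^{2-p}|c_P|^p\leq\tau^{2-p}|c_P|^p$ over the discarded coefficients, followed by a square root, and the same observation that each retained coefficient contributes at least $\tau^p$ to $\norm{O}_p^p$. Your side remark that the counting bound holds for every $p\geq1$ is also accurate.
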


\begin{proof}
For every discarded coefficient, $|c_P|<\tau$, and hence
\begin{align}
    |c_P|^2
    =
    |c_P|^p|c_P|^{2-p}
    \leq
    \tau^{2-p}|c_P|^p.
\end{align}
Summing over the discarded coefficients and taking the square root gives \cref{eq:coefficient-pruning-error}. For the counting bound, every retained coefficient contributes at least $\tau^p$ to $\sum_P|c_P|^p$.
\end{proof}

Define the joint weight-and-coefficient truncation
\begin{equation}
    \mathcal C_{w,\tau}(O)
    :=
    \Pi_{\leq w}\mathcal T_\tau(O)
    =
    \sum_{\substack{P\in\cP_n\\|P|\leq w,\ |c_P|\geq\tau}}c_PP.
    \label{eq:joint-truncation-map}
\end{equation}

\begin{corollary}[Joint weight and coefficient truncation]
\label{cor:joint-truncation}
Let $1\leq p<2$ and $\theta>0$. Then
\begin{equation}
    \norm{O-\mathcal C_{w,\tau}(O)}_2
    \leq
    \e^{-\theta(w+1)}\norm{O}_{2,\theta}
    +
    \tau^{1-p/2}\norm{O}_{p}^{p/2}.
    \label{eq:joint-truncation-error}
\end{equation}
The truncated operator contains only Pauli operators of weight at most $w$, and their number is bounded by
\begin{equation}
    \left|
        \left\{P:\mathcal C_{w,\tau}(O)_P\neq0\right\}
    \right|
    \leq
    \tau^{-p}\norm{O}_{p}^{p}.
    \label{eq:joint-truncation-support-bound}
\end{equation}
\end{corollary}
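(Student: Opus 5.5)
The plan is to combine Lemma~\ref{lem:weight-tail} and Lemma~\ref{lem:coefficient-pruning} with a triangle inequality, using that the two truncations act on disjoint sets of Pauli strings. Write $O=\sum_P c_P P$. By definition, $\mathcal C_{w,\tau}(O)$ keeps exactly those $P$ with $|P|\leq w$ and $|c_P|\geq\tau$. Decompose the residual as
\begin{equation*}
    O-\mathcal C_{w,\tau}(O)
    =
    \bigl(O-\mathcal T_\tau(O)\bigr)
    +
    (I-\Pi_{\leq w})\mathcal T_\tau(O).
\end{equation*}
The first term is the coefficient-pruning error. The second term collects the retained-by-threshold strings of weight above $w$.

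The first term is bounded directly by Lemma~\ref{lem:coefficient-pruning}, giving $\tau^{1-p/2}\norm{O}_p^{p/2}$. For the second term, apply Lemma~\ref{lem:weight-tail} to the operator $\mathcal T_\tau(O)$. This gives the bound $\e^{-\theta(w+1)}\norm{\mathcal T_\tau(O)}_{2,\theta}$.

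The only point needing care is that $\norm{\mathcal T_\tau(O)}_{2,\theta}\leq\norm{O}_{2,\theta}$. This holds because $\mathcal T_\tau$ just deletes coefficients, and the weighted norm is a sum of nonnegative terms over strings. Summing the two bounds with the triangle inequality for $\norm{\cdot}_2$ yields \eqref{eq:joint-truncation-error}. Since the two error pieces have disjoint Pauli supports, Pythagoras would even give a slightly sharper bound, but the triangle inequality suffices.

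For the support statement, note first that every string in $\mathcal C_{w,\tau}(O)$ has weight at most $w$ by construction of $\Pi_{\leq w}$. Its support is contained in the support of $\mathcal T_\tau(O)$, namely $\{P:|c_P|\geq\tau\}$. The counting bound \eqref{eq:number-retained-coefficients} of Lemma~\ref{lem:coefficient-pruning} then gives \eqref{eq:joint-truncation-support-bound}. I do not expect any real obstacle; the only step to state explicitly is the monotonicity of the weighted norm under coefficient deletion.
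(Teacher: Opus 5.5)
Your proof is correct and is essentially the paper's argument, with the two truncations telescoped in the opposite order. The paper writes the residual as $(I-\Pi_{\leq w})O+\Pi_{\leq w}\bigl(O-\mathcal T_\tau(O)\bigr)$ and uses Hilbert--Schmidt contractivity of $\Pi_{\leq w}$; you instead apply the weight-tail lemma to $\mathcal T_\tau(O)$, which requires monotonicity of $\norm{\cdot}_{2,\theta}$ under coefficient deletion. Both auxiliary facts are immediate, and your support-count argument is the same as the paper's.
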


\begin{proof}
Using
\begin{align}
    O-\mathcal C_{w,\tau}(O)
    =
    (I-\Pi_{\leq w})O
    +
    \Pi_{\leq w}\bigl(O-\mathcal T_\tau(O)\bigr),
\end{align}
the triangle inequality, \cref{lem:weight-tail}, and contractivity of $\Pi_{\leq w}$ in the normalized Hilbert--Schmidt norm give \cref{eq:joint-truncation-error}. The support-size bound follows from \cref{eq:number-retained-coefficients}.
\end{proof}

\suppsection{Local open-system dynamics}
\label{supp:model}

We consider the Heisenberg evolution
\begin{equation}
    O(t)
    =
    \e^{t\mathcal L}(O).
    \label{eq:heisenberg-evolution}
\end{equation}

\subsection{Local Lindbladian terms}
\label{supp:local-noise}

We decompose the full Lindbladian directly into local Lindbladian terms,
\begin{equation}
    \mathcal L
    =
    \sum_{a=1}^{m}\mathcal L_a,
    \qquad
    \mathcal L_a
    :=
    \mathcal H_a+\mathcal D_a.
    \label{eq:full-lindbladian}
\end{equation}
For every label $a$, the coherent part is generated by a Pauli interaction $P_a$ supported on at most $k$ qubits,
\begin{equation}
    |\supp(P_a)|\leq k.
    \label{eq:local-block-size}
\end{equation}
The corresponding Hamiltonian generator is
\begin{equation}
    \mathcal H_a(O)
    :=
    iJ_a[P_a,O],
    \qquad
    |J_a|\leq J.
    \label{eq:local-hamiltonian-generator}
\end{equation}
Thus
\begin{equation}
    H
    =
    \sum_{a=1}^{m}J_aP_a,
    \qquad
    \mathcal H
    =
    \sum_{a=1}^{m}\mathcal H_a.
    \label{eq:full-hamiltonian-generator}
\end{equation}

The dissipative contribution $\mathcal D_a$ is supported on $\supp(P_a)$ and is assumed to be a Pauli-diagonal Lindblad generator. We therefore define its rates directly on the $n$-qubit Pauli basis:
\begin{equation}
    \mathcal D_a(P)
    =
    -\lambda_a(P)P,
    \qquad
    P\in\cP_n.
    \label{eq:local-pauli-diagonal-noise}
\end{equation}
Because $\mathcal D_a$ is supported on $\supp(P_a)$, $\lambda_a(P)=0$ whenever $\supp(P)\cap\supp(P_a)=\varnothing$. We assume the uniform bounds
\begin{equation}
    \gamma
    \leq
    \lambda_a(P)
    \leq
    \bar\gamma,
    \qquad
    P\in\cP_n,
    \quad
    \supp(P)\cap\supp(P_a)\neq\varnothing,
    \label{eq:local-noise-window}
\end{equation}
for some $0<\gamma\leq\bar\gamma$. Hence every Pauli operator whose support intersects $\supp(P_a)$ is damped by $\mathcal D_a$ at rate at least $\gamma$. The upper bound $\bar\gamma$ is used only for mixed coherent--dissipative commutators in the product-formula analysis and in the Richardson extrapolation; the regularity arguments require only the lower bound $\gamma$.

For bookkeeping, the dissipative part of the full generator may be denoted by
\begin{equation}
    \mathcal D
    :=
    \sum_{a=1}^{m}\mathcal D_a.
\end{equation}
It is derived from the primary decomposition $\mathcal L=\sum_a(\mathcal H_a+\mathcal D_a)$ rather than specified independently and then distributed among the local contributions. It is again Pauli diagonal,
\begin{equation}
    \mathcal D(P)
    =
    -\lambda_P P,
    \qquad
    \lambda_P
    =
    \sum_{a=1}^{m}\lambda_a(P).
    \label{eq:pauli-diagonal-noise}
\end{equation}
The class includes sums of single-site Pauli generators satisfying \cref{eq:local-noise-window}, as well as genuine few-qubit dissipation. For example, a two-qubit depolarizing dissipator with $|\supp(P_a)|=2$ satisfies $\lambda_a(P)=\gamma_a>0$ for every $P\in\cP_n$ such that $\supp(P)\cap\supp(P_a)\neq\varnothing$.

\subsection{Interaction hypergraph and coloring}
\label{supp:interaction-coloring}

The collection $\{\supp(P_a)\}_{a=1}^{m}$ defines a hypergraph on the qubit set $[n]$. For each qubit $x$, let
\begin{equation}
    d_x
    :=
    \abs{\{a\in[m]:x\in \supp(P_a)\}},
    \qquad
    d
    :=
    \max_{x\in[n]}d_x.
    \label{eq:local-interaction-degree}
\end{equation}
We assume for simplicity that $d_x\geq1$ for every qubit, but we do not assume that $d$ is independent of the system size. No inverse-degree weights are introduced: the dissipators $\mathcal D_a$ are primary constituents of the local generators $\mathcal L_a$, so there is no separately specified on-site generator to distribute over the incident interactions.

We associate with this hypergraph an interaction graph $G$ whose vertices are the labels $a\in[m]$, with $a$ and $b$ adjacent whenever
\begin{equation}
    \supp(P_a)\cap \supp(P_b)\neq\varnothing.
    \label{eq:interaction-graph-adjacency}
\end{equation}

\begin{lemma}[Interaction-graph coloring]
\label{lem:bounded-degree-coloring}
The interaction graph satisfies
\begin{equation}
    \Delta(G)
    \leq
    k(d-1).
    \label{eq:interaction-graph-degree}
\end{equation}
Consequently, the labels admit a partition
\begin{equation}
    [m]
    =
    \mathcal C_1\sqcup\cdots\sqcup\mathcal C_\chi
    \label{eq:color-partition}
\end{equation}
into at most
\begin{equation}
    \chi
    \leq
    k(d-1)+1
    \label{eq:number-colors}
\end{equation}
colors such that the sets $\{\supp(P_a):a\in\mathcal C_c\}$ are pairwise disjoint for every fixed color $c$.
\end{lemma}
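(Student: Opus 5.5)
The plan is to bound the maximum degree of $G$ by a direct counting argument and then obtain the partition from greedy coloring.

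\textbf{Degree bound.} Fix a label $a$. Any neighbour $b\neq a$ of $a$ in $G$ satisfies $\supp(P_b)\cap\supp(P_a)\neq\varnothing$, so $b$ contains at least one qubit $x\in\supp(P_a)$. By the definition of $d_x$ in \cref{eq:local-interaction-degree}, each such qubit $x$ lies in at most $d$ supports. One of these supports is $\supp(P_a)$ itself, so $x$ contributes at most $d-1$ labels other than $a$. A union bound over the at most $k$ qubits of $\supp(P_a)$, using \cref{eq:local-block-size}, gives
\begin{equation}
    \deg_G(a)
    \leq
    \sum_{x\in\supp(P_a)}(d_x-1)
    \leq
    k(d-1).
\end{equation}
A neighbour whose support meets $\supp(P_a)$ in several qubits is counted several times. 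This overcount only helps, since we want an upper bound. Taking the maximum over $a$ yields \cref{eq:interaction-graph-degree}.

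\textbf{Coloring.} Next I would apply the standard greedy coloring to $G$. Order the labels $1,\dots,m$ arbitrarily and assign each label the smallest color in $\{1,\dots,\Delta(G)+1\}$ not already used by one of its previously colored neighbours. At most $\Delta(G)$ colors are blocked at each step, so a free color always exists. This gives a proper coloring with $\chi\leq\Delta(G)+1\leq k(d-1)+1$ colors, which is \cref{eq:number-colors}. The color classes $\mathcal C_c$ form the partition \cref{eq:color-partition}.

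\textbf{Disjointness.} Properness of the coloring means that no two labels in the same class are adjacent in $G$. By the adjacency rule \cref{eq:interaction-graph-adjacency}, this says exactly that their supports are disjoint. Hence $\{\supp(P_a):a\in\mathcal C_c\}$ is pairwise disjoint for every $c$.

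I do not expect any real obstacle. The only points needing care are the "$-1$" that excludes $a$ itself in the degree count, and noting that the bound holds for $d$ arbitrary: nothing requires $d=O(1)$.
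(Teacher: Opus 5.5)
Your proposal is correct and follows the paper's proof essentially verbatim: the same per-qubit count $\deg_G(a)\leq\sum_{x\in\supp(P_a)}(d_x-1)\leq k(d-1)$, followed by a greedy coloring with at most $\Delta(G)+1$ colors. You also spell out why properness of the coloring gives pairwise disjoint supports within each color, which the paper leaves implicit.
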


\begin{proof}
Fix $a\in[m]$. For each $x\in \supp(P_a)$, at most $d_x-1$ other terms contain $x$. Therefore
\begin{equation}
    \deg_G(a)
    \leq
    \sum_{x\in \supp(P_a)}(d_x-1)
    \leq
    k(d-1).
\end{equation}
A greedy coloring uses at most $\Delta(G)+1$ colors.
\end{proof}

For every color $c$, define
\begin{equation}
    \mathcal H^{(c)}
    :=
    \sum_{a\in\mathcal C_c}\mathcal H_a,
    \qquad
    \mathcal D^{(c)}
    :=
    \sum_{a\in\mathcal C_c}\mathcal D_a,
    \qquad
    \mathcal L^{(c)}
    :=
    \sum_{a\in\mathcal C_c}\mathcal L_a.
    \label{eq:color-open-system-generator}
\end{equation}
Since the terms assigned the same color have disjoint supports,
\begin{equation}
    \e^{t\mathcal H^{(c)}}
    =
    \prod_{a\in\mathcal C_c}\e^{t\mathcal H_a},
    \qquad
    \e^{t\mathcal L^{(c)}}
    =
    \prod_{a\in\mathcal C_c}\e^{t\mathcal L_a}.
    \label{eq:exact-color-factorization}
\end{equation}
Moreover,
\begin{equation}
    \mathcal L
    =
    \sum_{c=1}^{\chi}\mathcal L^{(c)}.
    \label{eq:full-lindbladian-decomposition}
\end{equation}

\subsection{Finite-step decompositions}
\label{supp:finite-step-decompositions}

For a step size $\delta\geq0$, define the split local evolution
\begin{equation}
    \mathcal T_{a,\delta}
    :=
    \e^{\delta\mathcal D_a}
    \e^{\delta\mathcal H_a}.
    \label{eq:split-local-map}
\end{equation}
Products of maps are read from right to left, so $\mathcal T_{a,\delta}$ first applies the coherent rotation and then the dissipative map associated with the same label $a$. By contrast, the exact local evolution is $\e^{\delta\mathcal L_a}$.

For every color $c$, let
\begin{equation}
    \mathcal D^{(c)}
    :=
    \sum_{a\in\mathcal C_c}\mathcal D_a.
    \label{eq:color-noise-generator}
\end{equation}
The split color evolution is
\begin{equation}
    \mathcal T_{\delta}^{(c)}
    :=
    \e^{\delta\mathcal D^{(c)}}
    \e^{\delta\mathcal H^{(c)}}
    =
    \prod_{a\in\mathcal C_c}
    \mathcal T_{a,\delta},
    \label{eq:split-color-map}
\end{equation}
where the last equality follows from the disjointness of the supports within each color. The corresponding exact color evolution is $\e^{\delta\mathcal L^{(c)}}$, which factorizes over the local terms by \cref{eq:exact-color-factorization}.

The full Trotterised step is the ordered product of split color maps,
\begin{equation}
    \Phi_\delta^{\mathrm{Tr}}
    :=
    \mathcal T_{\delta}^{(\chi)}
    \cdots
    \mathcal T_{\delta}^{(1)}.
    \label{eq:implemented-color-step}
\end{equation}
For comparison, the color-product step is the product of exact color evolutions
\begin{equation}
    \Phi_\delta^{\mathrm{col}}
    :=
    \e^{\delta\mathcal L^{(\chi)}}
    \cdots
    \e^{\delta\mathcal L^{(1)}}.
    \label{eq:exact-color-product}
\end{equation}
The finite-step error then decomposes as
\begin{equation}
\begin{aligned}
    \Phi_\delta^{\mathrm{Tr}}-\e^{\delta\mathcal L}
    ={}&
    \left(
        \Phi_\delta^{\mathrm{Tr}}
        -
        \Phi_\delta^{\mathrm{col}}
    \right)
    \\
    &+
    \left(
        \Phi_\delta^{\mathrm{col}}
        -
        \e^{\delta\mathcal L}
    \right).
\end{aligned}
    \label{eq:two-stage-color-decomposition}
\end{equation}
The first term is the error incurred by splitting the coherent and dissipative parts within each color. The second is the first-order product-formula error between distinct color generators.

All transformations within one color can be performed in parallel. The sequential depth of one implemented step is therefore at most
\begin{equation}
    \chi
    \leq
    k(d-1)+1,
    \label{eq:color-step-depth}
\end{equation}
independently of the total number of local generators and qubits.

\subsection{Passage to continuous time}
\label{supp:continuous-time-limit}

Recall that
\begin{align}
    \mathcal T_{\delta}^{(c)}
    =
    \e^{\delta\mathcal D^{(c)}}
    \e^{\delta\mathcal H^{(c)}},
    \qquad
    \Phi_\delta^{\mathrm{Tr}}
    =
    \mathcal T_{\delta}^{(\chi)}\cdots\mathcal T_{\delta}^{(1)}.
\end{align}
Since the operator space is finite dimensional, the Lie--Trotter product formula converges in every induced norm. For every $t\geq0$,
\begin{equation}
    \e^{t\mathcal L}
    =
    \lim_{r\to\infty}
    \left(
        \e^{(t/r)\mathcal L^{(\chi)}}
        \cdots
        \e^{(t/r)\mathcal L^{(1)}}
    \right)^r.
    \label{eq:continuous-time-color-limit}
\end{equation}
Moreover,
\begin{equation}
    \mathcal T_{\delta}^{(c)}
    =
    I+\delta\mathcal L^{(c)}+O(\delta^2)
    \label{eq:split-color-first-order-expansion}
\end{equation}
as $\delta\to0$, and therefore
\begin{equation}
    \e^{t\mathcal L}
    =
    \lim_{r\to\infty}
    \left(\Phi_{t/r}^{\mathrm{Tr}}\right)^r.
    \label{eq:continuous-time-split-limit}
\end{equation}

\begin{lemma}[Contractivity from finite steps]
\label{lem:finite-step-to-continuous-contractivity}
Fix $1\leq p<\infty$ and $\theta\geq0$. If $\norm{\Phi_\delta^{\mathrm{Tr}}}_{p\to p}\leq1$ for every sufficiently small $\delta>0$, then
\begin{equation}
    \norm{\e^{t\mathcal L}}_{p\to p}
    \leq1
    \label{eq:continuous-time-p-contractivity}
\end{equation}
for every $t\geq0$. Likewise, if $\norm{\Phi_\delta^{\mathrm{Tr}}}_{2,\theta\to 2,\theta}\leq1$ for every sufficiently small $\delta>0$, then
\begin{equation}
    \norm{\e^{t\mathcal L}}_{2,\theta\to 2,\theta}
    \leq1
    \label{eq:continuous-time-weighted-two-contractivity}
\end{equation}
for every $t\geq0$.
\end{lemma}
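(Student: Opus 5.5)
The plan is to pass from finite-step contractivity to continuous time through the product-formula limit \cref{eq:continuous-time-split-limit}, using submultiplicativity of induced norms and continuity of the norm on the finite-dimensional operator space. Fix $t\geq0$ and $1\leq p<\infty$. The case $t=0$ is trivial, since $\e^{0\cL}=I$ has induced norm $1$. For $t>0$, choose $r_0$ so that $t/r\leq\delta_0$ for all $r\geq r_0$, where $\delta_0$ is the threshold below which $\norm{\Phi_\delta^{\mathrm{Tr}}}_{p\to p}\leq1$ holds by hypothesis.

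Then, for every $r\geq r_0$, the unweighted case of \cref{eq:unweighted-induced-submultiplicativity} gives
\begin{equation}
    \norm{\bigl(\Phi_{t/r}^{\mathrm{Tr}}\bigr)^r}_{p\to p}
    \leq
    \norm{\Phi_{t/r}^{\mathrm{Tr}}}_{p\to p}^{\,r}
    \leq 1 .
\end{equation}
Since the operator space has finite dimension $4^n$, every norm on linear maps is equivalent. The convergence $\bigl(\Phi_{t/r}^{\mathrm{Tr}}\bigr)^r\to\e^{t\cL}$ in \cref{eq:continuous-time-split-limit} therefore holds in $\norm{\cdot}_{p\to p}$. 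The function $\Psi\mapsto\norm{\Psi}_{p\to p}$ is a norm and hence continuous; alternatively, the reverse triangle inequality gives the same conclusion. Passing to the limit yields $\norm{\e^{t\cL}}_{p\to p}\leq1$.

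The weighted statement follows by the same three steps. We replace $\norm{\cdot}_{p\to p}$ by $\norm{\cdot}_{2,\theta\to2,\theta}$ and use \cref{eq:weighted-induced-submultiplicativity} with all weights equal to $\theta$. The weighted induced norm is also a genuine norm on a finite-dimensional space, since the weights $\e^{\theta|P|}$ are positive and bounded. Hence convergence in any norm again implies convergence of the weighted norms.

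The only point needing care is justifying that \cref{eq:continuous-time-split-limit} holds in the relevant induced norm and not merely entrywise. This follows from finite dimensionality and the first-order expansion \cref{eq:split-color-first-order-expansion}, both already invoked in the paper. I do not expect a genuine obstacle beyond stating this norm equivalence explicitly. An alternative check that avoids norm equivalence is to apply the convergence to each fixed observable $O$. Continuity of the coefficient map then gives $\norm{\e^{t\cL}(O)}_p=\lim_r\norm{(\Phi^{\mathrm{Tr}}_{t/r})^r(O)}_p\leq\norm{O}_p$, and the same argument works for the weighted norm.
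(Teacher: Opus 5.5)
Your proposal is correct and follows the same route as the paper. You choose $r$ so that $t/r$ lies in the small-step regime, bound $\norm{(\Phi_{t/r}^{\mathrm{Tr}})^r}_{p\to p}$ (respectively the weighted norm) by one via submultiplicativity, and pass to the limit $r\to\infty$ using \cref{eq:continuous-time-split-limit}; your remark that finite dimensionality makes the convergence hold in the relevant induced norms fills in a step the paper leaves implicit.
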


\begin{proof}
Choose $r$ sufficiently large that $t/r$ lies in the corresponding small-step regime. Submultiplicativity gives
\begin{align}
    \norm{(\Phi_{t/r}^{\mathrm{Tr}})^r}_{p\to p}\leq1
\end{align}
in the first case and
\begin{align}
    \norm{(\Phi_{t/r}^{\mathrm{Tr}})^r}_{2,\theta\to 2,\theta}\leq1
\end{align}
in the second. Taking $r\to\infty$ and using \cref{eq:continuous-time-split-limit} proves both statements.
\end{proof}

The same argument applies when contractivity is established for each exact color evolution $\e^{\delta\mathcal L^{(c)}}$ rather than for the split step.

\suppsection{Pauli coefficient and weight regularity}
\label{supp:regularity}

We now prove two complementary regularity properties. The weighted Pauli-$2$ estimate controls the growth of Pauli weight, while the unweighted Pauli-$p$ estimate controls the proliferation of significant coefficients. 

\subsection{A fixed local generator}
\label{supp:single-local-block}

Recall that the split local evolution is
\begin{align}
    \mathcal T_{a,t}
    =
    \e^{t\mathcal D_a}\e^{t\mathcal H_a},
\end{align}
whereas the exact local evolution is $\e^{t\mathcal L_a}$.

Fix a Hamiltonian label $a$. Define the active Pauli subspace
\begin{equation}
    \mathsf K_a
    :=
    \operatorname{span}
    \left\{
        P\in\cP_n:
        \mathrm{supp}(P)\cap \supp(P_a)\neq\emptyset
    \right\}.
    \label{eq:active-pauli-subspace}
\end{equation}
Its Pauli-orthogonal complement is
\begin{equation}
    \mathsf K_a^\perp
    =
    \operatorname{span}
    \left\{
        P\in\cP_n:
        \mathrm{supp}(P)\cap \supp(P_a)=\emptyset
    \right\}.
    \label{eq:inactive-pauli-subspace}
\end{equation}
Both subspaces are invariant under $\mathcal H_a$ and $\mathcal D_a$, and both generators vanish on $\mathsf K_a^\perp$. Thus $\e^{t\mathcal L_a}$ and $\mathcal T_{a,t}$ act as the identity on $\mathsf K_a^\perp$.

The action of $\mathcal H_a$ on $\mathsf K_a$ decomposes into invariant one- and two-dimensional Pauli subspaces. Every Pauli operator commuting with $P_a$ is fixed. If $P_a$ anticommutes with $P$, let $Q$ be the unique phase-free Pauli operator proportional to $P_aP$. There exists $\xi_{a,P}\in\{-1,+1\}$ such that
\begin{equation}
    \mathcal H_a(P)
    =
    2J_a\xi_{a,P}Q,
    \qquad
    \mathcal H_a(Q)
    =
    -2J_a\xi_{a,P}P.
    \label{eq:local-pauli-pair-generator}
\end{equation}
Since multiplication by $P_a$ changes tensor factors only inside $\supp(P_a)$,
\begin{equation}
    \bigl||Q|-|P|\bigr|
    \leq
    |P_a|
    \leq k.
    \label{eq:local-weight-change}
\end{equation}

\begin{lemma}[Weighted growth of one local Pauli rotation]
\label{lem:weighted-local-H-growth}
For every $a$, every $\theta\geq0$, and every $t\geq0$,
\begin{equation}
    \norm{\e^{t\mathcal H_a}}_{2,\theta\to 2,\theta}
    \leq
    \exp\!\left(
        2|J_a|t\sinh(k\theta)
    \right).
    \label{eq:weighted-local-H-growth}
\end{equation}
\end{lemma}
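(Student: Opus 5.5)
The approach is to decompose $\e^{t\mathcal H_a}$ into its invariant blocks and bound each block separately in the weighted norm. By the discussion preceding the lemma, the Pauli basis splits into fixed points of $\mathcal H_a$ (Paulis commuting with $P_a$) and two-dimensional invariant pairs $\{P,Q\}$ with $Q\propto P_aP$. These blocks are orthogonal for the weighted Pauli-$2$ norm, which is a diagonally weighted $\ell_2$ norm. Hence
\begin{equation*}
\norm{\e^{t\mathcal H_a}}_{2,\theta\to2,\theta}=\max_{\text{blocks}}\norm{\text{block}}_{2,\theta\to 2,\theta}.
\end{equation*}
On a fixed point the block is the identity, with norm $1\le\exp(2|J_a|t\sinh(k\theta))$.

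For a pair $\{P,Q\}$, set $w_P=\e^{\theta|P|}$, $w_Q=\e^{\theta|Q|}$ and $\phi=2J_at$. By \cref{eq:local-pauli-pair-generator}, the block acts on coefficients as the rotation $R=\begin{pmatrix}\cos\phi & -\xi\sin\phi\\ \xi\sin\phi&\cos\phi\end{pmatrix}$. In the rescaled variables $y=(w_Pc_P,w_Qc_Q)$ it becomes $DRD^{-1}$ with $D=\mathrm{diag}(w_P,w_Q)$. Write $r=w_Q/w_P=\e^{\theta(|Q|-|P|)}$; by \cref{eq:local-weight-change}, $r\in[\e^{-k\theta},\e^{k\theta}]$. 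The conjugated matrix is
\begin{equation*}
M(\phi)=\begin{pmatrix}\cos\phi & -\xi r^{-1}\sin\phi\\ \xi r\sin\phi & \cos\phi\end{pmatrix}.
\end{equation*}

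Rather than computing the singular values of $M(\phi)$ directly, I would use a generator (Gronwall-type) argument. We have $M(\phi)=\exp(\phi G)$ with $G=\begin{pmatrix}0&-\xi r^{-1}\\ \xi r&0\end{pmatrix}$. For $y(s)=\exp(sG)y_0$, the rate is $\frac{d}{ds}\norm{y}^2=2\langle y,Gy\rangle=2\langle y,\tfrac{G+G^\top}{2}y\rangle$. The symmetric part is $\tfrac{G+G^\top}{2}=\tfrac{\xi(r-r^{-1})}{2}\begin{pmatrix}0&1\\1&0\end{pmatrix}$, whose spectral norm is $\tfrac{|r-r^{-1}|}{2}=|\sinh(\theta(|Q|-|P|))|\le\sinh(k\theta)$. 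Therefore $\frac{d}{ds}\norm{y}\le\sinh(k\theta)\norm{y}$, and integrating over $s\in[0,|\phi|]$ gives $\norm{M(\phi)}\le\exp(|\phi|\sinh(k\theta))=\exp(2|J_a|t\sinh(k\theta))$. Negative $\phi$ is covered by the same argument with $-G$, or by \cref{lem:gronwall} applied to $f=\norm{y}$.

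The only potentially delicate point is the block-decomposition step: we must check that distinct pairs $\{P,Q\}$ are disjoint, which holds because $P\mapsto P_aP$ is an involution up to phase. This ensures the weighted norm splits as an orthogonal direct sum and the per-block bounds combine into the stated global bound.
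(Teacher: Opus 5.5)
Your proposal is correct and follows essentially the same route as the paper: restrict to the invariant two-dimensional pairs $\{P,Q\}$, pass to the weighted coordinates $y_P=\e^{\theta|P|}u$, $y_Q=\e^{\theta|Q|}v$, and bound the derivative of the squared weighted norm by $4|J_a|\sinh(k\theta)$ times itself. You then conclude with Gronwall and combine the disjoint blocks, as the paper does. Your spectral-norm bound on the symmetric part of the conjugated generator is the matrix form of the paper's estimate $8|J_a||\sinh(\theta\Delta)|\,|y_Py_Q|\leq 4|J_a|\sinh(k\theta)F$; for complex coefficients write the derivative as $2\operatorname{Re}\langle y,Gy\rangle$, which changes nothing.
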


\begin{proof}
The Pauli basis decomposes into invariant one- and two-dimensional subspaces under $\mathcal H_a$. Pauli operators commuting with $P_a$ span one-dimensional invariant subspaces on which $\mathcal H_a$ vanishes. It therefore suffices to consider a two-dimensional invariant subspace spanned by anticommuting partners $P$ and $Q$.

Write
\begin{equation}
    O(t)
    =
    \e^{t\mathcal H_a}(O(0))
    =
    u(t)P+v(t)Q.
\end{equation}
By \cref{eq:local-pauli-pair-generator},
\begin{equation}
    \dot u(t)
    =
    -2J_a\xi_{a,P}v(t),
    \qquad
    \dot v(t)
    =
    2J_a\xi_{a,P}u(t).
    \label{eq:local-pair-coefficient-equations}
\end{equation}
Define the weighted coefficients
\begin{equation}
    y_P(t)
    :=
    \e^{\theta|P|}u(t),
    \qquad
    y_Q(t)
    :=
    \e^{\theta|Q|}v(t),
\end{equation}
and let
\begin{align}
    \Delta:=|Q|-|P|.
\end{align}
Then
\begin{equation}
    \dot y_P(t)
    =
    -2J_a\xi_{a,P}\e^{-\theta\Delta}y_Q(t),
    \qquad
    \dot y_Q(t)
    =
    2J_a\xi_{a,P}\e^{\theta\Delta}y_P(t).
    \label{eq:weighted-local-pair-equations}
\end{equation}

Set
\begin{equation}
    F(t)
    :=
    |y_P(t)|^2+|y_Q(t)|^2.
\end{equation}
Using \cref{eq:weighted-local-pair-equations}, we obtain
\begin{align}
    F'(t)
    &=
    2\operatorname{Re}\!\left(
        \overline{y_P(t)}\dot y_P(t)
        +
        \overline{y_Q(t)}\dot y_Q(t)
    \right)
    \nonumber\\
    &=
    8J_a\xi_{a,P}\sinh(\theta\Delta)
    \operatorname{Re}\!\left(
        \overline{y_P(t)}y_Q(t)
    \right)
    \nonumber\\
    &\leq
    8|J_a|
    |\sinh(\theta\Delta)|
    |y_P(t)|\,|y_Q(t)|
    \nonumber\\
    &\leq
    4|J_a|\sinh(k\theta)F(t),
    \label{eq:weighted-local-pair-differential-bound}
\end{align}
where we used
\begin{align}
    |\Delta|
    \leq
    |P_a|
    \leq
    k
\end{align}
and
\begin{align}
    2|y_P(t)y_Q(t)|
    \leq
    |y_P(t)|^2+|y_Q(t)|^2.
\end{align}

Integrating \cref{eq:weighted-local-pair-differential-bound} from $0$ to $t$ gives
\begin{align}
    F(t)
    &=
    F(0)
    +
    \int_0^t F'(s)\,\mathrm ds
    \nonumber\\
    &\leq
    F(0)
    +
    4|J_a|\sinh(k\theta)
    \int_0^t F(s)\,\mathrm ds.
    \label{eq:weighted-local-pair-integral-bound}
\end{align}
We may therefore apply Gronwall's inequality with
\begin{equation}
    f(t)=F(t),
    \qquad
    a=F(0),
    \qquad
    b=4|J_a|\sinh(k\theta),
\end{equation}
which yields
\begin{equation}
    F(t)
    \leq
    F(0)
    \exp\!\left(
        4|J_a|t\sinh(k\theta)
    \right).
    \label{eq:weighted-local-pair-gronwall-bound}
\end{equation}

The same inequality trivially holds on every one-dimensional commuting sector, since the corresponding weighted coefficient is constant. Summing \cref{eq:weighted-local-pair-gronwall-bound} over all mutually disjoint invariant Pauli subspaces gives
\begin{equation}
    \norm{\e^{t\mathcal H_a}(O)}_{2,\theta}^2
    \leq
    \exp\!\left(
        4|J_a|t\sinh(k\theta)
    \right)
    \norm{O}_{2,\theta}^2.
\end{equation}
Taking square roots and then the supremum over $O\neq0$ proves \cref{eq:weighted-local-H-growth}.
\end{proof}

\subsection{Weighted-$2$ contraction}
\label{supp:weighted-two-contraction}

For every $P\in\cP_n$ with $\supp(P)\cap\supp(P_a)\neq\varnothing$, the local dissipator damps the coefficient of $P$ at rate at least $\gamma$:
\begin{equation}
    \lambda_a(P)
    \geq
    \gamma.
    \label{eq:minimum-block-damping}
\end{equation}
This is precisely the local spectral-gap assumption in \cref{eq:local-noise-window}; no Pauli-weight estimate is needed at the single-generator level.

\begin{lemma}[Strict weighted-$2$ damping on an active sector]
\label{lem:strict-weighted-two-local}
For every $a$, every $\theta\geq0$, and every $t\geq0$,
\begin{align}
    \norm{\left.\e^{t\mathcal L_a}\right|_{\mathsf K_a}}_{2,\theta\to 2,\theta}
    &\leq
    \exp\!\left[-t\left(
        \gamma-2|J_a|\sinh(k\theta)
    \right)\right],
    \label{eq:exact-local-weighted-two-bound}
    \\
    \norm{\left.\mathcal T_{a,t}\right|_{\mathsf K_a}}_{2,\theta\to 2,\theta}
    &\leq
    \exp\!\left[-t\left(
        \gamma-2|J_a|\sinh(k\theta)
    \right)\right].
    \label{eq:split-local-weighted-two-bound}
\end{align}
\end{lemma}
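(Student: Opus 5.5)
The plan is to treat the two bounds separately, using the invariant-subspace structure already established. Both $\mathcal H_a$ and $\mathcal D_a$ preserve $\mathsf K_a$, and $\mathcal D_a$ is diagonal there. On $\mathsf K_a$ it is bounded above by $-\gamma$ in the weighted inner product: since the weights $\e^{2\theta|P|}$ are diagonal in the Pauli basis, $\langle O,\mathcal D_a O\rangle_{2,\theta}=-\sum_P\lambda_a(P)\e^{2\theta|P|}|c_P|^2\leq-\gamma\norm{O}_{2,\theta}^2$ for every $O\in\mathsf K_a$. Consequently $\norm{\e^{s\mathcal D_a}|_{\mathsf K_a}}_{2,\theta\to2,\theta}\leq\e^{-\gamma s}$. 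This holds because $\e^{s\mathcal D_a}$ simply multiplies each coefficient by $\e^{-s\lambda_a(P)}\leq\e^{-\gamma s}$.

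For the split map \eqref{eq:split-local-weighted-two-bound}, I will combine this with Lemma~\ref{lem:weighted-local-H-growth}. The first step is to check that the Pauli pairs $\{P,Q\}$ rotated by $\mathcal H_a$ lie entirely in $\mathsf K_a$. A Pauli anticommuting with $P_a$ must overlap $\supp(P_a)$, and so must $Q\propto P_aP$. Commuting Paulis are fixed, so they stay in whichever sector they started in. Hence $\e^{t\mathcal H_a}$ restricted to $\mathsf K_a$ obeys the same bound $\exp(2|J_a|t\sinh(k\theta))$; the proof of that lemma was sector by sector. Submultiplicativity \eqref{eq:weighted-induced-submultiplicativity} on the invariant subspace then gives
\begin{equation*}
\norm{\mathcal T_{a,t}|_{\mathsf K_a}}_{2,\theta\to2,\theta}\leq\e^{-\gamma t}\,\e^{2|J_a|t\sinh(k\theta)},
\end{equation*}
which is the second claim.

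For the exact evolution \eqref{eq:exact-local-weighted-two-bound}, I will use an energy estimate inside each two-dimensional sector. Write $O(t)=\e^{t\mathcal L_a}O$ restricted to a pair $\{P,Q\}$ and set $F=\e^{2\theta|P|}|u|^2+\e^{2\theta|Q|}|v|^2$. Differentiating, the coherent part contributes at most $4|J_a|\sinh(k\theta)F$, exactly as in the proof of Lemma~\ref{lem:weighted-local-H-growth}. The dissipative part contributes $-2\lambda_a(P)\e^{2\theta|P|}|u|^2-2\lambda_a(Q)\e^{2\theta|Q|}|v|^2\leq-2\gamma F$. Together these give $F'\leq-2(\gamma-2|J_a|\sinh(k\theta))F$.

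Gronwall in differential form then yields the bound; it applies even when the rate is negative, by considering $\e^{2ct}F$ with $c=\gamma-2|J_a|\sinh(k\theta)$. This argument is legitimate because $\mathcal L_a$ preserves each pair sector: $\mathcal D_a$ is diagonal and $\mathcal H_a$ preserves the sector. One-dimensional commuting sectors inside $\mathsf K_a$ decay at rate $\lambda_a(P)\geq\gamma$, which is at least as fast, so they satisfy the bound trivially. Summing over the orthogonal sectors gives the first claim.

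Alternatively, the exact bound follows from the split bound by the Lie--Trotter limit $\e^{t\mathcal L_a}=\lim_r(\mathcal T_{a,t/r})^r$, as in Lemma~\ref{lem:finite-step-to-continuous-contractivity}: the exponents add to exactly $-tc$. I expect no real obstacle in either route. The only points needing care are the invariance of $\mathsf K_a$, which makes the restricted norms meaningful and submultiplicative, and the sign handling when $\gamma<2|J_a|\sinh(k\theta)$. The Gronwall step is stated in the form that covers that case.
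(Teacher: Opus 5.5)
Your proposal is correct and follows essentially the same route as the paper. For the split map you compose the $\e^{-\gamma t}$ dissipative contraction on $\mathsf K_a$ with the weighted growth bound of Lemma~\ref{lem:weighted-local-H-growth}; for the exact evolution you add the dissipative contribution $-2\gamma F$ to the coherent derivative estimate on each invariant sector and integrate. Your use of the differential form of Gronwall, via $\e^{2ct}F$, is slightly more careful than the paper's appeal to Lemma~\ref{lem:gronwall}: that lemma is stated only for a nonnegative rate $b$, so it does not literally cover the decaying case $\gamma>2|J_a|\sinh(k\theta)$.
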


\begin{proof}
For the exact evolution, the proof of \cref{lem:weighted-local-H-growth} bounds the Hamiltonian contribution to the derivative of the squared weighted norm by $4|J_a|\sinh(k\theta)\norm{O(t)}_{2,\theta}^2$. By \cref{eq:minimum-block-damping}, the noise contribution is at most $-2\gamma\norm{O(t)}_{2,\theta}^2$. Gronwall's inequality proves the first estimate. For the split evolution, combine \cref{eq:weighted-local-H-growth} with
\begin{align}
    \norm{\left.\e^{t\mathcal D_a}\right|_{\mathsf K_a}}_{2,\theta\to 2,\theta}
    \leq\e^{-\gamma t}.
\end{align}
\end{proof}

Define
\begin{equation}
    \theta_\star
    :=
    \frac{1}{k}\operatorname{arsinh}\!\left(
        \frac{\gamma}{2J}
    \right).
    \label{eq:theta-star-definition}
\end{equation}

\begin{corollary}[Local weighted-$2$ contractivity]
\label{cor:local-weighted-two-contractivity}
For every $0\leq\theta\leq\theta_\star$, every $a$, and every $t\geq0$,
\begin{equation}
    \norm{\e^{t\mathcal L_a}}_{2,\theta\to 2,\theta}
    \leq1,
    \qquad
    \norm{\mathcal T_{a,t}}_{2,\theta\to 2,\theta}
    \leq1.
    \label{eq:local-weighted-two-contractivity}
\end{equation}
\end{corollary}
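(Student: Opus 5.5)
The plan is to split the operator space into the active sector $\mathsf K_a$ and its complement $\mathsf K_a^\perp$, bound each block separately, and then recombine using orthogonality in the weighted norm.

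\emph{Orthogonal splitting.} The weighted norm $\norm{\cdot}_{2,\theta}$ is a weighted $\ell_2$ norm on Pauli coefficients, and each Pauli string lies in exactly one of $\mathsf K_a$ or $\mathsf K_a^\perp$. Writing $O=O_{\mathsf K}+O_{\perp}$ with $O_{\mathsf K}\in\mathsf K_a$ and $O_\perp\in\mathsf K_a^\perp$, we therefore have
\[
\norm{O}_{2,\theta}^2=\norm{O_{\mathsf K}}_{2,\theta}^2+\norm{O_\perp}_{2,\theta}^2 .
\]
Both subspaces are invariant under $\mathcal H_a$ and $\mathcal D_a$, as noted before \cref{lem:weighted-local-H-growth}. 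Hence both $\e^{t\mathcal L_a}$ and $\mathcal T_{a,t}$ are block diagonal with respect to this splitting. The same Pythagorean identity then holds for the images, and the induced norm of each map is the maximum of its two block norms.

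\emph{Inactive block.} Both generators vanish on $\mathsf K_a^\perp$. The two maps therefore act as the identity there, and the block norm equals $1$.

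\emph{Active block.} Here I invoke \cref{lem:strict-weighted-two-local}, which bounds both restricted maps by $\exp[-t(\gamma-2|J_a|\sinh(k\theta))]$. It then suffices to show $\gamma-2|J_a|\sinh(k\theta)\geq0$ for $0\leq\theta\leq\theta_\star$. Since $\sinh$ is increasing and $|J_a|\leq J$,
\[
2|J_a|\sinh(k\theta)\leq 2J\sinh(k\theta_\star)=2J\cdot\frac{\gamma}{2J}=\gamma ,
\]
using the definition \cref{eq:theta-star-definition}. This uses $J>0$, which is the standing assumption. If some $J_a=0$, the bound holds trivially. So the exponent is nonpositive, the bound is at most $1$, and both maps have active-block norm at most $1$.

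\emph{Conclusion and main point.} Combining the two blocks gives $\norm{\Phi(O)}_{2,\theta}^2\leq\norm{O}_{2,\theta}^2$ for $\Phi\in\{\e^{t\mathcal L_a},\mathcal T_{a,t}\}$, which is \cref{eq:local-weighted-two-contractivity}. There is no real obstacle. The only care needed is to verify that the splitting is orthogonal for the \emph{weighted} norm, which holds because the weights are diagonal in the Pauli basis, and that the block structure lets the strict contraction on $\mathsf K_a$ combine with the identity on $\mathsf K_a^\perp$ without loss.
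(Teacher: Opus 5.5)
Your proof is correct and is the paper's argument: on $\mathsf K_a$ the exponent in \cref{lem:strict-weighted-two-local} is nonpositive because $2|J_a|\sinh(k\theta)\leq 2J\sinh(k\theta_\star)=\gamma$, and on $\mathsf K_a^\perp$ both maps are the identity. Your remark that the splitting is orthogonal in the weighted norm, and that both maps are block diagonal with respect to it, just writes out the recombination step that the paper leaves implicit.
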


\begin{proof}
On $\mathsf K_a$, the exponent in \cref{eq:exact-local-weighted-two-bound,eq:split-local-weighted-two-bound} is nonpositive because $2|J_a|\sinh(k\theta)\leq2J\sinh(k\theta_\star)=\gamma$. On $\mathsf K_a^\perp$, both maps are the identity.
\end{proof}

\subsection{Pauli-$1$ growth}
\label{supp:pauli-one-growth}

\begin{lemma}[Pauli-$1$ growth of one local rotation]
\label{lem:local-pauli-one-growth}
For every $a$ and every $t\geq0$,
\begin{equation}
    \norm{\e^{t\mathcal H_a}}_{1\to1}
    \leq
    \e^{2|J_a|t}.
    \label{eq:local-pauli-one-growth}
\end{equation}
\end{lemma}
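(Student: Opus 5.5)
The approach follows the proof of \cref{lem:weighted-local-H-growth}: decompose $\e^{t\mathcal H_a}$ over its invariant Pauli sectors and bound the $\ell_1$ norm of the coefficients sector by sector. Pauli operators commuting with $P_a$ are fixed by $\e^{t\mathcal H_a}$, so their contribution to $\norm{\cdot}_1$ is unchanged. On a two-dimensional sector spanned by anticommuting partners $P,Q$, the coefficient equations \cref{eq:local-pair-coefficient-equations} integrate explicitly to a rotation by angle $2J_a t$:
\begin{align}
    u(t)&=\cos(2J_at)\,u(0)-\xi_{a,P}\sin(2J_at)\,v(0),\\
    v(t)&=\xi_{a,P}\sin(2J_at)\,u(0)+\cos(2J_at)\,v(0).
\end{align}

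Next, apply the triangle inequality to obtain
\begin{equation}
    |u(t)|+|v(t)|\leq\bigl(|\cos(2J_at)|+|\sin(2J_at)|\bigr)\bigl(|u(0)|+|v(0)|\bigr).
\end{equation}
This is exactly a Schur-type column-sum bound for the $2\times2$ rotation block. It remains to show $|\cos x|+|\sin x|\leq \e^{|x|}$ for all real $x$. There are two regimes:
\begin{itemize}
    \item For $|x|\leq\pi/2$, we have $|\cos x|+|\sin x|\leq 1+|x|\leq\e^{|x|}$, using $|\sin x|\leq|x|$ and $|\cos x|\leq1$.
    \item For larger $|x|$, the left side is at most $\sqrt2<\e^{\pi/2}\leq \e^{|x|}$.
\end{itemize}
Hence, with $x=2J_at$, the $\ell_1$ mass on each sector grows by a factor of at most $\e^{2|J_a|t}$.

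Finally, the invariant sectors partition $\cP_n$, so the $\ell_1$ norm is additive over them. Summing the sector bounds gives $\norm{\e^{t\mathcal H_a}(O)}_1\leq\e^{2|J_a|t}\norm{O}_1$, and taking the supremum over $O$ yields \cref{eq:local-pauli-one-growth}.

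An alternative route is a Gronwall argument: the right derivative of $|u|+|v|$ is at most $2|J_a|(|u|+|v|)$, and \cref{lem:gronwall} then gives the same bound. The explicit rotation avoids the nondifferentiability of the absolute value, so I would use it. I expect no real obstacle; the only point needing care is the elementary inequality, which the two-regime split above handles.
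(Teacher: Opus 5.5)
Your proposal is correct and follows the paper's proof essentially verbatim: decompose into invariant Pauli sectors, bound each anticommuting pair by $\bigl(|\cos(2J_at)|+|\sin(2J_at)|\bigr)(|u|+|v|)$ via the triangle inequality, apply $|\cos x|+|\sin x|\leq \e^{|x|}$, and combine over the disjoint sectors. Your two-regime split is unnecessary, since $|\cos x|\leq 1$ and $|\sin x|\leq |x|$ hold for all real $x$, which gives $|\cos x|+|\sin x|\leq 1+|x|\leq \e^{|x|}$ directly.
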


\begin{proof}
On a two-dimensional invariant subspace spanned by anticommuting partners $P,Q$,
\begin{align}
    \norm{\e^{t\mathcal H_a}(uP+vQ)}_1
    &\leq
    \left(|\cos(2J_at)|+|\sin(2J_at)|\right)(|u|+|v|)
    \\
    &\leq
    \e^{2|J_a|t}(|u|+|v|).
\end{align}
Commuting Pauli operators are fixed, and taking the maximum over all invariant subspaces proves the claim.
\end{proof}

\begin{lemma}[Pauli-$1$ bound for a local noisy evolution]
\label{lem:noisy-local-pauli-one}
For every $a$ and every $t\geq0$,
\begin{align}
    \norm{\left.\e^{t\mathcal L_a}\right|_{\mathsf K_a}}_{1\to1}
    &\leq
    \exp\!\left[t\left(2|J_a|-\gamma\right)\right],
    \label{eq:exact-local-pauli-one-bound}
    \\
    \norm{\left.\mathcal T_{a,t}\right|_{\mathsf K_a}}_{1\to1}
    &\leq
    \exp\!\left[t\left(2|J_a|-\gamma\right)\right].
    \label{eq:split-local-pauli-one-bound}
\end{align}
\end{lemma}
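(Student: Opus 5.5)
The plan mirrors the proof of \cref{lem:strict-weighted-two-local}, with the weighted Pauli-$2$ norm replaced by the Pauli-$1$ norm. The split evolution is immediate, so I treat it first. The active sector $\mathsf K_a$ is invariant under both $\mathcal H_a$ and $\mathcal D_a$, hence
\begin{align}
    \left.\mathcal T_{a,t}\right|_{\mathsf K_a}
    =
    \left.\e^{t\mathcal D_a}\right|_{\mathsf K_a}
    \left.\e^{t\mathcal H_a}\right|_{\mathsf K_a}.
\end{align}
I then apply submultiplicativity \cref{eq:unweighted-induced-submultiplicativity}. For the coherent factor, \cref{lem:local-pauli-one-growth} gives $\norm{\e^{t\mathcal H_a}}_{1\to1}\leq\e^{2|J_a|t}$, and restricting to an invariant subspace cannot increase the induced norm. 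For the dissipative factor, $\e^{t\mathcal D_a}$ is diagonal on $\mathsf K_a$ with entries $\e^{-t\lambda_a(P)}$, and $\lambda_a(P)\geq\gamma$ there by \cref{eq:minimum-block-damping}. Its Pauli-$1$ induced norm is therefore at most $\e^{-\gamma t}$. Multiplying the two bounds gives \cref{eq:split-local-pauli-one-bound}.

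For the exact evolution, I decompose $\mathsf K_a$ into the invariant sectors of $\mathcal H_a$. Each sector is either the span of a single Pauli operator commuting with $P_a$, or a pair $\{P,Q\}$ with $Q\propto P_aP$ as in \cref{eq:local-pauli-pair-generator}. Both $P$ and $Q$ lie in $\mathsf K_a$, since $Q$ differs from $P$ only inside $\supp(P_a)$ and the anticommutation forces both to meet $\supp(P_a)$. Because $\mathcal D_a$ is Pauli diagonal, each sector is also invariant under $\mathcal L_a$. The Pauli-$1$ norm is additive over a Pauli-basis partition, so $\norm{\cdot}_{1\to1}$ on $\mathsf K_a$ is the maximum over sectors. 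On a commuting one-dimensional sector the evolution is multiplication by $\e^{-t\lambda_a(P)}\leq\e^{-\gamma t}\leq\e^{t(2|J_a|-\gamma)}$.

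On a two-dimensional sector, write $O(t)=u(t)P+v(t)Q$ with
\begin{align}
    \dot u=-\lambda_a(P)u-2J_a\xi v,
    \qquad
    \dot v=-\lambda_a(Q)v+2J_a\xi u .
\end{align}
Working with $|u|+|v|$ directly needs care, because it is not differentiable at zeros. I would avoid this with the following argument. Pass to the variables $\tilde u=\e^{\gamma t}u$ and $\tilde v=\e^{\gamma t}v$, which satisfy the same system with rates $\lambda_a-\gamma\geq0$. Then use the Lie--Trotter formula \cref{eq:continuous-time-split-limit} restricted to this two-dimensional block. Alternating steps $\e^{s\mathcal H_a}$ and the residual damping have Pauli-$1$ norms at most $\e^{2|J_a|s}$ and $1$. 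Submultiplicativity and the limit $r\to\infty$ then give $\norm{\tilde O(t)}_1\leq\e^{2|J_a|t}\norm{O(0)}_1$. Undoing the factor $\e^{\gamma t}$ yields \cref{eq:exact-local-pauli-one-bound}.

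The main obstacle I anticipate is this exact-evolution step: Gronwall's inequality is not directly available for a non-smooth norm. The fallback through the product-formula limit sidesteps it cleanly, since the sector is finite dimensional and the limit converges in every norm.
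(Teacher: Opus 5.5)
Your proof is correct. The split bound matches the paper's argument: submultiplicativity combining \cref{lem:local-pauli-one-growth} with the diagonal contraction $\e^{-\gamma t}$ on $\mathsf K_a$.

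For the exact evolution you take a different route. The paper works directly with the differential inequality. It bounds the upper right (Dini) derivative of $|u|+|v|$ on each active two-dimensional sector by $(2|J_a|-\gamma)(|u|+|v|)$ and then appeals to Gronwall. You instead obtain the exact bound as a limit of product formulas, the same mechanism the paper later uses in \cref{lem:finite-step-to-continuous-contractivity} and \cref{thm:supp-regularity}. Your route needs only submultiplicativity and continuity of norms in finite dimensions. It therefore avoids the points the paper's one-line appeal leaves implicit: the $\ell_1$ norm is not differentiable at zeros of a coefficient, so a Dini-derivative version of Gronwall is needed. Also, \cref{lem:gronwall} is stated for $b\geq0$, while $2|J_a|-\gamma$ may be negative, so one must first rescale by $\e^{\gamma t}$ or argue monotonicity of $\e^{-(2|J_a|-\gamma)t}(|u|+|v|)$. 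The paper's route, in turn, is a direct semigroup estimate that does not depend on first proving the split bound.

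Your rescaling by $\e^{\gamma t}$ and the sector-by-sector decomposition are in fact unnecessary. Since $\mathsf K_a$ is invariant, $\e^{t\mathcal L_a}|_{\mathsf K_a}=\lim_{r\to\infty}(\mathcal T_{a,t/r}|_{\mathsf K_a})^r$. The split bound you already proved gives $\norm{(\mathcal T_{a,t/r}|_{\mathsf K_a})^r}_{1\to1}\leq\e^{t(2|J_a|-\gamma)}$ for every $r$, so the exact bound follows in one line.
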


\begin{proof}
For the exact evolution, the Hamiltonian contribution to the upper right derivative of the coefficient $\ell_1$ norm on every active two-dimensional sector is at most $2|J_a|$ times that norm, while the noise contribution is at most $-\gamma$ times the norm. Gronwall's inequality gives the first bound. For the split evolution, combine \cref{lem:local-pauli-one-growth} with $\norm{\left.\e^{t\mathcal D_a}\right|_{\mathsf K_a}}_{1\to1} \leq\e^{-\gamma t}$.
\end{proof}

\subsection{Unweighted Pauli-$p$ contractivity}
\label{supp:unweighted-pauli-p}

At $\theta=0$, \cref{lem:strict-weighted-two-local} gives
\begin{equation}
    \norm{\left.\e^{t\mathcal L_a}\right|_{\mathsf K_a}}_{2\to2}
    \leq\e^{-\gamma t},
    \qquad
    \norm{\left.\mathcal T_{a,t}\right|_{\mathsf K_a}}_{2\to2}
    \leq\e^{-\gamma t}.
    \label{eq:local-pauli-two-damping}
\end{equation}
Interpolating these estimates with \cref{eq:exact-local-pauli-one-bound,eq:split-local-pauli-one-bound} gives the following bound.

\begin{lemma}[Unweighted Pauli-$p$ bound]
\label{lem:unweighted-local-pauli-p}
Let $1\leq p\leq2$. Then
\begin{align}
    \norm{\left.\e^{t\mathcal L_a}\right|_{\mathsf K_a}}_{p\to p}
    &\leq
    \exp\!\left\{t\left[
        2|J_a|\left(\frac{2}{p}-1\right)-\gamma
    \right]\right\},
    \label{eq:exact-local-pauli-p-bound}
    \\
    \norm{\left.\mathcal T_{a,t}\right|_{\mathsf K_a}}_{p\to p}
    &\leq
    \exp\!\left\{t\left[
        2|J_a|\left(\frac{2}{p}-1\right)-\gamma
    \right]\right\}.
    \label{eq:split-local-pauli-p-bound}
\end{align}
\end{lemma}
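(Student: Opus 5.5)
The plan is to apply the Riesz--Thorin interpolation (\cref{lem:interpolation}, or equivalently \cref{eq:riesz-thorin-pauli}) to the restricted map $\Phi\coloneqq\left.\e^{t\mathcal L_a}\right|_{\mathsf K_a}$, and likewise to $\left.\mathcal T_{a,t}\right|_{\mathsf K_a}$. Both maps preserve $\mathsf K_a$, so each can be viewed as a linear map on the finite coefficient space $\mathbb C^{\mathcal K}$, where $\mathcal K\coloneqq\{P:\supp(P)\cap\supp(P_a)\neq\varnothing\}$. On this space the Pauli-$p$ norms are ordinary $\ell_p$ norms, so the interpolation lemma applies directly.

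The two endpoint bounds are already available. At $p=1$, \cref{lem:noisy-local-pauli-one} gives $M_1\coloneqq\e^{t(2|J_a|-\gamma)}$. At $p=2$, \cref{eq:local-pauli-two-damping} gives $M_2\coloneqq\e^{-\gamma t}$. For $1\leq p\leq 2$, I will use \cref{lem:interpolation} with $r=1$, $s=p$, $t=2$. The interpolation parameter $\vartheta$ (renamed to avoid a clash with the time $t$) is defined by
\begin{equation*}
\frac1p=\vartheta+\frac{1-\vartheta}{2},
\end{equation*}
which gives $\vartheta=\frac2p-1\in[0,1]$. Then
\begin{equation*}
\norm{\Phi}_{p\to p}\leq M_1^{\vartheta}M_2^{1-\vartheta}
=\exp\!\left\{t\left[(2|J_a|-\gamma)\vartheta-\gamma(1-\vartheta)\right]\right\}
=\exp\!\left\{t\left[2|J_a|\left(\tfrac2p-1\right)-\gamma\right]\right\}.
\end{equation*}
This is \cref{eq:exact-local-pauli-p-bound}. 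The split map is handled by the identical computation, using \cref{eq:split-local-pauli-one-bound} and the split half of \cref{eq:local-pauli-two-damping}, which yields \cref{eq:split-local-pauli-p-bound}.

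The only point needing care is that the interpolation lemma is stated for complex $\ell_q$ spaces on finite index sets. We must therefore check two things. First, the restriction to $\mathsf K_a$ is legitimate: the subspace is invariant, as noted after \cref{eq:inactive-pauli-subspace}. Second, the endpoint bounds hold for complex coefficient vectors rather than only for Hermitian observables. Both endpoint lemmas were proved sector-wise on arbitrary complex coefficients $u,v$, so this is satisfied. I expect no genuine obstacle beyond this bookkeeping; the substance was already contained in the two endpoint estimates.
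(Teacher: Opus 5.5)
Your proposal is correct and is exactly the paper's argument: Riesz--Thorin interpolation between the Pauli-$1$ bound of \cref{lem:noisy-local-pauli-one} and the Pauli-$2$ damping bound \cref{eq:local-pauli-two-damping}, with your weight $\vartheta=2/p-1$ on the $p=1$ endpoint coinciding with the paper's $1-\lambda$. Your checks on the invariance of $\mathsf K_a$ and on complex coefficient vectors are sound bookkeeping that the paper leaves implicit.
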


\begin{proof}
Set $\lambda=2(1-1/p)$, so that $1/p=(1-\lambda)+\lambda/2$ and $1-\lambda=2/p-1$. The Riesz--Thorin theorem applied to the $p=1$ and $p=2$ bounds gives the result.
\end{proof}

Define
\begin{equation}
    p_\star
    :=
    \frac{2}{1+\min\left\{1,\frac{\gamma}{2J}\right\}}.
    \label{eq:p-star-definition}
\end{equation}

\begin{corollary}[Local Pauli-$p$ contractivity]
\label{cor:unweighted-local-pauli-p}
For every $p_\star\leq p\leq2$, every $a$, and every $t\geq0$,
\begin{equation}
    \norm{\e^{t\mathcal L_a}}_{p\to p}
    \leq1,
    \qquad
    \norm{\mathcal T_{a,t}}_{p\to p}
    \leq1.
    \label{eq:unweighted-local-pauli-p-contractivity}
\end{equation}
\end{corollary}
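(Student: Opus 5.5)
The plan mirrors the proof of \cref{cor:local-weighted-two-contractivity}. We split the coefficient space into the active sector $\mathsf K_a$ and its complement $\mathsf K_a^\perp$. On $\mathsf K_a$ we show that the exponent in \cref{lem:unweighted-local-pauli-p} is nonpositive, and then we recombine the two sectors.

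\emph{Active sector.} Fix $p_\star\le p\le 2$. The function $p\mapsto 2/p-1$ is decreasing, so $2/p-1\le 2/p_\star-1=\min\{1,\gamma/(2J)\}$. Using $|J_a|\le J$, the exponent in \cref{eq:exact-local-pauli-p-bound,eq:split-local-pauli-p-bound} satisfies
\begin{equation*}
    2|J_a|\left(\frac{2}{p}-1\right)-\gamma
    \le
    2J\min\left\{1,\frac{\gamma}{2J}\right\}-\gamma
    \le 0 .
\end{equation*}
Hence both $\left.\e^{t\mathcal L_a}\right|_{\mathsf K_a}$ and $\left.\mathcal T_{a,t}\right|_{\mathsf K_a}$ have $p\to p$ norm at most $1$ for all $t\ge0$.

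\emph{Inactive sector.} On $\mathsf K_a^\perp$ both maps act as the identity, since $\mathcal H_a$ and $\mathcal D_a$ vanish there.

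\emph{Recombination.} Both subspaces are invariant and spanned by disjoint sets of Pauli basis vectors. Write $O=O_{\mathsf K}+O_\perp$ for the corresponding decomposition of an observable. The $\ell_p$ norm splits additively in its $p$-th power:
\begin{equation*}
    \norm{O}_p^p=\norm{O_{\mathsf K}}_p^p+\norm{O_\perp}_p^p .
\end{equation*}
The image of $O$ decomposes in the same way, with the active part contracted and the inactive part unchanged. It follows that $\norm{\Phi(O)}_p^p\le\norm{O}_p^p$ for $\Phi\in\{\e^{t\mathcal L_a},\mathcal T_{a,t}\}$.

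\emph{Main obstacle.} The only delicate point is the interpolation step already packaged in \cref{lem:unweighted-local-pauli-p}. We should check that Riesz--Thorin applies to the restricted map on the finite-dimensional space $\mathsf K_a$, using the endpoint bounds $\e^{t(2|J_a|-\gamma)}$ at $p=1$ and $\e^{-\gamma t}$ at $p=2$. With $\lambda=2(1-1/p)$, the interpolated exponent is $(1-\lambda)(2|J_a|-\gamma)-\lambda\gamma=2|J_a|(2/p-1)-\gamma$, as the lemma states. Given that lemma, the corollary is immediate.
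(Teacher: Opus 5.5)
Your proof is correct and follows the paper's argument essentially verbatim. On $\mathsf K_a$ the exponent of \cref{lem:unweighted-local-pauli-p} is nonpositive because $2|J_a|(2/p-1)\leq 2J\min\{1,\gamma/(2J)\}\leq\gamma$, both maps are the identity on $\mathsf K_a^\perp$, and the sectors recombine because they occupy disjoint sets of Pauli coefficients; your explicit additivity of $\norm{\cdot}_p^p$ and your check of the Riesz--Thorin exponent just spell out steps the paper leaves implicit.
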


\begin{proof}
The restrictions to $\mathsf K_a$ are contractive because $2J(2/p-1)\leq\gamma$. Both maps are the identity on $\mathsf K_a^\perp$, and the Pauli coefficient sets supporting the two subspaces are disjoint.
\end{proof}

\subsection{Composition over interaction colors}
\label{supp:composition-over-colors}

Recall that
\begin{align}
    \mathcal T_{t}^{(c)}
    =
    \e^{t\mathcal D^{(c)}}\e^{t\mathcal H^{(c)}}
    =
    \prod_{a\in\mathcal C_c}\mathcal T_{a,t},
\end{align}
while the exact color evolution is $\e^{t\mathcal L^{(c)}}$.

\begin{proposition}[Contractivity of one interaction color]
\label{prop:color-regularity}
For every color $c$ and every $t\geq0$,
\begin{align}
    \norm{\e^{t\mathcal L^{(c)}}}_{p\to p}
    &\leq1,
    &
    \norm{\mathcal T_{t}^{(c)}}_{p\to p}
    &\leq1,
    && p_\star\leq p\leq2,
    \label{eq:color-p-contractivity}
    \\
    \norm{\e^{t\mathcal L^{(c)}}}_{2,\theta\to 2,\theta}
    &\leq1,
    &
    \norm{\mathcal T_{t}^{(c)}}_{2,\theta\to 2,\theta}
    &\leq1,
    && 0\leq\theta\leq\theta_\star.
    \label{eq:color-weight-contractivity}
\end{align}
\end{proposition}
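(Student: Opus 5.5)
The plan is to reduce the color-level statement to the single-block contractivity already established in Corollaries~\ref{cor:local-weighted-two-contractivity} and~\ref{cor:unweighted-local-pauli-p}. By \cref{eq:exact-color-factorization,eq:split-color-map}, both $\e^{t\mathcal L^{(c)}}$ and $\mathcal T^{(c)}_t$ factor as products over $a\in\mathcal C_c$ of the corresponding local maps $\e^{t\mathcal L_a}$ and $\mathcal T_{a,t}$. These factors pairwise commute, because their supports are disjoint. Submultiplicativity of the induced norms (\cref{eq:unweighted-induced-submultiplicativity,eq:weighted-induced-submultiplicativity}) then bounds each color map's $p\to p$ norm by the product of the local norms. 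The same holds for the $(2,\theta)\to(2,\theta)$ norm. Each factor is at most $1$ for $p_\star\le p\le 2$ and $0\le\theta\le\theta_\star$, so the product is at most $1$.

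The one point to check is that the local corollaries are stated as operators on the full $n$-qubit Pauli space, not merely on the restricted sector $\mathsf K_a$. This is the case: each local map is the identity on $\mathsf K_a^\perp$, and both $\mathsf K_a$ and $\mathsf K_a^\perp$ are spanned by disjoint sets of Pauli strings. It follows that the $\ell_p$ norm splits as $\norm{O}_p^p=\norm{O|_{\mathsf K_a}}_p^p+\norm{O|_{\mathsf K_a^\perp}}_p^p$. The weighted $\ell_2$ norm splits the same way, since the weight $\e^{2\theta|P|}$ is attached to individual Pauli strings. Contractivity on each piece therefore yields contractivity on the whole space, which the corollaries already record.

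No real obstacle arises. The only subtlety is that, because the norms are submultiplicative, we do not need the finer tensor-product structure: a crude product bound suffices, and no Riesz--Thorin argument has to be redone at the color level. Interpolating again would only be necessary if we wanted strict contraction rates, which the statement does not ask for.
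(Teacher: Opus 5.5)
Your argument is correct. It rests on the same two ingredients as the paper's proof: the factorization of $\e^{t\mathcal L^{(c)}}$ and $\mathcal T^{(c)}_t$ over the disjoint blocks of $\mathcal C_c$, and the single-block contractivity in \cref{cor:unweighted-local-pauli-p,cor:local-weighted-two-contractivity}.

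Where you differ is in how the blocks are combined. The paper regards the colour map as a tensor product of local maps on disjoint subsystems. It then uses that the induced $\ell_p$ norm, and the weighted Pauli-$2$ norm, are multiplicative over such tensor products; the weighted case holds because $\e^{\theta|P|}$ factorizes over disjoint supports. You instead keep every factor as a map on the full $n$-qubit Pauli space. You observe that the local corollaries are already full-space statements, via the $\mathsf K_a\oplus\mathsf K_a^\perp$ splitting, and so you need only submultiplicativity under composition.

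Your route uses less. It needs no tensor-product structure of the weighted space, and no multiplicativity of $p\to p$ norms over tensor products, which is standard but is asserted without proof in the paper. In exchange, the tensor route identifies the colour-map norm exactly as the product of the local norms, which is more than the proposition requires. Your composition argument is also the one the paper itself uses later for the symmetric colour updates in \cref{cor:symmetric-colour-regularity}.
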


\begin{proof}
The supports $\supp(P_a)$ with $a\in\mathcal C_c$ are pairwise disjoint, so the exact and split color maps are tensor products of the corresponding local maps. The unweighted Pauli coefficient space is an $\ell_p$ space over this product basis, and its induced tensor-product norm is the product of the local induced norms. The weighted Pauli-$2$ space also factorizes because Pauli weight is additive over disjoint subsystems. The result follows from \cref{cor:unweighted-local-pauli-p,cor:local-weighted-two-contractivity}.
\end{proof}

Recall that
\begin{align}
    \Phi_\delta^{\mathrm{Tr}}
    =
    \mathcal T_{\delta}^{(\chi)}\cdots\mathcal T_{\delta}^{(1)},
    \qquad
    \Phi_\delta^{\mathrm{col}}
    =
    \e^{\delta\mathcal L^{(\chi)}}\cdots
    \e^{\delta\mathcal L^{(1)}}.
\end{align}

\begin{corollary}[Contractivity of one full finite step]
\label{cor:full-finite-step-contractivity}
For every $\delta\geq0$,
\begin{align}
    \norm{\Phi_\delta^{\mathrm{col}}}_{p\to p}
    &\leq1,
    &
    \norm{\Phi_\delta^{\mathrm{Tr}}}_{p\to p}
    &\leq1,
    && p_\star\leq p\leq2,
    \label{eq:full-step-p-contractivity}
    \\
    \norm{\Phi_\delta^{\mathrm{col}}}_{2,\theta\to 2,\theta}
    &\leq1,
    &
    \norm{\Phi_\delta^{\mathrm{Tr}}}_{2,\theta\to 2,\theta}
    &\leq1,
    && 0\leq\theta\leq\theta_\star.
    \label{eq:full-step-weight-contractivity}
\end{align}
\end{corollary}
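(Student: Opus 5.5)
The corollary follows by composing the color-level contractions of \cref{prop:color-regularity} through submultiplicativity of the induced norms. Fix $\delta\geq0$.

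\emph{Unweighted case.} Fix $p_\star\leq p\leq2$. By definition,
\begin{equation*}
    \Phi_\delta^{\mathrm{Tr}}=\mathcal T_\delta^{(\chi)}\cdots\mathcal T_\delta^{(1)},
    \qquad
    \Phi_\delta^{\mathrm{col}}=\e^{\delta\mathcal L^{(\chi)}}\cdots\e^{\delta\mathcal L^{(1)}}.
\end{equation*}
\Cref{prop:color-regularity} gives $\norm{\mathcal T_\delta^{(c)}}_{p\to p}\leq1$ and $\norm{\e^{\delta\mathcal L^{(c)}}}_{p\to p}\leq1$ for every color $c$. Applying \cref{eq:unweighted-induced-submultiplicativity} $\chi-1$ times, with all intermediate exponents equal to $p$, yields
\begin{equation*}
    \norm{\Phi_\delta^{\mathrm{Tr}}}_{p\to p}\leq\prod_{c=1}^{\chi}\norm{\mathcal T_\delta^{(c)}}_{p\to p}\leq1,
\end{equation*}
and the same bound for $\Phi_\delta^{\mathrm{col}}$.

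\emph{Weighted case.} Fix $0\leq\theta\leq\theta_\star$. The argument is identical, using \cref{eq:weighted-induced-submultiplicativity} with $\theta=\eta=\zeta$ at every stage together with \cref{eq:color-weight-contractivity}.

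\emph{Potential obstacle.} The only point needing care is that the contraction exponents must be uniform across colors. Here $p_\star$ and $\theta_\star$ depend only on $J$, $\gamma$, and $k$, not on $c$, so the same admissible $p$ or $\theta$ works in every factor. Since $\delta\geq0$ was arbitrary, and \cref{prop:color-regularity} holds for all $t\geq0$, the bounds hold for every step size with no smallness assumption.

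This finite-step statement is exactly the hypothesis of \cref{lem:finite-step-to-continuous-contractivity}, which then delivers the continuous-time part of \cref{thm:methods-suppression}.

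The symmetric step $S_h$ is handled the same way. Each factor $Q_{c,h}=\e^{h\cD^{(c)}/2}\e^{h\mathcal H^{(c)}}\e^{h\cD^{(c)}/2}$ factorizes over the blocks of color $c$. Each local factor $\e^{h\cD_a/2}\e^{h\mathcal H_a}\e^{h\cD_a/2}$ is contractive on $\mathsf K_a$: the two half-step damping factors give $\e^{-\gamma h}$ in total, which offsets the growth bounds of \cref{lem:weighted-local-H-growth} and \cref{lem:local-pauli-one-growth}, with Riesz--Thorin interpolation as in \cref{lem:unweighted-local-pauli-p}. The palindromic product is then contractive by the same submultiplicativity argument.
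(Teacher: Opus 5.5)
Your proposal is correct and matches the paper's proof. The paper likewise observes that both $\Phi_\delta^{\mathrm{Tr}}$ and $\Phi_\delta^{\mathrm{col}}$ are compositions of the $\chi$ color maps that \cref{prop:color-regularity} shows to be contractive, so submultiplicativity gives the bounds, and your remarks on $S_h$ mirror \cref{cor:symmetric-colour-regularity}.
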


\begin{proof}
Both full steps are compositions of the $\chi$ contractive color maps.
\end{proof}

\begin{corollary}[Contractivity of symmetric colour updates]
\label{cor:symmetric-colour-regularity}
Let
\[
    \mathcal Q_{a,h}:=
    \e^{h\mathcal D_a/2}\e^{h\mathcal H_a}\e^{h\mathcal D_a/2}.
\]
For every $h\geq0$, $p_\star\leq p\leq2$, and $0\leq\theta\leq\theta_\star$, the maps $\mathcal Q_{a,h}$, $Q_{c,h}$, and $\mathsf S_h$ of \cref{eq:main-symmetric-product-formula} are contractive in both the Pauli-$p$ norm and the weighted Pauli-$2$ norm.
\end{corollary}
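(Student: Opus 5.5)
The plan is to reduce everything to the single-label map $\mathcal Q_{a,h}$ and then reuse the tensorization and composition arguments from \cref{prop:color-regularity} and \cref{cor:full-finite-step-contractivity}. For a fixed label $a$, split the operator space as $\mathsf K_a\oplus\mathsf K_a^\perp$. Both $\mathcal H_a$ and $\mathcal D_a$ leave these subspaces invariant and vanish on $\mathsf K_a^\perp$. Hence $\mathcal Q_{a,h}$ is the identity on $\mathsf K_a^\perp$, and it suffices to bound its restriction to $\mathsf K_a$. These two subspaces are spanned by disjoint sets of Pauli strings. The $\ell_p$ norm and the weighted $\ell_2$ norm are therefore both \emph{block-additive} across the splitting: the $p$-th power, respectively the square, of the norm is the sum of the corresponding powers on each block. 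So a contraction on each block gives a contraction overall.

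On $\mathsf K_a$, I will factor $\mathcal Q_{a,h}$ into the two half-step dissipators and the rotation and use submultiplicativity of the induced norms. Since $\mathcal D_a$ is Pauli diagonal with rates at least $\gamma$ on $\mathsf K_a$, each factor $\e^{h\mathcal D_a/2}|_{\mathsf K_a}$ has $p\to p$ norm and $(2,\theta)\to(2,\theta)$ norm at most $\e^{-\gamma h/2}$. The two factors together give $\e^{-\gamma h}$. \cref{lem:weighted-local-H-growth} bounds the weighted norm of the rotation by $\e^{2|J_a|h\sinh(k\theta)}$; the rotation preserves $\mathsf K_a$, because multiplying by $P_a$ cannot remove overlap with $\supp(P_a)$ on an anticommuting pair. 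This yields
\begin{equation*}
\norm{\mathcal Q_{a,h}|_{\mathsf K_a}}_{2,\theta\to2,\theta}\leq\exp\!\left[-h\left(\gamma-2|J_a|\sinh(k\theta)\right)\right]\leq1
\end{equation*}
for $\theta\le\theta_\star$.

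For the $p$-norm I repeat the argument of \cref{lem:unweighted-local-pauli-p}. At $p=1$, \cref{lem:local-pauli-one-growth} gives the bound $\e^{h(2|J_a|-\gamma)}$. At $p=2$ the rotation is an isometry, giving $\e^{-\gamma h}$. Riesz--Thorin (\cref{lem:interpolation}) applied to $\mathcal Q_{a,h}|_{\mathsf K_a}$ then yields $\exp\{h[2|J_a|(2/p-1)-\gamma]\}$, which is at most $1$ for $p\ge p_\star$.

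Next comes the colour level. Within a colour $c$ the supports are disjoint, so $Q_{c,h}=\prod_{a\in\mathcal C_c}\mathcal Q_{a,h}$; the half-step dissipators factor in the same way because the $\mathcal D_a$ for $a\in\mathcal C_c$ commute and act on disjoint supports. Exactly as in \cref{prop:color-regularity}, each norm of a tensor product of local maps is the product of the local norms: for the weighted norm this uses additivity of Pauli weight over disjoint subsystems. Alternatively, one can avoid tensor-norm facts and simply compose the commuting factors $\mathcal Q_{a,h}$, each contractive on the whole space. Finally, $\mathsf S_h$ is a composition of maps $Q_{c,h/2}$ and $Q_{\chi,h}$, each contractive for every nonnegative step size, so submultiplicativity gives contractivity of $\mathsf S_h$.

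I expect the only delicate point to be the Riesz--Thorin step. It must be applied to the restricted map on $\mathsf K_a$, viewed as a map on a coordinate $\ell_p$ space, with the endpoint bounds taken for that restricted operator. It must not be applied to the full map, whose norm at the endpoints is only $\max\{1,\cdot\}$ because of the identity block. The Pauli coordinates on $\mathsf K_a$ form a genuine sub-basis and the map preserves the subspace, so this restriction is legitimate, and the block-additivity then handles $\mathsf K_a^\perp$.
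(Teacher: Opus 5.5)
Your proof is correct and follows the paper's argument. You decompose into $\mathsf K_a\oplus\mathsf K_a^\perp$ and obtain $\e^{h[2|J_a|\sinh(k\theta)-\gamma]}$ and $\e^{h[2|J_a|(2/p-1)-\gamma]}$ on the active sector, then use disjointness within a colour and submultiplicativity for $Q_{c,h}$ and $\mathsf S_h$. The only cosmetic difference is where Riesz--Thorin enters: the paper interpolates the rotation $\e^{h\mathcal H_a}$ alone on the full space and then composes with the two damping half-steps, which avoids the restriction subtlety you flag, whereas you interpolate the restricted composite map; both give the same exponent.
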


\begin{proof}
Interpolation between \cref{eq:local-pauli-one-growth} and the Pauli-$2$ isometry of $\e^{h\mathcal H_a}$ gives
\[
    \norm{\e^{h\mathcal H_a}}_{p\to p}
    \leq\e^{2|J_a|h(2/p-1)}.
\]
On the invariant active sector $\mathsf K_a$, each dissipative half step contracts either norm by at least $\e^{-\gamma h/2}$. Together with \cref{eq:weighted-local-H-growth}, this yields
\begin{align*}
    \norm{\mathcal Q_{a,h}|_{\mathsf K_a}}_{p\to p}
    &\leq\e^{h[2|J_a|(2/p-1)-\gamma]}\leq1,\\
    \norm{\mathcal Q_{a,h}|_{\mathsf K_a}}_{2,\theta\to2,\theta}
    &\leq\e^{h[2|J_a|\sinh(k\theta)-\gamma]}\leq1.
\end{align*}
On $\mathsf K_a^\perp$ the map is the identity, so the full local map is contractive. Since the supports within a colour are disjoint, $Q_{c,h}=\prod_{a\in\mathcal C_c}\mathcal Q_{a,h}$. Submultiplicativity proves contractivity of $Q_{c,h}$ and of the palindromic composition $\mathsf S_h$.
\end{proof}

\subsection{Proof of the main regularity theorem}
\label{supp:proof-main-regularity}

\begin{theorem}[Pauli coefficient and weight regularity]
\label{thm:supp-regularity}
Assume the model of \cref{supp:model}, with $|\supp(P_a)|\leq k$, $|J_a|\leq J$, and local damping rate at least $\gamma>0$. Let $p_\star$ and $\theta_\star$ be defined in \cref{eq:p-star-definition,eq:theta-star-definition}. Then, for every observable $O$ and every $t\geq0$,
\begin{align}
    \norm{\e^{t\mathcal L}(O)}_{p}
    &\leq
    \norm{O}_{p},
    && p_\star\leq p\leq2,
    \label{eq:supp-main-p-regularity}
    \\
    \norm{\e^{t\mathcal L}(O)}_{2,\theta}
    &\leq
    \norm{O}_{2,\theta},
    && 0\leq\theta\leq\theta_\star.
    \label{eq:supp-main-weight-regularity}
\end{align}
The split finite-step evolution obeys the same two estimates:
\begin{align}
    \norm{\Phi_\delta^{\mathrm{Tr}}(O)}_{p}
    &\leq
    \norm{O}_{p},
    && p_\star\leq p\leq2,
    \\
    \norm{\Phi_\delta^{\mathrm{Tr}}(O)}_{2,\theta}
    &\leq
    \norm{O}_{2,\theta},
    && 0\leq\theta\leq\theta_\star.
\end{align}
\end{theorem}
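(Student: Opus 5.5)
The plan is to assemble the theorem from the local and color-level contractivity results already established in this section, plus the limit argument of \cref{lem:finite-step-to-continuous-contractivity}. No new estimate is required; the work is in checking that the pieces compose correctly.

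\emph{Finite-step statements.} Fix $p\in[p_\star,2]$ and $\theta\in[0,\theta_\star]$. By \cref{cor:full-finite-step-contractivity}, $\norm{\Phi_\delta^{\mathrm{Tr}}}_{p\to p}\le1$ and $\norm{\Phi_\delta^{\mathrm{Tr}}}_{2,\theta\to2,\theta}\le1$ for every $\delta\ge0$. The two finite-step inequalities follow immediately from the definition of the induced norms.

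That corollary rests on two inputs.
\begin{itemize}
\item \cref{prop:color-regularity}, which tensorizes the local bounds of \cref{cor:unweighted-local-pauli-p,cor:local-weighted-two-contractivity} over the disjoint supports within a color.
\item Submultiplicativity, \cref{eq:unweighted-induced-submultiplicativity,eq:weighted-induced-submultiplicativity}, over the $\chi$ colors.
\end{itemize}

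The one point I would double-check is the tensorization step. The Pauli basis on $[n]$ is the product of the Pauli bases on the disjoint supports $S_a$, $a\in\mathcal C_c$, together with the complement of their union. For the weighted case, $e^{\theta|P|}$ factorizes over these blocks because weight is additive. For $\ell_p$, the induced norm of a tensor product of maps on $\ell_p$ of a product index set is at most the product of the induced norms. I would verify this by applying the maps one factor at a time: each is a map on one coordinate block, acting fiberwise with the other coordinates held fixed. Since all factors are contractions, the composite is a contraction.

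\emph{Continuous-time statements.} Apply \cref{lem:finite-step-to-continuous-contractivity}. Its hypothesis, contractivity of $\Phi_\delta^{\mathrm{Tr}}$ for all small $\delta>0$, was just verified. Then for $r$ large, $(\Phi_{t/r}^{\mathrm{Tr}})^r$ is a contraction in both norms. It converges to $e^{t\mathcal L}$ by \cref{eq:continuous-time-split-limit}, and convergence holds in every norm since the space is finite dimensional. The norms are continuous, so $\norm{e^{t\mathcal L}(O)}_p\le\norm{O}_p$ and $\norm{e^{t\mathcal L}(O)}_{2,\theta}\le\norm{O}_{2,\theta}$.

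The convergence \cref{eq:continuous-time-split-limit} itself follows from the standard Lie--Trotter/Chernoff argument. Using \cref{eq:split-color-first-order-expansion}, each step satisfies $\Phi_\delta^{\mathrm{Tr}}=I+\delta\mathcal L+O(\delta^2)$, and a telescoping bound then gives $\norm{(\Phi_{t/r}^{\mathrm{Tr}})^r-e^{t\mathcal L}}=O(t^2/r)$. This uses that both factors are uniformly bounded in some fixed norm for bounded $t$, which holds in finite dimension.

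Alternatively, one can bypass the split step and use the exact color product $\Phi_\delta^{\mathrm{col}}$ with \cref{eq:continuous-time-color-limit}. The remark following \cref{lem:finite-step-to-continuous-contractivity} allows this.

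\emph{Main obstacle.} The only genuinely delicate ingredient is upstream: \cref{lem:unweighted-local-pauli-p} for the exact local evolution $e^{t\mathcal L_a}|_{\mathsf K_a}$. Its $\ell_1$ endpoint, \cref{lem:noisy-local-pauli-one}, is argued via an upper Dini derivative of the non-smooth $\ell_1$ norm. If I wanted to avoid relying on that derivative argument, I would use only the split-step bounds, which come from products of explicit factors. That is, I would prove the continuous-time claim purely through $\Phi^{\mathrm{Tr}}_\delta$ and the limit, exactly as above. This makes the continuous-time claim independent of the exact-local estimates, so the whole theorem reduces to the product structure and Riesz--Thorin interpolation.
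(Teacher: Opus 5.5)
Your proposal is correct. The finite-step half is exactly the paper's argument via \cref{cor:full-finite-step-contractivity}, and your fiberwise check of the tensorization step is sound.

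The only difference is in the continuous-time half. The paper passes to the limit through the exact colour products, $\e^{t\mathcal L}=\lim_{r\to\infty}(\Phi^{\mathrm{col}}_{t/r})^r$ from \cref{eq:continuous-time-color-limit}. That uses contractivity of each exact colour evolution $\e^{t\mathcal L^{(c)}}$ from \cref{prop:color-regularity}, which in turn rests on the exact-local estimates of \cref{lem:strict-weighted-two-local,lem:noisy-local-pauli-one}, i.e.\ the Gronwall and upper-Dini-derivative arguments. You instead pass to the limit through $(\Phi^{\mathrm{Tr}}_{t/r})^r$ with \cref{lem:finite-step-to-continuous-contractivity}. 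This needs only the split-step bounds built from explicit factors: \cref{lem:weighted-local-H-growth,lem:local-pauli-one-growth}, the diagonal damping, and Riesz--Thorin.

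Your route is therefore slightly more economical. Applied to a single colour or to the partial sums $\mathcal L_{\leq c}$, it also recovers as a by-product the exact-evolution contractivity that the paper later uses in \cref{lem:single-color-splitting-error,prop:exact-color-product-error}. The paper's route has exact colour contractivity available directly as an input. Both arguments are valid.
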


\begin{proof}
The split-step estimates follow from \cref{cor:full-finite-step-contractivity}. For the exact evolution, use the color Lie--Trotter formula
\begin{align}
    \e^{t\mathcal L}
    =
    \lim_{r\to\infty}
    \left(
        \e^{(t/r)\mathcal L^{(\chi)}}\cdots\e^{(t/r)\mathcal L^{(1)}}
    \right)^r.
\end{align}
For every finite $r$, \cref{prop:color-regularity} bounds the product by one in each of the two norms. Taking the limit proves both claims.
\end{proof}

\suppsection{Digital quantum simulation of open-system dynamics}
\label{supp:quantum}

We now use the general model of \cref{supp:model}. The lower rate $\gamma$ enters the regularity scale $\theta_\star$, while the upper local rate $\bar\gamma$ controls commutators between noise and Hamiltonian evolution. The product formula is the split color step $\Phi_\delta^{\mathrm{Tr}}$ defined in \cref{eq:implemented-color-step},
\begin{equation}
    \Phi_\delta^{\mathrm{Tr}}
    =
    \mathcal T_{\delta}^{(\chi)}\cdots\mathcal T_{\delta}^{(1)},
    \qquad
    \mathcal T_{\delta}^{(c)}
    =
    \prod_{a\in\mathcal C_c}
    \e^{\delta\mathcal D_a}\e^{\delta\mathcal H_a}.
\end{equation}
Thus each color layer is a parallel product of local noisy Pauli rotations. Throughout this section, $C_k>0$ denotes a constant depending only on $k$ and may change from line to line.

\begin{theorem}[Observable-dependent digital simulation]
\label{thm:supp-quantum-simulation}
Let $0<\theta\leq\min\{1,\theta_\star\}$, let $r\in\mathbb N$, and set $\delta:=t/r$. Suppose that
\begin{equation}
    2J\delta\e^{k\theta}
    \leq
    \frac{\theta}{3}.
    \label{eq:quantum-small-step-condition}
\end{equation}
Then, for every observable $O$,
\begin{equation}
\begin{aligned}
    \norm{
        \left(
            (\Phi_\delta^{\mathrm{Tr}})^r
            -
            \e^{t\mathcal L}
        \right)(O)
    }_2
    \leq{}&
    C_k
    \frac{t^2}{r}
    \frac{\e^{2k\theta}}{\theta}
    \left(
        J^2d^2+J\bar\gamma d^2
    \right)
    \norm{O}_{2,\theta}.
\end{aligned}
    \label{eq:quantum-simulation-weighted-error}
\end{equation}
If $O$ contains only Pauli operators of weight at most $w$, then
\begin{equation}
\begin{aligned}
    \norm{
        \left(
            (\Phi_\delta^{\mathrm{Tr}})^r
            -
            \e^{t\mathcal L}
        \right)(O)
    }_2
    \leq{}&
    C_k
    \frac{t^2}{r}
    \frac{\e^{(2k+w)\theta}}{\theta}
    \left(
        J^2d^2+J\bar\gamma d^2
    \right)
    \norm{O}_2.
\end{aligned}
    \label{eq:quantum-simulation-low-weight-error}
\end{equation}
One product-formula step has sequential depth at most $\chi\leq k(d-1)+1$. Consequently, the complete simulation has depth $O_k(dr)$ and uses $O_k(mr)$ local Pauli rotations and local Pauli-diagonal noise channels.
\end{theorem}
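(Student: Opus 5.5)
The plan is a telescoping argument in the unweighted Pauli-$2$ norm, with the one-step error bounded from the weighted Pauli-$2$ norm $\norm{\cdot}_{2,\theta}$ into $\norm{\cdot}_2$. Uniform weighted contractivity (\cref{thm:supp-regularity}) then keeps the per-step error from growing along the evolution. Write $\Phi:=\Phi_\delta^{\mathrm{Tr}}$ and $E:=\e^{\delta\mathcal L}$. Then
\begin{equation*}
    \Phi^r-E^r=\sum_{j=0}^{r-1}\Phi^{r-1-j}(\Phi-E)E^{j}.
\end{equation*}
Since $\norm{\Phi}_{2\to2}\leq1$ and $\norm{E^j}_{2,\theta\to2,\theta}\leq1$ for $\theta\leq\theta_\star$ (\cref{cor:full-finite-step-contractivity} and \cref{thm:supp-regularity} at $p=2$), we get
\begin{equation*}
    \norm{(\Phi^r-E^r)(O)}_2\leq r\,\norm{\Phi-E}_{2,\theta\to2}\,\norm{O}_{2,\theta}.
\end{equation*}
The first claim therefore reduces to the one-step bound
\begin{equation*}
    \norm{\Phi-E}_{2,\theta\to2}\leq C_k\,\delta^2\,\frac{\e^{2k\theta}}{\theta}(J^2d^2+J\bar\gamma d^2).
\end{equation*}
Multiplying by $r$ and using $r\delta^2=t^2/r$ gives \eqref{eq:quantum-simulation-weighted-error}. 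The low-weight claim follows from $\norm{O}_{2,\theta}\leq\e^{w\theta}\norm{O}_2$. The depth and gate counts follow from \cref{lem:bounded-degree-coloring} and the color factorization.

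For the one-step bound, I split $\Phi-E$ as in \eqref{eq:two-stage-color-decomposition}. The intra-color term is $\Phi-\Phi^{\mathrm{col}}$. Telescope over colors and, inside each color, over the disjoint labels $a$; apply \cref{lem:lie-trotter-local-error} to each $\e^{\delta\mathcal D_a}\e^{\delta\mathcal H_a}-\e^{\delta\mathcal L_a}$. This yields double integrals of $[\mathcal D_a,\mathcal H_a]$ sandwiched between local maps that are weighted-contractive (\cref{cor:local-weighted-two-contractivity}). The inter-color term is $\Phi^{\mathrm{col}}-E$. For it, I apply the Lie--Trotter identity recursively to the ordered product of color exponentials, producing double integrals containing $[\mathcal L^{(c')},\mathcal L^{(c)}]$. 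Each resulting integrand factors into a commutator of local generators flanked by exponentials of color generators or of partial sums of $\mathcal L$. Every flanking factor is contractive in $\norm{\cdot}_{2,\theta}$ (and in $\norm{\cdot}_2$) by \cref{prop:color-regularity} and \cref{thm:supp-regularity}; for partial sums of $\mathcal L$ this holds because such a partial sum is itself a Lindbladian of the same class. So the integrands are controlled once commutators are bounded as maps $\norm{\cdot}_{2,\theta}\to\norm{\cdot}_2$, or $\norm{\cdot}_{2,\theta}\to\norm{\cdot}_{2,\theta'}$ with a small loss $\theta'<\theta$.

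The key estimate, and the main obstacle, is a Schur-test bound (\cref{lem:quantum-schur-test}) on sums of local commutators $\sum_{a,b}[\mathcal G_a,\mathcal G_b]$, with $\mathcal G\in\{\mathcal H,\mathcal D\}$ local, acting from the weighted space into $\norm{\cdot}_2$. A commutator $[\mathcal G_a,\mathcal G_b]$ vanishes unless the supports of $P_a$ and $P_b$ overlap and that of $P_a$ meets $\supp(P)$. Acting on a Pauli $P$, the number of contributing pairs is therefore at most $C_k d^2|P|$, each with entry size $O(J^2)$ or $O(J\bar\gamma)$; the noise commutators are diagonal-times-rotation and give the $J\bar\gamma$ term. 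The output weight changes by at most $2k$, which produces the factor $\e^{2k\theta}$ when transferring weights between input and output. The linear factor $|P|$ is absorbed via $|P|\leq \e^{\theta|P|}/(\e\theta)$, which is the source of $1/\theta$. The delicate point is that the flanking exponentials and the inner integral variables sit between the commutator and the input observable. I would handle this by paying one $\theta$-to-$\theta'$ shrinkage, using the weight spreading of $\e^{s\mathcal H}$ over a time $s\leq\delta$. This is where \eqref{eq:quantum-small-step-condition} enters: $2J\delta\e^{k\theta}\leq\theta/3$ ensures, via a Gronwall estimate in the spirit of \cref{lem:weighted-local-H-growth}, that the norm $\norm{\e^{s\mathcal H}}_{2,\theta\to2,\theta/3}$ (or a comparable crude bound) stays $O(1)$ over a step. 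The remaining $\theta/3$ of weight budget is enough to absorb the polynomial factors $|P|$ at the cost of the $1/\theta$ and $\e^{2k\theta}$ prefactors. Collecting constants, the claimed $\delta^2$ one-step bound follows, and the global estimate is obtained by telescoping.
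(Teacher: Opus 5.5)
Your global architecture matches the paper's: time telescoping with $\norm{\Phi}_{2\to2}\le1$ and $\norm{E^j}_{2,\theta\to2,\theta}\le1$, the split \eqref{eq:two-stage-color-decomposition}, the Lie--Trotter identity, contractivity of the partial sums $\mathcal L_{\leq c}$, and a Schur test with $|P|$-counting that produces $1/\theta$ and $\e^{2k\theta}$. Your inter-color argument is exactly \cref{prop:exact-color-product-error}. It needs no weight shrinkage at all, because every factor to the right of $[\mathcal L^{(c)},\mathcal L^{(j)}]$ is an exact Lindbladian evolution.

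The gap is in the intra-color term. Telescoping over the labels $a\in\mathcal C_c$ produces $|\mathcal C_c|$ terms $T_a=\bigl(\bigotimes_{b\neq a}M_b\bigr)\otimes(\mathcal T_{a,\delta}-\e^{\delta\mathcal L_a})$, with $M_b\in\{\mathcal T_{b,\delta},\e^{\delta\mathcal L_b}\}$. Each term carries its own commutator $[\mathcal D_a,\mathcal H_a]$ inside its own sandwich. These commutators are never collected into a single operator, so your Schur test on a sum of commutators does not apply. Each $T_a$ has $\norm{\cdot}_{2,\theta}\to\norm{\cdot}_2$ norm of order $J\bar\gamma\delta^2$, and the triangle inequality then costs a factor $|\mathcal C_c|$. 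That factor can be of order $n/k$, which is precisely the system-size dependence the theorem excludes. Conversely, the obstacle you anticipate does not arise in your scheme: a single $\e^{s\mathcal H_a}$ is trivially $O(1)$.

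There are two ways to close this.

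The paper's route applies \cref{lem:lie-trotter-local-error} to the whole color, with $A=\mathcal H^{(c)}$ and $B=\mathcal D^{(c)}$. The single operator $[\mathcal D^{(c)},\mathcal H^{(c)}]=\sum_{a\in\mathcal C_c}[\mathcal D_a,\mathcal H_a]$ then appears between common flanks. The price is the bare parallel rotation $\e^{s\mathcal H^{(c)}}$ acting first. Its fixed-weight norm is $\prod_{a\in\mathcal C_c}\norm{\e^{s\mathcal H_a}}_{2,\theta\to2,\theta}$, which in general grows exponentially with $|\mathcal C_c|$. So a fixed-weight Gronwall bound fails, and weight must be traded for time. \cref{lem:parallel-color-weight-loss} does this by expanding the product. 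With $\eta=\alpha-\beta$, for an input $P$ the weighted column sum is at most $\e^{-\eta|P|}(1+2J\delta\e^{k\beta})^{|P|}\le\exp[|P|(2J\delta\e^{k\alpha}-\eta)]\le1$, and likewise for rows. This is exactly where $2J\delta\e^{k\theta}\le\theta/3$ is used, along the ladder:
\begin{itemize}
\item $\theta\to2\theta/3$ for the rotation;
\item $2\theta/3\to\theta/3$ for the commutator, costing $C_kJ\bar\gamma\e^{k\theta}/\theta$ by \cref{lem:single-color-mixed-commutator};
\item $\theta/3\to0$ for the exact color evolution.
\end{itemize}
A Gronwall inequality whose weight decreases at rate $2J\sinh(k\theta)$ would serve equally well.

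Alternatively, keep your labelwise telescoping and add one observation. Every local map fixes the identity on its block and preserves the traceless part. The Pauli space therefore splits into orthogonal sectors $V_A$, indexed by the set $A\subseteq\mathcal C_c$ of blocks on which a string is non-identity. Each $T_a$ preserves every $V_A$ and vanishes on it unless $a\in A$, while $\norm{\mathcal T_{a,\delta}-\e^{\delta\mathcal L_a}}_{2\to2}\le J\bar\gamma\delta^2$. Using $|A|\le|P|$ and $\sup_{x\ge0}x\e^{-\theta x}=1/(\e\theta)$, this gives $\norm{\sum_aT_a(O)}_2^2\le(J\bar\gamma\delta^2)^2\sum_A|A|^2\norm{O_A}_2^2\le\bigl(J\bar\gamma\delta^2/(\e\theta)\bigr)^2\norm{O}_{2,\theta}^2$. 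This route does not even need the small-step condition.
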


\subsection{Warm-up: a local Pauli under unitary evolution}
\label{supp:local-pauli-trotter-warmup}

We first illustrate why the Trotter error for a local observable need not scale with the system size. For this warm-up, we consider only the unitary part of the evolution.

Recall that
\begin{align}
    H=\sum_{a=1}^m J_aP_a,
    \qquad
    \abs{J_a}\leq J,
    \qquad
    \abs{P_a}\leq k,
\end{align}
and define
\begin{align}
    \mathcal H_a(O):=i[J_aP_a,O],
    \qquad
    \mathcal H:=\sum_{a=1}^m\mathcal H_a.
\end{align}
Consider the first-order product formula
\begin{align}
    \Phi_\delta
    :=
    \mathrm e^{\delta\mathcal H_m}
    \cdots
    \mathrm e^{\delta\mathcal H_1}.
\end{align}
Its expansion to second order can be seen directly by writing
\begin{align}
\Phi_\delta
&=
\prod_{a=m}^{1} \mathrm e^{\delta\mathcal H_a}
\nonumber\
\\&=
\mathcal I
+\delta\sum_{a=1}^m \mathcal H_a
+\delta^2
\left(
\frac12\sum_{a=1}^m \mathcal H_a^2
+
\sum_{1\leq a<b\leq m}
\mathcal H_b\mathcal H_a
\right)
+O(\delta^3).
\label{eq:trotter-second-order-expansion}
\end{align}
On the other hand, since $\mathcal H=\sum_a\mathcal H_a$,
\begin{align}
\mathrm e^{\delta\mathcal H}
&=
\mathcal I
+\delta\sum_{a=1}^m\mathcal H_a
+\frac{\delta^2}{2}
\left(\sum_{a=1}^m\mathcal H_a\right)^2
+O(\delta^3)
\nonumber\
\\&=
\mathcal I
+\delta\sum_{a=1}^m\mathcal H_a
+\delta^2
\left[
\frac12\sum_{a=1}^m\mathcal H_a^2
+
\frac12\sum_{1\leq a<b\leq m}
\left(
\mathcal H_a\mathcal H_b
+
\mathcal H_b\mathcal H_a
\right)
\right]
+O(\delta^3).
\label{eq:exact-second-order-expansion}
\end{align}
The two evolutions therefore agree to first order, while at second order their difference is determined by the ordering of distinct local generators:
\begin{align}
\Phi_\delta
&=
\mathrm e^{\delta\mathcal H}
+
\delta^2\mathcal G
+
O(\delta^3),
\label{eq:trotter-local-error-expansion}
\end{align}
where
\begin{align}
\mathcal G
&:=
\frac12
\sum_{1\leq a<b\leq m}
[\mathcal H_b,\mathcal H_a].
\label{eq:trotter-error-generator}
\end{align}
Thus the leading product-formula error is entirely generated by pairs of noncommuting local terms.

Let \(P\in\mathcal P_n\) be a Pauli string of weight
\begin{align}
    \abs{P}=w.
\end{align}
At most \(dw\) Hamiltonian terms intersect \(\operatorname{supp}(P)\). Moreover, each such term overlaps the support of at most \(kd\) other Hamiltonian terms. Since
\begin{align}
    [\mathcal H_b,\mathcal H_a](P)=0
\end{align}
unless the supports of \(P_a\) and \(P_b\) overlap and at least one of them intersects \(\operatorname{supp}(P)\), at most \(C_kwd^2\) commutators contribute.

For every pair \(a,b\),
\begin{align}
    \norm{
        [\mathcal H_b,\mathcal H_a](P)
    }_2
    \leq
    8J^2\norm{P}_2
    =
    8J^2.
\end{align}
It follows that
\begin{align}
    \norm{\mathcal G(P)}_2
    \leq
    C_kJ^2d^2w.
\end{align}
Consequently,
\begin{align}
    \norm{
        \left(
            \Phi_\delta
            -
            \mathrm e^{\delta\mathcal H}
        \right)(P)
    }_2
    \leq
    C_kJ^2d^2w\,\delta^2
    +
    O(\delta^3).
\end{align}

Thus, for a local Pauli observable, \(w=O(1)\), the leading one-step Trotter error is independent of the total number of qubits and Hamiltonian terms. The general proof below extends this intuition to noisy evolution, arbitrary observables, and finite step sizes. 

\subsection{Error for a single interaction color}
\label{supp:single-color-error}

We first record two absolute-norm estimates. We will use
\begin{equation}
    \sup_{s\geq0}s\e^{-\eta s}
    \leq
    \frac{1}{\eta},
    \qquad
    \eta>0.
    \label{eq:weight-loss-elementary-bound}
\end{equation}

\begin{lemma}[Parallel rotation with weight loss]
\label{lem:parallel-color-weight-loss}
Let $0\leq\beta<\alpha$, and set $\eta:=\alpha-\beta$. If
\begin{equation}
    2Jt\e^{k\alpha}
    \leq
    \eta,
    \label{eq:parallel-color-smallness}
\end{equation}
then, for every color $c$,
\begin{equation}
    \norm{
        \e^{t\mathcal H^{(c)}}
    }_{2,\alpha\to 2,\beta}^{\mathrm{abs}}
    \leq
    1.
    \label{eq:parallel-color-weight-loss}
\end{equation}
\end{lemma}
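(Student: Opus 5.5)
The plan is to exploit that $\e^{t\mathcal H^{(c)}}=\prod_{a\in\mathcal C_c}\e^{t\mathcal H_a}$ is a tensor product of local rotations on the pairwise disjoint supports $S_a=\supp(P_a)$, $a\in\mathcal C_c$, acting trivially on the remaining qubits $R:=[n]\setminus\bigcup_{a\in\mathcal C_c}S_a$. Its Pauli transfer matrix is the Kronecker product of the local transfer matrices, so taking entrywise absolute values commutes with the tensor product: $(\e^{t\mathcal H^{(c)}})^{\mathrm{abs}}=\bigotimes_{a}(\e^{t\mathcal H_a}|_{S_a})^{\mathrm{abs}}\otimes\mathrm{id}_R$. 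The Pauli weight is additive over the disjoint regions, $|P|=|P_R|+\sum_a|P_{S_a}|$. Hence, after conjugating by the diagonal weights, the $2,\alpha\to2,\beta$ norm becomes the $\ell_2\to\ell_2$ norm of the rescaled matrix
\[
M_{Q,P}=\e^{\beta|Q|}\,|\Phi_{Q,P}|\,\e^{-\alpha|P|}.
\]
This matrix factorizes as a Kronecker product of local rescaled matrices $M^{(a)}$ together with a diagonal factor on $R$ with entries $\e^{-\eta|P_R|}\leq1$. Since the $\ell_2$ operator norm is multiplicative under Kronecker products, it suffices to show $\norm{M^{(a)}}_{\ell_2\to\ell_2}\leq1$ for each block.

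For a single block I would use the invariant decomposition from \cref{eq:local-pauli-pair-generator}, so that $M^{(a)}$ is block diagonal.

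\emph{Commuting local strings.} On a local string commuting with $P_a$, the entry is $\e^{-\eta|P_{S_a}|}\leq1$.

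\emph{Anticommuting pairs.} On an anticommuting pair $P,Q$ (local parts on $S_a$), both local weights are at least $1$, since a string anticommuting with $P_a$ must meet $S_a$. The $2\times2$ block has diagonal entries $|\cos(2J_at)|\e^{-\eta|P|}$ and $|\cos(2J_at)|\e^{-\eta|Q|}$, and off-diagonal entries $|\sin(2J_at)|\e^{\beta|Q|-\alpha|P|}$ and $|\sin(2J_at)|\e^{\beta|P|-\alpha|Q|}$.

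Now use $|\cos|\leq1$, $|\sin(2J_at)|\leq2Jt$, and the rewriting $\beta|Q|-\alpha|P|=\beta(|Q|-|P|)-\eta|P|\leq\beta k-\eta$, where the last step uses \cref{eq:local-weight-change} together with $|P|\geq1$. Then every row and column sum of the block is at most
\[
\e^{-\eta}\bigl(1+2Jt\,\e^{k\beta}\bigr)\leq\e^{-\eta}\bigl(1+2Jt\,\e^{k\alpha}\bigr)\leq\e^{-\eta}(1+\eta)\leq1,
\]
by \cref{eq:parallel-color-smallness}. Schur's test (\cref{lem:quantum-schur-test}) then gives $\norm{M^{(a)}}_{\ell_2\to\ell_2}\leq1$, and multiplying over blocks yields \cref{eq:parallel-color-weight-loss}.

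The main obstacle is bookkeeping rather than analysis. The first point is to justify cleanly that the absolute map and the weighted norms both tensorize, so that the block bounds multiply rather than accumulate a factor depending on $|\mathcal C_c|$. The second is to ensure that the weight loss $\e^{-\eta}$ is available in every anticommuting block. This relies on the local weights in an anticommuting pair being at least one, which is what absorbs the $O(Jt\,\e^{k\alpha})$ off-diagonal leakage.
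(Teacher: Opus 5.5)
Your proof is correct. It reaches the bound by a different organization than the paper's.

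The paper does not tensorize. It applies Schur's test once to the full colour layer, expanding $\e^{t\mathcal H^{(c)}}(P)$ into branches and counting directly. An input of weight $s$ meets at most $s$ blocks of colour $c$, and $h$ nontrivial branches raise the weight by at most $kh$. Each column sum is therefore at most $\e^{-\eta s}(1+2Jt\e^{k\beta})^s\leq\exp[s(2Jt\e^{k\alpha}-\eta)]\leq1$. For the rows, it observes that every block on which a nontrivial branch is taken must meet $\supp(Q)$, so an output of weight $q$ gives a row sum at most $\e^{-\eta q}(1+2Jt\e^{k\alpha})^q\leq1$.

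You instead factor the rescaled matrix as a Kronecker product. Independence from $|\mathcal C_c|$ then follows at once from multiplicativity of the $\ell_2$ operator norm, and the only estimate left is a $2\times2$ Schur bound. Your observation that both members of an anticommuting pair have local weight at least one plays the role of the paper's support-intersection argument for the row sums. For the lemma as stated, yours is the cleaner proof.

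The paper's global branch counting has one advantage: it carries over directly to the generator estimates it needs later, such as \cref{lem:single-color-mixed-commutator} and \cref{eq:rich-lind-hamiltonian-weight-loss}. There the object is a sum over the colour class rather than a product, so tensor-product multiplicativity is unavailable. One instead trades the linear count $s$ against $\e^{-\eta s}$ to obtain the factor $1/\eta$.
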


\begin{proof}
Fix an input Pauli operator $P$ of weight $s$. Because the supports inside $\mathcal C_c$ are pairwise disjoint, at most $s$ Hamiltonian terms of color $c$ can act nontrivially on $P$.

For one such term,
\begin{align}
    \e^{t\mathcal H_a}(R)
    =
    \cos(2J_at)R
    +
    \xi\sin(2J_at)\widehat{P_aR}.
\end{align}
The nontrivial branch has absolute coefficient at most $2Jt$ and changes Pauli weight by at most $k$. Expanding the parallel product therefore gives
\begin{align}
    \sum_Q
    \abs{(\e^{t\mathcal H^{(c)}})_{Q,P}}
    \e^{\beta|Q|-\alpha|P|}
    &\leq
    \e^{-\eta s}
    \left(
        1+2Jt\e^{k\beta}
    \right)^s
    \nonumber\\
    &\leq
    \exp\left[
        -\eta s
        +
        2Jt\e^{k\alpha}s
    \right]
    \leq
    1.
    \label{eq:parallel-color-column-bound}
\end{align}

The row estimate is analogous. Fix an output Pauli operator $Q$ of weight $q$. Every color support on which the nontrivial branch is taken must intersect $\supp(Q)$, so at most $q$ color supports can contribute. If $h$ nontrivial branches are taken, the corresponding input has weight at least $q-kh$. Hence
\begin{align}
    \sum_P
    \abs{(\e^{t\mathcal H^{(c)}})_{Q,P}}
    \e^{\beta|Q|-\alpha|P|}
    &\leq
    \e^{-\eta q}
    \left(
        1+2Jt\e^{k\alpha}
    \right)^q
    \nonumber\\
    &\leq
    1.
    \label{eq:parallel-color-row-bound}
\end{align}
Schur's test proves the claim.
\end{proof}

For a fixed color, define
\begin{equation}
    \mathcal G_c
    :=
    [\mathcal D^{(c)},\mathcal H^{(c)}].
    \label{eq:single-color-mixed-commutator}
\end{equation}
Because distinct labels in the same color have disjoint supports,
\begin{equation}
    \mathcal G_c
    =
    \sum_{a\in\mathcal C_c}
    [\mathcal D_a,\mathcal H_a].
    \label{eq:single-color-commutator-sum}
\end{equation}

\begin{lemma}[Single-color noise--Hamiltonian commutator]
\label{lem:single-color-mixed-commutator}
Let $0\leq\beta<\alpha$ and set $\eta:=\alpha-\beta$. Then
\begin{equation}
    \norm{
        [\mathcal D^{(c)},\mathcal H^{(c)}]
    }_{2,\alpha\to 2,\beta}^{\mathrm{abs}}
    \leq
    C_kJ\bar\gamma
    \frac{\e^{k\alpha}}{\eta}.
    \label{eq:single-color-mixed-commutator-bound}
\end{equation}
The bound is independent of the total number $m$ of local blocks and of the size of the color class.
\end{lemma}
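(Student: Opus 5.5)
The plan is to compute the absolute transfer matrix of $[\mathcal D_a,\mathcal H_a]$ for a single label, and then assemble the color sum with Schur's test (\cref{lem:quantum-schur-test}), exactly as in \cref{lem:parallel-color-weight-loss}.

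\textbf{Single-label matrix.} Fix $a\in\mathcal C_c$ and a Pauli $P$. If $P$ commutes with $P_a$, then $\mathcal H_a(P)=0$ and $\mathcal H_a$ has no matrix entry landing on $P$, so the commutator vanishes on $P$. If $P$ anticommutes with $P_a$, let $Q\propto P_aP$. By \cref{eq:local-pauli-pair-generator},
\[
[\mathcal D_a,\mathcal H_a](P)=2J_a\xi_{a,P}\bigl(\lambda_a(P)-\lambda_a(Q)\bigr)Q .
\]
Both $P$ and $Q$ meet $\supp(P_a)$, so each rate lies in $[\gamma,\bar\gamma]$. The single entry therefore has modulus at most $2J\bar\gamma$, and it connects Paulis whose weights differ by at most $k$ (\cref{eq:local-weight-change}). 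Summing over $a\in\mathcal C_c$ via \cref{eq:single-color-commutator-sum}, each column $P$ of the color commutator has at most $|P|$ nonzero entries, because the supports in a color are disjoint and only those meeting $\supp(P)$ act. Symmetrically, each row $Q$ has at most $|Q|$ nonzero entries, since the relevant support must meet $\supp(Q)$.

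\textbf{Weighted Schur sums.} For a column with $|P|=s$, the weighted absolute sum is
\[
\sum_Q |M_{Q,P}|\,\e^{\beta|Q|-\alpha|P|}\le 2J\bar\gamma\, s\,\e^{k\beta}\e^{(\beta-\alpha)s}\le 2J\bar\gamma\,\e^{k\alpha}\, s\e^{-\eta s}.
\]
By \cref{eq:weight-loss-elementary-bound}, $s\e^{-\eta s}\le 1/\eta$, so the column sum is at most $2J\bar\gamma\,\e^{k\alpha}/\eta$.

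For a row with $|Q|=q$, each contributing $P$ has $|P|\ge q-k$. The weight factor is then
\[
\e^{\beta q-\alpha|P|}\le \e^{\beta q-\alpha q+\alpha k}=\e^{k\alpha}\e^{-\eta q}.
\]
With at most $q$ contributions, the row sum is at most $2J\bar\gamma\,\e^{k\alpha}\,q\e^{-\eta q}\le 2J\bar\gamma\,\e^{k\alpha}/\eta$.

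\textbf{Conclusion and main point to check.} Schur's test, applied to the rescaled matrix $\e^{\beta|Q|}|M_{Q,P}|\e^{-\alpha|P|}$, gives the claimed bound with $C_k=2$. The bound is independent of $m$ and of the size of the color class. The step needing most care is the row count: one must confirm that the partner map $P\mapsto Q$ within a block is injective on anticommuting Paulis, so that each $a$ contributes at most one entry per row. It is, since $Q\propto P_aP$ determines $P$ uniquely.
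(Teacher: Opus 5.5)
Your proof is correct and follows the paper's argument. You use the single-label action $2J_a\xi_{a,P}\bigl(\lambda_a(P)-\lambda_a(Q)\bigr)Q$, at most $|P|$ (respectively $|Q|$) active disjoint blocks per column (respectively row), a weight shift of at most $k$, $s\e^{-\eta s}\le 1/\eta$, and Schur's test on $\e^{\beta|Q|}|M_{Q,P}|\e^{-\alpha|P|}$. Your explicit row estimate, via $|P|\ge q-k$ and injectivity of $P\mapsto Q$, simply writes out what the paper calls ``the same argument read backwards.''
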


\begin{proof}
Suppose that $P_a$ anticommutes with $P$, and let $Q=\widehat{P_aP}$. Then \cref{eq:local-pauli-pair-generator,eq:local-pauli-diagonal-noise} gives
\begin{equation}
    [\mathcal D_a,\mathcal H_a](P)
    =
    2J_a\xi_{a,P}
    \bigl(
        \lambda_a(P)-\lambda_a(Q)
    \bigr)Q.
    \label{eq:local-mixed-commutator-action}
\end{equation}
Since $0\leq\lambda_a(\cdot)\leq\bar\gamma$, every nonzero local transfer coefficient is bounded by $2J\bar\gamma$.

Fix an input Pauli operator $P$ of weight $s$. Only labels for which $\mathcal H_a(P)\neq0$ can contribute. Such a term intersects $\supp(P)$, and the terms within one color are disjoint, so there are at most $s$ contributing labels. Every output has weight at most $s+k$. Therefore
\begin{align}
    \sum_Q
    \abs{(\mathcal G_c)_{Q,P}}
    \e^{\beta|Q|-\alpha|P|}
    &\leq
    C_kJ\bar\gamma
    s\e^{k\beta}\e^{-\eta s}
    \nonumber\\
    &\leq
    C_kJ\bar\gamma
    \frac{\e^{k\alpha}}{\eta}.
    \label{eq:single-color-commutator-column}
\end{align}
The same argument read backwards gives the corresponding row bound. Schur's test completes the proof.
\end{proof}

Recall that the split color evolution is
\begin{align}
    \mathcal T_{t}^{(c)}
    =
    \e^{t\mathcal D^{(c)}}\e^{t\mathcal H^{(c)}},
\end{align}
whereas the exact color evolution is $\e^{t\mathcal L^{(c)}}$. Applying \cref{lem:lie-trotter-local-error} with
\begin{align}
    A=\mathcal H^{(c)},
    \qquad
    B=\mathcal D^{(c)},
\end{align}
gives
\begin{align}
    \mathcal T_{t}^{(c)}-\e^{t\mathcal L^{(c)}}
    =
    \int_0^t\int_0^s
    &
    \e^{(t-s)\mathcal L^{(c)}}
    \e^{(s-u)\mathcal D^{(c)}}
    [\mathcal D^{(c)},\mathcal H^{(c)}]
    \nonumber\\
    &\times
    \e^{u\mathcal D^{(c)}}
    \e^{s\mathcal H^{(c)}}
    \,\mathrm du\,\mathrm ds.
    \label{eq:single-color-lie-trotter-error}
\end{align}

\begin{lemma}[Error within one interaction color]
\label{lem:single-color-splitting-error}
Let $0<\theta\leq\theta_\star$ and suppose that
\begin{equation}
    2J\delta\e^{k\theta}
    \leq
    \frac{\theta}{3}.
    \label{eq:single-color-small-step}
\end{equation}
Then
\begin{equation}
    \norm{
        \mathcal T_{\delta}^{(c)}
        -
        \e^{\delta\mathcal L^{(c)}}
    }_{2,\theta\to 2,0}
    \leq
    C_k\delta^2
    \frac{\e^{k\theta}}{\theta}
    J\bar\gamma.
    \label{eq:single-color-splitting-error}
\end{equation}
\end{lemma}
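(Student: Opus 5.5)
The plan is to start from the exact Lie--Trotter local-error identity \eqref{eq:single-color-lie-trotter-error} with $t=\delta$. The error is a double integral over $0\le u\le s\le\delta$, so it suffices to bound the integrand uniformly in the operator norm $2,\theta\to2,0$. The double integral then contributes a factor $\delta^2/2$. The integrand is a product of five maps,
\[
\e^{(\delta-s)\mathcal L^{(c)}}\,\e^{(s-u)\mathcal D^{(c)}}\,[\mathcal D^{(c)},\mathcal H^{(c)}]\,\e^{u\mathcal D^{(c)}}\,\e^{s\mathcal H^{(c)}} .
\]
We will bound it by submultiplicativity \eqref{eq:weighted-induced-submultiplicativity}, passing through a chain of intermediate weights $\theta\ge\alpha_1\ge\beta_1\ge0$. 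The idea is to spend part of the weight budget $\theta$ to absorb the factor $\e^{s\mathcal H^{(c)}}$, and the rest to pay for the commutator.

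First, the outer two maps are handled in the unweighted norm. The map $\e^{(\delta-s)\mathcal L^{(c)}}$ is contractive in $\norm{\cdot}_2$ by \cref{prop:color-regularity} at weight $0$. The map $\e^{(s-u)\mathcal D^{(c)}}$ is Pauli diagonal with nonnegative rates, so it is contractive in every weighted Pauli-$2$ norm, in particular at weight $0$.

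Next, the rightmost maps. The map $\e^{u\mathcal D^{(c)}}$ is diagonal, hence contractive in $\norm{\cdot}_{2,\alpha}$ for every $\alpha$, and it can be placed at whichever weight is convenient. For $\e^{s\mathcal H^{(c)}}$ we invoke \cref{lem:parallel-color-weight-loss} with $\alpha=\theta$ and $\beta=2\theta/3$, so that $\eta=\theta/3$. Its hypothesis $2Js\e^{k\theta}\le\theta/3$ follows from \eqref{eq:single-color-small-step}, since $s\le\delta$. This gives $\norm{\e^{s\mathcal H^{(c)}}}_{2,\theta\to2,2\theta/3}\le1$, using \eqref{eq:ordinary-bounded-by-absolute}.

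For the middle factor we apply \cref{lem:single-color-mixed-commutator} with $\alpha=2\theta/3$ and $\beta=0$, so that $\eta=2\theta/3$. It yields
\[
\norm{[\mathcal D^{(c)},\mathcal H^{(c)}]}_{2,2\theta/3\to2,0}\le C_kJ\bar\gamma\,\frac{3\,\e^{2k\theta/3}}{2\theta}\le C_kJ\bar\gamma\,\frac{\e^{k\theta}}{\theta}.
\]
Multiplying the five bounds and integrating then gives \eqref{eq:single-color-splitting-error}, with $C_k$ absorbing the factors $1/2$ and $3/2$.

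The step I expect to need care is making sure each map is evaluated in a compatible pair of norms. In particular, the outer factor $\e^{(\delta-s)\mathcal L^{(c)}}$ must only be needed at weight $0$; this is why the weight loss is front-loaded onto $\e^{s\mathcal H^{(c)}}$ and the commutator, rather than spent after the commutator. With that ordering, no contraction of $\mathcal L^{(c)}$ at positive weight is required, although it would also be available from \cref{prop:color-regularity} since $\theta\le\theta_\star$. A minor check is that \cref{lem:parallel-color-weight-loss} and \cref{lem:single-color-mixed-commutator} hold for a single color class irrespective of its size, which is exactly what makes the bound independent of $m$.
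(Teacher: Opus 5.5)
Your proposal is correct and follows the paper's proof. Both use the Lie--Trotter integral representation, the weight-loss bound for $\e^{s\mathcal H^{(c)}}$ from $\theta$ to $2\theta/3$, the mixed-commutator lemma, and contractivity of the diagonal and exact colour maps, integrated over a region of volume $\delta^2/2$. The only difference is that the paper passes through an extra intermediate weight $\theta/3$, with the commutator taken from $2\theta/3$ to $\theta/3$ and $\e^{(\delta-s)\mathcal L^{(c)}}$ from $\theta/3$ to $0$. You send the commutator directly to weight $0$, which is equally valid and needs only unweighted contractivity of the outer factors.
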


\begin{proof}
Use the intermediate weights
\begin{align}
    \theta,
    \qquad
    \frac{2\theta}{3},
    \qquad
    \frac{\theta}{3},
    \qquad
    0.
\end{align}
By \cref{lem:parallel-color-weight-loss} and the small-step condition,
\begin{equation}
    \norm{
        \e^{s\mathcal H^{(c)}}
    }_{2,\theta\to 2,2\theta/3}^{\mathrm{abs}}
    \leq
    1
\end{equation}
for every $0\leq s\leq\delta$. The pure-noise maps are diagonal contractions at every weight. Moreover, \cref{lem:single-color-mixed-commutator} with $\alpha=2\theta/3$ and $\beta=\theta/3$ gives
\begin{equation}
    \norm{
        [\mathcal D^{(c)},\mathcal H^{(c)}]
    }_{2,2\theta/3\to 2,\theta/3}^{\mathrm{abs}}
    \leq
    C_kJ\bar\gamma
    \frac{\e^{k\theta}}{\theta}.
\end{equation}
Finally, weighted contractivity from \cref{prop:color-regularity} implies
\begin{equation}
    \norm{
        \e^{(\delta-s)\mathcal L^{(c)}}
    }_{2,\theta/3\to 2,0}
    \leq
    1.
\end{equation}
Every integrand in \cref{eq:single-color-lie-trotter-error} is therefore bounded by $C_kJ\bar\gamma\e^{k\theta}/\theta$. Since
\begin{align}
    \int_0^\delta\int_0^s
    \mathrm du\,\mathrm ds
    =
    \frac{\delta^2}{2},
\end{align}
the result follows.
\end{proof}

\subsection{Telescoping over colors and time steps}
\label{supp:quantum-telescoping}

Recall that
\begin{align}
    \Phi_\delta^{\mathrm{Tr}}
    =
    \mathcal T_{\delta}^{(\chi)}\cdots\mathcal T_{\delta}^{(1)},
    \qquad
    \Phi_\delta^{\mathrm{col}}
    =
    \e^{\delta\mathcal L^{(\chi)}}\cdots
    \e^{\delta\mathcal L^{(1)}}.
\end{align}
We first accumulate the splitting error within the colors.

\begin{proposition}[Splitting error over all colors]
\label{prop:all-color-splitting-error}
Under the assumptions of \cref{lem:single-color-splitting-error},
\begin{align}
    \norm{
        \Phi_\delta^{\mathrm{Tr}}
        -
        \Phi_\delta^{\mathrm{col}}
    }_{2,\theta\to 2,0}
    \leq
    C_k\delta^2
    \frac{\e^{k\theta}}{\theta}
    J\bar\gamma d.
    \label{eq:all-color-splitting-error}
\end{align}
\end{proposition}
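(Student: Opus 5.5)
The proof is a standard telescoping over colors: each color's splitting error is bounded by \cref{lem:single-color-splitting-error}, and the surrounding exact and split color maps are absorbed using the weighted contractivity of \cref{prop:color-regularity}. Since maps are read right to left, I write
\begin{equation*}
    \Phi_\delta^{\mathrm{Tr}}-\Phi_\delta^{\mathrm{col}}
    =
    \sum_{c=1}^{\chi}
    \mathcal T_{\delta}^{(\chi)}\cdots\mathcal T_{\delta}^{(c+1)}
    \left(\mathcal T_{\delta}^{(c)}-\e^{\delta\mathcal L^{(c)}}\right)
    \e^{\delta\mathcal L^{(c-1)}}\cdots\e^{\delta\mathcal L^{(1)}} .
\end{equation*}
This identity follows by inserting and cancelling the mixed products: the $c$-th summand replaces the exact color map by the split one in slot $c$. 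The sum collapses to the difference of the two full products.

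Each summand is bounded using submultiplicativity of the weighted induced norms, \cref{eq:weighted-induced-submultiplicativity}, with the chain of weights $\theta\to\theta\to0\to0$.
\begin{itemize}
\item The right factor $\e^{\delta\mathcal L^{(c-1)}}\cdots\e^{\delta\mathcal L^{(1)}}$ is a composition of exact color evolutions. Each has $\norm{\cdot}_{2,\theta\to2,\theta}\leq1$ by \cref{prop:color-regularity}, since $0<\theta\leq\theta_\star$.
\item The middle factor has $\norm{\cdot}_{2,\theta\to2,0}\leq C_k\delta^2\e^{k\theta}J\bar\gamma/\theta$ by \cref{lem:single-color-splitting-error}. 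Its small-step hypothesis is exactly the one assumed here.
\item The left factor is a composition of split color maps $\mathcal T_\delta^{(c')}$. Each is contractive in $\norm{\cdot}_{2,0\to2,0}$, again by \cref{prop:color-regularity} at $\theta=0\leq\theta_\star$.
\end{itemize}
Hence every summand is bounded by $C_k\delta^2\e^{k\theta}J\bar\gamma/\theta$.

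Summing over at most $\chi\leq k(d-1)+1\leq kd$ colors, by \cref{lem:bounded-degree-coloring}, gives the extra factor $d$. The factor $k$ is absorbed into $C_k$, which yields \cref{eq:all-color-splitting-error}.

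The only point needing care is the placement of the weights. The weight loss from $\theta$ to $0$ must occur only once, at the defective color. This works because the exact color maps acting before it preserve the $\theta$-weighted norm, not merely the plain $2$-norm. Those maps act first on the observable, and keeping their output measured at weight $\theta$ is what makes \cref{lem:single-color-splitting-error} applicable. The maps acting after it need only $2$-norm contractivity. With that ordering, the estimate is routine.
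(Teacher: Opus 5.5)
Your proposal is correct and matches the paper's proof essentially step for step. The paper uses the same telescoping identity over colors and the same weight placement: the exact color maps on the right are contractive at weight $\theta$, the split maps on the left are contractive at weight $0$, and the middle factor is bounded by \cref{lem:single-color-splitting-error}. It then sums over $\chi\leq k(d-1)+1\leq C_k d$ colors.
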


\begin{proof}
The telescoping identity gives
\begin{align}
    \Phi_\delta^{\mathrm{Tr}}
    -
    \Phi_\delta^{\mathrm{col}}
    =
    \sum_{c=1}^{\chi}
    &
    \mathcal T_{\delta}^{(\chi)}
    \cdots
    \mathcal T_{\delta}^{(c+1)}
    \nonumber\\
    &\times
    \left(
        \mathcal T_{\delta}^{(c)}
        -
        \e^{\delta\mathcal L^{(c)}}
    \right)
    \e^{\delta\mathcal L^{(c-1)}}
    \cdots
    \e^{\delta\mathcal L^{(1)}}.
    \label{eq:all-color-splitting-telescope}
\end{align}
The maps to the right of the difference are contractive at weight $\theta$, while those to the left are contractive at weight zero. Applying \cref{lem:single-color-splitting-error} to every summand gives
\begin{align}
    \norm{
        \Phi_\delta^{\mathrm{Tr}}
        -
        \Phi_\delta^{\mathrm{col}}
    }_{2,\theta\to 2,0}
    \leq
    C_k\chi\delta^2
    \frac{\e^{k\theta}}{\theta}
    J\bar\gamma.
\end{align}
The claim follows from $\chi\leq k(d-1)+1\leq C_kd$.
\end{proof}

We next compare the product of exact color evolutions with the exact global evolution.

\begin{lemma}[Commutator of two color generators]
\label{lem:pairwise-color-commutator}
Let $c\neq c'$. For every $\theta>0$,
\begin{equation}
    \norm{
        [\mathcal L^{(c')},\mathcal L^{(c)}]
    }_{2,\theta\to 2,0}^{\mathrm{abs}}
    \leq
    C_k
    \frac{\e^{2k\theta}}{\theta}
    \left(
        J^2+J\bar\gamma
    \right).
    \label{eq:pairwise-color-commutator}
\end{equation}
The estimate is independent of $m$.
\end{lemma}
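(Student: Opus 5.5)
The plan is to expand the commutator bilinearly and bound each piece by Schur's test (\cref{lem:quantum-schur-test}) on the absolute transfer matrix, with weights $\e^{-\theta|P|}$ on inputs and $1$ on outputs. Write $\mathcal L^{(c)}=\mathcal H^{(c)}+\mathcal D^{(c)}$, so that
\[
[\mathcal L^{(c')},\mathcal L^{(c)}]=[\mathcal H^{(c')},\mathcal H^{(c)}]+[\mathcal H^{(c')},\mathcal D^{(c)}]+[\mathcal D^{(c')},\mathcal H^{(c)}]+[\mathcal D^{(c')},\mathcal D^{(c)}].
\]
The last term vanishes because both dissipators are Pauli diagonal. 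Each remaining term is a sum over pairs $(a,b)\in\mathcal C_{c'}\times\mathcal C_c$, and the local commutator is nonzero only if $\supp(P_a)\cap\supp(P_b)\neq\varnothing$. By the triangle inequality for absolute norms, it suffices to bound the double sum $\sum_{a,b}\abs{[\mathcal G_a,\mathcal G_b]}$ entrywise, where $\mathcal G\in\{\mathcal H,\mathcal D\}$.

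\textbf{Local transfer coefficients.} For $[\mathcal H_a,\mathcal H_b](P)$, each input Pauli $P$ maps to at most two outputs (via $P_aP_bP$ and $P_bP_aP$), each with coefficient at most $4J^2$. The output weight changes by at most $2k$. For the mixed term, \cref{eq:local-mixed-commutator-action} generalizes: $[\mathcal D_a,\mathcal H_b](P)=2J_b\xi\,(\lambda_a(P)-\lambda_a(Q))Q$ with $Q\propto P_bP$. The coefficient is therefore at most $2J\bar\gamma$, and the weight change is at most $k$.

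\textbf{Column sum (fixed input $P$ of weight $s$).} A pair $(a,b)$ can contribute only if at least one of the two blocks meets $\supp(P)$. If the inner Hamiltonian term acts first, it must meet $\supp(P)$, otherwise it annihilates $P$. Within one color the supports are disjoint, so at most $s$ blocks of color $c$ meet $\supp(P)$. Each of these overlaps at most $k$ blocks of color $c'$, since disjointness within $c'$ means each qubit of $\supp(P_b)$ lies in at most one block of $c'$. In the reverse ordering the outer term must meet either $\supp(P)$ or the modified region $\supp(P_b)$. Either way, the number of contributing pairs is at most $C_k s$, not $C_k d s$, because the colors are fixed. The column sum is then bounded by
\[
C_k(J^2+J\bar\gamma)\, s\,\e^{-\theta s}\leq C_k(J^2+J\bar\gamma)/\theta,
\]
using \cref{eq:weight-loss-elementary-bound}.

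\textbf{Row sum (fixed output $Q$ of weight $q$).} The input weight satisfies $|P|\geq q-2k$, so each contributing entry carries a factor $\e^{-\theta|P|}\leq\e^{2k\theta}\e^{-\theta q}$. The contributing pairs must again form an overlapping chain touching $\supp(Q)$, since whichever block acts last either meets $Q$ or did not change the Pauli. Counting as before gives at most $C_k q$ pairs, each contributing $O(1)$ inputs. The row sum is therefore at most $C_k(J^2+J\bar\gamma)\e^{2k\theta}/\theta$.

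\textbf{Conclusion and main obstacle.} Schur's test gives the geometric mean of the row and column bounds, which is itself bounded by their maximum. This yields \cref{eq:pairwise-color-commutator}, and since only $s$, $q$ and $k$ entered the counting, the bound is independent of $m$. The main obstacle is the combinatorial row count: one must argue carefully that for a fixed output every contributing pair has a block meeting $\supp(Q)$, including the case where the second block's action cancels back into the original support. If this fails, I would fall back on bounding via input support with the slack $\e^{2k\theta}$, which already appears in the stated bound.
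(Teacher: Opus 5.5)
Your proposal is correct and follows essentially the same route as the paper. Like the paper, you drop the vanishing $[\mathcal D^{(c')},\mathcal D^{(c)}]$ term, bound the local transfer coefficients by $O(J^2)$ and $2J\bar\gamma$, count at most $C_k|P|$ (resp.\ $C_k|Q|$) contributing pairs using disjointness within each color, and conclude with Schur's test and $\sup_s s\e^{-\theta s}\leq 1/(\e\theta)$. The row-count issue you flag is not an obstacle: a rotation $X\mapsto P_aX$ is nonzero only when $P_a$ anticommutes with $X$, and then $P_a$ also anticommutes with the output $P_aX$, so the last Hamiltonian block to act always meets $\supp(Q)$ and no fallback is needed.
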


\begin{proof}
Since all noise generators are diagonal in the Pauli basis,
\begin{align}
    [\mathcal L^{(c')},\mathcal L^{(c)}]
    ={}
    [\mathcal H^{(c')},\mathcal H^{(c)}]
    +
    [\mathcal D^{(c')},\mathcal H^{(c)}]
    +
    [\mathcal H^{(c')},\mathcal D^{(c)}].
    \label{eq:pairwise-color-commutator-decomposition}
\end{align}

For the Hamiltonian--Hamiltonian part, the previous counting argument is unchanged. For an input Pauli of weight $s$, at most $C_ks$ pairs of terms from the two colors can contribute, each with transfer coefficient $O(J^2)$ and weight change at most $2k$. Hence
\begin{equation}
    \norm{
        [\mathcal H^{(c')},\mathcal H^{(c)}]
    }_{2,\theta\to 2,0}^{\mathrm{abs}}
    \leq
    C_k\frac{J^2\e^{2k\theta}}{\theta}.
    \label{eq:pairwise-HH-bound}
\end{equation}

Consider now $[\mathcal D^{(c')},\mathcal H^{(c)}]$. If $\mathcal H_a(P)=2J_a\xi_{a,P}Q$, then
\begin{equation}
    [\mathcal D_b,\mathcal H_a](P)
    =
    2J_a\xi_{a,P}
    \bigl(
        \lambda_b(P)-\lambda_b(Q)
    \bigr)Q.
    \label{eq:cross-color-mixed-action}
\end{equation}
This vanishes whenever $\supp(P_a)\cap\supp(P_b)=\varnothing$: in that case $Q$ differs from $P$ only outside $\supp(P_b)$, while $\mathcal D_b$ is supported on $\supp(P_b)$, and hence $\lambda_b(P)=\lambda_b(Q)$. Otherwise its magnitude is at most $2J\bar\gamma$.

For fixed $a\in\mathcal C_c$, at most $k$ terms $b\in\mathcal C_{c'}$ can intersect $\supp(P_a)$, because the $\supp(P_b)$ are disjoint within color $c'$ and $|\supp(P_a)|\leq k$. Moreover, only labels $a$ with $\mathcal H_a(P)\neq0$ contribute, and there are at most $|P|$ of them because the $\supp(P_a)$ are disjoint within color $c$. Thus the unweighted column sum is bounded by $C_kJ\bar\gamma|P|$. The same reversed counting gives the corresponding row bound, and every output differs in weight by at most $k$. Schur's test therefore yields
\begin{equation}
    \norm{
        [\mathcal D^{(c')},\mathcal H^{(c)}]
    }_{2,\theta\to 2,0}^{\mathrm{abs}}
    \leq
    C_k\frac{J\bar\gamma\e^{k\theta}}{\theta}.
    \label{eq:pairwise-DH-bound}
\end{equation}
The same estimate holds for $[\mathcal H^{(c')},\mathcal D^{(c)}]$. Combining the three terms proves the claim.
\end{proof}

Recall that
\begin{align}
    \Phi_\delta^{\mathrm{col}}
    =
    \e^{\delta\mathcal L^{(\chi)}}\cdots
    \e^{\delta\mathcal L^{(1)}}.
\end{align}

\begin{proposition}[Error between exact color layers]
\label{prop:exact-color-product-error}
Let $0<\theta\leq\theta_\star$. Then
\begin{equation}
    \norm{
        \Phi_\delta^{\mathrm{col}}
        -
        \e^{\delta\mathcal L}
    }_{2,\theta\to 2,0}
    \leq
    C_k\delta^2
    \frac{\e^{2k\theta}}{\theta}
    \left(
        J^2d^2+J\bar\gamma d^2
    \right).
    \label{eq:exact-color-product-error}
\end{equation}
\end{proposition}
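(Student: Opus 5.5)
To prove \cref{prop:exact-color-product-error}, I would iterate the Lie--Trotter local-error identity over the colors, in the same way as the proof of \cref{lem:single-color-splitting-error}. The difference is that the commutators now appear between distinct colors, and they are controlled by \cref{lem:pairwise-color-commutator}.

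Set $\mathcal A_c:=\sum_{c'\leq c}\mathcal L^{(c')}$, so that $\mathcal A_\chi=\mathcal L$, and set $\Psi_c:=\e^{\delta\mathcal L^{(c)}}\cdots\e^{\delta\mathcal L^{(1)}}$. The first step is a telescoping decomposition:
\begin{equation*}
    \Phi^{\mathrm{col}}_\delta-\e^{\delta\mathcal L}
    =
    \sum_{c=2}^{\chi}
    \e^{\delta(\mathcal L-\mathcal A_c)}\ \text{-free form:}\quad
    \sum_{c=2}^{\chi}
    \e^{\delta\mathcal L^{(\chi)}}\cdots\e^{\delta\mathcal L^{(c+1)}}
    \Bigl(\e^{\delta\mathcal L^{(c)}}\e^{\delta\mathcal A_{c-1}}-\e^{\delta\mathcal A_c}\Bigr)
    \e^{-\delta\mathcal A_{c-1}}\Psi_{c-1}.
\end{equation*}
I would not use this inverse form in the actual argument, because it contains $\e^{-\delta\mathcal A_{c-1}}$. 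Instead I would use the cleaner recursion
\[
\Psi_c-\e^{\delta\mathcal A_c}
=\e^{\delta\mathcal L^{(c)}}\bigl(\Psi_{c-1}-\e^{\delta\mathcal A_{c-1}}\bigr)
+\bigl(\e^{\delta\mathcal L^{(c)}}\e^{\delta\mathcal A_{c-1}}-\e^{\delta\mathcal A_c}\bigr).
\]
The first term reduces to the previous stage. Its prefactor $\e^{\delta\mathcal L^{(c)}}$ is a contraction at weight $0$ by \cref{prop:color-regularity}, so errors measured in the $2,\theta\to2,0$ norm accumulate additively.

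For each two-generator defect I would apply \cref{lem:lie-trotter-local-error} with $A=\mathcal A_{c-1}$ and $B=\mathcal L^{(c)}$. The integrand is then
\[
\e^{(\delta-s)\mathcal A_c}\,\e^{(s-u)\mathcal L^{(c)}}\,[\mathcal L^{(c)},\mathcal A_{c-1}]\,\e^{u\mathcal L^{(c)}}\e^{s\mathcal A_{c-1}}.
\]
The commutator splits as $\sum_{c'<c}[\mathcal L^{(c)},\mathcal L^{(c')}]$, and each piece is bounded by \cref{lem:pairwise-color-commutator} in the $2,\theta'\to2,0$ absolute norm.

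The maps to the left of the commutator need only be contractive at weight zero. The generator $\mathcal A_c$ is itself a local Lindbladian of the same class, obtained by keeping a subset of the blocks, so \cref{thm:supp-regularity} applies to it. The maps to the right are contractions at weight $\theta$, again by \cref{thm:supp-regularity} and \cref{prop:color-regularity} applied to subfamilies of blocks.

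The commutator therefore acts from weight $\theta$ to weight $0$ with cost $C_k\e^{2k\theta}\theta^{-1}(J^2+J\bar\gamma)$. The double integral contributes $\delta^2/2$. Summing over $c'<c$ and $c\leq\chi$ gives at most $\chi^2\leq C_kd^2$ pairs, which yields the claimed bound $C_k\delta^2\e^{2k\theta}\theta^{-1}(J^2d^2+J\bar\gamma d^2)$. Unlike the single-color lemma, this argument needs no small-step condition, because no weight is lost before the commutator acts.

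I expect the main obstacle to be the counting over $\chi^2$ color pairs. A pair with $c'<c$ of non-interacting colors still contributes through \cref{lem:pairwise-color-commutator}, and the naive count gives a factor $\chi^2\sim k^2d^2$. This matches the stated $d^2$ dependence, so it is acceptable. The point to check carefully is that \cref{lem:pairwise-color-commutator} is only stated for a single pair of colors, and it must be applied pair by pair using the triangle inequality together with $\norm{\cdot}\leq\norm{\cdot}^{\mathrm{abs}}$. A second point to verify is that contractivity of $\e^{s\mathcal A_{c-1}}$ in the weighted norm indeed follows from the regularity theorem: the lower damping bound $\gamma$ only concerns blocks present in the sub-Lindbladian, and inactive Pauli strings are fixed, so this should go through.
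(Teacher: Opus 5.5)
Your proposal is correct and follows the paper's proof. Your recursion unrolls to exactly the paper's telescoping identity over the partial sums $\mathcal L_{\leq c}$. From there both arguments apply \cref{lem:lie-trotter-local-error} with $A=\mathcal L_{\leq c-1}$ and $B=\mathcal L^{(c)}$, use weight-$\theta$ contractivity of the exact evolutions to the right of the commutator and weight-$0$ contractivity to the left, bound each $[\mathcal L^{(c)},\mathcal L^{(j)}]$ by \cref{lem:pairwise-color-commutator}, and count $\binom{\chi}{2}\leq C_kd^2$ pairs, with no small-step condition needed.

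The only blemish is your first displayed ``inverse form.'' It is garbled and not a valid identity as written, and since you do not use it, you should simply delete it.
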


\begin{proof}
For this proof only, define the partial generators
\begin{align}
    \mathcal L_{\leq c}
    :=
    \sum_{j=1}^{c}\mathcal L^{(j)},
    \qquad
    c=1,\ldots,\chi.
\end{align}
Since \(\mathcal L_{\leq\chi}=\mathcal L\), telescoping over the number of included colors gives
\begin{align}
    \Phi_\delta^{\mathrm{col}}-\e^{\delta\mathcal L}
    =
    \sum_{c=2}^{\chi}
    &
    \e^{\delta\mathcal L^{(\chi)}}\cdots
    \e^{\delta\mathcal L^{(c+1)}}
    \nonumber\\
    &\times
    \left(
        \e^{\delta\mathcal L^{(c)}}
        \e^{\delta\mathcal L_{\leq c-1}}
        -
        \e^{\delta\mathcal L_{\leq c}}
    \right).
    \label{eq:exact-color-product-telescope}
\end{align}

Apply \cref{lem:lie-trotter-local-error} to the term in parentheses with
\begin{align}
    A=\mathcal L_{\leq c-1},
    \qquad
    B=\mathcal L^{(c)}.
\end{align}
Using
\begin{align}
    [\mathcal L^{(c)},\mathcal L_{\leq c-1}]
    =
    \sum_{j=1}^{c-1}
    [\mathcal L^{(c)},\mathcal L^{(j)}],
\end{align}
we obtain
\begin{align}
    &
    \e^{\delta\mathcal L^{(c)}}
    \e^{\delta\mathcal L_{\leq c-1}}
    -
    \e^{\delta\mathcal L_{\leq c}}
    \nonumber\\
    &\quad=
    \sum_{j=1}^{c-1}
    \int_0^\delta\int_0^s
    \e^{(\delta-s)\mathcal L_{\leq c}}
    \e^{(s-u)\mathcal L^{(c)}}
    [\mathcal L^{(c)},\mathcal L^{(j)}]
    \e^{u\mathcal L^{(c)}}
    \e^{s\mathcal L_{\leq c-1}}
    \,\mathrm du\,\mathrm ds.
    \label{eq:partial-color-lie-trotter-error}
\end{align}

The proof of \cref{thm:supp-regularity} applies verbatim to every partial sum \(\mathcal L_{\leq c}\), which is obtained by retaining only the first \(c\) color generators. Hence its exact evolution is contractive at every weight between zero and \(\theta_\star\). In each integrand, all maps to the right of the commutator are therefore contractive at weight \(\theta\), while all maps to its left are contractive at weight zero. Combining \cref{eq:exact-color-product-telescope,eq:partial-color-lie-trotter-error} gives
\begin{align}
    \norm{
        \Phi_\delta^{\mathrm{col}}
        -
        \e^{\delta\mathcal L}
    }_{2,\theta\to 2,0}
    \leq
    \sum_{c=2}^{\chi}
    \sum_{j=1}^{c-1}
    \int_0^\delta\left(\int_0^s
    \norm{
        [\mathcal L^{(c)},\mathcal L^{(j)}]
    }_{2,\theta\to 2,0}^{\mathrm{abs}}
    \,\mathrm du\right)\,\mathrm ds.
\end{align}
By \cref{lem:pairwise-color-commutator},
\begin{align}
    \norm{
        \Phi_\delta^{\mathrm{col}}
        -
        \e^{\delta\mathcal L}
    }_{2,\theta\to 2,0}
    \leq
    C_k\delta^2
    \binom{\chi}{2}
    \frac{\e^{2k\theta}}{\theta}
    \left(
        J^2+J\bar\gamma
    \right).
\end{align}
Using \(\chi\leq C_kd\) proves the claim.
\end{proof}

Recall that
\begin{align}
    \Phi_\delta^{\mathrm{Tr}}
    =
    \mathcal T_{\delta}^{(\chi)}\cdots\mathcal T_{\delta}^{(1)}.
\end{align}

\begin{proposition}[Finite-step error]
\label{prop:quantum-finite-step-error}
Let $0<\theta\leq\min\{1,\theta_\star\}$ and suppose that \cref{eq:quantum-small-step-condition} holds. Then
\begin{equation}
    \norm{
        \Phi_\delta^{\mathrm{Tr}}
        -
        \e^{\delta\mathcal L}
    }_{2,\theta\to 2,0}
    \leq
    C_k\delta^2
    \frac{\e^{2k\theta}}{\theta}
    \left(
        J^2d^2+J\bar\gamma d^2
    \right).
    \label{eq:quantum-finite-step-error}
\end{equation}
\end{proposition}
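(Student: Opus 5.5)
The plan is to combine the two preceding estimates through the decomposition \cref{eq:two-stage-color-decomposition},
\begin{align*}
    \Phi_\delta^{\mathrm{Tr}}-\e^{\delta\mathcal L}
    =\bigl(\Phi_\delta^{\mathrm{Tr}}-\Phi_\delta^{\mathrm{col}}\bigr)
    +\bigl(\Phi_\delta^{\mathrm{col}}-\e^{\delta\mathcal L}\bigr),
\end{align*}
and then apply the triangle inequality in the induced $\norm{\cdot}_{2,\theta\to2,0}$ norm. Both pieces are already controlled, and the remaining work is to check that the hypotheses match and that the constants merge into the stated form.

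For the first piece, I note that the condition \cref{eq:quantum-small-step-condition} is identical to \cref{eq:single-color-small-step}. Since $0<\theta\leq\theta_\star$, \cref{prop:all-color-splitting-error} applies directly and gives the bound $C_k\delta^2\,\e^{k\theta}\theta^{-1}J\bar\gamma d$. For the second piece, \cref{prop:exact-color-product-error} requires only $0<\theta\leq\theta_\star$ and gives $C_k\delta^2\,\e^{2k\theta}\theta^{-1}(J^2d^2+J\bar\gamma d^2)$.

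To merge the two bounds, I use $\e^{k\theta}\leq\e^{2k\theta}$ because $\theta\geq0$, and $d\leq d^2$ because $d\geq1$ (by the standing assumption $d_x\geq1$). The first bound is then dominated by a $C_k$ multiple of the second, and absorbing constants into a new $C_k$ yields \cref{eq:quantum-finite-step-error}. The extra assumption $\theta\leq1$ is not needed for this step; I expect it to be used only later, when the error is summed over time steps.

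I do not expect a real obstacle, since the proposition is a bookkeeping combination of two earlier results. The only point to verify is that the triangle inequality is used for the ordinary induced norm, not the absolute one, which is legitimate because the earlier propositions were stated for the ordinary norm. I also need to confirm that the constants $C_k$ depend only on $k$, which holds because both $\chi\leq C_kd$ and the commutator counts were bounded that way.
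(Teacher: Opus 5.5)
Your proposal is correct and follows the paper's proof exactly. The paper also inserts $\Phi_\delta^{\mathrm{col}}$, applies \cref{prop:all-color-splitting-error,prop:exact-color-product-error}, and absorbs the splitting term into the color-product term using $d\geq1$ and $\e^{k\theta}\leq\e^{2k\theta}$.
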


\begin{proof}
Insert the product of exact color maps:
\begin{align}
    \Phi_\delta^{\mathrm{Tr}}-\e^{\delta\mathcal L}
    =
    \left(
        \Phi_\delta^{\mathrm{Tr}}
        -
        \Phi_\delta^{\mathrm{col}}
    \right)
    +
    \left(
        \Phi_\delta^{\mathrm{col}}
        -
        \e^{\delta\mathcal L}
    \right).
\end{align}
Apply \cref{prop:all-color-splitting-error,prop:exact-color-product-error}. Since $d\geq1$ and $\e^{k\theta}\leq\e^{2k\theta}$, the first contribution is absorbed into the second after changing $C_k$.
\end{proof}

\begin{corollary}[Many-step error]
\label{cor:quantum-many-step-error}
Let $t=r\delta$. Under the assumptions of \cref{prop:quantum-finite-step-error},
\begin{equation}
    \norm{
        (\Phi_\delta^{\mathrm{Tr}})^r
        -
        \e^{t\mathcal L}
    }_{2,\theta\to 2,0}
    \leq
    C_k t\delta
    \frac{\e^{2k\theta}}{\theta}
    \left(
        J^2d^2+J\bar\gamma d^2
    \right).
    \label{eq:quantum-many-step-error}
\end{equation}
\end{corollary}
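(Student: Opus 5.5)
The natural route is the standard telescoping identity over the $r$ steps. Writing $\Phi:=\Phi_\delta^{\mathrm{Tr}}$ and $E:=\e^{\delta\mathcal L}$, so that $\e^{t\mathcal L}=E^r$, I will use
\begin{equation}
    \Phi^r-E^r
    =
    \sum_{j=0}^{r-1}
    \Phi^{\,r-1-j}\,(\Phi-E)\,E^{j}.
\end{equation}
In each summand, the map $E^j=\e^{j\delta\mathcal L}$ sits to the right of the one-step difference. I will bound it in the weighted norm $\norm{\cdot}_{2,\theta\to2,\theta}$. By \cref{thm:supp-regularity}, since $0<\theta\leq\theta_\star$, this norm is at most one.

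The middle factor $\Phi-E$ is then controlled in $\norm{\cdot}_{2,\theta\to 2,0}$ by \cref{prop:quantum-finite-step-error}. That proposition applies because the small-step condition \cref{eq:quantum-small-step-condition} and $\theta\leq\min\{1,\theta_\star\}$ are exactly the hypotheses of the corollary. It gives the one-step bound $C_k\delta^2\e^{2k\theta}\theta^{-1}(J^2d^2+J\bar\gamma d^2)$.

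To the left of the difference we have $\Phi^{r-1-j}$, which only needs to be bounded in the unweighted norm $\norm{\cdot}_{2\to2}=\norm{\cdot}_{2,0\to2,0}$. \Cref{cor:full-finite-step-contractivity} at $\theta=0$ gives contractivity there.

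Combining the three factors by submultiplicativity, \cref{eq:weighted-induced-submultiplicativity}, with the chain of weights $\theta\to\theta\to0\to0$, each of the $r$ summands is bounded by the one-step error. Summing over $j$ and using $r\delta^2=t\delta$ gives \cref{eq:quantum-many-step-error}.

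The only point needing care is the ordering of weights. The weight must be spent exactly once, by the one-step error, so the exact evolution has to sit on the weighted (right) side and the implemented steps on the unweighted (left) side. With the telescoping arranged as above this holds automatically, so I do not expect a real obstacle; the corollary is essentially bookkeeping on top of \cref{prop:quantum-finite-step-error}. The alternative ordering, with the implemented steps on the right, would also work, since \cref{cor:full-finite-step-contractivity} gives weighted contractivity of $\Phi$ as well.
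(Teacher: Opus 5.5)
Your proposal is correct and matches the paper's proof essentially step for step. You use the same telescoping identity with $\e^{j\delta\mathcal L}$ on the right (weighted contractivity) and $(\Phi_\delta^{\mathrm{Tr}})^{r-1-j}$ on the left (unweighted contractivity), followed by the one-step bound of the finite-step proposition and $r\delta=t$; your remark that the reversed ordering also works is correct as well.
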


\begin{proof}
Use the telescoping identity
\begin{equation}
    (\Phi_\delta^{\mathrm{Tr}})^r
    -
    \e^{r\delta\mathcal L}
    =
    \sum_{j=0}^{r-1}
    (\Phi_\delta^{\mathrm{Tr}})^{r-1-j}
    \left(
        \Phi_\delta^{\mathrm{Tr}}
        -
        \e^{\delta\mathcal L}
    \right)
    \e^{j\delta\mathcal L}.
    \label{eq:quantum-time-telescoping}
\end{equation}
The left factor is contractive in the unweighted Pauli-$2$ norm, while the right factor is contractive in the weighted Pauli-$2$ norm. Thus
\begin{align}
    \norm{
        (\Phi_\delta^{\mathrm{Tr}})^r
        -
        \e^{t\mathcal L}
    }_{2,\theta\to 2,0}
    \leq
    r
    \norm{
        \Phi_\delta^{\mathrm{Tr}}
        -
        \e^{\delta\mathcal L}
    }_{2,\theta\to 2,0}.
\end{align}
Apply \cref{prop:quantum-finite-step-error} and use $r\delta=t$.
\end{proof}

\subsection{Low-weight observable-dependent error}
\label{supp:quantum-low-weight}

The induced estimate becomes especially useful for low-weight observables. If
\begin{align}
    \Pi_{\leq w}(O)=O,
\end{align}
then
\begin{equation}
    \norm{O}_{2,\theta}
    \leq
    \e^{\theta w}\norm{O}_2.
    \label{eq:low-weight-weighted-norm}
\end{equation}
Consequently, \cref{cor:quantum-many-step-error} gives
\begin{equation}
\begin{aligned}
    \norm{
        \left(
            (\Phi_\delta^{\mathrm{Tr}})^r
            -
            \e^{t\mathcal L}
        \right)(O)
    }_2
    \leq{}&
    C_k
    \frac{t^2}{r}
    \frac{\e^{(2k+w)\theta}}{\theta}
    \left(
        J^2d^2+J\bar\gamma d^2
    \right)
    \norm{O}_2.
\end{aligned}
    \label{eq:low-weight-observable-trotter-error}
\end{equation}

For a weight-$w$ observable, define
\begin{equation}
    \theta_w
    :=
    \min\left\{
        \theta_\star,
        1,
        \frac{1}{2k+w}
    \right\}.
    \label{eq:quantum-optimal-weight-parameter}
\end{equation}
Then
\begin{align}
    \e^{(2k+w)\theta_w}
    \leq
    \e,
\end{align}
and it is sufficient to take
\begin{equation}
    r
    \geq
    \max\left\{
        \frac{6Jt\e^{k\theta_w}}{\theta_w},
        \,
        C_k
        \frac{t^2}{\varepsilon\theta_w}
        \left(
            J^2d^2+J\bar\gamma d^2
        \right)
    \right\}
    \label{eq:quantum-required-number-steps}
\end{equation}
to guarantee
\begin{equation}
    \norm{
        \left(
            (\Phi_{t/r}^{\mathrm{Tr}})^r
            -
            \e^{t\mathcal L}
        \right)(O)
    }_2
    \leq
    \varepsilon\norm{O}_2.
    \label{eq:quantum-relative-error}
\end{equation}
The first term in \cref{eq:quantum-required-number-steps} enforces the small-step condition, while the second controls the accumulated product-formula error.

\subsection{Gate-count and depth bounds}
\label{supp:quantum-resources}

For every label $a$, the local split map is
\begin{equation}
    \mathcal T_{a,\delta}
    =
    \e^{\delta\mathcal D_a}
    \e^{\delta\mathcal H_a}.
    \label{eq:local-split-implementation}
\end{equation}
It consists of one Pauli rotation and one Pauli-diagonal dissipative channel, both supported on at most $k$ qubits.

One product-formula step contains one implementation of every local Lindbladian term $\mathcal L_a$. Its total primitive-operation count is therefore
\begin{equation}
    O_k(m).
    \label{eq:quantum-one-step-gate-count}
\end{equation}
Within a fixed color, all supports are disjoint, so the corresponding local maps can be executed in parallel. The sequential depth of one step is
\begin{equation}
    O_k(\chi)
    =
    O_k(d).
    \label{eq:quantum-one-step-depth}
\end{equation}
Over $r$ product-formula steps, the total primitive-operation count and depth are
\begin{equation}
    O_k(mr)
    \qquad\text{and}\qquad
    O_k(dr),
    \label{eq:quantum-total-resources}
\end{equation}
respectively.

In particular, for fixed $k$, $d$, $w$, $\gamma/J$, $t$, and target relative error $\varepsilon$, the required number of sequential product-formula steps is independent of both $m$ and $n$.

\subsection{Proof of the first-order quantum simulation theorem}
\label{supp:proof-quantum-simulation}

\begin{proof}[Proof of \cref{thm:supp-quantum-simulation}]
Equation \eqref{eq:quantum-simulation-weighted-error} follows by applying \cref{cor:quantum-many-step-error} to $O$ and using $\delta=t/r$. If $\Pi_{\leq w}(O)=O$, then \cref{eq:low-weight-weighted-norm} gives \cref{eq:quantum-simulation-low-weight-error}.

The depth and primitive-operation counts follow from \cref{eq:local-split-implementation,eq:quantum-total-resources}.
\end{proof}

\suppsection{Quantum simulation via Richardson extrapolation}
\label{supp:quantum-richardson}

Richardson extrapolation improves an approximation by combining evaluations at different step sizes. It has been applied to estimates of time-evolved observables under Hamiltonian~\cite{low2019well, rendon2024improved, watson2024randomly, watson2025exponentially}, and Linbladian simulation~\cite{wang2026lindbladian}. Here we combine this method with the weighted-norm bounds established above to approximate $\e^{t\mathcal L}(O)$ using circuits of maximum depth $O(\log^2(\e/\varepsilon))$ at fixed physical parameters. The following theorem states the resulting guarantee.

\begin{theorem}[Lindbladian evolution guarantees with Richardson extrapolation]
\label{thm:rich-lind-maximum-depth}
Assume the model of \cref{supp:model} with $J>0$ and the local implementation model of \cref{supp:quantum-resources}. Let $O=O^\dagger$ be a linear combination of Pauli strings of weight at most $w$, let $t>0$, and let $0<\varepsilon\leq1$. With $\theta_w$ from \cref{eq:quantum-optimal-weight-parameter} and the color count $\chi$ from \cref{eq:number-colors}, choose the number of circuits $q$ as
\begin{align}
    q:=\left\lceil\frac{\log(4096/\varepsilon)}{\log4}\right\rceil.
    \label{eq:rich-lind-extrapolation-choices}
\end{align}
Take $N_j$ and $\omega_j$ from \cref{eq:rich-lind-relative-step-counts,eq:rich-lind-extrapolation-counts-weights}. The extrapolated observable $O^{[q]}$ in \cref{eq:output}, formed from the $q$ circuits $S_{t/N_j}^{N_j}$ with the symmetric step in \cref{eq:rich-lind-symmetric-step}, satisfies
\begin{align}
    \norm{O^{[q]}-\e^{t\mathcal L}(O)}_2
    \leq\varepsilon\norm{O}_2.
    \label{eq:rich-lind-extrapolated-observable-error}
\end{align}
The maximum depth $D_{\max}$ among these circuits obeys
\begin{align}
    D_{\max}
    \leq C_k\left[d+
      \frac{d^{5/2}(J+\bar\gamma)^{3/2}t^{3/2}}{\theta_w^{3/2}}\right]
      \log^2\!\left(\frac{\e}{\varepsilon}\right),
    \label{eq:rich-lind-maximum-depth-bound}
\end{align}
where $C_k$ depends only on $k$. The coefficients satisfy \cref{eq:rich-lind-extrapolation-coefficient-bound}, and the weighted combination is performed classically on the separate circuit outputs.
\end{theorem}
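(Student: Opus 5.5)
The proof combines the symmetric-step error expansion \cref{eq:methods-symmetric-expansion} with the weighted-norm machinery of \cref{supp:quantum}. The key is to work throughout in the mixed norm $\norm{\cdot}_{2,\theta\to 2,0}$ at $\theta=\theta_w$.

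\textbf{Step 1: local error of the symmetric step.} I first show that a single symmetric step $S_h$ satisfies
\[
\norm{S_h-\e^{h\mathcal L}}_{2,\theta\to2,0}\leq C_k h^3 \,\frac{d^{3}(J+\bar\gamma)^3}{\theta^{2}},
\]
under a small-step condition of the form $2Jh\e^{k\theta}\leq\theta/C$. The argument mirrors \cref{prop:quantum-finite-step-error}.
\begin{itemize}
\item Replace the Lie--Trotter identity \cref{lem:lie-trotter-local-error} by the standard third-order integral representation of symmetric-splitting error. This is a triple integral whose integrand contains the double commutators $[\mathcal G,[\mathcal G',\mathcal G'']]$, sandwiched between exponentials.
\item The exponentials are contractions. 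Those to the right of the commutators are contractive at weight $\theta$, by \cref{cor:symmetric-colour-regularity} and \cref{thm:supp-regularity} applied to partial sums. Those to the left are contractive at weight $0$.
\item Each double commutator is bounded in the absolute weighted norm by chaining loss steps $\theta\to2\theta/3\to\theta/3\to0$. Each step uses column and row counting exactly as in \cref{lem:pairwise-color-commutator}. The factor $s\e^{-\eta s}$ now becomes $s^2\e^{-\eta s}\leq C/\eta^2$.
\item Counting pairs and triples of colours gives the polynomial factor in $d$.
\end{itemize}

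\textbf{Step 2: controlled Richardson remainder.} Next I need an $N$-independent bound on the remainder $R_{q,N}(t)$ of the form
\[
\norm{R_{q,N}(t)}_{2,\theta\to2,0}\leq \bigl(C\,t\,h^{2}\,\Lambda\bigr)^{q}\quad\text{for } h=t/N,
\]
where $\Lambda^{1/2}\sim d^{5/6}(J+\bar\gamma)^{1/2}\ldots/\theta^{1/2}$ is calibrated so that the final depth has the $3/2$ powers. I plan to write $S_h=\e^{h\mathcal L+h^3 \mathcal F(h)}$, an effective generator with odd-order BCH corrections.
\begin{itemize}
\item Expand $S_h^N$ in a Dyson series in the perturbation $\mathcal F$.
\item The $\ell$-th Dyson term contributes the $N^{-2\ell}$ coefficient $E_\ell$ plus higher corrections.
\item Truncating at order $q$ gives a remainder bounded by a product of $q$ perturbation insertions.
\end{itemize}
This is the step I expect to be the main obstacle. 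Each insertion loses weight, so across $q$ insertions the weight budget must be split into $q$ slices of size $\theta/q$. Each slice then costs a factor $\sim q^{2}/\theta^{2}$, giving a remainder like $(C q^2 t h^2\Lambda/\theta^2)^q$.

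My first attempt would avoid the effective-generator route and instead use a nested-commutator expansion of the error directly, à la \cite{watson2025exponentially,wang2026lindbladian}. That expansion bounds the $\ell$-th order term by $(C\ell)^{3\ell}$ local factors. With the nested weight-loss ladder, this produces $q^{O(1)}$ per order, and that is exactly what forces $N_{\max}\sim q^2$.

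\textbf{Step 3: choosing steps and weights, and the final bound.} I choose Chebyshev-type step counts $N_j\propto N_0\, q/\cos(\cdot)$, or the well-conditioned choice of \cite{low2019well}, with $N_j\leq C N_0 q^{2}$ and $\sum_j\abs{\omega_j}=O(\log q)$ or $O(q^3)$. The Vandermonde conditions \cref{eq:methods-richardson-conditions} cancel $E_1,\dots,E_{q-1}$ exactly. What remains is
\[
\norm{O^{[q]}-\e^{t\mathcal L}(O)}_2\leq \sum_j\abs{\omega_j}\,\norm{R_{q,N_j}}_{2,\theta\to2,0}\,\e^{\theta w}\norm{O}_2 .
\]
To finish:
\begin{itemize}
\item Pick $N_0\sim d^{5/2}(J+\bar\gamma)^{3/2}t^{3/2}/\theta_w^{3/2}$, together with the small-step requirement $N_0\gtrsim Jt/\theta_w$. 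Then each remainder factor is at most $1/4$.
\item With $q=\lceil\log_4(4096/\varepsilon)\rceil$, this gives error $\leq 4^{-q}\cdot 4096^{O(1)}\sum_j\abs{\omega_j}\leq\varepsilon$. The constant $4096$ absorbs both the coefficient sum and the factor $\e^{\theta_w w}\leq\e$.
\item The depth per step is $O_k(\chi)=O_k(d)$, and $N_{\max}\sim N_0 q^2$ with $q^2\sim\log^2(\e/\varepsilon)$. This yields \cref{eq:rich-lind-maximum-depth-bound}; the additive $d$ covers the case where $N_0$ is $O(1)$.
\end{itemize}
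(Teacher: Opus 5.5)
The gap is Step~2, which is the actual content of the theorem: an $n$-independent bound on $R_{q,N}(t)$ in $\norm{\cdot}_{2,\theta_w\to2,0}$. You flag it as the main obstacle, and the effective-generator route you sketch does not close it.

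Write $S_h^N=\e^{t(\mathcal L+\mathcal V_h)}$ with $\mathcal V_h=h^{-1}\log S_h-\mathcal L$, and truncate the iterated Duhamel (Dyson) expansion at order $q$. The remainder then contains two uncontrolled objects:
\begin{enumerate}
\item[(i)] the full series $\mathcal V_h=\sum_{j\geq1}h^{2j}\mathcal K_{2j}$, both in the $q$-fold term and in the high-order tails of the lower-order terms;
\item[(ii)] a propagator $\e^{s(\mathcal L+\mathcal V_h)}=S_h^{s/h}$ at times $s$ that are not multiples of $h$.
\end{enumerate}
For (i), the weighted coefficient sums decay only geometrically, $\mathrm c(\mathcal K_{2j})\leq[2\chi(2J\e^{k\theta_w}+\bar\gamma)/\e]^{2j+1}$. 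Pushing $2j+1$ generators through a weight loss $\eta$ costs $((2j+1)/\eta)^{2j+1}\geq(2j+1)!\,\eta^{-(2j+1)}$. The weight-loss machinery therefore gives no finite bound on $\norm{\mathcal V_h}_{2,\alpha\to2,\beta}$ for any $h>0$. Slicing the budget into pieces of size $\theta/q$ controls only the leading $h^2\mathcal K_2$ insertion. For (ii), a fractional power of $S_h$ is not known to be contractive in any of the norms used. The naive estimate $\e^{s\norm{\mathcal L+\mathcal V_h}}$ reintroduces the number of blocks $m$. The nested-commutator alternative is only named. Step~1 controls the error only at order $h^3$, not at the order $h^{2q+1}$ the remainder needs, so it does not feed into the argument.

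The paper avoids both problems by never bounding $\log S_h$ as an operator. It uses $\log S_\xi$ only algebraically, through the finitely many coefficients $\mathcal K_{2j}$ with $j\leq q-1$ that enter $E_1,\ldots,E_{q-1}$. The steps are:
\begin{enumerate}
\item Set $F_q(s,h)=\e^{s\mathcal L}+\sum_{a<q}(h/s)^{2a}E_a(s)$ and telescope over integer steps, $S_h^N-F_q(t,h)=\sum_{j}S_h^{N-1-j}[S_hF_q(jh,h)-F_q((j+1)h,h)]$. Only integer powers of $S_h$, which are contractive at weight $0$, appear.
\item Read the identity $\e^{h\log S_\xi/\xi}\e^{s\log S_\xi/\xi}=\e^{(s+h)\log S_\xi/\xi}$ coefficientwise in $\xi$. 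This gives a convolution formula for $E_\ell(s+h)$, which turns each local difference into $R_{q,1}(h)\e^{s\mathcal L}+\sum_{b\geq1}(h/s)^{2b}R_{q-b,1}(h)E_b(s)$.
\item Bound the one-step remainders through the Taylor coefficients of $S_h$, $\e^{h\mathcal L}$ and $E_a(h)$, which do carry $1/m!$ factors. Split the weight loss evenly over the $m$ generators of degree $m$ and use $m^m\leq\e^m m!$.
\end{enumerate}
This yields $\norm{R_{q,N}(t)}_{2,\theta_w\to2,0}\leq6q\bigl(\sqrt{\e}\,qx\sqrt{1+x}/N\bigr)^{2q}$, and hence the requirement $N\gtrsim q\,x^{3/2}$.

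Two smaller points in Step~3:
\begin{enumerate}
\item A remainder of the form $(\cdot)^q$ with each factor at most $1/4$ gives only $4^{-q}$. That cannot absorb $\sum_j\abs{\omega_j}\leq48q^3$ at the prescribed $q=\lceil\log_4(4096/\varepsilon)\rceil$. You need decay like $16^{-q}$, which the paper's $(1/4)^{2q}$ provides.
\item With depth $O_k(d)$ per step, $N_0$ must scale like $(\chi(J+\bar\gamma)t/\theta_w)^{3/2}\propto d^{3/2}$, not $d^{5/2}$, to produce the stated $d^{5/2}$ in the depth.
\end{enumerate}
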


We first introduce Richardson extrapolation through an elementary example. We then construct the Lindbladian expansion in \cref{supp:richardson-lindbladian-construction}, bound its coefficients and remainder in \cref{supp:richardson-expansion-bounds,supp:richardson-remainder-bound}, and prove \cref{thm:rich-lind-maximum-depth} using the explicit extrapolation coefficients in \cref{supp:richardson-extrapolation-proof}.

\subsection{Introduction to Richardson extrapolation}
\label{supp:richardson-introduction}

Suppose that a quantity $F_0$ is approximated by a function $F(h)$, where $h>0$ is a tunable step size. Richardson extrapolation uses the dependence of the error on $h$ to improve this approximation. More precisely, suppose that, for a fixed integer $k\geq1$ and sufficiently small $h$,
\begin{align}
    F(h)=F_0+\sum_{\ell=1}^{k-1}a_\ell h^\ell+R_k(h),
    \qquad
    |R_k(h)|\leq C_k h^k.
    \label{eq:rich-intro-expansion}
\end{align}
Here the coefficients $a_\ell$ are independent of $h$, as is the remainder bound $C_k$. The integer $k$ specifies the order to which we will improve the approximation.

To see how the cancellation works, consider the leading error for $k\geq2$. At step size $h$ it is $a_1h$, while at $h/2$ it is $a_1h/2$. Thus, halving the step size halves the leading error, and
\begin{align}
    2F(h/2)-F(h)=F_0+O(h^2).
    \label{eq:rich-intro-first-cancellation}
\end{align}
The coefficients $2$ and $-1$ preserve $F_0$ and cancel the linear term without needing to know $a_1$. The error is therefore reduced from $O(h)$ to $O(h^2)$.

More generally, we combine the $k$ values $F(h),F(h/2),\ldots,F(h/k)$. Replacing $h$ by $h/j$ divides the coefficient of $h^\ell$ by $j^\ell$. Hence, to preserve $F_0$ and cancel the first $k-1$ error terms, the weights must satisfy
\begin{align}
    \sum_{j=1}^k\omega_j=1,
    \qquad
    \sum_{j=1}^k\frac{\omega_j}{j^\ell}=0
    \quad (\ell=1,\ldots,k-1).
    \label{eq:rich-intro-moments}
\end{align}
The Lagrange interpolation argument in the proof of \cref{lem:rich-lind-extrapolation-counts}, specifically \cref{eq:rich-lind-polynomial-extrapolation}, applies to the distinct points $1,1/2,\ldots,1/k$. Evaluating the interpolation polynomials at zero gives the explicit weights
\begin{align}
    \omega_j
    &=\prod_{\substack{1\leq m\leq k\\m\neq j}}
      \frac{-1/m}{1/j-1/m}
      =\prod_{\substack{1\leq m\leq k\\m\neq j}}\frac{j}{j-m}\\
    &=\frac{(-1)^{k-j}j^{k-1}}{(j-1)!(k-j)!},
    \qquad j=1,\ldots,k.
    \label{eq:rich-intro-integer-weights}
\end{align}
Interpolating the monomials $1,v,\ldots,v^{k-1}$ and evaluating at zero gives the moment conditions in \cref{eq:rich-intro-moments}. Substituting \cref{eq:rich-intro-expansion} into the weighted sum leaves only the remainders. We therefore obtain
\begin{align}
    \left|\sum_{j=1}^k\omega_j F(h/j)-F_0\right|
    \leq C_k h^k\sum_{j=1}^k\frac{|\omega_j|}{j^k}
    =O(h^k).
    \label{eq:rich-intro-remainder}
\end{align}
Thus $k$ evaluations give an order-$k$ approximation by cancelling $k-1$ error terms. This conclusion only needs the finite expansion and remainder bound in \cref{eq:rich-intro-expansion}. The error statement holds as $h\to0$ for fixed $k$, with a constant that can depend on $k$.

We illustrate the construction by accelerating the convergence of $(1+h)^{1/h}$ to $\e$.

\begin{example}[An elementary example: extrapolating $(1+h)^{1/h}$]
Consider
\begin{align}
    F(h):=(1+h)^{1/h}
    =\exp\!\left(\frac{\log(1+h)}{h}\right),
    \qquad F_0=\e.
    \label{eq:rich-intro-elementary-function}
\end{align}
For $|h|<1$, expanding the logarithm and then the exponential gives
\begin{align}
    \frac{\log(1+h)}{h}
    &=1-\frac h2+\frac{h^2}{3}-\frac{h^3}{4}+O(h^4),
    \label{eq:rich-intro-log-series}\\
    F(h)
    &=\e\left(1-\frac h2+\frac{11h^2}{24}
                    -\frac{7h^3}{16}+O(h^4)\right).
    \label{eq:rich-intro-elementary-series}
\end{align}
Thus $F(h)$ has error $O(h)$. At $h/2$, the leading error changes from $-\e h/2$ to $-\e h/4$. Combining the two evaluations as above removes this linear term, while three evaluations allow us to remove both the linear and quadratic terms:
\begin{align}
    2F(h/2)-F(h)
        &=\e-\frac{11\e}{48}h^2+O(h^3),\\
    \frac12F(h)-4F(h/2)+\frac92F(h/3)
        &=\e-\frac{7\e}{96}h^3+O(h^4).
    \label{eq:rich-intro-elementary-low-orders}
\end{align}
For example, the coefficients in the second line sum to one, while $\tfrac12-4/2+\tfrac92/3=0$ and $\tfrac12-4/2^2+\tfrac92/3^2=0$. These three identities preserve the constant term and cancel the coefficients of $h$ and $h^2$.

For arbitrary $k$, using the weights in \cref{eq:rich-intro-integer-weights} gives
\begin{align}
    \sum_{j=1}^k\omega_j F(h/j)
    =\sum_{j=1}^k\omega_j\left(1+\frac hj\right)^{j/h}
    =\e+O(h^k).
    \label{eq:rich-intro-elementary-order-k}
\end{align}
This constructs the order-$k$ approximation without needing to calculate the higher coefficients of the expansion of $F(h)$.
\end{example}

\subsection{A construction of the Richardson expansion for the Lindbladian evolution}
\label{supp:richardson-lindbladian-construction}

We now construct the expansion of the discrete Lindbladian evolution in inverse powers of the number of steps. We use the color classes $\mathcal C_1,\ldots,\mathcal C_\chi$ from \cref{lem:bounded-degree-coloring} and the corresponding generators in \cref{eq:color-open-system-generator}. Within each color we use the symmetric block $Q_{c,h}$, and we arrange these blocks symmetrically to obtain the full step $S_h$:
\begin{align}
    Q_{c,h}
    &:=\e^{\frac h2\mathcal D^{(c)}}
       \e^{h\mathcal H^{(c)}}
       \e^{\frac h2\mathcal D^{(c)}},
    \label{eq:rich-lind-color-step}\\
    S_h
    &:=Q_{1,h/2}\cdots Q_{\chi-1,h/2}\,
       Q_{\chi,h}\,
       Q_{\chi-1,h/2}\cdots Q_{1,h/2}.
    \label{eq:rich-lind-symmetric-step}
\end{align}
For $\chi=1$, this means $S_h=Q_{1,h}$. Repeating $S_{t/N}$ a total of $N$ times approximates $\e^{t\mathcal L}$. The following theorem identifies the corrections to this approximation and the remainder after truncating the expansion.

\begin{theorem}[Finite expansion of the symmetric Lindbladian step]
\label{thm:rich-lind-finite-expansion}
For a fixed finite system, evolution time $t>0$, and integer $q\geq1$, the repeated step has, for sufficiently large integers $N$, the expansion
\begin{align}
    S_{t/N}^{N}
    =\e^{t\mathcal L}
      +\sum_{\ell=1}^{q-1}\frac{E_\ell(t)}{N^{2\ell}}
      +R_{q,N}(t).
    \label{eq:rich-lind-finite-expansion}
\end{align}
The correction maps $E_\ell(t)$ do not depend on $N$ or $q$. With the normalization of \cref{eq:methods-symmetric-expansion}, they are given explicitly by
\begin{align}
    E_\ell(t)
    =t^{2\ell}\sum_{r=1}^{\ell}
    \sum_{\substack{j_1+\cdots+j_r=\ell\\j_1,\ldots,j_r\geq1}}
    \int_{0\leq s_1\leq\cdots\leq s_r\leq t}
    &\e^{(t-s_r)\mathcal L}\mathcal K_{2j_r}
     \e^{(s_r-s_{r-1})\mathcal L}\cdots
     \mathcal K_{2j_1}\e^{s_1\mathcal L}
     \,\mathrm ds_1\cdots\mathrm ds_r,
    \label{eq:rich-lind-correction-maps}
\end{align}
where the maps $\mathcal K_{2j}$ are the odd Taylor coefficients of the logarithm of the step:
\begin{align}
    \mathcal K_{2j}
    :=\frac{1}{(2j+1)!}
      \left.\frac{\mathrm d^{2j+1}}{\mathrm dh^{2j+1}}
      \log S_h\right|_{h=0},
    \qquad j\geq1.
    \label{eq:rich-lind-log-coefficients}
\end{align}
Here $R_{q,N}(t)$ is the remainder.
\end{theorem}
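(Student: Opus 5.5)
We will work entirely in a fixed finite-dimensional operator space, so analyticity and convergence questions are elementary. The first step is to show that $S_h$ is analytic in $h$ near $0$ with $S_0=I$, so $\log S_h$ is well defined and analytic for small $|h|$. Write $\log S_h = h\mathcal L + \sum_{n\geq 2} h^n \mathcal K_{n-1}$, where $\mathcal K_{n-1}$ is the Taylor coefficient in the statement. The linear coefficient equals $\mathcal L$ because every generator appears with total weight one. The key structural input is time-reversal symmetry: by the palindromic construction, $S_{-h}=S_h^{-1}$, since $Q_{c,-h}=Q_{c,h}^{-1}$ and the reversed product of inverses reproduces the palindrome. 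Hence $\log S_{-h}=-\log S_h$, so $\log S_h$ is odd in $h$. We may therefore write $\log S_h = h\mathcal L + \sum_{j\geq1} h^{2j+1}\mathcal K_{2j}$, which matches the definition of $\mathcal K_{2j}$.

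Next set $h=t/N$. Then $S_{t/N}^N=\exp(N\log S_{t/N})=\exp\!\bigl(t\mathcal L+t\,\mathcal V(N)\bigr)$ with
\begin{align}
\mathcal V(N)=\sum_{j\geq1}\frac{t^{2j}}{N^{2j}}\mathcal K_{2j}.
\end{align}
The perturbation is thus a convergent power series in $N^{-2}$ for $N$ large. We apply the Dyson/Duhamel expansion of $\e^{t(\mathcal L+\mathcal V)}$ around $\e^{t\mathcal L}$: the $r$-th order term is $\int_{0\leq s_1\leq\cdots\leq s_r\leq t}\e^{(t-s_r)\mathcal L}\mathcal V\cdots\mathcal V\e^{s_1\mathcal L}\,\mathrm ds$. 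In finite dimension this series converges absolutely, with the $r$-th term bounded by $(t\norm{\mathcal V})^r\e^{t\norm{\mathcal L}}/r!$ in any induced norm. Substituting the series for $\mathcal V$ and expanding multilinearly, we collect terms by total power $N^{-2\ell}$. Each choice $(j_1,\dots,j_r)$ contributes $t^{2(j_1+\cdots+j_r)}N^{-2\ell}$ with $\ell=\sum j_i$, which is exactly the stated formula for $E_\ell(t)$. These maps are manifestly independent of $N$ and $q$.

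Finally, we define $R_{q,N}(t)$ as the difference in the statement and must show the expansion is legitimate. We would argue that $S_{t/N}^N$ is an analytic function of the variable $u=N^{-2}$ in a disc $|u|<u_0$. This holds because the double series in $r$ and $j$ converges absolutely once $\sum_j t^{2j}u^j\norm{\mathcal K_{2j}}$ converges, which follows from the analyticity radius of $\log S_h$. The rearrangement by powers of $u$ is therefore justified, and the remainder is the Taylor tail, bounded by $C_q N^{-2q}$. As a consistency check, the $E_\ell$ agree with the recursive limit definition in the main text, by uniqueness of Taylor coefficients.

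The main obstacle is securing a uniform radius of convergence for $\log S_h$, which must be valid for large $N$ in a fixed finite system. We would handle it by bounding $\norm{S_h-I}<1$ for $|h|$ small, using the crude bound $\e^{|h|\sum_c(\norm{\mathcal H^{(c)}}+\norm{\mathcal D^{(c)}})}-1$, and then using the Mercator series. Only existence is claimed here; the system-size-independent bounds are deferred to the later sections.
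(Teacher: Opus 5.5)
Your proposal is correct and follows the paper's proof essentially step for step. Both arguments use oddness of $\log S_h$ from $S_{-h}=S_h^{-1}$, rewrite $S_{t/N}^N=\e^{t(\mathcal L+\mathcal V)}$ with $\mathcal V$ a series in $N^{-2}$, expand around $\e^{t\mathcal L}$ by Duhamel/Dyson, and collect terms by total power $N^{-2\ell}$. The only difference is how the remainder is tracked. You expand the full absolutely convergent Dyson series and invoke analyticity in $u=N^{-2}$, which also gives $R_{q,N}(t)=O(N^{-2q})$ and agreement with the recursive limit definition of the main text. The paper instead iterates the variation-of-constants identity only $q$ times and keeps the last integral, together with the omitted high-degree terms, as an explicit remainder.
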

\begin{proof}
We first establish the expansion of one step and then collect the correction terms over the full evolution time. Each color has total coherent duration $h$ and total dissipative duration $h$ in $S_h$. The first-order expansion used in \cref{eq:split-color-first-order-expansion}, together with \cref{eq:full-lindbladian-decomposition}, therefore gives
\begin{align}
    S_h
    =I+h\sum_{c=1}^{\chi}
      \bigl(\mathcal H^{(c)}+\mathcal D^{(c)}\bigr)+O(h^2)
    =I+h\mathcal L+O(h^2).
    \label{eq:rich-lind-first-order-step}
\end{align}

The symmetric ordering gives
\begin{align}
    Q_{c,-h}=Q_{c,h}^{-1},
    \qquad S_{-h}=S_h^{-1},
    \qquad \log S_{-h}=-\log S_h.
    \label{eq:rich-lind-odd-logarithm}
\end{align}
In a fixed finite system, $S_0=I$ ensures that $\log S_h$ is analytic for sufficiently small $h$. Since $(-h)^{2j}=h^{2j}$, comparing coefficients in $\log S_{-h}=-\log S_h$ makes each even coefficient equal to its negative, and hence zero. Therefore,
\begin{align}
    \frac{1}{h}\log S_h
    =\mathcal L+\sum_{j=1}^{q-1}h^{2j}\mathcal K_{2j}
      +O(h^{2q}).
    \label{eq:rich-lind-log-expansion}
\end{align}

To construct $\mathcal K_{2j}$ explicitly, define the ordered list $\mathcal A_1,\ldots,\mathcal A_M$ by following the block order in \cref{eq:rich-lind-symmetric-step} and replacing each $Q_{c,h/2}$ by $(\mathcal D^{(c)}/4,\mathcal H^{(c)}/2,\mathcal D^{(c)}/4)$, and the central block $Q_{\chi,h}$ by $(\mathcal D^{(\chi)}/2,\mathcal H^{(\chi)},\mathcal D^{(\chi)}/2)$. Thus $\mathcal A_\nu$ is the scaled generator in the $\nu$th exponential factor, and
\begin{align}
    S_h=\e^{h\mathcal A_1}\cdots\e^{h\mathcal A_M},
    \qquad M=3(2\chi-1).
    \label{eq:rich-lind-stage-list}
\end{align}
Expanding these exponentials gives
\begin{align}
    S_h=I+\sum_{m=1}^{\infty}h^m B_m,
    \qquad
    B_m=
    \sum_{\substack{m_1+\cdots+m_M=m\\m_1,\ldots,m_M\geq0}}
    \frac{\mathcal A_1^{m_1}\cdots\mathcal A_M^{m_M}}
         {m_1!\cdots m_M!}.
    \label{eq:rich-lind-step-coefficients}
\end{align}
For sufficiently small $h$, the logarithm is given by
\begin{align}
    \log S_h
    =\sum_{r=1}^{\infty}\frac{(-1)^{r+1}}{r}(S_h-I)^r
    =\sum_{r=1}^{\infty}\frac{(-1)^{r+1}}{r}
      \sum_{m_1,\ldots,m_r\geq1}
      h^{m_1+\cdots+m_r}B_{m_1}\cdots B_{m_r}.
    \label{eq:rich-lind-log-series}
\end{align}
Thus its coefficient of $h^{2j+1}$ is the finite sum
\begin{align}
    \mathcal K_{2j}
    =\sum_{r=1}^{2j+1}\frac{(-1)^{r+1}}{r}
      \sum_{\substack{m_1+\cdots+m_r=2j+1\\m_1,\ldots,m_r\geq1}}
      B_{m_1}\cdots B_{m_r}.
    \label{eq:rich-lind-explicit-log-coefficients}
\end{align}
For example, $B_1=\mathcal L$ and the vanishing quadratic coefficient of $\log S_h$ gives $B_2=\mathcal L^2/2$. Hence
\begin{align}
    \mathcal K_2
    =B_3-\frac12(B_1B_2+B_2B_1)+\frac13B_1^3
    =B_3-\frac16\mathcal L^3.
    \label{eq:rich-lind-explicit-first-log-coefficient}
\end{align}

We have obtained expansions for $S_h$ and $\log S_h$. We now use them to find the correction maps $E_\ell(t)$ in the expansion of the repeated product $S_{t/N}^N$. For $h=t/N$, we denote the correction to $\mathcal L$ in $h^{-1}\log S_h$ by $\mathcal V_h$ and rewrite the product as a single exponential:
\begin{align}
    \mathcal V_h:=\frac{\log S_h}{h}-\mathcal L,
    \qquad
    S_h^N=\e^{N\log S_h}=\e^{t(\mathcal L+\mathcal V_h)}.
    \label{eq:rich-lind-generator-correction}
\end{align}
The remaining task is to expand $\e^{t(\mathcal L+\mathcal V_h)}$ around $\e^{t\mathcal L}$. We derive an integral identity for their difference with one explicit factor of $\mathcal V_h$. Repeated substitution then produces terms with further factors of $\mathcal V_h$, whose powers of $h$ are known from \cref{eq:rich-lind-log-expansion}. Collecting these powers will give the maps $E_\ell(t)$. To obtain the integral identity, differentiate the product
\begin{align}
    \frac{\mathrm d}{\mathrm ds}
    \left(\e^{(t-s)\mathcal L}
          \e^{s(\mathcal L+\mathcal V_h)}\right)
    &=-\e^{(t-s)\mathcal L}\mathcal L
       \e^{s(\mathcal L+\mathcal V_h)}
      +\e^{(t-s)\mathcal L}(\mathcal L+\mathcal V_h)
       \e^{s(\mathcal L+\mathcal V_h)}\\
    &=\e^{(t-s)\mathcal L}\mathcal V_h
       \e^{s(\mathcal L+\mathcal V_h)}.
    \label{eq:rich-lind-product-derivative}
\end{align}
The endpoint values at $s=t$ and $s=0$ are $\e^{t(\mathcal L+\mathcal V_h)}$ and $\e^{t\mathcal L}$, respectively. Integrating therefore gives
\begin{align}
    \e^{t(\mathcal L+\mathcal V_h)}
    =\e^{t\mathcal L}
     +\int_0^t\e^{(t-s)\mathcal L}\mathcal V_h
       \e^{s(\mathcal L+\mathcal V_h)}\,\mathrm ds.
    \label{eq:rich-lind-variation-of-constants}
\end{align}
This is the variation-of-constants identity used here. Applying it to the last exponential inside the integral yields
\begin{align}
    \e^{t(\mathcal L+\mathcal V_h)}
    &=\e^{t\mathcal L}
      +\int_0^t\e^{(t-s_1)\mathcal L}\mathcal V_h
       \e^{s_1\mathcal L}\,\mathrm ds_1\\
    &\quad+\int_0^t\int_0^{s_2}
       \e^{(t-s_2)\mathcal L}\mathcal V_h
       \e^{(s_2-s_1)\mathcal L}\mathcal V_h
       \e^{s_1(\mathcal L+\mathcal V_h)}
       \,\mathrm ds_1\,\mathrm ds_2.
    \label{eq:rich-lind-second-substitution}
\end{align}
Repeating this substitution until the last integral contains $q$ correction factors gives the exact finite identity
\begin{align}
    S_h^N
    &=\e^{t\mathcal L}
      +\sum_{r=1}^{q-1}
       \int_{0\leq s_1\leq\cdots\leq s_r\leq t}
       \e^{(t-s_r)\mathcal L}\mathcal V_h
       \e^{(s_r-s_{r-1})\mathcal L}\cdots
       \mathcal V_h\e^{s_1\mathcal L}
       \,\mathrm ds_1\cdots\mathrm ds_r\\
    &\quad+\int_{0\leq s_1\leq\cdots\leq s_q\leq t}
       \e^{(t-s_q)\mathcal L}\mathcal V_h
       \e^{(s_q-s_{q-1})\mathcal L}\cdots
       \mathcal V_h\e^{s_1(\mathcal L+\mathcal V_h)}
       \,\mathrm ds_1\cdots\mathrm ds_q.
    \label{eq:rich-lind-finite-iteration}
\end{align}
By \cref{eq:rich-lind-log-expansion}, each correction factor starts at order $h^2$. Selecting $h^{2j_i}\mathcal K_{2j_i}$ from the $i$th factor produces
\begin{align}
    h^{2j_r}\mathcal K_{2j_r}
    \e^{(s_r-s_{r-1})\mathcal L}\cdots
    h^{2j_1}\mathcal K_{2j_1}
    =h^{2(j_1+\cdots+j_r)}
     \mathcal K_{2j_r}\e^{(s_r-s_{r-1})\mathcal L}\cdots
     \mathcal K_{2j_1}.
    \label{eq:rich-lind-insertion-powers}
\end{align}
Consequently, the coefficient of $h^{2\ell}$ receives contributions only from $j_1+\cdots+j_r=\ell$, with $1\leq r\leq\ell$. Collecting these terms in \cref{eq:rich-lind-finite-iteration} gives
\begin{align}
    S_h^N
    &=\e^{t\mathcal L}
      +\sum_{\ell=1}^{q-1}h^{2\ell}
       \sum_{r=1}^{\ell}
       \sum_{\substack{j_1+\cdots+j_r=\ell\\j_1,\ldots,j_r\geq1}}
       \int_{0\leq s_1\leq\cdots\leq s_r\leq t}
       \e^{(t-s_r)\mathcal L}\mathcal K_{2j_r}
       \e^{(s_r-s_{r-1})\mathcal L}\cdots
       \mathcal K_{2j_1}\e^{s_1\mathcal L}
       \,\mathrm ds_1\cdots\mathrm ds_r
       +R_{q,N}(t).
    \label{eq:rich-lind-collected-expansion}
\end{align}
Here $R_{q,N}(t)$ collects the last integral in \cref{eq:rich-lind-finite-iteration} and the terms of degree at least $2q$ omitted from the other integrals. Substituting $h^{2\ell}=t^{2\ell}/N^{2\ell}$ identifies the coefficients in \cref{eq:rich-lind-correction-maps} and proves \cref{eq:rich-lind-finite-expansion}. Each coefficient is determined by the fixed step, $\mathcal L$, and $t$, independently of $N$ and $q$.
\end{proof}

Each term in \cref{eq:rich-lind-correction-maps} describes exact Lindbladian evolution with correction maps inserted at ordered times. The earliest insertion acts on the right. For example, the first two coefficients are
\begin{align}
    E_1(t)
    &=t^2\int_0^t
      \e^{(t-s)\mathcal L}\mathcal K_2\e^{s\mathcal L}
      \,\mathrm ds,
    \label{eq:rich-lind-first-correction}\\
    E_2(t)
    &=t^4\int_0^t
      \e^{(t-s)\mathcal L}\mathcal K_4\e^{s\mathcal L}
      \,\mathrm ds\\
    &\quad+t^4\int_0^t\int_0^{s_2}
      \e^{(t-s_2)\mathcal L}\mathcal K_2
      \e^{(s_2-s_1)\mathcal L}\mathcal K_2
      \e^{s_1\mathcal L}
      \,\mathrm ds_1\,\mathrm ds_2.
    \label{eq:rich-lind-second-correction}
\end{align}
Thus the order-$N^{-4}$ correction contains either one insertion of $\mathcal K_4$ or two insertions of $\mathcal K_2$. The same rule determines every higher coefficient: the indices of the inserted maps must add to the required order.

\subsection{Bounds on the terms of the expansion}
\label{supp:richardson-expansion-bounds}

For an expansion $F=\sum_q\lambda_q F_q$ in ordered products $F_q$ of the color generators $\mathcal H^{(c)}$ and $\mathcal D^{(c)}$ from \cref{eq:color-open-system-generator}, collect identical products and define the weighted sum of absolute coefficients by
\begin{align}
    \mathrm c(F):=\sum_q|\lambda_q|
    \left(\frac{2J\e^{k\theta_w}}{\e}\right)^{n_H(q)}
    \left(\frac{\bar\gamma}{\e}\right)^{n_D(q)}.
    \label{eq:rich-lind-coefficient-weight-definition}
\end{align}
Here $n_H(q)$ and $n_D(q)$ count the Hamiltonian and dissipative factors in $F_q$, respectively. We use $\theta_w$ from \cref{eq:quantum-optimal-weight-parameter} and the model parameters of \cref{supp:model}. The empty product has weight one.

\begin{lemma}[Bounds on the step and logarithm coefficients]
\label{lem:rich-lind-coefficient-bounds}
Let $B_m$ and $\mathcal K_{2j}$ be the coefficients defined in \cref{eq:rich-lind-step-coefficients,eq:rich-lind-explicit-log-coefficients}, and let $\chi$ be the number of colors from \cref{lem:bounded-degree-coloring}. Then
\begin{align}
    \mathrm c(B_m)&\leq\frac{1}{m!}
      \left[\frac{\chi(2J\e^{k\theta_w}+\bar\gamma)}{\e}\right]^m,
    &&m\geq1,\qquad B_1=\mathcal L,
    \label{eq:rich-lind-step-coefficient-bound}\\
    \mathrm c(\mathcal K_{2j})&\leq
      \left[\frac{2\chi(2J\e^{k\theta_w}+\bar\gamma)}{\e}\right]^{2j+1},
    &&j\geq1.
    \label{eq:rich-lind-log-coefficient-bound}
\end{align}
Each term in $B_m$ contains $m$ color generators, and each term in $\mathcal K_{2j}$ contains $2j+1$ color generators.
\end{lemma}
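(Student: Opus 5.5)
The plan is to treat $\mathrm c(\cdot)$ as a weighted $\ell_1$ norm on formal noncommutative polynomials in the color generators. Write $x_H:=2J\e^{k\theta_w}/\e$ and $x_D:=\bar\gamma/\e$. Then $\mathrm c$ is subadditive and submultiplicative, $\mathrm c(FG)\leq\mathrm c(F)\mathrm c(G)$, and also absolutely homogeneous. Submultiplicativity holds because a product of two ordered words is a word whose factor counts add, and collecting identical words can only decrease the sum of absolute coefficients. These properties are immediate from \cref{eq:rich-lind-coefficient-weight-definition}, and I would record them first. They reduce both claims to bounding $\mathrm c$ of each scaled generator $\mathcal A_\nu$ in \cref{eq:rich-lind-stage-list}.

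For \cref{eq:rich-lind-step-coefficient-bound}, each $\mathcal A_\nu$ is a single color generator times a scalar: $1/4$ or $1/2$ for the outer dissipative factors, $1/2$ or $1$ for the Hamiltonian ones. Hence $\mathrm c(\mathcal A_\nu)=|s_\nu|\,x_{H\text{ or }D}$. Summing over the palindromic list, each color $c$ contributes total Hamiltonian scalar $1$ and total dissipative scalar $1$, because each color has coherent duration $h$ and dissipative duration $h$ in $S_h$. Therefore
\[
\sum_\nu \mathrm c(\mathcal A_\nu)=\chi(x_H+x_D)\leq \chi(2J\e^{k\theta_w}+\bar\gamma)/\e=:a .
\]
Applying subadditivity and submultiplicativity to \cref{eq:rich-lind-step-coefficients} gives
\[
\mathrm c(B_m)\leq\sum_{m_1+\cdots+m_M=m}\prod_\nu \frac{\mathrm c(\mathcal A_\nu)^{m_\nu}}{m_\nu!}.
\]
By the multinomial theorem this equals $\bigl(\sum_\nu\mathrm c(\mathcal A_\nu)\bigr)^m/m!\leq a^m/m!$. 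Every term of $B_m$ is visibly a word of length $m$, and $B_1=\sum_\nu\mathcal A_\nu=\mathcal L$ by the same duration count.

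For \cref{eq:rich-lind-log-coefficient-bound}, I apply $\mathrm c$ to the finite sum \cref{eq:rich-lind-explicit-log-coefficients}. This gives
\[
\mathrm c(\mathcal K_{2j})\leq\sum_{r=1}^{2j+1}\frac1r\sum_{m_1+\cdots+m_r=2j+1}\prod_i\frac{a^{m_i}}{m_i!}=a^{2j+1}\sum_{r}\frac1r\,[y^{2j+1}](\e^y-1)^r .
\]
The main obstacle is to bound this combinatorial factor by $2^{2j+1}$ without any analyticity argument. Since $\e^y-1$ has nonnegative coefficients, the whole sum is the coefficient of $y^{2j+1}$ in $-\log(2-\e^y)$, which is analytic for $|y|<\log 2$. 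A Cauchy estimate on the circle $|y|=1/2$ therefore gives a bound of order $2^{2j+1}$. To obtain the clean constant, I would instead use the cruder count
\[
\sum_r\frac1r\binom{2j}{r-1}\leq 2^{2j},
\]
which holds because $1/m_i!\leq1$ and the number of compositions of $2j+1$ into $r$ parts is $\binom{2j}{r-1}$. This gives $\mathrm c(\mathcal K_{2j})\leq(2a)^{2j+1}$, as claimed.

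Finally, each product $B_{m_1}\cdots B_{m_r}$ with $\sum_i m_i=2j+1$ consists of words of length $2j+1$. Collecting identical words does not change their length, so every term of $\mathcal K_{2j}$ contains exactly $2j+1$ color generators, which completes the statement.
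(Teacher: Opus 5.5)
Your proposal is correct and matches the paper's proof essentially step for step. Both use subadditivity and submultiplicativity of $\mathrm c$, the duration count giving $\sum_\nu\mathrm c(\mathcal A_\nu)=\chi(2J\e^{k\theta_w}+\bar\gamma)/\e$ together with the multinomial identity for $B_m$, and for $\mathcal K_{2j}$ the composition count $\binom{2j}{r-1}$ with $1/r\leq1$ and $1/(m_1!\cdots m_r!)\leq1$ to reach $2^{2j}$ (your generating-function and Cauchy-estimate remark is an unneeded aside).
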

\begin{proof}
For two such expansions $F$ and $G$, the definition in \cref{eq:rich-lind-coefficient-weight-definition} and the triangle inequality give
\begin{align}
    \mathrm c(F+G)\leq\mathrm c(F)+\mathrm c(G),
    \qquad
    \mathrm c(FG)\leq\mathrm c(F)\mathrm c(G).
    \label{eq:rich-lind-coefficient-size-rules}
\end{align}
For the product, each choice of one term from $F$ and one from $G$ contributes the product of their absolute scalar coefficients and weights. Summing over these choices gives $\mathrm c(F)\mathrm c(G)$, which bounds the coefficient sum after identical ordered products are collected.

Consider the scaled generators $\mathcal A_1,\ldots,\mathcal A_M$ in \cref{eq:rich-lind-stage-list}. As established at the start of the proof of \cref{thm:rich-lind-finite-expansion}, the fractions multiplying each $\mathcal H^{(c)}$ add to one, and the same holds for each $\mathcal D^{(c)}$. Applying \cref{eq:rich-lind-coefficient-weight-definition} therefore gives
\begin{align}
    \sum_{\nu=1}^{M}\mathrm c(\mathcal A_\nu)
    &=\sum_{c=1}^{\chi}
      \left[\mathrm c(\mathcal H^{(c)})+\mathrm c(\mathcal D^{(c)})\right]
    =\frac{\chi(2J\e^{k\theta_w}+\bar\gamma)}{\e}.
    \label{eq:rich-lind-total-coefficient-weight}
\end{align}
Applying \cref{eq:rich-lind-coefficient-size-rules} to the formula for $B_m$ in \cref{eq:rich-lind-step-coefficients} gives
\begin{align}
    \mathrm c(B_m)
    &\leq
    \sum_{\substack{m_1+\cdots+m_M=m\\m_1,\ldots,m_M\geq0}}
    \prod_{\nu=1}^{M}
    \frac{\mathrm c(\mathcal A_\nu)^{m_\nu}}{m_\nu!}
    \\
    &=\frac{1}{m!}
      \left(\sum_{\nu=1}^{M}\mathrm c(\mathcal A_\nu)\right)^m
      =\frac{1}{m!}
      \left[\frac{\chi(2J\e^{k\theta_w}+\bar\gamma)}{\e}\right]^m,
    \label{eq:rich-lind-multinomial-bound}
\end{align}
where the first equality is the multinomial identity and the second uses \cref{eq:rich-lind-total-coefficient-weight}. Each product in \cref{eq:rich-lind-step-coefficients} contains $m_1+\cdots+m_M=m$ color generators. The identity $B_1=\mathcal L$ was proved in \cref{eq:rich-lind-first-order-step}.

The formula for $\mathcal K_{2j}$ in \cref{eq:rich-lind-explicit-log-coefficients} contains products $B_{m_1}\cdots B_{m_r}$ with $m_1+\cdots+m_r=2j+1$. Each of these products therefore contains $2j+1$ color generators. Taking weighted absolute coefficients and using \cref{eq:rich-lind-coefficient-size-rules,eq:rich-lind-multinomial-bound} gives
\begin{align}
    \mathrm c(\mathcal K_{2j})
    &\leq\sum_{r=1}^{2j+1}\frac{1}{r}
      \sum_{\substack{m_1+\cdots+m_r=2j+1\\m_1,\ldots,m_r\geq1}}
      \prod_{i=1}^{r}\mathrm c(B_{m_i})
      \\
    &\leq\left[\frac{\chi(2J\e^{k\theta_w}+\bar\gamma)}{\e}\right]^{2j+1}
      \sum_{r=1}^{2j+1}\frac{1}{r}
      \sum_{\substack{m_1+\cdots+m_r=2j+1\\m_1,\ldots,m_r\geq1}}
      \frac{1}{m_1!\cdots m_r!}.
    \label{eq:rich-lind-log-coefficient-sum}
\end{align}
For fixed $r$, there are $\binom{2j}{r-1}$ ordered lists of positive integers summing to $2j+1$. They are obtained by choosing $r-1$ of the $2j$ gaps between $2j+1$ consecutive ones as separators. Using $1/r\leq1$ and $1/(m_1!\cdots m_r!)\leq1$ therefore yields
\begin{align}
    \mathrm c(\mathcal K_{2j})
    &\leq\left[\frac{\chi(2J\e^{k\theta_w}+\bar\gamma)}{\e}\right]^{2j+1}
      \sum_{r=1}^{2j+1}\binom{2j}{r-1}\\
    &=2^{2j}\left[\frac{\chi(2J\e^{k\theta_w}+\bar\gamma)}{\e}\right]^{2j+1}\\
    &\leq\left[\frac{2\chi(2J\e^{k\theta_w}+\bar\gamma)}{\e}\right]^{2j+1},
    \label{eq:rich-lind-log-coefficient-count}
\end{align}
where the equality follows from the binomial theorem.
\end{proof}

The correction maps $E_\ell(t)$ in \cref{eq:rich-lind-correction-maps} contain products of the logarithm coefficients separated by exact evolution. We combine the preceding coefficient bounds with weighted contractivity to estimate these maps in the induced norms of \cref{eq:weighted-two-induced-norm}.

\begin{lemma}[Bounds on the correction coefficients at real time]
\label{lem:rich-lind-correction-bounds}
For every integer $\ell\geq1$, $t\geq0$, and $0\leq\beta<\alpha\leq\theta_w$, the correction maps defined in \cref{eq:rich-lind-correction-maps} satisfy
\begin{align}
    \norm{E_\ell(t)}_{2,\alpha\to 2,\beta}
    \leq t^{2\ell}\sum_{r=1}^{\ell}
    \binom{\ell-1}{r-1}\frac{t^r}{r!}
    \left[\frac{2\chi(2J\e^{k\theta_w}+\bar\gamma)(2\ell+r)}{\e(\alpha-\beta)}\right]^{2\ell+r}.
    \label{eq:rich-lind-correction-bound}
\end{align}
\end{lemma}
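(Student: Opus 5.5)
The plan is to bound each summand of \cref{eq:rich-lind-correction-maps} separately. The weight loss $\alpha-\beta$ is split evenly among the inserted factors, and the exact evolutions between insertions are handled by weighted contractivity.

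Fix $r$ and a composition $j_1+\cdots+j_r=\ell$. The corresponding term contains $r$ insertions $\mathcal K_{2j_1},\ldots,\mathcal K_{2j_r}$. Each $\mathcal K_{2j_i}$ is a combination of ordered products of $2j_i+1$ color generators, so altogether there are $\sum_i(2j_i+1)=2\ell+r$ generator factors. We introduce a decreasing ladder of weights $\alpha=\theta_0>\theta_1>\cdots>\theta_{2\ell+r}=\beta$ with uniform spacing $\eta=(\alpha-\beta)/(2\ell+r)$. Each individual generator factor is charged one rung of the ladder.

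The exact evolutions $\e^{s\mathcal L}$ between and around the insertions are contractions in every $\norm{\cdot}_{2,\theta\to2,\theta}$ with $\theta\le\theta_w\le\theta_\star$, by \cref{thm:supp-regularity}. By weighted submultiplicativity \cref{eq:weighted-induced-submultiplicativity}, they therefore cost nothing and can be evaluated at whichever rung they sit on.

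For a single generator factor we need two one-rung bounds, from weight $\theta$ to weight $\theta-\eta$. For $\mathcal H^{(c)}$, the column/row counting of \cref{lem:parallel-color-weight-loss} applies, using that a color term changes a coefficient by at most $2J$ and the weight by at most $k$. Together with $\sup_s s\e^{-\eta s}\le1/(\e\eta)$, this gives
\[
\norm{\mathcal H^{(c)}}^{\mathrm{abs}}_{2,\theta\to2,\theta-\eta}\le \frac{2J\e^{k\theta_w}}{\e\,\eta}.
\]
For $\mathcal D^{(c)}$, which is diagonal with at most $|P|$ active terms of rate $\le\bar\gamma$, the analogous bound is $\bar\gamma/(\e\eta)$. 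These are exactly the per-factor weights in $\mathrm c(\cdot)$ of \cref{eq:rich-lind-coefficient-weight-definition}, divided by $\eta$.

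Consequently, for any product expansion $F$ with $n$ factors, $\norm{F}^{\mathrm{abs}}_{2,\theta\to2,\theta-n\eta}\le \mathrm c(F)\eta^{-n}$. Applying this to each $\mathcal K_{2j_i}$ and using \cref{lem:rich-lind-coefficient-bounds}, the product of insertions is bounded by
\[
\prod_i\left[\frac{2\chi(2J\e^{k\theta_w}+\bar\gamma)}{\e\eta}\right]^{2j_i+1}=\left[\frac{2\chi(2J\e^{k\theta_w}+\bar\gamma)(2\ell+r)}{\e(\alpha-\beta)}\right]^{2\ell+r}.
\]
The simplex integral contributes $t^r/r!$, and the number of compositions of $\ell$ into $r$ positive parts is $\binom{\ell-1}{r-1}$. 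Together with the prefactor $t^{2\ell}$ and summation over $r$, this yields \cref{eq:rich-lind-correction-bound}.

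The main obstacle is the one-rung generator bound. We need a clean $1/\eta$ dependence, with constant exactly $\mathrm c$-weight$/\eta$ and no extra $C_k$. This requires handling both row and column sums in Schur's test (\cref{lem:quantum-schur-test}), with the output-weight range $|Q|\in[|P|-k,|P|+k]$. The approach is to absorb the $\e^{k\theta}$ shift into $\e^{k\theta_w}$, bound the active-term count by $|P|$ (or $|Q|$ for the rows), and then optimize $s\e^{-\eta s}$.

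A secondary point is the ordering of factors in the weight ladder. Factors act right to left, so the rightmost factor must start at weight $\alpha$ and the weights then descend monotonically toward $\beta$ at the leftmost factor; the exact evolutions in between remain contractive at each intermediate weight since all rungs lie in $[0,\theta_w]$.
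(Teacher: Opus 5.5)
Your proposal is correct and matches the paper's proof step for step: one-rung Schur-test bounds $2J\e^{k\theta_w}/(\e\eta)$ for $\mathcal H^{(c)}$ and $\bar\gamma/(\e\eta)$ for $\mathcal D^{(c)}$, a uniform ladder of spacing $\eta=(\alpha-\beta)/(2\ell+r)$ charged one rung per generator, weighted contractivity of the interleaved exact evolutions, the bound on $\mathrm c(\mathcal K_{2j})$ from the preceding lemma, and the simplex volume $t^r/r!$ together with the $\binom{\ell-1}{r-1}$ compositions. The only cosmetic difference is that you express the product estimate through absolute induced norms, while the paper uses ordinary induced norms with the triangle inequality over the collected expansion; both give the same bound.
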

\begin{proof}
The expansion in \cref{eq:rich-lind-correction-maps} is a sum of integrals of products. We first use the triangle inequality to bound each integrand separately:
\begin{align}
    \norm{E_\ell(t)}_{2,\alpha\to 2,\beta}
    &\leq t^{2\ell}\sum_{r=1}^{\ell}
    \sum_{\substack{j_1+\cdots+j_r=\ell\\j_1,\ldots,j_r\geq1}}
    \int_{0\leq s_1\leq\cdots\leq s_r\leq t}
    \norm{
        \e^{(t-s_r)\mathcal L}\mathcal K_{2j_r}
        \e^{(s_r-s_{r-1})\mathcal L}\cdots
        \mathcal K_{2j_1}\e^{s_1\mathcal L}
    }_{2,\alpha\to 2,\beta}
    \,\mathrm ds_1\cdots\mathrm ds_r.
    \label{eq:rich-lind-correction-triangle}
\end{align}
For the color generators, we use the terms linear in the evolution time in the column and row estimates \cref{eq:parallel-color-column-bound,eq:parallel-color-row-bound}. Applying Schur's inequality \cref{eq:quantum-schur-test} to these terms gives the Hamiltonian bound below. The dissipative bound uses the diagonal action and rate bounds \cref{eq:local-pauli-diagonal-noise,eq:local-noise-window}, with the same color overlap count used in \cref{eq:parallel-color-column-bound}. These estimates give
\begin{align}
    \norm{\mathcal H^{(c)}}_{2,\alpha\to 2,\beta}
    &\leq\frac{2J\e^{k\theta_w}}{\e(\alpha-\beta)},
    \label{eq:rich-lind-hamiltonian-weight-loss}\\
    \norm{\mathcal D^{(c)}}_{2,\alpha\to 2,\beta}
    &\leq\frac{\bar\gamma}{\e(\alpha-\beta)}.
    \label{eq:rich-lind-dissipative-weight-loss}
\end{align}
The factor $1/\e$ comes from $\sup_{x\geq0}x\e^{-(\alpha-\beta)x} =1/[\e(\alpha-\beta)]$.

We now apply these generator bounds to the integrand in \cref{eq:rich-lind-correction-triangle}. Fix a list $j_1+\cdots+j_r=\ell$. The coefficient formulas \cref{eq:rich-lind-step-coefficients,eq:rich-lind-explicit-log-coefficients} show that each term in $\mathcal K_{2j_i}$ contains $2j_i+1$ color generators, giving $\sum_{i=1}^r(2j_i+1)=2\ell+r$ generators in total. Assign a weight decrease to each of these generators by setting
\begin{align}
    \alpha_i:=\alpha-\frac{i(\alpha-\beta)}{2\ell+r},
    \qquad 0\leq i\leq2\ell+r.
    \label{eq:rich-lind-correction-intermediate-weights}
\end{align}
Thus $\alpha_0=\alpha$, $\alpha_{2\ell+r}=\beta$, and $\alpha_{i-1}-\alpha_i=(\alpha-\beta)/(2\ell+r)$. The first correction map contains $2j_1+1$ generators and satisfies
\begin{align}
    \norm{\mathcal K_{2j_1}}_{2,\alpha\to 2,\alpha_{2j_1+1}}
    &\leq\mathrm c(\mathcal K_{2j_1})
      \left(\frac{2\ell+r}{\alpha-\beta}\right)^{2j_1+1}.
    \label{eq:rich-lind-correction-map-weight-loss}
\end{align}
To prove this bound, recall the scaled generators $\mathcal A_\nu$ from \cref{eq:rich-lind-stage-list} and the expansion \cref{eq:rich-lind-step-coefficients}:
\begin{align}
    S_h&=I+\sum_{m=1}^{\infty}h^mB_m,\qquad
    B_m=\sum_{\substack{m_1+\cdots+m_M=m\\m_1,\ldots,m_M\geq0}}
    \frac{\mathcal A_1^{m_1}\cdots\mathcal A_M^{m_M}}{m_1!\cdots m_M!}.
    \label{eq:rich-lind-recalled-step-coefficients}
\end{align}
Substituting these coefficients into \cref{eq:rich-lind-explicit-log-coefficients} and collecting identical ordered products gives
\begin{align}
    \mathcal K_{2j_1}
    &=\sum_{p=1}^{2j_1+1}\frac{(-1)^{p+1}}{p}
      \sum_{\substack{m_1+\cdots+m_p=2j_1+1\\m_1,\ldots,m_p\geq1}}
      B_{m_1}\cdots B_{m_p}\\
    &=\sum_q\lambda_q\,
      \mathcal A_{\nu_{q,2j_1+1}}\cdots\mathcal A_{\nu_{q,1}}.
    \label{eq:rich-lind-collected-stage-expansion}
\end{align}
Here $q$ indexes the distinct ordered color-generator products, $\nu_{q,m}\in\{1,\ldots,M\}$ selects the stage generator acting in position $m$, and $\lambda_q$ is the collected scalar coefficient multiplying the displayed product of scaled generators. Every product has $2j_1+1$ factors because $m_1+\cdots+m_p=2j_1+1$. The stage list \cref{eq:rich-lind-stage-list} assigns the same fraction to every occurrence of a given color generator. These fractions remain in $\mathcal A_\nu$. The definition in \cref{eq:rich-lind-coefficient-weight-definition} therefore gives
\begin{align}
    \mathrm c(\mathcal K_{2j_1})
    &=\sum_q|\lambda_q|
      \prod_{m=1}^{2j_1+1}\mathrm c(\mathcal A_{\nu_{q,m}}).
    \label{eq:rich-lind-collected-coefficient-weight}
\end{align}

For $1\leq m\leq2j_1+1$, the weights in \cref{eq:rich-lind-correction-intermediate-weights} satisfy $0\leq\alpha_{2j_1+1}\leq\alpha_m<\alpha_{m-1}\leq\alpha\leq\theta_w$. Applying \cref{eq:rich-lind-hamiltonian-weight-loss,eq:rich-lind-dissipative-weight-loss} to the scaled generators of \cref{eq:rich-lind-stage-list}, using norm homogeneity and the definition in \cref{eq:rich-lind-coefficient-weight-definition}, gives
\begin{align}
    \norm{\mathcal A_{\nu_{q,m}}}_{2,\alpha_{m-1}\to2,\alpha_m}
    &\leq\frac{\mathrm c(\mathcal A_{\nu_{q,m}})}{\alpha_{m-1}-\alpha_m}\\
    &=\frac{2\ell+r}{\alpha-\beta}\,
      \mathrm c(\mathcal A_{\nu_{q,m}})\\
    &=\frac{2j_1+1}{\alpha-\alpha_{2j_1+1}}\,
      \mathrm c(\mathcal A_{\nu_{q,m}}).
    \label{eq:rich-lind-individual-stage-weight-loss}
\end{align}
We now prove the product estimate by induction on the number $m$ of generator factors. For the base case $m=1$, \cref{eq:rich-lind-individual-stage-weight-loss} gives
\begin{align}
    \norm{\mathcal A_{\nu_{q,1}}}_{2,\alpha\to2,\alpha_1}
    &\leq\frac{2j_1+1}{\alpha-\alpha_{2j_1+1}}\,
      \mathrm c(\mathcal A_{\nu_{q,1}}).
\end{align}
For the induction step, let $2\leq m\leq2j_1+1$ and assume inductively that the estimate holds for products of $m-1$ generators:
\begin{align}
    \norm{\mathcal A_{\nu_{q,m-1}}\cdots\mathcal A_{\nu_{q,1}}}_{2,\alpha\to2,\alpha_{m-1}}
    &\leq\left(\frac{2j_1+1}{\alpha-\alpha_{2j_1+1}}\right)^{m-1}
      \prod_{s=1}^{m-1}\mathrm c(\mathcal A_{\nu_{q,s}}).
    \label{eq:rich-lind-stage-product-induction-hypothesis}
\end{align}
The composition bound \cref{eq:weighted-induced-submultiplicativity}, the single-generator estimate \cref{eq:rich-lind-individual-stage-weight-loss}, and the induction hypothesis \cref{eq:rich-lind-stage-product-induction-hypothesis} give
\begin{align}
    &\norm{\mathcal A_{\nu_{q,m}}\cdots\mathcal A_{\nu_{q,1}}}_{2,\alpha\to2,\alpha_m}\\
    &\quad\leq
      \norm{\mathcal A_{\nu_{q,m}}}_{2,\alpha_{m-1}\to2,\alpha_m}
      \norm{\mathcal A_{\nu_{q,m-1}}\cdots\mathcal A_{\nu_{q,1}}}_{2,\alpha\to2,\alpha_{m-1}}\\
    &\quad\leq
      \left[\frac{2j_1+1}{\alpha-\alpha_{2j_1+1}}\,
      \mathrm c(\mathcal A_{\nu_{q,m}})\right]
      \left[\left(\frac{2j_1+1}{\alpha-\alpha_{2j_1+1}}\right)^{m-1}
      \prod_{s=1}^{m-1}\mathrm c(\mathcal A_{\nu_{q,s}})\right]\\
    &\quad=\left(\frac{2j_1+1}{\alpha-\alpha_{2j_1+1}}\right)^m
      \prod_{s=1}^{m}\mathrm c(\mathcal A_{\nu_{q,s}}).
\end{align}
This completes the induction. Setting $m=2j_1+1$ gives
\begin{align}
    \norm{\mathcal A_{\nu_{q,2j_1+1}}\cdots\mathcal A_{\nu_{q,1}}}_{2,\alpha\to2,\alpha_{2j_1+1}}
    &\leq\left(\frac{2j_1+1}{\alpha-\alpha_{2j_1+1}}\right)^{2j_1+1}
      \prod_{m=1}^{2j_1+1}\mathrm c(\mathcal A_{\nu_{q,m}}).
    \label{eq:rich-lind-stage-product-weight-loss}
\end{align}

Substituting \cref{eq:rich-lind-collected-stage-expansion} into the induced-norm definition \cref{eq:weighted-two-induced-norm}, applying the triangle inequality, and then using \cref{eq:rich-lind-stage-product-weight-loss} gives
\begin{align}
    \norm{\mathcal K_{2j_1}}_{2,\alpha\to2,\alpha_{2j_1+1}}
    &=\sup_{O\ne0}
      \frac{\norm{\sum_q\lambda_q\mathcal A_{\nu_{q,2j_1+1}}\cdots\mathcal A_{\nu_{q,1}}(O)}_{2,\alpha_{2j_1+1}}}
      {\norm{O}_{2,\alpha}}\\
    &\leq\sup_{O\ne0}\sum_q|\lambda_q|
      \frac{\norm{\mathcal A_{\nu_{q,2j_1+1}}\cdots\mathcal A_{\nu_{q,1}}(O)}_{2,\alpha_{2j_1+1}}}
      {\norm{O}_{2,\alpha}}\\
    &\leq\sum_q|\lambda_q|\sup_{O\ne0}
      \frac{\norm{\mathcal A_{\nu_{q,2j_1+1}}\cdots\mathcal A_{\nu_{q,1}}(O)}_{2,\alpha_{2j_1+1}}}
      {\norm{O}_{2,\alpha}}\\
    &=\sum_q|\lambda_q|
      \norm{\mathcal A_{\nu_{q,2j_1+1}}\cdots\mathcal A_{\nu_{q,1}}}_{2,\alpha\to2,\alpha_{2j_1+1}}\\
    &\leq\sum_q|\lambda_q|
      \left(\frac{2j_1+1}{\alpha-\alpha_{2j_1+1}}\right)^{2j_1+1}
      \prod_{m=1}^{2j_1+1}\mathrm c(\mathcal A_{\nu_{q,m}})\\
    &=\left(\frac{2j_1+1}{\alpha-\alpha_{2j_1+1}}\right)^{2j_1+1}
      \sum_q|\lambda_q|\prod_{m=1}^{2j_1+1}\mathrm c(\mathcal A_{\nu_{q,m}})\\
    &=\left(\frac{2j_1+1}{\alpha-\alpha_{2j_1+1}}\right)^{2j_1+1}
      \mathrm c(\mathcal K_{2j_1})\\
    &=\left(\frac{2\ell+r}{\alpha-\beta}\right)^{2j_1+1}
      \mathrm c(\mathcal K_{2j_1}).
    \label{eq:rich-lind-correction-weight-factor}
\end{align}
The first inequality is the triangle inequality, and the second moves the supremum inside the finite sum as an upper bound. The last two equalities use \cref{eq:rich-lind-collected-coefficient-weight,eq:rich-lind-correction-intermediate-weights}, respectively. This proves \cref{eq:rich-lind-correction-map-weight-loss}. We can therefore remove the two rightmost factors of the integrand as follows:
\begin{align}
    &\norm{
        \e^{(t-s_r)\mathcal L}\mathcal K_{2j_r}
        \e^{(s_r-s_{r-1})\mathcal L}\cdots
        \mathcal K_{2j_1}\e^{s_1\mathcal L}
    }_{2,\alpha\to 2,\beta}\\
    &\quad\leq\norm{
        \e^{(t-s_r)\mathcal L}\mathcal K_{2j_r}
        \e^{(s_r-s_{r-1})\mathcal L}\cdots
        \mathcal K_{2j_1}
    }_{2,\alpha\to 2,\beta}\\
    &\quad\leq\norm{
        \e^{(t-s_r)\mathcal L}\mathcal K_{2j_r}\cdots
        \mathcal K_{2j_2}\e^{(s_2-s_1)\mathcal L}
    }_{2,\alpha_{2j_1+1}\to 2,\beta}
    \norm{\mathcal K_{2j_1}}_{2,\alpha\to 2,\alpha_{2j_1+1}}\\
    &\quad\leq\mathrm c(\mathcal K_{2j_1})
      \left(\frac{2\ell+r}{\alpha-\beta}\right)^{2j_1+1}
      \norm{
        \e^{(t-s_r)\mathcal L}\mathcal K_{2j_r}\cdots
        \mathcal K_{2j_2}\e^{(s_2-s_1)\mathcal L}
      }_{2,\alpha_{2j_1+1}\to 2,\beta}.
    \label{eq:rich-lind-first-correction-step}
\end{align}
The first inequality uses the contraction \cref{eq:supp-main-weight-regularity} at weight $\alpha$ and the composition bound \cref{eq:weighted-induced-submultiplicativity}. The second uses the same composition bound with intermediate weight $\alpha_{2j_1+1}$, and the third uses \cref{eq:rich-lind-correction-map-weight-loss}. For $r=1$, the remaining product is $\e^{(t-s_1)\mathcal L}$ and $\alpha_{2j_1+1}=\beta$.

Repeat this step for the remaining correction maps, assigning a weight decrease $(2j_i+1)(\alpha-\beta)/(2\ell+r)$ to $\mathcal K_{2j_i}$. Since $\sum_{i=1}^r(2j_i+1)=2\ell+r$, the final input weight is $\alpha_{2\ell+r}=\beta$. Each exact evolution contributes a norm at most one by \cref{eq:supp-main-weight-regularity}. The coefficient bound \cref{eq:rich-lind-log-coefficient-bound} gives
\begin{align}
    \prod_{i=1}^{r}\mathrm c(\mathcal K_{2j_i})
    \leq\prod_{i=1}^{r}
      \left[\frac{2\chi(2J\e^{k\theta_w}+\bar\gamma)}{\e}\right]^{2j_i+1}
    =\left[\frac{2\chi(2J\e^{k\theta_w}+\bar\gamma)}{\e}\right]^{2\ell+r}.
    \label{eq:rich-lind-correction-coefficient-product}
\end{align}
Combining these estimates yields
\begin{align}
    \norm{
        \e^{(t-s_r)\mathcal L}\mathcal K_{2j_r}
        \e^{(s_r-s_{r-1})\mathcal L}\cdots
        \mathcal K_{2j_1}\e^{s_1\mathcal L}
    }_{2,\alpha\to 2,\beta}
    &\leq\left(\frac{2\ell+r}{\alpha-\beta}\right)^{2\ell+r}
      \prod_{i=1}^{r}\mathrm c(\mathcal K_{2j_i})\\
    &\leq\left[\frac{2\chi(2J\e^{k\theta_w}+\bar\gamma)(2\ell+r)}{\e(\alpha-\beta)}\right]^{2\ell+r}.
    \label{eq:rich-lind-correction-integrand-bound}
\end{align}
This bound is independent of the integration times. The ordered integration region has volume
\begin{align}
    \int_{0\leq s_1\leq\cdots\leq s_r\leq t}
    \,\mathrm ds_1\cdots\mathrm ds_r
    =\int_0^t\frac{s_r^{r-1}}{(r-1)!}\,\mathrm ds_r
    =\frac{t^r}{r!}.
    \label{eq:rich-lind-ordered-time-volume}
\end{align}
For fixed $r$, choosing $r-1$ separators among $\ell-1$ gaps counts the positive lists $j_1+\cdots+j_r=\ell$. This is the counting step preceding \cref{eq:rich-lind-log-coefficient-count}, with total $\ell$ in place of $2j+1$, and it gives $\binom{\ell-1}{r-1}$ lists. Substituting the integrand bound and the integration volume into \cref{eq:rich-lind-correction-triangle} proves
\begin{align}
    \norm{E_\ell(t)}_{2,\alpha\to 2,\beta}
    &\leq t^{2\ell}\sum_{r=1}^{\ell}
    \sum_{\substack{j_1+\cdots+j_r=\ell\\j_1,\ldots,j_r\geq1}}
    \frac{t^r}{r!}
    \left[\frac{2\chi(2J\e^{k\theta_w}+\bar\gamma)(2\ell+r)}{\e(\alpha-\beta)}\right]^{2\ell+r}
    \\
    &=t^{2\ell}\sum_{r=1}^{\ell}
    \binom{\ell-1}{r-1}\frac{t^r}{r!}
    \left[\frac{2\chi(2J\e^{k\theta_w}+\bar\gamma)(2\ell+r)}{\e(\alpha-\beta)}\right]^{2\ell+r}.
    \label{eq:rich-lind-correction-bound-completed}
\end{align}
\end{proof}

The preceding lemma bounds the correction maps individually. We now bound the difference between one step $S_h$ and the exact evolution with the first $p$ correction maps retained.

\begin{lemma}[One-step remainder]
\label{lem:rich-lind-one-step-remainder}
For $0\leq\beta<\alpha\leq\theta_w$ and an integer $p\geq0$, define
\begin{align}
    \tau_p:=\frac{\alpha-\beta}
    {1024\chi(2J\e^{k\theta_w}+\bar\gamma)(p+1)}.
    \label{eq:rich-lind-one-step-radius}
\end{align}
For $0\leq h\leq\tau_p$, the one-step remainder $R_{p+1,1}(h)$ in \cref{eq:rich-lind-finite-expansion} satisfies
\begin{align}
    \norm{R_{p+1,1}(h)}_{2,\alpha\to2,\beta}
    \leq6\left(\frac{h}{\tau_p}\right)^{2p+3}.
    \label{eq:rich-lind-one-step-remainder-bound}
\end{align}
The constants are independent of the number of qubits.
\end{lemma}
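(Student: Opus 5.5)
With $N=1$, the remainder $R_{p+1,1}(h)$ in \cref{eq:rich-lind-finite-expansion} is $S_h-\e^{h\mathcal L}-\sum_{\ell=1}^{p}E_\ell(h)$. Since $\tau_p$ scales inversely with $p$, a power-series argument with geometric tails is the natural route. I will expand both $S_h$ and $\e^{h\mathcal L}+\sum_{\ell\le p}E_\ell(h)$ as power series in $h$ whose coefficients are ordered products of color generators. Each product of $m$ generators will then be bounded in $\norm{\cdot}_{2,\alpha\to2,\beta}$ by splitting the weight loss $\alpha-\beta$ into $m$ equal pieces, exactly as in the induction of \cref{lem:rich-lind-correction-bounds}. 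Using \cref{eq:rich-lind-hamiltonian-weight-loss,eq:rich-lind-dissipative-weight-loss}, a degree-$m$ term then has norm at most
\begin{align}
  \mathrm c(\cdot)\,\left(\frac{m}{\alpha-\beta}\right)^m .
\end{align}
Set $\Lambda:=\chi(2J\e^{k\theta_w}+\bar\gamma)/\e$. Combining this with \cref{eq:rich-lind-step-coefficient-bound} and $m^m/m!\le\e^m$, the $h^m$ coefficient $B_m$ of $S_h$ obeys $\norm{B_m}_{2,\alpha\to2,\beta}\le(\e\Lambda/(\alpha-\beta))^m$. The same bound applies to the $h^m$ coefficient $\mathcal L^m/m!$ of $\e^{h\mathcal L}$, because $\mathrm c(\mathcal L)\le\Lambda$.

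The key structural step is the cancellation of low orders. The identity $S_h=\exp(h\mathcal L+\sum_{j\ge1}h^{2j+1}\mathcal K_{2j})$ comes from \cref{eq:rich-lind-log-expansion}. Expanding it formally shows that $\e^{h\mathcal L}+\sum_{\ell\le p}E_\ell(h)$ reproduces the Taylor coefficients of $S_h$ through order $h^{2p+2}$. The reason is that $E_\ell(h)$ collects exactly the terms with total $\mathcal K$-index $\ell$, and the first omitted index $p+1$ contributes at order $h^{2(p+1)+1}=h^{2p+3}$. I would confirm this order count from \cref{eq:rich-lind-finite-iteration} at $t=h$. There the $E_\ell(h)$ are homogeneous of degree $2\ell+r$ in $h$ when the exponentials are kept unexpanded. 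To avoid relying on formal series, I would instead write
\begin{align}
  R_{p+1,1}(h)=\bigl(S_h-\textstyle\sum_{m\le 2p+2}h^mB_m\bigr)-\bigl(\text{tail of the comparison series beyond }h^{2p+2}\bigr),
\end{align}
which is legitimate once the coefficient agreement through order $2p+2$ is established.

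The tail of $S_h$ is then bounded by
\begin{align}
  \sum_{m\ge 2p+3}\bigl(\e\Lambda h/(\alpha-\beta)\bigr)^m ,
\end{align}
and since $h\le\tau_p$ makes the ratio at most $\e^2/(1024(p+1))\cdot h/\tau_p$, this is geometric with total at most $2(h/\tau_p)^{2p+3}$. For the comparison series I will use \cref{lem:rich-lind-correction-bounds} with $t=h$. There the factor $[2\Lambda(2\ell+r)/(\alpha-\beta)]^{2\ell+r}$ times $h^{2\ell+r}$ is at most $[\,2\e\Lambda(2\ell+r)h/(\alpha-\beta)\,]^{2\ell+r}$. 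With $2\ell+r\le3p$ and $h\le\tau_p$, each term is at most $(c\,h/\tau_p)^{2\ell+r}$ with $c\le1/10$, and the binomial factor $\binom{\ell-1}{r-1}\le 2^\ell$ is absorbed. The exponential pieces of $E_\ell(h)$ beyond order $2p+2$ are handled the same way. In those tails the degree $m$ grows while $p$ is fixed, so the relevant bound is $m^m/m!\le\e^m$, which keeps the geometric ratio at most $\e^2/1024$ per step. Summing the two contributions, each at most $2$–$3$ times $(h/\tau_p)^{2p+3}$, gives the constant $6$.

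The step I expect to be the main obstacle is the clean order-matching: proving rigorously that all coefficients of order $\le 2p+2$ cancel exactly. The difficulty is that the $E_\ell(h)$ involve unexpanded exponentials $\e^{s\mathcal L}$ inside time-ordered integrals. I would handle this by fully expanding every exponential and noting that $\e^{h\mathcal L}+\sum_{\ell\le p}E_\ell(h)$ equals the truncation of the convergent Dyson series for $\exp(h(\mathcal L+\mathcal V_h))=S_h$ to total $\mathcal K$-index $\le p$. Convergence of that series in the weighted norm, for $h\le\tau_p$, follows from the same $\mathrm c(\cdot)$ bounds, which makes the formal manipulation rigorous.
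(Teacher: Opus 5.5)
Your outline matches the paper's proof. Show that $S_h$ and $\e^{h\mathcal L}+\sum_{a\le p}E_a(h)$ share all Taylor coefficients through $h^{2p+2}$. Then bound each remaining coefficient in $\norm{\cdot}_{2,\alpha\to2,\beta}$ by splitting $\alpha-\beta$ into $m$ equal weight losses, extract $(h/\tau_p)^{2p+3}$, and sum series in $y=1/(1024(p+1))$.

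The gap is in the step you flagged as the main obstacle. You propose that the Dyson series for $\e^{h(\mathcal L+\mathcal V_h)}$ converges in the weighted norm for $h\le\tau_p$ by the same $\mathrm c(\cdot)$ bounds. That does not follow, because those bounds blow up. They give $\tau_p^{2j+1}\norm{\mathcal K_{2j}}_{2,\alpha\to2,\beta}\le[2(2j+1)/(1024\e(p+1))]^{2j+1}$. Likewise, \cref{lem:rich-lind-correction-bounds} bounds $E_\ell(\tau_p)$ by quantities growing like $[(2\ell+r)/(p+1)]^{2\ell+r}$. Both are unbounded as $j,\ell\to\infty$ at fixed $h$. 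This factorial loss is exactly why $\tau_p$ shrinks like $1/(p+1)$: at a fixed radius only finitely many orders are controlled.

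Weighted convergence of the full expansion is neither available nor needed. As in the paper, work at a fixed finite system. There $\log S_h$ and the iterated variation-of-constants identity converge in the ordinary norm for $|h|$ below some $n$-dependent radius. That suffices to identify Taylor coefficients: $[h^m]S_h=[h^m]\e^{h\mathcal L}+\sum_{a=1}^{\lfloor m/2\rfloor}[h^m]E_a(h)$. In addition, $[h^m]E_a(h)=0$ for $m\le2a$, by the prefactor $h^{2a+r}$ with $r\ge1$ in the rescaled form of $E_a$. These coefficients do not depend on $h$, so $[h^m]R_{p+1,1}(h)=0$ for $m\le 2p+2$ regardless of that radius. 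Moreover, $R_{p+1,1}(h)$ is a finite product of exponentials minus finitely many $E_a(h)$, $a\le p$, each entire in $h$. Its Taylor series therefore converges for every $h$. The weighted estimates need only be applied to coefficients of $S_h$, $\e^{h\mathcal L}$ and the retained $E_a$, so the $n$-dependent radius never enters the bound.

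Your estimate of the comparison series also needs repair. Applying \cref{lem:rich-lind-correction-bounds} at $t=h$ bounds all of $E_\ell(h)$. That quantity is of order $(h/\tau_p)^{2\ell+1}$, so it cannot produce $(h/\tau_p)^{2p+3}$ for $\ell\le p$. Its low-degree part cancels against $S_h$ anyway. What must be estimated is $\tau_p^m\norm{[h^m]E_a(h)}_{2,\alpha\to2,\beta}$ for $m\ge 2p+3$.

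To do this, expand every interior exponential. Each resulting product has exactly $m$ generators, with $\mathrm c(\cdot)\le 2^{2a+r}[\chi(2J\e^{k\theta_w}+\bar\gamma)/\e]^m$. The simplex integral supplies $1/(r!\,(m-2a-r)!)$, not $1/m!$. So after $m^m\le\e^m m!$, a polynomial factor $m!/(m-2a-r)!$ remains, and your ``geometric ratio per step'' heuristic misses this. Summing over $m$ yields, for each $r$, $\binom{a-1}{r-1}(2y)^{2a+r}(2a+r)!/\bigl(r!(1-y)^{2a+r+1}\bigr)$.

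It is this factor $(2a+r)!\le[3(p+1)]^{2a+r}$ that the $1/(p+1)$ in $\tau_p$ must absorb, via $6(p+1)y/(1-y)<1/16$. The retained corrections then contribute less than $1$ in total. Adding $1/(1-y)$ each from $S_h$ and $\e^{h\mathcal L}$ gives the constant $2/(1-y)+1\le 6$.
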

\begin{proof}
We first identify the Taylor coefficients that cancel in $R_{p+1,1}(h)$. Write $[h^m]F(h):=\frac{1}{m!}\left.\frac{\mathrm d^m}{\mathrm dh^m}F(h)\right|_{h=0}$ for the coefficient of $h^m$. Rescaling the integrals in \cref{eq:rich-lind-correction-maps} to the unit interval gives, for $h>0$,
\begin{align}
    E_a(h)
    &=\sum_{r=1}^{a}
      \sum_{\substack{j_1+\cdots+j_r=a\\j_1,\ldots,j_r\geq1}}
      h^{2a+r}\int_{0\leq s_1\leq\cdots\leq s_r\leq1}
      \e^{h(1-s_r)\mathcal L}\mathcal K_{2j_r}
      \e^{h(s_r-s_{r-1})\mathcal L}\cdots
      \mathcal K_{2j_1}\e^{hs_1\mathcal L}
      \,\mathrm ds_1\cdots\mathrm ds_r.
    \label{eq:rich-lind-rescaled-correction}
\end{align}
The $r$ integrations contribute $h^r$ to the prefactor $h^{2a+r}$. With $s_0=0$ and $s_{r+1}=1$, the exponential series for each interval is
\begin{align}
    \e^{h(s_{i+1}-s_i)\mathcal L}
    =\sum_{m_i=0}^{\infty}
      h^{m_i}\frac{(s_{i+1}-s_i)^{m_i}}{m_i!}\mathcal L^{m_i},
    \qquad 0\leq i\leq r.
    \label{eq:rich-lind-interval-exponential-series}
\end{align}
At a fixed finite system, $\sum_{i=0}^{r}(s_{i+1}-s_i)=1$ gives the following bound on the product of the absolute norm series:
\begin{align}
    \prod_{i=0}^{r}\sum_{m_i=0}^{\infty}
      \frac{|h|^{m_i}(s_{i+1}-s_i)^{m_i}}{m_i!}
      \norm{\mathcal L}_{2,0\to2,0}^{m_i}
    =\exp\!\left(|h|\norm{\mathcal L}_{2,0\to2,0}\right).
    \label{eq:rich-lind-correction-series-convergence}
\end{align}
The bound is uniform on the integration region. Multiplication by the finitely many fixed correction maps therefore permits termwise integration in \cref{eq:rich-lind-rescaled-correction}. In particular, $E_a(h)$ has a convergent Taylor series for every finite $h$, and the factor $h^{2a+r}$ with $r\geq1$ gives
\begin{align}
    [h^m]E_a(h)=0,\qquad 0\leq m\leq2a.
    \label{eq:rich-lind-correction-minimum-degree}
\end{align}
Collecting Taylor coefficients as in \cref{eq:rich-lind-insertion-powers,eq:rich-lind-collected-expansion} gives
\begin{align}
    [h^m]S_h
    &=[h^m]\e^{h\mathcal L}
      +\sum_{a=1}^{\lfloor m/2\rfloor}[h^m]E_a(h),\\
    [h^m]R_{p+1,1}(h)
    &=[h^m]S_h-[h^m]\e^{h\mathcal L}
      -\sum_{a=1}^{p}[h^m]E_a(h)
    \label{eq:rich-lind-one-step-remainder-coefficient}\\
    &=\sum_{a=p+1}^{\lfloor m/2\rfloor}[h^m]E_a(h).
    \label{eq:rich-lind-one-step-coefficient-cancellation}
\end{align}
For $m\leq2p+1$, the last sum is empty. For $m=2p+2$, its only term is $[h^{2p+2}]E_{p+1}(h)=0$ by \cref{eq:rich-lind-correction-minimum-degree}. Consequently,
\begin{align}
    [h^m]R_{p+1,1}(h)&=0,\qquad 0\leq m\leq2p+2,\\
    R_{p+1,1}(h)
    &=\sum_{m=2p+3}^{\infty}h^m[h^m]R_{p+1,1}(h).
    \label{eq:rich-lind-one-step-tail}
\end{align}
The second equality uses the convergent exponential series for the finite product $S_h$ and the convergence established in \cref{eq:rich-lind-correction-series-convergence}.

For $0<h\leq\tau_p$ and $m\geq2p+3$, we can extract the first remaining power as
\begin{align}
    h^m
    &=\left(\frac{h}{\tau_p}\right)^{2p+3}
      \tau_p^m\left(\frac{h}{\tau_p}\right)^{m-2p-3}
    \leq\left(\frac{h}{\tau_p}\right)^{2p+3}\tau_p^m.
    \label{eq:rich-lind-one-step-extracted-power}
\end{align}
Applying the triangle inequality to \cref{eq:rich-lind-one-step-tail}, then using \cref{eq:rich-lind-one-step-extracted-power,eq:rich-lind-one-step-remainder-coefficient}, gives
\begin{align}
    \norm{R_{p+1,1}(h)}_{2,\alpha\to2,\beta}
    &\leq\sum_{m=2p+3}^{\infty}h^m
      \norm{[h^m]R_{p+1,1}(h)}_{2,\alpha\to2,\beta}\\
    &\leq\left(\frac{h}{\tau_p}\right)^{2p+3}
      \sum_{m=2p+3}^{\infty}\tau_p^m
      \norm{[h^m]R_{p+1,1}(h)}_{2,\alpha\to2,\beta}\\
    &=\left(\frac{h}{\tau_p}\right)^{2p+3}
      \sum_{m=2p+3}^{\infty}\tau_p^m
      \norm{[h^m]S_h-[h^m]\e^{h\mathcal L}
      -\sum_{a=1}^{p}[h^m]E_a(h)}_{2,\alpha\to2,\beta}\\
    &\leq\left(\frac{h}{\tau_p}\right)^{2p+3}
      \Biggl[
      \sum_{m=0}^{\infty}\tau_p^m
      \norm{[h^m]S_h}_{2,\alpha\to2,\beta}
      +\sum_{m=0}^{\infty}\tau_p^m
      \norm{[h^m]\e^{h\mathcal L}}_{2,\alpha\to2,\beta}\\
    &\hspace{33mm}
      +\sum_{a=1}^{p}\sum_{m=0}^{\infty}\tau_p^m
      \norm{[h^m]E_a(h)}_{2,\alpha\to2,\beta}
      \Biggr].
    \label{eq:rich-lind-one-step-tail-bound}
\end{align}
The last inequality applies the triangle inequality to each coefficient and adds the nonnegative terms of degrees below $2p+3$. We now bound the three coefficient sums in the bracket.

For an expansion $F$ whose products contain $m$ color generators, the induction proving \cref{eq:rich-lind-stage-product-weight-loss} and the triangle-inequality calculation in \cref{eq:rich-lind-correction-weight-factor}, applied to $m$ factors between weights $\alpha$ and $\beta$, give
\begin{align}
    \norm{F}_{2,\alpha\to2,\beta}
    \leq\left(\frac{m}{\alpha-\beta}\right)^m\mathrm c(F),
    \qquad m\geq1.
    \label{eq:rich-lind-fixed-degree-norm-bound}
\end{align}
Using $B_1=\mathcal L$ and the bounds \cref{eq:rich-lind-step-coefficient-bound,eq:rich-lind-coefficient-size-rules} in Lemma~\ref{lem:rich-lind-coefficient-bounds}, we obtain
\begin{align}
    \mathrm c(\mathcal L^m)
    &\leq\mathrm c(\mathcal L)^m
    \leq\left[\frac{\chi(2J\e^{k\theta_w}+\bar\gamma)}{\e}\right]^m,
    \label{eq:rich-lind-generator-power-weight}\\
    \norm{[h^m]S_h}_{2,\alpha\to2,\beta}
    &=\norm{B_m}_{2,\alpha\to2,\beta}
    \leq\left(\frac{m}{\alpha-\beta}\right)^m\mathrm c(B_m)\\
    &\leq\frac{1}{m!}
      \left[\frac{m\chi(2J\e^{k\theta_w}+\bar\gamma)}{\e(\alpha-\beta)}\right]^m,
    \label{eq:rich-lind-step-taylor-coefficient-bound}\\
    \norm{[h^m]\e^{h\mathcal L}}_{2,\alpha\to2,\beta}
    &=\frac{\norm{\mathcal L^m}_{2,\alpha\to2,\beta}}{m!}
    \leq\frac{1}{m!}\left(\frac{m}{\alpha-\beta}\right)^m\mathrm c(\mathcal L^m)\\
    &\leq\frac{1}{m!}
      \left[\frac{m\chi(2J\e^{k\theta_w}+\bar\gamma)}{\e(\alpha-\beta)}\right]^m,
    \qquad m\geq1.
    \label{eq:rich-lind-exact-taylor-coefficient-bound}
\end{align}
For $m=0$, both coefficients equal $I$, and \cref{eq:theta-monotonicity,eq:weighted-two-induced-norm} give $\norm{I}_{2,\alpha\to2,\beta}\leq1$. The scalar estimates needed to sum the higher degrees are
\begin{align}
    \e^m=\sum_{j=0}^{\infty}\frac{m^j}{j!}
    &\geq\frac{m^m}{m!},
    \label{eq:rich-lind-factorial-power-bound}\\
    y:=\frac{\chi(2J\e^{k\theta_w}+\bar\gamma)\tau_p}{\alpha-\beta}
    &=\frac{1}{1024(p+1)}\leq\frac{1}{1024},
    \label{eq:rich-lind-one-step-series-parameter}
\end{align}
where the identity for $y$ uses \cref{eq:rich-lind-one-step-radius}. Substituting into \cref{eq:rich-lind-step-taylor-coefficient-bound,eq:rich-lind-exact-taylor-coefficient-bound} yields
\begin{align}
    \sum_{m=0}^{\infty}\tau_p^m
      \norm{[h^m]S_h}_{2,\alpha\to2,\beta}
    &\leq1+\sum_{m=1}^{\infty}\frac{m^m}{m!}
      \left(\frac{y}{\e}\right)^m
    \leq\sum_{m=0}^{\infty}y^m=\frac{1}{1-y},
    \label{eq:rich-lind-step-taylor-coefficient-sum}\\
    \sum_{m=0}^{\infty}\tau_p^m
      \norm{[h^m]\e^{h\mathcal L}}_{2,\alpha\to2,\beta}
    &\leq1+\sum_{m=1}^{\infty}\frac{m^m}{m!}
      \left(\frac{y}{\e}\right)^m
    \leq\frac{1}{1-y}.
    \label{eq:rich-lind-exact-taylor-coefficient-sum}
\end{align}

It remains to bound the retained-correction sum in \cref{eq:rich-lind-one-step-tail-bound}. Fix $1\leq a\leq p$. Multiplying the series \cref{eq:rich-lind-interval-exponential-series} in their original order inside \cref{eq:rich-lind-rescaled-correction} gives
\begin{align}
    [h^m]E_a(h)
    &=\sum_{r=1}^{a}
      \sum_{\substack{j_1+\cdots+j_r=a\\j_1,\ldots,j_r\geq1}}
      \int_{0\leq s_1\leq\cdots\leq s_r\leq1}
      \sum_{\substack{m_0+\cdots+m_r=m-2a-r\\m_0,\ldots,m_r\geq0}}
      \left(\prod_{i=0}^{r}\frac{(s_{i+1}-s_i)^{m_i}}{m_i!}\right)\\
    &\qquad\qquad\cdot
      \mathcal L^{m_r}\mathcal K_{2j_r}\mathcal L^{m_{r-1}}\cdots
      \mathcal K_{2j_1}\mathcal L^{m_0}
      \,\mathrm ds_1\cdots\mathrm ds_r.
    \label{eq:rich-lind-correction-taylor-coefficient}
\end{align}
The inner sum is empty when $m<2a+r$. Otherwise, every product in this coefficient has total generator degree
\begin{align}
    \sum_{i=0}^{r}m_i+\sum_{i=1}^{r}(2j_i+1)
    =(m-2a-r)+(2a+r)=m.
    \label{eq:rich-lind-correction-taylor-degree}
\end{align}
The coefficient inequalities \cref{eq:rich-lind-coefficient-size-rules,eq:rich-lind-log-coefficient-bound}, together with \cref{eq:rich-lind-generator-power-weight} at $m=1$, give
\begin{align}
    &\mathrm c\!\left(
      \mathcal L^{m_r}\mathcal K_{2j_r}\mathcal L^{m_{r-1}}\cdots
      \mathcal K_{2j_1}\mathcal L^{m_0}\right)\\
    &\quad\leq\mathrm c(\mathcal L)^{m-2a-r}
      \prod_{i=1}^{r}\mathrm c(\mathcal K_{2j_i})\\
    &\quad\leq
      \left[\frac{\chi(2J\e^{k\theta_w}+\bar\gamma)}{\e}\right]^{m-2a-r}
      \prod_{i=1}^{r}
      \left[\frac{2\chi(2J\e^{k\theta_w}+\bar\gamma)}{\e}\right]^{2j_i+1}\\
    &\quad=2^{2a+r}
      \left[\frac{\chi(2J\e^{k\theta_w}+\bar\gamma)}{\e}\right]^m.
    \label{eq:rich-lind-correction-taylor-product-weight}
\end{align}
Applying \cref{eq:rich-lind-fixed-degree-norm-bound} to the entire product therefore yields
\begin{align}
    &\norm{
      \mathcal L^{m_r}\mathcal K_{2j_r}\mathcal L^{m_{r-1}}\cdots
      \mathcal K_{2j_1}\mathcal L^{m_0}}_{2,\alpha\to2,\beta}\\
    &\quad\leq\left(\frac{m}{\alpha-\beta}\right)^m
      \mathrm c\!\left(
      \mathcal L^{m_r}\mathcal K_{2j_r}\mathcal L^{m_{r-1}}\cdots
      \mathcal K_{2j_1}\mathcal L^{m_0}\right)\\
    &\quad\leq2^{2a+r}
      \left[\frac{m\chi(2J\e^{k\theta_w}+\bar\gamma)}{\e(\alpha-\beta)}\right]^m.
    \label{eq:rich-lind-correction-taylor-product-norm}
\end{align}
The scalar factors in \cref{eq:rich-lind-correction-taylor-coefficient} can be summed before integration. For $m\geq2a+r$, the multinomial identity and $\sum_{i=0}^{r}(s_{i+1}-s_i)=1$ give
\begin{align}
    \sum_{\substack{m_0+\cdots+m_r=m-2a-r\\m_0,\ldots,m_r\geq0}}
      \prod_{i=0}^{r}\frac{(s_{i+1}-s_i)^{m_i}}{m_i!}
    &=\frac{\left[\sum_{i=0}^{r}(s_{i+1}-s_i)\right]^{m-2a-r}}
      {(m-2a-r)!}
    =\frac{1}{(m-2a-r)!},\\
    \int_{0\leq s_1\leq\cdots\leq s_r\leq1}
      \frac{\mathrm ds_1\cdots\mathrm ds_r}{(m-2a-r)!}
    &=\frac{1}{r!(m-2a-r)!}.
    \label{eq:rich-lind-correction-taylor-scalar-integral}
\end{align}
The integral uses the ordered-region volume \cref{eq:rich-lind-ordered-time-volume} at $t=1$. Applying the triangle inequality to \cref{eq:rich-lind-correction-taylor-coefficient}, followed by \cref{eq:rich-lind-correction-taylor-product-norm,eq:rich-lind-correction-taylor-scalar-integral}, gives
\begin{align}
    \norm{[h^m]E_a(h)}_{2,\alpha\to2,\beta}
    &\leq\sum_{\substack{1\leq r\leq a\\2a+r\leq m}}
      \sum_{\substack{j_1+\cdots+j_r=a\\j_1,\ldots,j_r\geq1}}
      \int_{0\leq s_1\leq\cdots\leq s_r\leq1}
      \sum_{\substack{m_0+\cdots+m_r=m-2a-r\\m_0,\ldots,m_r\geq0}}
      \left(\prod_{i=0}^{r}\frac{(s_{i+1}-s_i)^{m_i}}{m_i!}\right)\\
    &\qquad\qquad\cdot
      \norm{\mathcal L^{m_r}\mathcal K_{2j_r}\mathcal L^{m_{r-1}}\cdots
      \mathcal K_{2j_1}\mathcal L^{m_0}}_{2,\alpha\to2,\beta}
      \,\mathrm ds_1\cdots\mathrm ds_r\\
    &\leq\sum_{\substack{1\leq r\leq a\\2a+r\leq m}}
      \sum_{\substack{j_1+\cdots+j_r=a\\j_1,\ldots,j_r\geq1}}
      \frac{2^{2a+r}}{r!(m-2a-r)!}
      \left[\frac{m\chi(2J\e^{k\theta_w}+\bar\gamma)}{\e(\alpha-\beta)}\right]^m\\
    &=\sum_{\substack{1\leq r\leq a\\2a+r\leq m}}
      \binom{a-1}{r-1}\frac{2^{2a+r}}{r!(m-2a-r)!}
      \left[\frac{m\chi(2J\e^{k\theta_w}+\bar\gamma)}{\e(\alpha-\beta)}\right]^m.
    \label{eq:rich-lind-correction-taylor-coefficient-bound}
\end{align}
The equality uses the same count of positive lists as the step preceding \cref{eq:rich-lind-correction-bound-completed}, with total $a$ in place of $\ell$.

Multiplying by $\tau_p^m$, summing over $m$, and using \cref{eq:rich-lind-one-step-series-parameter} now gives
\begin{align}
    \sum_{m=0}^{\infty}\tau_p^m
      \norm{[h^m]E_a(h)}_{2,\alpha\to2,\beta}
    &\leq\sum_{r=1}^{a}\binom{a-1}{r-1}\frac{2^{2a+r}}{r!}
      \sum_{m=2a+r}^{\infty}
      \frac{m^m}{(m-2a-r)!}\left(\frac{y}{\e}\right)^m\\
    &\leq\sum_{r=1}^{a}\binom{a-1}{r-1}\frac{2^{2a+r}}{r!}
      \sum_{m=2a+r}^{\infty}\frac{m!}{(m-2a-r)!}y^m\\
    &=\sum_{r=1}^{a}\binom{a-1}{r-1}\frac{(2y)^{2a+r}}{r!}
      \sum_{m=2a+r}^{\infty}\frac{m!}{(m-2a-r)!}y^{m-2a-r}.
    \label{eq:rich-lind-correction-taylor-series-bound}
\end{align}
The second inequality uses $m^m\leq\e^m m!$ from \cref{eq:rich-lind-factorial-power-bound}, and the equality extracts $y^{2a+r}$ from the inner sum. All terms are nonnegative. For $0\leq y<1$, differentiating the convergent geometric series gives
\begin{align}
    \sum_{m=2a+r}^{\infty}\frac{m!}{(m-2a-r)!}y^{m-2a-r}
    &=\frac{\mathrm d^{2a+r}}{\mathrm dy^{2a+r}}
      \sum_{m=0}^{\infty}y^m\\
    &=\frac{\mathrm d^{2a+r}}{\mathrm dy^{2a+r}}\frac{1}{1-y}
    =\frac{(2a+r)!}{(1-y)^{2a+r+1}}.
    \label{eq:rich-lind-factorial-geometric-series}
\end{align}
Substitution into \cref{eq:rich-lind-correction-taylor-series-bound} yields
\begin{align}
    \sum_{m=0}^{\infty}\tau_p^m
      \norm{[h^m]E_a(h)}_{2,\alpha\to2,\beta}
    &\leq\sum_{r=1}^{a}\binom{a-1}{r-1}
      \frac{(2y)^{2a+r}}{r!}
      \frac{(2a+r)!}{(1-y)^{2a+r+1}}\\
    &=\frac{1}{1-y}\sum_{r=1}^{a}\binom{a-1}{r-1}
      \frac{(2a+r)!}{r!}
      \left(\frac{2y}{1-y}\right)^{2a+r}.
    \label{eq:rich-lind-correction-taylor-sum}
\end{align}
Since $1\leq r\leq a\leq p$, the factorial and the remaining scalar ratio satisfy
\begin{align}
    (2a+r)!
    &=\prod_{s=1}^{2a+r}s
    \leq(2a+r)^{2a+r}
    \leq[3(p+1)]^{2a+r},\\
    \frac{6(p+1)y}{1-y}
    &=\frac{6}{1024(1-y)}
    \leq\frac{6}{1023}<\frac{1}{16}.
    \label{eq:rich-lind-retained-correction-smallness}
\end{align}
Using these inequalities in \cref{eq:rich-lind-correction-taylor-sum}, followed by $r\geq1$ and $1/r!\leq1$, gives
\begin{align}
    \sum_{m=0}^{\infty}\tau_p^m
      \norm{[h^m]E_a(h)}_{2,\alpha\to2,\beta}
    &\leq\frac{1}{1-y}\sum_{r=1}^{a}\binom{a-1}{r-1}
      \frac{1}{r!}
      \left(\frac{6(p+1)y}{1-y}\right)^{2a+r}\\
    &\leq\frac{1}{1-y}\sum_{r=1}^{a}\binom{a-1}{r-1}
      \frac{1}{r!16^{2a+r}}\\
    &\leq\frac{1}{(1-y)16^{2a+1}}
      \sum_{r=1}^{a}\binom{a-1}{r-1}
    =\frac{2^{a-1}}{(1-y)16^{2a+1}}.
    \label{eq:rich-lind-single-correction-taylor-sum}
\end{align}
Summing over the retained orders therefore gives
\begin{align}
    \sum_{a=1}^{p}\sum_{m=0}^{\infty}\tau_p^m
      \norm{[h^m]E_a(h)}_{2,\alpha\to2,\beta}
    &\leq\frac{1}{1-y}\sum_{a=1}^{p}
      \frac{2^{a-1}}{16^{2a+1}}\\
    &=\frac{1}{(1-y)16^3}\sum_{a=1}^{p}
      \left(\frac{1}{128}\right)^{a-1}\\
    &\leq\frac{1024}{1023}\frac{1}{16^3}
      \sum_{j=0}^{\infty}\left(\frac{1}{128}\right)^j
    =\frac{1024}{1023}\frac{1}{16^3}\frac{128}{127}<1.
    \label{eq:rich-lind-retained-correction-taylor-sum}
\end{align}
For $p=0$, the correction sum is empty and equals zero. Substituting \cref{eq:rich-lind-step-taylor-coefficient-sum,eq:rich-lind-exact-taylor-coefficient-sum,eq:rich-lind-retained-correction-taylor-sum} into \cref{eq:rich-lind-one-step-tail-bound} concludes the estimate for $h>0$:
\begin{align}
    \norm{R_{p+1,1}(h)}_{2,\alpha\to2,\beta}
    &\leq\left(\frac{h}{\tau_p}\right)^{2p+3}
      \left(\frac{2}{1-y}+1\right)\\
    &\leq\left(\frac{h}{\tau_p}\right)^{2p+3}
      \left(\frac{2048}{1023}+1\right)
    \leq6\left(\frac{h}{\tau_p}\right)^{2p+3}.
    \label{eq:rich-lind-one-step-remainder-conclusion}
\end{align}
At $h=0$, \cref{eq:rich-lind-correction-minimum-degree} gives
\begin{align}
    R_{p+1,1}(0)=S_0-\e^{0\mathcal L}-\sum_{a=1}^{p}E_a(0)=I-I=0,
\end{align}
which proves the bound also at zero.
\end{proof}

\subsection{A bound on the remainder term}
\label{supp:richardson-remainder-bound}

We now combine the one-step remainder bound with the bounds on the correction maps to control the remainder after $N$ steps. A telescoping sum expresses this error in terms of one-step remainders acting on the retained correction maps.

\begin{theorem}[Finite circuit expansion and its error]
\label{thm:rich-lind-finite-circuit-error}
Let $t\geq0$, let $q\geq1$ be an integer, and use $\theta_w$ and $x$ from \cref{eq:quantum-optimal-weight-parameter,eq:rich-lind-full-circuit-scale}. For every integer $N\geq\max\{1,qx\}$, the remainder in \cref{eq:rich-lind-finite-expansion} satisfies
\begin{align}
    \norm{R_{q,N}(t)}_{2,\theta_w\to2,0}
    \leq6q\left(\frac{\sqrt{\e}\,qx\sqrt{1+x}}{N}\right)^{2q}.
    \label{eq:rich-lind-full-circuit-remainder-bound}
\end{align}
\end{theorem}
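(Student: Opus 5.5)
The plan is to split $R_{q,N}(t)$ into a part that involves the one-step remainder of \cref{lem:rich-lind-one-step-remainder}, and a purely combinatorial part coming from the retained corrections. Throughout I set $h=t/N$ and apply \cref{lem:rich-lind-one-step-remainder} with $p=q-1$. I assume that $x$ in \cref{eq:rich-lind-full-circuit-scale} is, up to the factor $q$, the ratio $t/\tau_{q-1}$ evaluated between the weights $\theta_w$ and $\theta_w/2$. Then $N\geq qx$ is exactly the condition $h\leq\tau_{q-1}$ needed to invoke that lemma.

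Write the truncated one-step map as $\Phi_h:=\e^{h\mathcal L}+\sum_{a=1}^{q-1}E_a(h)$, so that $S_h=\Phi_h+R_{q,1}(h)$. The first step is the telescoping identity
\begin{align}
S_h^N-\Phi_h^N=\sum_{j=0}^{N-1}\Phi_h^{\,N-1-j}\,R_{q,1}(h)\,S_h^{\,j}.
\end{align}
I will bound each factor in a different weighted norm.
\begin{itemize}
\item The rightmost factor $S_h^j$ is contractive at weight $\theta_w$ by \cref{cor:symmetric-colour-regularity}.
\item The middle factor $R_{q,1}(h)$ maps weight $\theta_w$ to weight $\theta_w/2$ with norm at most $6(h/\tau_{q-1})^{2q+1}$.
\item The left factor $\Phi_h^{\,N-1-j}$ must be bounded from weight $\theta_w/2$ to weight $0$.
\end{itemize}
For the left factor I use $\norm{\e^{h\mathcal L}}\leq1$ together with \cref{lem:rich-lind-correction-bounds} at time $h$. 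That lemma gives $\norm{E_a(h)}\lesssim(h/\tau)^{2a+1}$, so each $\Phi_h$ costs at most $1+O(h^3/\tau^3)$. The obstacle is that this is a loss of weight, not merely a growth factor. I cannot compose $N$ such maps at the fixed weight drop $\theta_w/2$. My first attempt is to expand $\Phi_h^{m}$ into words containing a few $E_a$ insertions between exact evolutions. Each word then has a bounded number of weight-losing factors, and I assign each one an equal share of $\theta_w/2$, as in \cref{lem:rich-lind-correction-bounds}. Summing over the number $s$ of insertions gives something like $\sum_s\binom{N}{s}(s\,h/\tau)^{3s}\leq\exp(O(N h^3/\tau^3)\cdot\mathrm{poly})$. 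I expect this to be the source of the $\sqrt{\e}\sqrt{1+x}$ factor. Summing the $N$ terms of the telescope then yields $N(h/\tau)^{2q+1}=(t/\tau)(h/\tau)^{2q}$, which is of the claimed shape $6q(qx/N)^{2q}$.

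The second step compares $\Phi_h^N$ with the target $\e^{t\mathcal L}+\sum_{\ell<q}N^{-2\ell}E_\ell(t)$. The key algebraic input is \cref{thm:rich-lind-finite-expansion}: the full expansion of $S_h^N$ is a Cauchy product of one-step expansions, and the maps $E_\ell(t)$ are exactly the terms of total $h$-degree $2\ell$ (in the sense of insertion indices) with $t$ held fixed. Hence every word in $\Phi_h^N$ whose insertion indices $j_i$ sum to at least $q$ is part of the remainder, and the words with total index below $q$ reproduce $\sum_{\ell<q}N^{-2\ell}E_\ell(t)$. I would organize this by writing $\Phi_h^N$ as ordered products of exact evolutions with insertions, which is the discrete analogue of \cref{eq:rich-lind-finite-iteration}. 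One point needs care: the Riemann-sum discrepancy between discrete insertion times and the continuous integrals, since it must be absorbed at the correct order. I would bound the words of total index at least $q$ with the same weight-staircase argument, using \cref{lem:rich-lind-coefficient-bounds} and counting compositions by $\binom{\ell-1}{r-1}$. This produces a geometric tail in $qx/N$ starting at power $2q$. The prefactor $q$ in the theorem counts the number of orders summed.

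I expect the main obstacle to be step one's control of $\Phi_h^{m}$, and likewise of the high-order words in step two, uniformly in $m\leq N$ and in the system size. The issue is that weight loss must be distributed across a number of insertions that can grow with $N$. Trading a factor $s$ in each denominator against the smallness $(h/\tau)^{3}$ per insertion is what forces $N\geq qx$ and produces the $\sqrt{1+x}$ factor.
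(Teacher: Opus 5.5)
There is a genuine gap, exactly where you anticipate trouble, and the weight-staircase bound you propose cannot close it.

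The map $\Phi_h=\e^{h\mathcal L}+\sum_{a=1}^{q-1}E_a(h)$ is not bounded on any fixed weighted Pauli-$2$ space uniformly in $n$, because each $E_a(h)$ is controlled only at the price of a weight drop. To bound $\Phi_h^{N-1-j}$ from $\theta_w/2$ to $0$, you must split the fixed budget $\theta_w/2$ among all generators of a word. The same applies, in your second step, to the words of total index $\geq q$ in $\Phi_h^N$. The number of insertions per word is unbounded as $N$ grows. A word with $s$ insertions of $E_1(h)$ has $3s$ generators. With the generator-count estimate of the correction-map lemma, AM--GM shows the equal split is optimal, and the bound is of order $(Chs/\theta_w)^{3s}$, where $Ct/\theta_w$ is a fixed multiple of $x$.

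The smallness $h^3$ of each insertion is exactly cancelled by the factor $s^3$ from the split. For the single word with an insertion in each of the $N-1$ factors you get roughly $(Ct/\theta_w)^{3(N-1)}$. This is exponentially large in $N$ once $x$ exceeds an absolute constant, and $N\geq qx$ does nothing against it. So the estimate $\sum_s\binom{N}{s}(sh/\tau)^{3s}\leq\exp\bigl(O(Nh^3/\tau^3)\cdot\mathrm{poly}\bigr)$ is false: the factor $s^{3s}$ cannot be absorbed into a prefactor independent of $s$. Since the theorem is claimed for every $t$, neither of your two steps closes.

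The algebra of your second step is fine. Because each $E_a(h)$ already contains the exact evolution between insertions, the words of total index $\ell<q$ reproduce $N^{-2\ell}E_\ell(t)$ exactly, with no Riemann-sum discrepancy. Only the analytic bound on the remaining words fails.

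The missing idea is never to raise a non-contractive map to a large power. Compare $S_h$ instead with the truncated \emph{continuous} expansion at the intermediate times, $F_q(s,h):=\e^{s\mathcal L}+\sum_{a=1}^{q-1}(h/s)^{2a}E_a(s)$, so that $R_{q,N}(t)=S_h^N-F_q(t,h)$. Telescope as $\sum_{j=0}^{N-1}S_h^{N-1-j}\bigl[S_hF_q(jh,h)-F_q((j+1)h,h)\bigr]$. Now the left factor $S_h^{N-1-j}$ is contractive at weight $0$ and consumes no weight.

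Read the semigroup identity $\e^{h(\log S_\xi)/\xi}\e^{s(\log S_\xi)/\xi}=\e^{(s+h)(\log S_\xi)/\xi}$ coefficientwise in $\xi$. This gives a convolution formula for $E_\ell(s+h)/(s+h)^{2\ell}$, which collapses the local defect to $R_{q,1}(h)\e^{s\mathcal L}+\sum_{b=1}^{q-1}(h/s)^{2b}R_{q-b,1}(h)E_b(s)$. Each term contains at most two weight-losing factors:
\begin{itemize}
\item $E_b(s)$ from $\theta_w$ to $\theta_w/2$, which has at most $b\leq q-1$ insertions and is bounded by the correction-map lemma;
\item a one-step remainder of order $h^{2q+1-2b}$ from $\theta_w/2$ to $0$.
\end{itemize}
So you need all the $R_{q-b,1}$, not only $R_{q,1}$, and this is where $N\geq qx$ enters.

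Since $(h/s)^{2b}s^{2b}=h^{2b}$, every term is $O(h^{2q+1})$ uniformly in $s\leq t$, and the $N$ terms sum to $t^{2q+1}/N^{2q}$. The factor $\sqrt{\e}\sqrt{1+x}$ then comes from $\sum_{r}\binom{q-1}{r}(qx)^r/r!\leq\e^q(1+x)^{q-1}$, i.e.\ from the bounded number of insertions inside $E_b(s)$, not from an exponential in $Nh^3$.
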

\begin{proof}
At $t=0$, we have $S_0^N=I$ and $E_\ell(0)=0$ by \cref{eq:rich-lind-correction-minimum-degree}, so $R_{q,N}(0)=0$. Suppose $t>0$ and set $h=t/N$. To compare the circuit with the truncated expansion at intermediate time $s$, we keep the step size $h$ fixed as $s$ varies. Each $E_a(s)$ contains the factor $s^{2a}$ by \cref{eq:rich-lind-correction-maps}, so replacing this factor by $h^{2a}$ gives
\begin{align}
    F_q(s,h):=\e^{s\mathcal L}
      +\sum_{a=1}^{q-1}\left(\frac{h}{s}\right)^{2a}E_a(s),
    \qquad s>0.
    \label{eq:rich-lind-truncated-evolution}
\end{align}
The minimum-degree property \cref{eq:rich-lind-correction-minimum-degree} gives $\lim_{s\to0}E_a(s)/s^{2a}=0$ and hence $F_q(0,h)=I$. We use these continuous extensions at $s=0$ throughout the proof. At the final time, $h/t=1/N$, and hence
\begin{align}
    R_{q,N}(t)
    =S_h^N-F_q(t,h).
    \label{eq:rich-lind-full-circuit-target}
\end{align}
The difference accumulated over the circuit is a sum of local differences. Indeed, shifting the index in the second sum gives
\begin{align}
    &\sum_{j=0}^{N-1}S_h^{N-1-j}
      \bigl[S_hF_q(jh,h)-F_q((j+1)h,h)\bigr]\\
    &\quad=\sum_{j=0}^{N-1}S_h^{N-j}F_q(jh,h)
      -\sum_{j=1}^{N}S_h^{N-j}F_q(jh,h)\\
    &\quad=S_h^NF_q(0,h)-F_q(Nh,h)
      =S_h^N-F_q(t,h).
    \label{eq:rich-lind-full-circuit-telescope}
\end{align}
The terms with $1\leq j\leq N-1$ cancel. To bound the remaining circuit factors, apply \cref{eq:weighted-local-H-growth} at weight zero and use the color factorization \cref{eq:exact-color-factorization}. For the dissipative factors, use the diagonal action \cref{eq:local-pauli-diagonal-noise} and the nonnegative rates in \cref{eq:local-noise-window}. For every $\delta\geq0$ these give
\begin{align}
    \norm{\e^{\delta\mathcal H^{(c)}}}_{2,0\to2,0}
    &\leq\prod_{a\in\mathcal C_c}
      \norm{\e^{\delta\mathcal H_a}}_{2,0\to2,0}\leq1,\\
    \norm{\e^{\delta\mathcal D^{(c)}}}_{2,0\to2,0}
    &=\max_{P\in\cP_n}
      \exp\!\left[-\delta\sum_{a\in\mathcal C_c}\lambda_a(P)\right]\leq1.
    \label{eq:rich-lind-color-unweighted-contraction}
\end{align}
The step in \cref{eq:rich-lind-color-step,eq:rich-lind-symmetric-step} is a composition of these contractions. Applying \cref{eq:weighted-induced-submultiplicativity} therefore yields
\begin{align}
    \norm{S_h^{N-1-j}}_{2,0\to2,0}
    \leq\norm{S_h}_{2,0\to2,0}^{N-1-j}
    \leq1,\qquad 0\leq j\leq N-1.
    \label{eq:rich-lind-step-unweighted-contraction}
\end{align}
Taking norms in \cref{eq:rich-lind-full-circuit-telescope}, using the triangle inequality and then \cref{eq:rich-lind-step-unweighted-contraction}, gives
\begin{align}
    \norm{R_{q,N}(t)}_{2,\theta_w\to2,0}
    &\leq\sum_{j=0}^{N-1}
      \norm{S_h^{N-1-j}}_{2,0\to2,0}
      \norm{S_hF_q(jh,h)-F_q((j+1)h,h)}_{2,\theta_w\to2,0}\\
    &\leq\sum_{j=0}^{N-1}
      \norm{S_hF_q(jh,h)-F_q((j+1)h,h)}_{2,\theta_w\to2,0}.
    \label{eq:rich-lind-full-circuit-local-reduction}
\end{align}

We next relate the retained corrections at $s+h$ to those at $s$ and $h$, so that subtracting $F_q(s+h,h)$ from $S_hF_q(s,h)$ leaves the one-step remainders. The coefficient calculation in \cref{eq:rich-lind-insertion-powers,eq:rich-lind-collected-expansion}, with the normalization in \cref{eq:rich-lind-correction-maps}, gives
\begin{align}
    [\xi^0]\e^{s(\log S_\xi)/\xi}
    &=\e^{s\mathcal L},\\
    [\xi^{2\ell}]\e^{s(\log S_\xi)/\xi}
    &=\frac{E_\ell(s)}{s^{2\ell}},\qquad \ell\geq1,\quad s>0.
    \label{eq:rich-lind-normalized-correction-coefficient}
\end{align}
Only even powers occur by \cref{eq:rich-lind-log-expansion}. At a fixed finite system and for $\xi$ near zero, the two exponentials below have the same generator, so
\begin{align}
    \e^{h(\log S_\xi)/\xi}\e^{s(\log S_\xi)/\xi}
    =\e^{(s+h)(\log S_\xi)/\xi}.
    \label{eq:rich-lind-coefficient-semigroup}
\end{align}
Taking the coefficient of $\xi^{2\ell}$ and separating the terms $a=0$ and $a=\ell$ yields, for $s>0$,
\begin{align}
    \frac{E_\ell(s+h)}{(s+h)^{2\ell}}
    &=\sum_{a=0}^{\ell}
      \left([\xi^{2a}]\e^{h(\log S_\xi)/\xi}\right)
      \left([\xi^{2(\ell-a)}]\e^{s(\log S_\xi)/\xi}\right)\\
    &=\e^{h\mathcal L}\frac{E_\ell(s)}{s^{2\ell}}
      +E_\ell(h)\frac{\e^{s\mathcal L}}{h^{2\ell}}
      +\sum_{a=1}^{\ell-1}
      \frac{E_a(h)E_{\ell-a}(s)}{h^{2a}s^{2(\ell-a)}}.
    \label{eq:rich-lind-correction-convolution}
\end{align}
Substitute this identity into \cref{eq:rich-lind-truncated-evolution}. Setting $b=\ell-a$ in the double sum and grouping terms with the same $E_b(s)$ gives
\begin{align}
    F_q(s+h,h)
    &=\e^{h\mathcal L}\e^{s\mathcal L}
      +\sum_{\ell=1}^{q-1}
      \left[\left(\frac hs\right)^{2\ell}\e^{h\mathcal L}E_\ell(s)
      +E_\ell(h)\e^{s\mathcal L}\right]\\
    &\quad+\sum_{\substack{a,b\geq1\\a+b<q}}
      \left(\frac hs\right)^{2b}E_a(h)E_b(s)\\
    &=\left[\e^{h\mathcal L}+\sum_{a=1}^{q-1}E_a(h)\right]\e^{s\mathcal L}
      +\sum_{b=1}^{q-1}\left(\frac hs\right)^{2b}
      \left[\e^{h\mathcal L}+\sum_{a=1}^{q-1-b}E_a(h)\right]E_b(s).
    \label{eq:rich-lind-truncated-convolution}
\end{align}
The coefficient identity \cref{eq:rich-lind-one-step-remainder-coefficient} and the convergent series \cref{eq:rich-lind-one-step-tail} in the proof of \cref{lem:rich-lind-one-step-remainder} identify the last bracket as $S_h-R_{q-b,1}(h)$ and the first bracket as $S_h-R_{q,1}(h)$. Subtracting from $S_hF_q(s,h)$ therefore gives
\begin{align}
    S_hF_q(s,h)-F_q(s+h,h)
    =R_{q,1}(h)\e^{s\mathcal L}
      +\sum_{b=1}^{q-1}\left(\frac hs\right)^{2b}
      R_{q-b,1}(h)E_b(s).
    \label{eq:rich-lind-local-remainder-identity}
\end{align}

In each product, the correction map acts first. We bound it from weight $\theta_w$ to $\theta_w/2$, then bound the one-step remainder from $\theta_w/2$ to zero. The triangle and composition inequalities give
\begin{align}
    &\norm{S_hF_q(s,h)-F_q(s+h,h)}_{2,\theta_w\to2,0}\\
    &\quad\leq\norm{R_{q,1}(h)}_{2,\theta_w/2\to2,0}
      \norm{\e^{s\mathcal L}}_{2,\theta_w\to2,\theta_w/2}\\
    &\qquad+\sum_{b=1}^{q-1}\left(\frac hs\right)^{2b}
      \norm{R_{q-b,1}(h)}_{2,\theta_w/2\to2,0}
      \norm{E_b(s)}_{2,\theta_w\to2,\theta_w/2}.
    \label{eq:rich-lind-local-remainder-norm-reduction}
\end{align}
For $0\leq b\leq q-1$, the radius in \cref{eq:rich-lind-one-step-radius}, with $p=q-1-b$, $\alpha=\theta_w/2$, and $\beta=0$, satisfies
\begin{align}
    \tau_{q-1-b}^{-1}
    &=\frac{2048\chi(2J\e^{k\theta_w}+\bar\gamma)(q-b)}{\theta_w}
      =\frac{(q-b)x}{t},\\
    \frac{h}{\tau_{q-1-b}}
    &=\frac{(q-b)x}{N}\leq\frac{qx}{N}\leq1.
    \label{eq:rich-lind-local-radius-condition}
\end{align}
Thus the one-step bound \cref{eq:rich-lind-one-step-remainder-bound} gives
\begin{align}
    \norm{R_{q-b,1}(h)}_{2,\theta_w/2\to2,0}
    \leq6\left(\frac{h}{\tau_{q-1-b}}\right)^{2q+1-2b}
    \leq6\left(\frac{qx}{N}\right)^{2q+1-2b}.
    \label{eq:rich-lind-local-remainder-used}
\end{align}
For $1\leq r\leq b\leq q-1$, we have $2b+r\leq3q$ and
\begin{align}
    \frac{4\chi(2J\e^{k\theta_w}+\bar\gamma)(2b+r)}{\e\theta_w}
    \leq\frac{12\chi(2J\e^{k\theta_w}+\bar\gamma)q}{\e\theta_w}
    \leq\frac{qx}{t}.
    \label{eq:rich-lind-correction-scale-comparison}
\end{align}
Applying \cref{eq:rich-lind-correction-bound} with $\alpha=\theta_w$, $\beta=\theta_w/2$, and time $s$ therefore yields
\begin{align}
    \norm{E_b(s)}_{2,\theta_w\to2,\theta_w/2}
    &\leq s^{2b}\sum_{r=1}^{b}\binom{b-1}{r-1}\frac{s^r}{r!}
      \left[\frac{4\chi(2J\e^{k\theta_w}+\bar\gamma)(2b+r)}{\e\theta_w}\right]^{2b+r}\\
    &\leq s^{2b}\sum_{r=1}^{b}\binom{b-1}{r-1}\frac{s^r}{r!}
      \left(\frac{qx}{t}\right)^{2b+r}.
    \label{eq:rich-lind-full-circuit-correction-used}
\end{align}
The prefactor $(h/s)^{2b}s^{2b}=h^{2b}$ combines with the power $h^{2q+1-2b}$ from the one-step remainder to give $h^{2b}h^{2q+1-2b}=h^{2q+1}$ for every $b$. Substituting \cref{eq:rich-lind-local-remainder-used,eq:rich-lind-full-circuit-correction-used} into each $b\geq1$ term of \cref{eq:rich-lind-local-remainder-norm-reduction}, and using $h=t/N$, gives
\begin{align}
    &\left(\frac hs\right)^{2b}
      \norm{R_{q-b,1}(h)}_{2,\theta_w/2\to2,0}
      \norm{E_b(s)}_{2,\theta_w\to2,\theta_w/2}\\
    &\quad\leq6h^{2b}\left(\frac{qx}{N}\right)^{2q+1-2b}
      \sum_{r=1}^{b}\binom{b-1}{r-1}\frac{s^r}{r!}
      \left(\frac{qx}{t}\right)^{2b+r}\\
    &\quad=6\left(\frac{qx}{N}\right)^{2q+1}
      \sum_{r=1}^{b}\binom{b-1}{r-1}\frac{(qx s/t)^r}{r!}.
    \label{eq:rich-lind-local-correction-product-bound}
\end{align}
For the term without a correction map, \cref{eq:supp-main-weight-regularity,eq:theta-monotonicity} imply
\begin{align}
    \norm{\e^{s\mathcal L}}_{2,\theta_w\to2,\theta_w/2}
    &\leq\norm{\e^{s\mathcal L}}_{2,\theta_w\to2,\theta_w}\leq1,\\
    \norm{R_{q,1}(h)}_{2,\theta_w/2\to2,0}
      \norm{\e^{s\mathcal L}}_{2,\theta_w\to2,\theta_w/2}
    &\leq6\left(\frac{qx}{N}\right)^{2q+1}.
    \label{eq:rich-lind-local-exact-product-bound}
\end{align}
Combining \cref{eq:rich-lind-local-correction-product-bound,eq:rich-lind-local-exact-product-bound} in \cref{eq:rich-lind-local-remainder-norm-reduction} and using $0\leq s\leq t$ now gives
\begin{align}
    &\norm{S_hF_q(s,h)-F_q(s+h,h)}_{2,\theta_w\to2,0}\\
    &\quad\leq6\left(\frac{qx}{N}\right)^{2q+1}
      \left[1+\sum_{b=1}^{q-1}\sum_{r=1}^{b}
      \binom{b-1}{r-1}\frac{(qx)^r}{r!}\right].
    \label{eq:rich-lind-uniform-local-remainder-bound}
\end{align}

It remains to sum the retained orders. Pascal's identity gives, for $1\leq r\leq q-1$,
\begin{align}
    \sum_{b=r}^{q-1}\binom{b-1}{r-1}
    =\sum_{b=r}^{q-1}\left[\binom br-\binom{b-1}r\right]
    =\binom{q-1}{r}.
    \label{eq:rich-lind-order-sum-identity}
\end{align}
Interchanging the finite sums and applying this identity yields
\begin{align}
    1+\sum_{b=1}^{q-1}\sum_{r=1}^{b}\binom{b-1}{r-1}\frac{(qx)^r}{r!}
    &=1+\sum_{r=1}^{q-1}\frac{(qx)^r}{r!}
      \sum_{b=r}^{q-1}\binom{b-1}{r-1}\\
    &=\sum_{r=0}^{q-1}\binom{q-1}{r}\frac{(qx)^r}{r!}.
    \label{eq:rich-lind-regrouped-order-sum}
\end{align}
Each term of the nonnegative exponential series obeys $q^r/r!\leq\sum_{m=0}^{\infty}q^m/m!=\e^q$. The binomial theorem then gives
\begin{align}
    \sum_{r=0}^{q-1}\binom{q-1}{r}\frac{(qx)^r}{r!}
    \leq\e^q\sum_{r=0}^{q-1}\binom{q-1}{r}x^r
    =\e^q(1+x)^{q-1}.
    \label{eq:rich-lind-full-circuit-order-sum-bound}
\end{align}
Finally, summing $N$ local bounds with the common power $h^{2q+1}$ gives $Nh^{2q+1}=t^{2q+1}/N^{2q}$. Substituting \cref{eq:rich-lind-uniform-local-remainder-bound,eq:rich-lind-regrouped-order-sum,eq:rich-lind-full-circuit-order-sum-bound} into \cref{eq:rich-lind-full-circuit-local-reduction} at $s=jh\leq t$ makes this dependence explicit:
\begin{align}
    \norm{R_{q,N}(t)}_{2,\theta_w\to2,0}
    &\leq6N\left(\frac{qx}{N}\right)^{2q+1}\e^q(1+x)^{q-1}\\
    &=\frac{6(qx)^{2q+1}}{N^{2q}}\e^q(1+x)^{q-1}\\
    &=6q\left(\frac{\sqrt{\e}\,qx\sqrt{1+x}}{N}\right)^{2q}
      \frac{x}{1+x}\\
    &\leq6q\left(\frac{\sqrt{\e}\,qx\sqrt{1+x}}{N}\right)^{2q},
\end{align}
which proves the claimed bound.
\end{proof}

\subsection{Richardson extrapolation proof}
\label{supp:richardson-extrapolation-proof}

We now choose integer step counts and coefficients that cancel the correction terms in \cref{eq:rich-lind-finite-expansion}. Their absolute sum controls the amplification of the remainders bounded in \cref{eq:rich-lind-full-circuit-remainder-bound}. The construction follows the interpolation points and rounding prescription of Low, Kliuchnikov, and Wiebe~\cite[Eqs.~(5), (8), and (10)]{low2019well}.

\begin{lemma}[Integer step counts and extrapolation coefficients]
\label{lem:rich-lind-extrapolation-counts}
Let $q\geq2$ be an integer and define
\begin{align}
    \varphi_j:=\frac{\pi(2j-1)}{8q},\qquad
    \widehat\kappa_j:=\frac{\sqrt8\,q}{\pi\sin\varphi_j},\qquad
    \kappa_j:=\left\lceil\widehat\kappa_j\right\rceil,
    \qquad 1\leq j\leq q.
    \label{eq:rich-lind-relative-step-counts}
\end{align}
For a positive integer $b$, choose the step counts and coefficients
\begin{align}
    N_j:=b\kappa_j,\qquad
    \omega_j:=\prod_{\substack{1\leq\ell\leq q\\\ell\neq j}}
      \frac{N_j^2}{N_j^2-N_\ell^2}.
    \label{eq:rich-lind-extrapolation-counts-weights}
\end{align}
The counts $N_j$ are distinct positive integers, the coefficients $\omega_j$ are real and independent of $b$, and
\begin{align}
    q&\leq\kappa_j\leq3q^2,\qquad 1\leq j\leq q,
    \label{eq:rich-lind-relative-step-range}\\
    \sum_{j=1}^{q}\omega_j&=1,\qquad
    \sum_{j=1}^{q}\omega_jN_j^{-2a}=0,\qquad 1\leq a<q,
    \label{eq:rich-lind-extrapolation-moments}\\
    \sum_{j=1}^{q}|\omega_j|&\leq48q^3.
    \label{eq:rich-lind-extrapolation-coefficient-bound}
\end{align}
\end{lemma}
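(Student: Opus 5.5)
The proof splits into four claims: the node range, distinctness, the moment identities, and the coefficient bound. The first three are elementary. The last is where the specific nodes of Low, Kliuchnikov and Wiebe~\cite{low2019well} are needed.

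\emph{Range.} For $1\leq j\leq q$ we have $\varphi_j\in[\pi/(8q),\pi/4)$, so $\sin\varphi_j<1/\sqrt2$. This gives $\widehat\kappa_j>\sqrt8\,q\sqrt2/\pi=4q/\pi>q$, and hence $\kappa_j\geq q$. For the upper bound I use $\sin x\geq x(1-x^2/6)$ at $x=\varphi_1=\pi/(8q)\leq\pi/16$. This gives $\sin\varphi_1\geq0.97\,\pi/(8q)$ and therefore $\widehat\kappa_j\leq\widehat\kappa_1\leq 8\sqrt8\,q^2/(0.97\pi^2)<2.4q^2$. Taking the ceiling, $\kappa_j\leq 2.4q^2+1\leq3q^2$.

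\emph{Distinctness.} The map $\varphi\mapsto1/\sin\varphi$ has derivative $-\cos\varphi/\sin^2\varphi$. On $(0,\pi/4]$ its absolute value is at least $\cos(\pi/4)/\sin^2(\pi/4)=\sqrt2$. Consecutive angles differ by $\pi/(4q)$. Hence $\widehat\kappa_j-\widehat\kappa_{j+1}\geq(\sqrt8\,q/\pi)\sqrt2\,\pi/(4q)=1$, with strict inequality since the derivative bound is attained only at the endpoint $\pi/4$, which lies outside every interval $(\varphi_j,\varphi_{j+1})$ for $j\leq q-1$. Real numbers more than $1$ apart have distinct ceilings. So the $\kappa_j$ are distinct and strictly decreasing, and the same holds for $N_j=b\kappa_j$.

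\emph{Moments.} Independence of $b$ is immediate, since $\omega_j=\prod_{\ell\neq j}\kappa_j^2/(\kappa_j^2-\kappa_\ell^2)$. Set $v_j:=N_j^{-2}$, which are distinct. Then
\begin{align}
\omega_j=\prod_{\ell\neq j}\frac{0-v_\ell}{v_j-v_\ell}
\end{align}
is the Lagrange basis polynomial at node $v_j$ evaluated at $v=0$. This is the interpolation argument already used in \cref{eq:rich-intro-integer-weights}. Interpolation at $q$ nodes reproduces every polynomial of degree less than $q$ exactly. Applying this to $v^a$ for $0\leq a<q$ and evaluating at $0$ gives $\sum_j\omega_j=1$ and $\sum_j\omega_jv_j^{a}=0$ for $1\leq a<q$, which is \cref{eq:rich-lind-extrapolation-moments}.

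\emph{Coefficient bound (main obstacle).} I would first treat the unrounded nodes $\widehat\kappa_j$. The points $\widehat\kappa_j^{-2}=\pi^2\sin^2\varphi_j/(8q^2)$ are an affine image of $\cos2\varphi_j$, which are Chebyshev-type points on an arc. For these, LKW's bound on the Lebesgue-type sum $\sum_j|\widehat\omega_j|$ applies~\cite[Eqs.~(8),(10)]{low2019well}. If needed, I would rederive it from the product formula using the identity
\begin{align}
\kappa_j^2-\kappa_\ell^2\propto\frac{\sin^2\varphi_\ell-\sin^2\varphi_j}{\sin^2\varphi_j\sin^2\varphi_\ell}=\frac{\sin(\varphi_\ell-\varphi_j)\sin(\varphi_\ell+\varphi_j)}{\sin^2\varphi_j\sin^2\varphi_\ell}.
\end{align}

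I would then control the effect of rounding factor by factor. Since $\kappa_j=\widehat\kappa_j(1+\epsilon_j)$ with $0\leq\epsilon_j\leq1/q$, each numerator $\kappa_j^2$ changes by a relative factor at most $(1+1/q)^2$. Over $q-1$ factors this contributes at most $\e^2$. For the denominators, $|\kappa_j-\kappa_\ell|\geq|\widehat\kappa_j-\widehat\kappa_\ell|-1$, where the distance is at least $|j-\ell|$. Because the true gaps grow like $|j-\ell|\,\widehat\kappa\,\cot\varphi/q$, each denominator shrinks by a factor $1-O(1/|j-\ell|)$. A naive product of these factors is only $\exp(O(\log q))$, i.e.\ polynomial in $q$. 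I expect this to be the real difficulty: showing that the combined constants give exactly $48q^3$ rather than a looser polynomial. To get it, I would follow LKW's explicit rounding analysis and absorb the polynomial slack into the stated $q^3$.
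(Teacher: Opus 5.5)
Your range, distinctness, and moment arguments are correct and essentially coincide with the paper's. The only difference is that you bound $\sin\varphi_1$ by $x-x^3/6$, where the paper uses the chord bound $\sin u\geq 2\sqrt2\,u/\pi$; both work. The genuine gap is \cref{eq:rich-lind-extrapolation-coefficient-bound}, which you defer to LKW without an argument.

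\textbf{Unrounded weights.} An appeal to a Chebyshev Lebesgue-constant bound does not apply as stated. You interpolate at only $q$ of the $2q$ roots $v_\ell=\sin^2\varphi_\ell$ of $F(v)=T_{2q}(1-2v)$, and you evaluate at the endpoint $v=0$. The missing idea is to pass to all $2q$ roots, where the weight at zero is explicit:
\[
\prod_{\ell\neq j,\ \ell\leq 2q}\frac{v_\ell}{|v_\ell-v_j|}=\frac{|F(0)|}{v_j|F'(v_j)|}=\frac{\cot\varphi_j}{2q}.
\]
Each added factor $v_\ell/(v_\ell-v_j)$ with $\ell>q$ exceeds one, because $v_\ell>1/2>v_j$. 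Dropping these factors gives $|\widehat\omega_j|\leq\cot\varphi_j/(2q)\leq 4/(\pi(2j-1))$, and hence $\sum_j|\widehat\omega_j|\leq 4q/\pi$.

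\textbf{Rounding.} Your denominator estimate fails as written. Near $\varphi=\pi/4$ the unrounded gaps are barely above one. So for adjacent indices, $|\kappa_j-\kappa_\ell|\geq|\widehat\kappa_j-\widehat\kappa_\ell|-1$ gives essentially nothing, and your factor $1-O(1/|j-\ell|)$ degenerates at $|j-\ell|=1$.

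The fix is to combine this with the integer separation $|\kappa_j-\kappa_\ell|\geq|j-\ell|$ from your distinctness step. Together they give $|\widehat\kappa_j-\widehat\kappa_\ell|/|\kappa_j-\kappa_\ell|\leq 1+1/|j-\ell|$. The paper obtains the same bound by integrating $\frac{\mathrm d}{\mathrm du}\log|\kappa_j(u)-\kappa_\ell(u)|$ along $\kappa_j(u)=\widehat\kappa_j+u\delta_j$.

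The harmonic sum $\sum_{\ell\neq j}1/|j-\ell|\leq 2+2\log q$ then gives a factor of exactly $\e^2q^2$. The remaining factors satisfy $(\widehat\kappa_j+\widehat\kappa_\ell)/(\kappa_j+\kappa_\ell)\leq 1$. So there is no unknown ``polynomial slack'' to absorb: the $q^3$ is one power from the unrounded sum and two from rounding.

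\textbf{Constant.} Your numerator factor $\e^2$, combined with $4q/\pi$, gives only $\tfrac4\pi\e^4q^3\approx 70q^3$. Sharpen it to $\e^{\pi/2}$ using $\widehat\kappa_j\geq 4q/\pi$, which you already proved; this gives $\tfrac4\pi\e^{\pi/2+2}q^3<48q^3$, as in the paper. Alternatively, keep the harmonic sum $\sum_j 1/(2j-1)$ in the unrounded estimate instead of bounding it by $q$.
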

\begin{proof}
For $0<u<\pi/4$, concavity of sine and its chord between $0$ and $\pi/4$ give
\begin{align}
    \frac{2\sqrt2}{\pi}u\leq\sin u\leq\frac1{\sqrt2}.
\end{align}
Since $0<\varphi_j<\pi/4$, substitution into \cref{eq:rich-lind-relative-step-counts} yields
\begin{align}
    \widehat\kappa_j
    &\geq\frac{4q}{\pi}>q,
    \label{eq:rich-lind-unrounded-step-lower-bound}\\
    \widehat\kappa_j
    &\leq\frac{\sqrt8\,q}{2\sqrt2\,\varphi_j}
      =\frac{8q^2}{\pi(2j-1)},\\
    q<\kappa_j
    &\leq\widehat\kappa_j+1
      \leq\frac8\pi q^2+1\leq3q^2.
    \label{eq:rich-lind-rounded-step-range}
\end{align}
The last inequality uses $q\geq2$. To check that rounding preserves distinctness, note that $\csc u\cot u=\cos u/\sin^2u>\sqrt2$ for $0<u<\pi/4$. Consecutive unrounded counts therefore satisfy
\begin{align}
    \widehat\kappa_j-\widehat\kappa_{j+1}
    &=\frac{\sqrt8\,q}{\pi}
      \int_{\varphi_j}^{\varphi_{j+1}}\csc u\cot u\,\mathrm du\\
    &>\frac{\sqrt8\,q}{\pi}\sqrt2\,
      \frac{\pi}{4q}=1,\qquad 1\leq j<q,
    \label{eq:rich-lind-unrounded-step-separation}\\
    \kappa_j-\kappa_{j+1}
    &=\left\lceil\widehat\kappa_j\right\rceil
      -\left\lceil\widehat\kappa_{j+1}\right\rceil\geq1.
    \label{eq:rich-lind-step-count-separation}
\end{align}
Multiplication by the positive integer $b$ preserves these strict inequalities.

The coefficients are the values at zero of the Lagrange polynomials for the distinct points $N_j^{-2}$:
\begin{align}
    L_j(v):=\prod_{\substack{1\leq\ell\leq q\\\ell\neq j}}
      \frac{v-N_\ell^{-2}}{N_j^{-2}-N_\ell^{-2}},\qquad
    L_j(N_j^{-2})=1,\qquad L_j(N_i^{-2})=0\quad(i\neq j),
    \label{eq:rich-lind-lagrange-polynomials}\\
    L_j(0)
    =\prod_{\ell\neq j}\frac{-N_\ell^{-2}}{N_j^{-2}-N_\ell^{-2}}
    =\prod_{\ell\neq j}\frac{N_j^2}{N_j^2-N_\ell^2}
    =\omega_j.
    \label{eq:rich-lind-lagrange-weights}
\end{align}
For every polynomial $F$ of degree below $q$, the difference $F(v)-\sum_{j=1}^{q}F(N_j^{-2})L_j(v)$ also has degree below $q$ and vanishes at all $q$ interpolation points. It is therefore zero, and evaluation at $v=0$ gives
\begin{align}
    F(v)&=\sum_{j=1}^{q}F(N_j^{-2})L_j(v),\\
    F(0)&=\sum_{j=1}^{q}\omega_jF(N_j^{-2}).
    \label{eq:rich-lind-polynomial-extrapolation}
\end{align}
Taking $F(v)=1$ and then $F(v)=v^a$ gives the cancellation conditions:
\begin{align}
    1=\sum_{j=1}^{q}\omega_j,\qquad
    0=\sum_{j=1}^{q}\omega_j(N_j^{-2})^a
      =\sum_{j=1}^{q}\omega_jN_j^{-2a},\qquad 1\leq a<q.
\end{align}
The common multiplier cancels in each factor of the coefficient formula:
\begin{align}
    \omega_j
    =\prod_{\ell\neq j}\frac{b^2\kappa_j^2}{b^2\kappa_j^2-b^2\kappa_\ell^2}
    =\prod_{\ell\neq j}\frac{\kappa_j^2}{\kappa_j^2-\kappa_\ell^2}.
    \label{eq:rich-lind-multiplier-cancellation}
\end{align}

To bound the absolute coefficients, we first use the unrounded counts and then control the change caused by rounding. Extend $\varphi_j=\pi(2j-1)/(8q)$ to $1\leq j\leq2q$ and set $v_j:=\sin^2\varphi_j$. For $1\leq j\leq q$, the coefficients before rounding are
\begin{align}
    \widehat\omega_j
    :=\prod_{\substack{1\leq\ell\leq q\\\ell\neq j}}
      \frac{\widehat\kappa_j^2}{\widehat\kappa_j^2-\widehat\kappa_\ell^2}
    =\prod_{\substack{1\leq\ell\leq q\\\ell\neq j}}
      \frac{v_\ell}{v_\ell-v_j}.
    \label{eq:rich-lind-unrounded-weights}
\end{align}
The second equality follows by substituting $\widehat\kappa_j^2=8q^2/(\pi^2v_j)$ and cancelling the common factor. The extended points are the roots of the degree-$2q$ polynomial $F(v):=T_{2q}(1-2v)$, where $T_m(\cos u)=\cos(mu)$ defines the Chebyshev polynomial. Indeed,
\begin{align}
    F(v_j)
    &=T_{2q}(\cos(2\varphi_j))
      =\cos(4q\varphi_j)
      =\cos\!\left(\frac{\pi(2j-1)}2\right)=0,\\
    F(0)&=T_{2q}(1)=1.
\end{align}
Differentiating the defining identity for $T_{2q}$ gives
\begin{align}
    -\sin u\,T_{2q}'(\cos u)&=-2q\sin(2qu),\\
    |F'(v_j)|
    &=2|T_{2q}'(\cos(2\varphi_j))|
      =\frac{4q|\sin(4q\varphi_j)|}{\sin(2\varphi_j)}
      =\frac{4q}{\sin(2\varphi_j)}.
    \label{eq:rich-lind-chebyshev-root-derivative}
\end{align}
For all $2q$ roots, the $j$th Lagrange polynomial is $F(v)/[(v-v_j)F'(v_j)]$. Its value at zero therefore satisfies
\begin{align}
    \prod_{\substack{1\leq\ell\leq2q\\\ell\neq j}}
      \frac{v_\ell}{|v_\ell-v_j|}
    =\frac{|F(0)|}{v_j|F'(v_j)|}
    =\frac{\sin(2\varphi_j)}{4q\sin^2\varphi_j}
    =\frac{\cot\varphi_j}{2q}.
    \label{eq:rich-lind-full-chebyshev-weight}
\end{align}
For $j\leq q<\ell$, we have $v_\ell>v_j$, so each additional factor $v_\ell/(v_\ell-v_j)$ is greater than one. Combining \cref{eq:rich-lind-unrounded-weights,eq:rich-lind-full-chebyshev-weight} gives
\begin{align}
    |\widehat\omega_j|
    \leq|\widehat\omega_j|\prod_{\ell=q+1}^{2q}\frac{v_\ell}{v_\ell-v_j}
    =\frac{\cot\varphi_j}{2q}.
\end{align}
Since $\tan u=\int_0^u\sec^2v\,\mathrm dv\geq u$ for $0<u<\pi/2$, we have $\cot u\leq1/u$. Summing the preceding bound yields
\begin{align}
    \sum_{j=1}^{q}|\widehat\omega_j|
    \leq\frac1{2q}\sum_{j=1}^{q}\cot\varphi_j
    \leq\frac4\pi\sum_{j=1}^{q}\frac1{2j-1}
    \leq\frac4\pi q.
    \label{eq:rich-lind-unrounded-coefficient-bound}
\end{align}

Write $\delta_j:=\kappa_j-\widehat\kappa_j$, so $0\leq\delta_j<1$. Factoring the squared differences in \cref{eq:rich-lind-multiplier-cancellation,eq:rich-lind-unrounded-weights} gives
\begin{align}
    \log\frac{|\omega_j|}{|\widehat\omega_j|}
    &=2(q-1)\log\frac{\kappa_j}{\widehat\kappa_j}
      +\sum_{\ell\neq j}\log
      \frac{|\widehat\kappa_j-\widehat\kappa_\ell|}{|\kappa_j-\kappa_\ell|}\\
    &\quad+\sum_{\ell\neq j}\log
      \frac{\widehat\kappa_j+\widehat\kappa_\ell}{\kappa_j+\kappa_\ell}.
    \label{eq:rich-lind-rounding-log-ratio}
\end{align}
The numerator contribution is bounded using $\log(1+u)\leq u$ and \cref{eq:rich-lind-unrounded-step-lower-bound}:
\begin{align}
    2(q-1)\log\frac{\kappa_j}{\widehat\kappa_j}
    &=2(q-1)\log\!\left(1+\frac{\delta_j}{\widehat\kappa_j}\right)
      \leq\frac{2(q-1)\delta_j}{\widehat\kappa_j}
      \leq\frac{2q}{\widehat\kappa_j}\leq\frac\pi2,
    \label{eq:rich-lind-rounding-numerator-bound}\\
    \sum_{\ell\neq j}\log
      \frac{\widehat\kappa_j+\widehat\kappa_\ell}{\kappa_j+\kappa_\ell}
    &\leq0.
    \label{eq:rich-lind-rounding-simple-terms}
\end{align}
To control the differences in the remaining sum, interpolate between the unrounded and rounded counts:
\begin{align}
    \kappa_j(u):=\widehat\kappa_j+u\delta_j,\qquad 0\leq u\leq1.
\end{align}
Both endpoint sequences have consecutive gaps at least one by \cref{eq:rich-lind-unrounded-step-separation,eq:rich-lind-step-count-separation}. Thus
\begin{align}
    \kappa_j(u)-\kappa_{j+1}(u)
    &=(1-u)(\widehat\kappa_j-\widehat\kappa_{j+1})
      +u(\kappa_j-\kappa_{j+1})\geq1,\\
    |\kappa_j(u)-\kappa_\ell(u)|
    &=\sum_{r=\min\{j,\ell\}}^{\max\{j,\ell\}-1}
      [\kappa_r(u)-\kappa_{r+1}(u)]\geq|j-\ell|.
    \label{eq:rich-lind-interpolated-count-separation}
\end{align}
For $j\neq\ell$, differentiation and integration then give
\begin{align}
    \left|\frac{\mathrm d}{\mathrm du}
      \log|\kappa_j(u)-\kappa_\ell(u)|\right|
    &=\frac{|\delta_j-\delta_\ell|}{|\kappa_j(u)-\kappa_\ell(u)|}
      \leq\frac1{|j-\ell|},\\
    \log\frac{|\widehat\kappa_j-\widehat\kappa_\ell|}{|\kappa_j-\kappa_\ell|}
    &=-\int_0^1\frac{\mathrm d}{\mathrm du}
      \log|\kappa_j(u)-\kappa_\ell(u)|\,\mathrm du\\
    &\leq\int_0^1\frac{\mathrm du}{|j-\ell|}
      =\frac1{|j-\ell|}.
    \label{eq:rich-lind-rounding-difference-bound}
\end{align}
For $m\geq1$, the harmonic sum satisfies
\begin{align}
    \sum_{r=1}^{m}\frac1r
    =1+\sum_{r=2}^{m}\frac1r
    \leq1+\sum_{r=2}^{m}\int_{r-1}^{r}\frac{\mathrm du}{u}
    =1+\log m.
\end{align}
Empty sums are zero, so summing \cref{eq:rich-lind-rounding-difference-bound} over $\ell\neq j$ gives
\begin{align}
    \sum_{\ell\neq j}\log
      \frac{|\widehat\kappa_j-\widehat\kappa_\ell|}{|\kappa_j-\kappa_\ell|}
    &\leq\sum_{\ell\neq j}\frac1{|j-\ell|}\\
    &=\sum_{r=1}^{j-1}\frac1r+\sum_{r=1}^{q-j}\frac1r
      \leq2+2\log q.
    \label{eq:rich-lind-rounding-harmonic-bound}
\end{align}
Substituting \cref{eq:rich-lind-rounding-numerator-bound,eq:rich-lind-rounding-simple-terms,eq:rich-lind-rounding-harmonic-bound} into \cref{eq:rich-lind-rounding-log-ratio} and exponentiating yields
\begin{align}
    \log\frac{|\omega_j|}{|\widehat\omega_j|}
    &\leq\frac\pi2+2+2\log q,\\
    |\omega_j|&\leq\e^{\pi/2+2}q^2|\widehat\omega_j|,\\
    \sum_{j=1}^{q}|\omega_j|
    &\leq\e^{\pi/2+2}q^2\sum_{j=1}^{q}|\widehat\omega_j|
      \leq\frac4\pi\e^{\pi/2+2}q^3<48q^3.
\end{align}
The last line uses \cref{eq:rich-lind-unrounded-coefficient-bound} and proves the coefficient bound.
\end{proof}

The common multiplier $b$ increases all step counts while leaving the extrapolation coefficients unchanged. 

\subsubsection{Proof of \cref{thm:rich-lind-maximum-depth}}
With the choices in \cref{eq:rich-lind-extrapolation-choices}, we now combine the cancellation identities and remainder bound to prove \cref{thm:rich-lind-maximum-depth}. Define
\begin{align}
    x:=\frac{2048\chi(2J\e^{k\theta_w}+\bar\gamma)t}{\theta_w}, \quad 
    b:=\left\lceil\max\{1,4\sqrt{\e}\,x\sqrt{1+x}\}\right\rceil\,,
    \label{eq:rich-lind-full-circuit-scale}
\end{align}
and denote
\begin{align}
     O^{[q]}&=\sum_{j=1}^q\omega_j S_{t/N_j}^{N_j}(O)\,.
 \label{eq:output}
\end{align}
Then

\begin{proof}
The choice of $q$ gives $q\geq\log4096/\log4=6$, so \cref{lem:rich-lind-extrapolation-counts} applies. The lower bound on $\kappa_j$ in \cref{eq:rich-lind-relative-step-range} and the choice of $b$ give
\begin{align}
    N_j=b\kappa_j\geq bq
    \geq4\sqrt{\e}\,qx\sqrt{1+x}\geq qx,\qquad N_j\geq1.
    \label{eq:rich-lind-extrapolation-remainder-condition}
\end{align}
Thus the finite expansion \cref{eq:rich-lind-finite-expansion} has the remainder bound \cref{eq:rich-lind-full-circuit-remainder-bound} for every circuit. Substitution into the definition \cref{eq:output}, followed by the moment identities \cref{eq:rich-lind-extrapolation-moments}, gives
\begin{align}
    O^{[q]}-\e^{t\mathcal L}(O)
    &=\left(\sum_{j=1}^{q}\omega_j-1\right)\e^{t\mathcal L}(O)
      +\sum_{\ell=1}^{q-1}
      \left(\sum_{j=1}^{q}\omega_jN_j^{-2\ell}\right)E_\ell(t)(O)\\
    &\quad+\sum_{j=1}^{q}\omega_jR_{q,N_j}(t)(O)\\
    &=\sum_{j=1}^{q}\omega_jR_{q,N_j}(t)(O).
    \label{eq:rich-lind-extrapolation-cancellation}
\end{align}
Both scalar coefficients in parentheses vanish. To bound the remaining terms, first substitute \cref{eq:rich-lind-extrapolation-choices} and $N_j\geq bq$ into the ratio in the remainder estimate:
\begin{align}
    \frac{\sqrt{\e}\,qx\sqrt{1+x}}{N_j}
    &\leq\frac{\sqrt{\e}\,x\sqrt{1+x}}{b}\leq\frac14,\\
    \norm{R_{q,N_j}(t)}_{2,\theta_w\to2,0}
    &\leq6q\left(\frac14\right)^{2q}=6q\,16^{-q}.
    \label{eq:rich-lind-extrapolated-circuit-remainder}
\end{align}
For the input observable, \cref{eq:low-weight-weighted-norm,eq:quantum-optimal-weight-parameter} imply
\begin{align}
    w\theta_w&\leq\frac{w}{2k+w}\leq1,\\
    \norm{O}_{2,\theta_w}
    &\leq\e^{w\theta_w}\norm{O}_2\leq\e\norm{O}_2.
    \label{eq:rich-lind-extrapolation-input-weight}
\end{align}
Taking norms in \cref{eq:rich-lind-extrapolation-cancellation}, applying the triangle inequality and the induced-norm definition \cref{eq:weighted-two-induced-norm}, and then using \cref{eq:rich-lind-extrapolated-circuit-remainder,eq:rich-lind-extrapolation-input-weight,eq:rich-lind-extrapolation-coefficient-bound}, gives
\begin{align}
    \norm{O^{[q]}-\e^{t\mathcal L}(O)}_2
    &\leq\sum_{j=1}^{q}|\omega_j|\,
      \norm{R_{q,N_j}(t)(O)}_2\\
    &\leq\sum_{j=1}^{q}|\omega_j|\,
      \norm{R_{q,N_j}(t)}_{2,\theta_w\to2,0}\norm{O}_{2,\theta_w}\\
    &\leq6\e q\,16^{-q}\left(\sum_{j=1}^{q}|\omega_j|\right)\norm{O}_2\\
    &\leq288\e q^4\,16^{-q}\norm{O}_2.
    \label{eq:rich-lind-extrapolation-bias}
\end{align}
The exponential decay absorbs $q^4$. To see this explicitly, the first two values and the ratio of consecutive values satisfy
\begin{align}
    \frac{2^4}{4^2}=1,\qquad \frac{3^4}{4^3}=\frac{81}{64}<2,\\
    \frac{(q+1)^4/4^{q+1}}{q^4/4^q}
    =\frac{(1+1/q)^4}{4}
    \leq\frac{(4/3)^4}{4}=\frac{64}{81}<1,
    \qquad q\geq3.
\end{align}
Hence $q^4\leq2\cdot4^q$ for every integer $q\geq2$. Substitution into \cref{eq:rich-lind-extrapolation-bias}, followed by the choice of $q$, gives
\begin{align}
    \norm{O^{[q]}-\e^{t\mathcal L}(O)}_2
    &\leq576\e\,4^{-q}\norm{O}_2\\
    &\leq4096\,4^{-q}\norm{O}_2
      \leq4096\,4^{-\log(4096/\varepsilon)/\log4}\norm{O}_2
      =\varepsilon\norm{O}_2.
\end{align}

It remains to bound the depth of the longest circuit. The upper bound on $\kappa_j$ in \cref{eq:rich-lind-relative-step-range} gives
\begin{align}
    \max_{1\leq j\leq q}N_j\leq3bq^2.
    \label{eq:rich-lind-maximum-step-count}
\end{align}
For $x\geq0$, the inequalities $\sqrt{1+x}\leq1+\sqrt x$ and $x\leq1+x^{3/2}$ bound the multiplier as
\begin{align}
    b
    &\leq2+4\sqrt{\e}\,x\sqrt{1+x}\\
    &\leq2+4\sqrt{\e}\,(x+x^{3/2})
      \leq(2+8\sqrt{\e})(1+x^{3/2}).
    \label{eq:rich-lind-extrapolation-multiplier-bound}
\end{align}
The number of circuits obeys
\begin{align}
    q
    &\leq1+\frac{\log4096+\log(1/\varepsilon)}{\log4}
      =7+\frac{\log(1/\varepsilon)}{\log4}
      \leq7\log\!\left(\frac{\e}{\varepsilon}\right).
    \label{eq:rich-lind-extrapolation-order-bound}
\end{align}
Within one color, the local maps act on disjoint supports and can be performed in parallel, as in \cref{supp:quantum-resources}. Each $Q_{c,h}$ in \cref{eq:rich-lind-color-step} consists of three such layers, and $S_h$ in \cref{eq:rich-lind-symmetric-step} contains $2\chi-1$ color blocks. Using \cref{eq:number-colors} and \cref{eq:rich-lind-maximum-step-count,eq:rich-lind-extrapolation-multiplier-bound,eq:rich-lind-extrapolation-order-bound}, we obtain
\begin{align}
    D_{\max}
    &\leq3C_k(2\chi-1)\max_jN_j
      \leq18C_k k d\,bq^2\\
    &\leq18C_k k d\,(2+8\sqrt{\e})(1+x^{3/2})
      \cdot49\log^2\!\left(\frac{\e}{\varepsilon}\right)\\
    &\leq C_k d(1+x^{3/2})\log^2\!\left(\frac{\e}{\varepsilon}\right).
    \label{eq:rich-lind-scaled-maximum-depth}
\end{align}
The last line absorbs factors depending only on $k$ into $C_k$. To express $x$ in the physical parameters, use \cref{eq:quantum-optimal-weight-parameter} and \cref{eq:number-colors}:
\begin{align}
    k\theta_w\leq\frac{k}{2k+w}\leq\frac12,
    \qquad \e^{k\theta_w}\leq\sqrt{\e},
    \qquad \chi\leq k(d-1)+1\leq kd,\\
    x
    =\frac{2048\chi(2J\e^{k\theta_w}+\bar\gamma)t}{\theta_w}
    \leq\frac{4096\sqrt{\e}\,kd(J+\bar\gamma)t}{\theta_w}.
    \label{eq:rich-lind-physical-scale-bound}
\end{align}
Substitution into \cref{eq:rich-lind-scaled-maximum-depth} yields
\begin{align}
    D_{\max}
    &\leq C_k d\left[1+
      \left(\frac{d(J+\bar\gamma)t}{\theta_w}\right)^{3/2}\right]
      \log^2\!\left(\frac{\e}{\varepsilon}\right)\\
    &=C_k\left[d+
      \frac{d^{5/2}(J+\bar\gamma)^{3/2}t^{3/2}}{\theta_w^{3/2}}\right]
      \log^2\!\left(\frac{\e}{\varepsilon}\right),
\end{align}
where factors depending only on $k$ have again been absorbed into $C_k$.
\end{proof}

\color{black}

\suppsection{Classical simulation using Pauli propagation}
\label{supp:classical}

We now describe a classical algorithm that propagates a sparse Pauli representation through the split color product formula and applies coefficient and weight truncation after every color layer. Coefficient truncation controls the number of stored Pauli coefficients, while weight truncation controls the cost of updating each retained Pauli operator. Recall that
\begin{align}
    \mathcal T_{\delta}^{(c)}
    =
    \e^{\delta\mathcal D^{(c)}}\e^{\delta\mathcal H^{(c)}},
    \qquad
    \Phi_\delta^{\mathrm{Tr}}
    =
    \mathcal T_{\delta}^{(\chi)}\cdots\mathcal T_{\delta}^{(1)}.
\end{align}

Fix
\begin{align}
    p_\star\leq p<2,
    \qquad
    0<\theta\leq\theta_\star,
\end{align}
and set
\begin{equation}
    B_p:=\norm{O}_p,
    \qquad
    B_\theta:=\norm{O}_{2,\theta}.
    \label{eq:classical-initial-norms}
\end{equation}
Let $r\in\mathbb N$, $\delta:=t/r$, and
\begin{equation}
    L:=r\chi
    \label{eq:number-classical-layers}
\end{equation}
be the total number of color layers. For $\ell\in[L]$, define
\begin{equation}
    c_\ell
    :=
    1+((\ell-1)\bmod\chi),
    \qquad
    \Gamma_\ell
    :=
    \mathcal T_{\delta}^{(c_\ell)}.
    \label{eq:ordered-classical-color-layers}
\end{equation}
Thus
\begin{equation}
    \Gamma_L\cdots\Gamma_1
    =
    (\Phi_\delta^{\mathrm{Tr}})^r.
    \label{eq:classical-color-product}
\end{equation}

\subsection{Coefficient-truncation algorithm}
\label{supp:classical-algorithm}

Let $w\geq0$ and $\tau>0$. Starting from
\begin{align}
    \widehat O_0:=O,
\end{align}
the algorithm recursively computes
\begin{equation}
    X_\ell
    :=
    \Gamma_\ell(\widehat O_{\ell-1}),
    \qquad
    \widehat O_\ell
    :=
    \mathcal C_{w,\tau}(X_\ell),
    \qquad
    \ell=1,\ldots,L,
    \label{eq:classical-truncated-recursion}
\end{equation}
where $\mathcal C_{w,\tau}$ is the joint weight-and-coefficient truncation map from \cref{eq:joint-truncation-map}. Explicitly,
\begin{equation}
    \widehat O_\ell
    =
    \sum_{\substack{P\in\cP_n\\
                    |P|\leq w,\;
                    |(X_\ell)_P|\geq\tau}}
    (X_\ell)_P P.
    \label{eq:classical-truncated-update}
\end{equation}
All contributions to the same output Pauli operator are first added together, and coefficient truncation is applied only after this merging.

Since $\Gamma_\ell$ is contractive in both relevant norms and $\mathcal C_{w,\tau}$ only removes Pauli coefficients,
\begin{equation}
    \norm{\widehat O_\ell}_p
    \leq
    B_p,
    \qquad
    \norm{\widehat O_\ell}_{2,\theta}
    \leq
    B_\theta
    \label{eq:classical-iterate-regularity}
\end{equation}
for every $\ell\in\{0,\ldots,L\}$. The same bounds hold for the untruncated intermediate operator $X_\ell$.

\subsection{Local color updates}
\label{supp:classical-sparse-updates}

We store the retained nonzero coefficients in a dictionary
\begin{equation}
    P\longmapsto c_P.
    \label{eq:classical-sparse-dictionary}
\end{equation}
Each Pauli string is stored as a sorted list of its nonidentity factors and their qubit indices, as in \texttt{OpenFermion}~\cite{mcclean2017openfermion}. Related implementations for Pauli and Majorana propagation are provided in Algorithmiq's \texttt{monoprop} package~\cite{monoprop,miller2025simulation}.

For a label $a$, recall that
\begin{align}
    \mathcal T_{a,\delta}
    =
    \e^{\delta\mathcal D_a}
    \e^{\delta\mathcal H_a}.
\end{align}
If $P_a$ commutes with $P$, then
\begin{equation}
    \mathcal T_{a,\delta}(P)
    =
    \exp\!\left[-\delta\lambda_a(P)\right]P.
    \label{eq:classical-commuting-local-update}
\end{equation}
If $P_a$ anticommutes with $P$, let $Q$ be the phase-free Pauli operator proportional to $P_aP$, and let $\xi_{a,P}\in\{-1,+1\}$ be as in \cref{eq:local-pauli-pair-generator}. Then
\begin{align}
    \mathcal T_{a,\delta}(P)
    ={}&
    \cos(2J_a\delta)
    \exp\!\left[-\delta\lambda_a(P)\right]P
    \nonumber\\
    &+
    \xi_{a,P}\sin(2J_a\delta)
    \exp\!\left[-\delta\lambda_a(Q)\right]Q.
    \label{eq:classical-anticommuting-local-update}
\end{align}
Because $\mathcal D_a$ is supported on at most $k$ qubits, the rates $\lambda_a(P)$ and $\lambda_a(Q)$ can be evaluated in $O_k(1)$ time from the local description of $\mathcal D_a$. Thus one local map sends a Pauli operator to at most two Pauli operators.

For a fixed color $c$, define
\begin{equation}
    s_c(P)
    :=
    \left|
        \left\{
            a\in\mathcal C_c:
            \supp(P_a)\cap\supp(P)\neq\varnothing
        \right\}
    \right|.
    \label{eq:number-active-color-terms}
\end{equation}
Since the supports within one color are pairwise disjoint,
\begin{equation}
    s_c(P)
    \leq
    |P|.
    \label{eq:active-color-terms-weight-bound}
\end{equation}
Consequently, if $|P|\leq w$, then
\begin{equation}
    \mathcal T_{\delta}^{(c)}(P)
\end{equation}
contains at most
\begin{equation}
    2^{s_c(P)}
    \leq
    2^w
    \label{eq:number-color-branches}
\end{equation}
Pauli operators.

The active labels can be found without scanning the whole color class. After preprocessing the incidence lists
\begin{align}
    x\longmapsto
    \left\{
        a:x\in \supp(P_a)
    \right\},
\end{align}
one identifies the labels of color $c$ intersecting $P$ by scanning $\supp(P)$. For fixed locality, this requires $O_k(d|P|)$ elementary operations.

\subsection{One-layer coefficient-truncation error}
\label{supp:classical-one-layer-error}

\begin{lemma}[Error of one truncated color update]
\label{lem:classical-one-layer-error}
For every $\ell\in[L]$,
\begin{equation}
    \norm{
        X_\ell-\widehat O_\ell
    }_2
    \leq
    \e^{-\theta(w+1)}B_\theta
    +
    \tau^{1-p/2}B_p^{p/2}.
    \label{eq:classical-one-layer-error}
\end{equation}
\end{lemma}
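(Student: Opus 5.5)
The plan is to apply the joint-truncation estimate of \cref{cor:joint-truncation} to the operator $X_\ell$ and then bound the two norms that appear there using the regularity of the iterates. Since $\widehat O_\ell=\mathcal C_{w,\tau}(X_\ell)$ by \cref{eq:classical-truncated-recursion}, \cref{cor:joint-truncation} with the fixed parameters $p_\star\leq p<2$ and $0<\theta\leq\theta_\star$ gives
\begin{equation*}
    \norm{X_\ell-\widehat O_\ell}_2
    \leq
    \e^{-\theta(w+1)}\norm{X_\ell}_{2,\theta}
    +
    \tau^{1-p/2}\norm{X_\ell}_p^{p/2}.
\end{equation*}
It therefore suffices to show that $\norm{X_\ell}_{2,\theta}\leq B_\theta$ and $\norm{X_\ell}_p\leq B_p$. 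The right-hand side is monotone in both norms, because $p/2>0$.

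I will prove these bounds, together with the analogous ones for $\widehat O_\ell$, by induction on $\ell$. This makes precise the claim stated after \cref{eq:classical-iterate-regularity}. The base case is immediate: $\widehat O_0=O$, so $\norm{\widehat O_0}_p=B_p$ and $\norm{\widehat O_0}_{2,\theta}=B_\theta$.

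For the inductive step, $\Gamma_\ell=\mathcal T^{(c_\ell)}_\delta$ is a split color map. By \cref{prop:color-regularity} it is contractive in the Pauli-$p$ norm for $p_\star\leq p\leq 2$ and in the weighted Pauli-$2$ norm for $0\leq\theta\leq\theta_\star$. Hence $\norm{X_\ell}_p\leq\norm{\widehat O_{\ell-1}}_p\leq B_p$, and likewise $\norm{X_\ell}_{2,\theta}\leq B_\theta$. To pass from $X_\ell$ to $\widehat O_\ell$, note that $\mathcal C_{w,\tau}$ only sets some Pauli coefficients to zero and leaves the others unchanged. Both norms are monotone sums over nonnegative coefficient contributions, so deleting terms cannot increase them. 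This closes the induction.

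No step here is a genuine obstacle. The only point needing care is that the inductive hypothesis must cover the truncated iterate $\widehat O_{\ell-1}$ and not the exact evolution. The induction handles this, and it relies only on the facts that the truncation is a coordinate deletion and that $\Gamma_\ell$ is contractive for every step size $\delta\geq0$. This is exactly why the bound is uniform in $r$ and $\ell$.
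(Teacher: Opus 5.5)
Your proof is correct and matches the paper's: it applies \cref{cor:joint-truncation} to $X_\ell$ and bounds $\norm{X_\ell}_p\leq B_p$ and $\norm{X_\ell}_{2,\theta}\leq B_\theta$ using the iterate regularity \eqref{eq:classical-iterate-regularity}. Your induction, which uses contractivity of $\Gamma_\ell$ from \cref{prop:color-regularity} and the fact that $\mathcal C_{w,\tau}$ only deletes coefficients, spells out the one-line justification the paper gives for that regularity.
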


\begin{proof}
By \cref{eq:classical-iterate-regularity},
\begin{align}
    \norm{X_\ell}_p
    \leq
    B_p,
    \qquad
    \norm{X_\ell}_{2,\theta}
    \leq
    B_\theta.
\end{align}
Applying the joint truncation estimate \cref{cor:joint-truncation} to $X_\ell$ gives the claim.
\end{proof}

\subsection{Accumulated truncation error}
\label{supp:classical-accumulated-error}

Let
\begin{equation}
    O^{\mathrm{PF}}_0:=O,
    \qquad
    O^{\mathrm{PF}}_\ell
    :=
    \Gamma_\ell(O^{\mathrm{PF}}_{\ell-1})
    \label{eq:untruncated-product-formula-iterates}
\end{equation}
denote the untruncated product-formula iterates. Then
\begin{align}
    O^{\mathrm{PF}}_L
    =
    (\Phi_\delta^{\mathrm{Tr}})^r(O).
\end{align}

\begin{proposition}[Accumulation of truncation errors]
\label{prop:classical-accumulated-error}
The output of \cref{eq:classical-truncated-recursion} satisfies
\begin{equation}
\begin{aligned}
    \norm{
        (\Phi_\delta^{\mathrm{Tr}})^r(O)
        -
        \widehat O_L
    }_2
    \leq
    L\biggl[
        &
        \e^{-\theta(w+1)}B_\theta
        \\
        &+
        \tau^{1-p/2}B_p^{p/2}
    \biggr].
\end{aligned}
    \label{eq:classical-accumulated-error}
\end{equation}
\end{proposition}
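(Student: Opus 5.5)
The plan is a standard telescoping argument comparing the truncated iterates $\widehat O_\ell$ with the untruncated product-formula iterates $O^{\mathrm{PF}}_\ell$ of \cref{eq:untruncated-product-formula-iterates}. Define the error $e_\ell:=O^{\mathrm{PF}}_\ell-\widehat O_\ell$, so that $e_0=0$ and the quantity to bound is $\norm{e_L}_2$. Using \cref{eq:classical-truncated-recursion}, I will split the error at layer $\ell$ as
\begin{align}
    e_\ell
    =\Gamma_\ell(O^{\mathrm{PF}}_{\ell-1})-\Gamma_\ell(\widehat O_{\ell-1})
     +\bigl(X_\ell-\widehat O_\ell\bigr)
    =\Gamma_\ell(e_{\ell-1})+\bigl(X_\ell-\widehat O_\ell\bigr).
\end{align}
The first equality uses $X_\ell=\Gamma_\ell(\widehat O_{\ell-1})$, and the second uses linearity of $\Gamma_\ell$. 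Linearity matters because the nonlinear map $\mathcal C_{w,\tau}$ appears only in the second, local term.

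Taking normalized Hilbert--Schmidt norms, I will bound the first term by $\norm{\Gamma_\ell}_{2\to2}\norm{e_{\ell-1}}_2\leq\norm{e_{\ell-1}}_2$. This follows from the $\theta=0$ (equivalently $p=2$) case of the color contractivity in \cref{prop:color-regularity}. Alternatively, I can observe directly that $\e^{\delta\mathcal H^{(c)}}$ is a Pauli-$2$ isometry and $\e^{\delta\mathcal D^{(c)}}$ is a diagonal contraction. The second term is controlled by \cref{lem:classical-one-layer-error}, which gives $\e^{-\theta(w+1)}B_\theta+\tau^{1-p/2}B_p^{p/2}$ uniformly in $\ell$.

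That lemma relies on the iterate regularity \cref{eq:classical-iterate-regularity}. That regularity is the one place where a fixed truncation threshold is compatible with arbitrarily many layers, so I will recheck its induction. Suppose $\norm{\widehat O_{\ell-1}}_p\leq B_p$ and $\norm{\widehat O_{\ell-1}}_{2,\theta}\leq B_\theta$. Applying \cref{prop:color-regularity} with $p_\star\leq p\leq2$ and $\theta\leq\theta_\star$ gives the same bounds for $X_\ell$. Then $\mathcal C_{w,\tau}$ zeroes coefficients and cannot increase either sequence norm, which gives the bounds for $\widehat O_\ell$.

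Unrolling the recursion $\norm{e_\ell}_2\leq\norm{e_{\ell-1}}_2+\bigl[\e^{-\theta(w+1)}B_\theta+\tau^{1-p/2}B_p^{p/2}\bigr]$ from $e_0=0$ over $\ell=1,\ldots,L$ then yields \cref{eq:classical-accumulated-error}. Finally, $O^{\mathrm{PF}}_L=(\Phi_\delta^{\mathrm{Tr}})^r(O)$ by \cref{eq:classical-color-product}, which identifies the left-hand side. I do not expect a real obstacle here. The only subtlety is the inductive regularity step: the one-layer bound must be applied to the truncated iterate's image $X_\ell$, not to $O^{\mathrm{PF}}_\ell$, and its validity must not depend on $\delta$.
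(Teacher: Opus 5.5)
Your proposal is correct and matches the paper's proof essentially step for step. Both use the same error recursion $O^{\mathrm{PF}}_\ell-\widehat O_\ell=\Gamma_\ell(O^{\mathrm{PF}}_{\ell-1}-\widehat O_{\ell-1})+X_\ell-\mathcal C_{w,\tau}(X_\ell)$, bound the first term by Pauli-$2$ contractivity of $\Gamma_\ell$ and the second by \cref{lem:classical-one-layer-error}, and iterate over $L$ layers; your recheck of the iterate regularity \cref{eq:classical-iterate-regularity} is the input that lemma uses, which the paper establishes just before the proposition.
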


\begin{proof}
Using \cref{eq:classical-truncated-recursion}, we have
\begin{align}
    O^{\mathrm{PF}}_\ell-\widehat O_\ell
    ={}&
    \Gamma_\ell
    \left(
        O^{\mathrm{PF}}_{\ell-1}
        -
        \widehat O_{\ell-1}
    \right)
    \nonumber\\
    &+
    X_\ell-\mathcal C_{w,\tau}(X_\ell).
    \label{eq:classical-error-recursion}
\end{align}
The color map $\Gamma_\ell$ is contractive in the normalized Hilbert--Schmidt norm. Therefore
\begin{align}
    \norm{
        O^{\mathrm{PF}}_\ell-\widehat O_\ell
    }_2
    \leq{}&
    \norm{
        O^{\mathrm{PF}}_{\ell-1}
        -
        \widehat O_{\ell-1}
    }_2
    \nonumber\\
    &+
    \norm{
        X_\ell-\mathcal C_{w,\tau}(X_\ell)
    }_2.
    \label{eq:classical-error-recursion-bound}
\end{align}
Iterating this inequality from $\ell=1$ to $L$ and applying \cref{lem:classical-one-layer-error} proves the result.
\end{proof}

For a truncation-error budget $\varepsilon_{\mathrm{tr}}>0$, choose
\begin{equation}
    w
    :=
    \max\left\{
        0,\,
        \left\lceil
            \frac{1}{\theta}
            \log\left(
                \frac{2LB_\theta}{\varepsilon_{\mathrm{tr}}}
            \right)
        \right\rceil
    \right\}
    \label{eq:classical-weight-cutoff}
\end{equation}
and
\begin{equation}
    \tau
    :=
    \left(
        \frac{
            \varepsilon_{\mathrm{tr}}
        }{
            2L B_p^{p/2}
        }
    \right)^{\frac{2}{2-p}}.
    \label{eq:classical-coefficient-threshold}
\end{equation}
Then
\begin{equation}
    \norm{
        (\Phi_\delta^{\mathrm{Tr}})^r(O)
        -
        \widehat O_L
    }_2
    \leq
    \varepsilon_{\mathrm{tr}}.
    \label{eq:classical-truncation-budget}
\end{equation}

\subsection{Runtime and memory complexity}
\label{supp:classical-complexity}

After every truncation, the number of retained coefficients is bounded by
\begin{equation}
    N
    :=
    \tau^{-p}B_p^p.
    \label{eq:classical-number-retained}
\end{equation}
With the choice in \cref{eq:classical-coefficient-threshold},
\begin{equation}
    N
    \leq
    \left(
        \frac{2LB_p}{\varepsilon_{\mathrm{tr}}}
    \right)^{\frac{2p}{2-p}}.
    \label{eq:classical-number-retained-explicit}
\end{equation}

A retained Pauli operator has weight at most $w$ and produces at most $2^w$ branches under one color update. Hence one color layer can be implemented using
\begin{equation}
    \widetilde{\mathcal O}_k
    \left(
        Nw2^w
    \right)
    \label{eq:classical-one-layer-runtime}
\end{equation}
arithmetic and dictionary operations, where the tilde suppresses logarithmic factors associated with storing and merging Pauli keys. The temporary memory during one color update is
\begin{equation}
    \widetilde{\mathcal O}_k
    \left(
        N2^w
    \right),
    \label{eq:classical-temporary-memory}
\end{equation}
while after truncation only $N$ coefficients remain.

The choice in \cref{eq:classical-weight-cutoff} implies
\begin{equation}
    2^w
    \leq
    2
    \left(
        \frac{2LB_\theta}{\varepsilon_{\mathrm{tr}}}
    \right)^{\frac{\log 2}{\theta}}.
    \label{eq:classical-branching-explicit}
\end{equation}
Including the $O_k(m)$ preprocessing cost, the total runtime is therefore
\begin{equation}
\begin{aligned}
    \widetilde{\mathcal O}_k\biggl(
        m
        +
        Lw
        &
        \left(
            \frac{2LB_p}{\varepsilon_{\mathrm{tr}}}
        \right)^{\frac{2p}{2-p}}
        \\
        &\times
        \left(
            \frac{2LB_\theta}{\varepsilon_{\mathrm{tr}}}
        \right)^{\frac{\log 2}{\theta}}
    \biggr).
\end{aligned}
    \label{eq:classical-total-runtime}
\end{equation}
The corresponding peak memory is
\begin{equation}
    \widetilde{\mathcal O}_k\left(
        \left(
            \frac{2LB_p}{\varepsilon_{\mathrm{tr}}}
        \right)^{\frac{2p}{2-p}}
        \left(
            \frac{2LB_\theta}{\varepsilon_{\mathrm{tr}}}
        \right)^{\frac{\log 2}{\theta}}
    \right).
    \label{eq:classical-total-memory}
\end{equation}

The two regularity estimates play distinct roles in these bounds. Pauli-$p$ regularity controls the number $N$ of retained coefficients. Weighted Pauli-$2$ regularity ensures that the cutoff $w$ grows only logarithmically with the target precision. This converts the branching factor $2^w$ of a color update into a polynomial dependence on $\varepsilon_{\mathrm{tr}}^{-1}$. Coefficient sparsity alone would not control this branching factor, because a sparse Pauli expansion may still contain Pauli operators of extensive weight.

\subsection{First-order propagation guarantees}
\label{supp:proof-classical-simulation}

\begin{theorem}[Classical simulation by Pauli propagation]
\label{thm:supp-classical-simulation}
Let
\begin{align}
    p_\star\leq p<2,
    \qquad
    0<\theta\leq\theta_\star,
\end{align}
and let $\widehat O_L$ be the output of \cref{eq:classical-truncated-recursion}, with the parameters chosen in \cref{eq:classical-weight-cutoff,eq:classical-coefficient-threshold}. Then
\begin{equation}
\begin{aligned}
    \norm{
        \e^{t\mathcal L}(O)-\widehat O_L
    }_2
    \leq{}&
    \norm{
        \left(
            \e^{t\mathcal L}
            -
            (\Phi_{t/r}^{\mathrm{Tr}})^{\,r}
        \right)(O)
    }_2
    +
    \varepsilon_{\mathrm{tr}}.
\end{aligned}
    \label{eq:classical-total-error}
\end{equation}
Under the assumptions of \cref{cor:quantum-many-step-error}, this becomes
\begin{equation}
\begin{aligned}
    \norm{
        \e^{t\mathcal L}(O)-\widehat O_L
    }_2
    \leq{}&
    C_k
    \frac{t^2}{r}
    \frac{\e^{2k\theta}}{\theta}
    \left(
        J^2d^2+J\bar\gamma d^2
    \right)
    B_\theta
    \\
    &+
    \varepsilon_{\mathrm{tr}}.
\end{aligned}
    \label{eq:classical-total-error-explicit}
\end{equation}
The runtime and memory are bounded by \cref{eq:classical-total-runtime,eq:classical-total-memory}.
\end{theorem}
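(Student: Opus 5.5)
The plan is to split the error with the triangle inequality, inserting the untruncated product-formula output $(\Phi_{t/r}^{\mathrm{Tr}})^r(O)$ between $\e^{t\mathcal L}(O)$ and $\widehat O_L$:
\begin{equation*}
    \norm{\e^{t\mathcal L}(O)-\widehat O_L}_2
    \leq
    \norm{\bigl(\e^{t\mathcal L}-(\Phi_{t/r}^{\mathrm{Tr}})^r\bigr)(O)}_2
    +
    \norm{(\Phi_{t/r}^{\mathrm{Tr}})^r(O)-\widehat O_L}_2 .
\end{equation*}
By \cref{eq:classical-color-product}, the second term is exactly the quantity bounded in \cref{prop:classical-accumulated-error}. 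With the cutoffs fixed by \cref{eq:classical-weight-cutoff,eq:classical-coefficient-threshold}, \cref{eq:classical-truncation-budget} bounds it by $\varepsilon_{\mathrm{tr}}$. This already gives \cref{eq:classical-total-error}.

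Before using this, I would check the one nontrivial input to \cref{prop:classical-accumulated-error}: the regularity \cref{eq:classical-iterate-regularity} must hold for the truncated iterates as well as for the exact evolution. I would argue by induction on $\ell$. Each $\Gamma_\ell$ is a split color map $\mathcal T_\delta^{(c)}$, which is contractive in $\norm{\cdot}_p$ for $p\geq p_\star$ and in $\norm{\cdot}_{2,\theta}$ for $\theta\leq\theta_\star$ by \cref{prop:color-regularity}. The truncation $\mathcal C_{w,\tau}$ deletes coefficients and so cannot increase either norm. Hence $X_\ell$ and $\widehat O_\ell$ obey the bounds $B_p$ and $B_\theta$ for every $\ell$. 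The per-layer bound then follows from \cref{cor:joint-truncation}. Summing over the $L$ layers, using that $\Gamma_\ell$ is a $2$-norm contraction, gives the factor $L$. The choices of $w$ and $\tau$ split the budget evenly between the weight and coefficient cutoffs.

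For \cref{eq:classical-total-error-explicit}, I would bound the first term with \cref{cor:quantum-many-step-error}, which applies under \cref{eq:quantum-small-step-condition} and $\theta\leq\min\{1,\theta_\star\}$. It gives an operator bound in $\norm{\cdot}_{2,\theta\to2,0}$ with $\delta=t/r$. Applying it to $O$ produces the factor $\norm{O}_{2,\theta}=B_\theta$, and $t\delta=t^2/r$.

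The runtime and memory statements are exactly \cref{eq:classical-total-runtime,eq:classical-total-memory}. They combine the retained-count bound \cref{eq:classical-number-retained-explicit}, the branching bound \cref{eq:classical-branching-explicit}, and the per-layer cost \cref{eq:classical-one-layer-runtime} over $L$ layers. The only delicate point I expect is the induction for truncated iterates above. It needs care because $\mathcal C_{w,\tau}$ is nonlinear, so one should use pointwise monotonicity of the norms under deleting Pauli coefficients rather than an induced-norm argument.
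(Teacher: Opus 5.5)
Your proposal is correct and follows the paper's own proof. You insert $(\Phi_{t/r}^{\mathrm{Tr}})^r(O)$ via the triangle inequality, bound the truncation term by $\varepsilon_{\mathrm{tr}}$ through \cref{prop:classical-accumulated-error} with the stated cutoffs, and bound the product-formula term by applying \cref{cor:quantum-many-step-error} to $O$. The induction you flag for the truncated iterates is exactly the paper's \cref{eq:classical-iterate-regularity}, justified the same way: the colour maps are contractive, and deleting coefficients never increases $\norm{\cdot}_p$ or $\norm{\cdot}_{2,\theta}$.
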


\begin{proof}
Insert the split product-formula evolution:
\begin{align}
    \norm{
        \e^{t\mathcal L}(O)-\widehat O_L
    }_2
    \leq{}&
    \norm{
        \left(
            \e^{t\mathcal L}
            -
            (\Phi_{t/r}^{\mathrm{Tr}})^{\,r}
        \right)(O)
    }_2
    \nonumber\\
    &+
    \norm{
        (\Phi_{t/r}^{\mathrm{Tr}})^{\,r}(O)
        -
        \widehat O_L
    }_2.
\end{align}
The second term is at most $\varepsilon_{\mathrm{tr}}$ by \cref{eq:classical-truncation-budget}. The explicit product-formula estimate follows from \cref{cor:quantum-many-step-error}. The complexity bounds were established in \cref{eq:classical-total-runtime,eq:classical-total-memory}.
\end{proof}

Choosing
\begin{align}
    \varepsilon_{\mathrm{tr}}
    =
    \frac{\varepsilon}{2}
\end{align}
and taking $r$ large enough that the product-formula contribution in \cref{eq:classical-total-error-explicit} is at most $\varepsilon/2$ gives a normalized Hilbert--Schmidt approximation error at most $\varepsilon$.

For fixed $k$, $d$, $\gamma/J$, and initial observable weight, the parameters $p<2$ and $\theta>0$ are constants. Hence the runtime and memory are polynomial in the evolution time and inverse target precision, provided that the required product-formula step count $r$ is polynomial in these quantities.

\subsection{Propagation guarantees with Richardson extrapolation}
\label{supp:classical-richardson}

We now extend the first-order analysis above to the symmetric circuits used in the main text. We give the extension explicitly, including the initial observable and the signed extrapolation error. As in the main theorem, $O$ is supported on at most $w_0$ qubits and $\norm{O}_2=1$. Its Pauli expansion has at most $M_0\leq4^{w_0}$ nonzero coefficients and can be formed from its local matrix representation at a cost depending only on $w_0$.

\begin{lemma}[Contractivity and sparsity of a symmetric colour update]
\label{lem:classical-symmetric-colour}
For $s\geq0$, define
\begin{equation}
 Q_{a,s}:=\e^{s\mathcal D_a/2}\e^{s\mathcal H_a}
                \e^{s\mathcal D_a/2},
 \qquad Q_{c,s}=\prod_{a\in\mathcal C_c}Q_{a,s}.
\end{equation}
The map $Q_{c,s}$ is contractive in Pauli-$p$ for $p_\star\leq p\leq2$, and in weighted Pauli-$2$ for $0\leq\theta\leq\theta_\star$. A weight-$w$ Pauli string has at most $2^w$ output branches, each of weight at most $kw$.
\end{lemma}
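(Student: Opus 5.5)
The plan has two parts: first contractivity, then branch counting.

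\textbf{Contractivity.} This is essentially \cref{cor:symmetric-colour-regularity}, and I would reprove it locally. Fix a label $a$ and split the Pauli space into $\mathsf K_a\oplus\mathsf K_a^\perp$. All three factors of $Q_{a,s}$ preserve this splitting and act as the identity on $\mathsf K_a^\perp$.

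On $\mathsf K_a$, each half step $\e^{s\mathcal D_a/2}$ is diagonal with entries at most $\e^{-\gamma s/2}$. It therefore contracts both the Pauli-$p$ norm and the weighted Pauli-$2$ norm by at least $\e^{-\gamma s/2}$, since it is diagonal in the Pauli basis and commutes with the weights $\e^{\theta|P|}$.

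For the middle factor $\e^{s\mathcal H_a}$:
\begin{itemize}
\item In the weighted norm, \cref{lem:weighted-local-H-growth} gives the growth bound $\e^{2Js\sinh(k\theta)}$.
\item In the Pauli-$p$ norm, I interpolate with Riesz--Thorin (\cref{lem:interpolation}) between the $\ell_1$ bound $\e^{2Js}$ of \cref{lem:local-pauli-one-growth} and the $\ell_2$ isometry. This gives $\e^{2Js(2/p-1)}$.
\end{itemize}
One point needs checking: $\e^{s\mathcal H_a}$ maps $\mathsf K_a$ into itself, because $P$ and $\widehat{P_aP}$ both meet $\supp(P_a)$ whenever they anticommute with $P_a$. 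So restricting to $\mathsf K_a$ is legitimate.

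Submultiplicativity then bounds the restricted norm of $Q_{a,s}$ by $\e^{s[2J(2/p-1)-\gamma]}$ and $\e^{s[2J\sinh(k\theta)-\gamma]}$ respectively. Both exponents are nonpositive by the definitions \cref{eq:p-star-definition,eq:theta-star-definition}.

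To pass from one label to the whole colour, I tensorize over the disjoint supports in $\mathcal C_c$, exactly as in \cref{prop:color-regularity}. The $\ell_p$ norm over a product Pauli basis and the weight $\e^{\theta|P|}$ both factor across disjoint subsystems, so the induced norms multiply.

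\textbf{Branching.} Let $P$ have weight $w$. Only labels $a\in\mathcal C_c$ with $\supp(P_a)\cap\supp(P)\neq\varnothing$ act nontrivially, and by \cref{eq:active-color-terms-weight-bound} there are at most $w$ of them.

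Both dissipative half steps only rescale a Pauli string. Each active rotation sends a string to at most two strings, $R$ and $\widehat{P_aR}$. Because the supports are disjoint, the rotations act on disjoint regions, so the total branch count is at most $2^{s_c(P)}\le 2^w$.

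For the weight of each branch, every output string is supported inside
\[
\supp(P)\cup\bigcup_{a\ \text{active}}\supp(P_a).
\]
This union has size at most $w+\sum_a|\supp(P_a)|\le w+kw$. That bound is too weak, so I refine it. Each active $\supp(P_a)$ contains at least one qubit of $\supp(P)$, so its contribution is at most $k$ qubits, of which one is already counted. Since the active supports are disjoint, the count over $\supp(P)\cup\bigcup_a\supp(P_a)$ becomes
\[
|\supp(P)\setminus\textstyle\bigcup_a\supp(P_a)|+\sum_a|\supp(P_a)| \le (w-s_c)+ks_c\le kw,
\]
using $k\ge1$.

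The main obstacle is this last weight count. One must use carefully that the qubits of $P$ lying inside active supports are absorbed by those supports, and not add them twice. Everything else follows directly from the earlier lemmas.
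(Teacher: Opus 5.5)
Your proposal is correct and follows the paper's route. Contractivity is exactly \cref{cor:symmetric-colour-regularity}, which you reprove with the same half-step-plus-rotation bounds; you combine the labels of a colour by tensorization, whereas the paper just uses submultiplicativity of the contractive factors $Q_{a,s}$. Your branch count is the paper's, and your count $(w-s_c)+ks_c\le kw$ makes explicit why $\supp(P)$ together with its active blocks has size at most $k|P|$, a step the paper states without detail.
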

\begin{proof}
Contractivity follows from \cref{cor:symmetric-colour-regularity}. The commuting local update is $Q_{a,s}(P)=\e^{-s\lambda_a(P)}P$. For an anticommuting partner $Q$ and the sign $\xi_{a,P}$ from \cref{eq:local-pauli-pair-generator},
\begin{align}
 Q_{a,s}(P)={}&\cos(2J_as)\e^{-s\lambda_a(P)}P\nonumber\\
 &+\xi_{a,P}\sin(2J_as)
       \e^{-s[\lambda_a(P)+\lambda_a(Q)]/2}Q.
 \label{eq:classical-symmetric-pauli-update}
\end{align}
Each active block therefore produces at most two branches. There are at most $|P|$ active blocks because the blocks of one colour are disjoint. A block initially disjoint from $P$ stays disjoint from all branches. Their supports are contained in $\supp(P)$ together with its active blocks, a set of size at most $k|P|$.
\end{proof}

Consider any circuit consisting of $L$ symmetric colour maps with nonnegative durations. Use these maps as $\Gamma_\ell$ in \cref{eq:classical-truncated-recursion}, and put
\begin{equation}
 W:=\max\left\{w_0,
       \left\lceil\theta^{-1}\log(2LB_\theta/\eta)\right\rceil
       \right\},\qquad
 \tau:=\left(\frac{\eta}{2LB_p^{p/2}}\right)^{2/(2-p)}.
 \label{eq:classical-symmetric-cutoffs}
\end{equation}
Use $w=W$ and the displayed $\tau$ in \cref{eq:classical-truncated-recursion}. Here $p_\star\leq p<2$, $0<\theta\leq\theta_\star$, and $0<\eta\leq1$. By \cref{lem:classical-symmetric-colour}, the norm bounds \cref{eq:classical-iterate-regularity} hold for these iterates. The proof of \cref{prop:classical-accumulated-error} therefore applies without change: it uses only those bounds and Pauli-$2$ contractivity of each complete colour map. It gives
\begin{equation}
 \norm{\Gamma_L\cdots\Gamma_1(O)-\widehat O_L}_2\leq\eta.
 \label{eq:classical-symmetric-truncation-error}
\end{equation}
Truncation is performed after a complete symmetric colour map, rather than between its coherent and dissipative factors. The bound is uniform in the durations of the colour maps. Its dependence on the number $L$ of truncation points is included in the cutoffs.

After truncation, at most $M=(2LB_p/\eta)^{2p/(2-p)}$ coefficients remain. To account for the initial update, use $M_{\mathrm{eff}}:=\max\{4^{w_0},M\}$; every input string then has weight at most $W$. Construct once a sparse table $(x,c)\mapsto a$ whenever the unique block of colour $c$ containing qubit $x$ has label $a$. It identifies active blocks by scanning the string support and avoids a factor $d$ from repeated scans of all incident labels. A greedy colouring and this table cost $\widetilde{\mathcal O}_k(n+md)$ and are reused for all circuits. After this preprocessing, the propagation and final expectation evaluation require
\begin{equation}
 \widetilde{\mathcal O}_k\!\left(
      L M_{\mathrm{eff}}(W+1)2^W\right)
 \label{eq:classical-symmetric-propagation-cost}
\end{equation}
operations. This includes storing the at-most-$kW$ output supports and merging identical strings. The expectation-evaluation cost $O(M_{\mathrm{eff}}(W+1))$ follows from the assumed access to $\Tr(\rho P)$.

Since $B_p\leq2^{w_0(2/p-1)}$ and $B_\theta\leq\e^{\theta w_0}$,
\begin{equation}
 2^W\leq2^{w_0}+2(2LB_\theta/\eta)^{\log2/\theta}.
\end{equation}
For fixed physical parameters and $w_0$, the propagation cost is thus
\begin{equation}
 \widetilde{\mathcal O}\!\left(L^{1+\beta}\eta^{-\beta}\right),
 \qquad \beta:=\frac{2p}{2-p}+\frac{\log2}{\theta}.
 \label{eq:classical-symmetric-precision-cost}
\end{equation}
The tilde here also suppresses the logarithmic factor $W+1$ and the cost of storing qubit indices.

\subsubsection{Proof of the main classical theorem}
\label{supp:proof-main-classical}

\begin{proof}[Proof of \cref{thm:main-classical-simulation}]
Apply \cref{thm:rich-lind-maximum-depth} with target observable error $\epsilon/2$. Let $q,N_j,\omega_j$ be its choices, and define
\begin{equation}
 A_q:=\sum_{j=1}^q|\omega_j|,\qquad
 \eta:=\frac{\epsilon}{2A_q},\qquad
 L_j:=(2\chi-1)N_j.
\end{equation}
The coefficient identities imply $1\leq A_q\leq48q^3$. For each circuit, take $p=p_\star$, $\theta=\theta_\star$, and apply the symmetric propagation just described with $L=L_j$. Its output $\widehat O^{(j)}$ satisfies $\|S_{t/N_j}^{N_j}(O)-\widehat O^{(j)}\|_2\leq\eta$. Output
\begin{equation}
 \widehat f:=\sum_{j=1}^q\omega_j\Tr(\rho\widehat O^{(j)}).
\end{equation}
The triangle inequality and the Richardson guarantee give
\begin{align}
 \left\|\e^{t\mathcal L}(O)
          -\sum_{j=1}^q\omega_j\widehat O^{(j)}\right\|_2
 &\leq\left\|\e^{t\mathcal L}(O)
          -\sum_{j=1}^q\omega_j S_{t/N_j}^{N_j}(O)\right\|_2
       +\sum_{j=1}^q|\omega_j|
          \|S_{t/N_j}^{N_j}(O)-\widehat O^{(j)}\|_2\\
 &\leq\epsilon/2+A_q\eta=\epsilon.
 \label{eq:classical-richardson-total-error}
\end{align}
For Hermitian $X$, positivity of $\rho$ and Cauchy--Schwarz imply $|\Tr(\rho X)|^2\leq\Tr(\rho X^2)$. Averaging and using $\mathbb E\rho=I/2^n$ yields $\mathbb E|\Tr(\rho X)|^2\leq\|X\|_2^2$. Applying this inequality to the error observable proves the stated RMS estimate.

Fix $k,d,J,\gamma,\bar\gamma,t,w_0$. The Richardson construction has $q=O(\log(\e/\epsilon))$ and $L_j\leq Cq^2$, where $C$ depends only on these parameters. At the chosen $p$ and $\theta$,
\begin{equation}
 \beta=\frac{2p_\star}{2-p_\star}
             +\frac{\log2}{\theta_\star}
 =\max\!\left\{2,\frac{4J}{\gamma}\right\}
        +\frac{k\log2}{\operatorname{arsinh}(\gamma/(2J))}
 =\beta_\star.
\end{equation}
After empty-support blocks are removed, $m\leq nd$, so the shared preprocessing is $\widetilde{\mathcal O}(n)$ for fixed $d$. Summing \cref{eq:classical-symmetric-precision-cost} over all $q$ circuits gives
\begin{align}
 T_{\mathrm C}
 &\in\widetilde{\mathcal O}\!\left(
       n+\sum_{j=1}^q L_j^{1+\beta_\star}
                         (2A_q/\epsilon)^{\beta_\star}\right)\\
 &\subseteq\widetilde{\mathcal O}\!\left(
       n+\epsilon^{-\beta_\star}q^{3+5\beta_\star}\right).
\end{align}
This proves \cref{eq:main-classical-runtime}. For $t=0$, directly evaluating the input observable suffices.
\end{proof}

\suppsection{Improved classical simulation of noisy circuits}
\label{supp:gate-based-noise}

We consider an $n$-qubit circuit $\mathcal C$ containing $\operatorname{poly}(n)$ gates
\[
 U_j=e^{-i\delta H_j},
 \qquad H_j=H_j^\dagger,
 \qquad \|H_j\|_{\mathrm{op}}\leq1,
 \qquad \delta\geq0,
\]
where we denoted the operator norm as $\|\cdot\|_{\mathrm{op}}$. Each gate acts on at most $k=O(1)$ qubits and is followed by independent depolarizing noise on its support, with
\[
 \mathcal N_\gamma(I)=I,
 \qquad
 \mathcal N_\gamma(P)=e^{-\gamma}P
 \quad(P\in\{X,Y,Z\}),
 \qquad \gamma>0.
\]

\begin{lemma}[Nonincrease of Pauli coefficient norms]
\label{lem:gate-pauli-p-contractivity}
Define
\[
 a:=\max\left\{2,\frac{2^{k+2}\delta}{\gamma}\right\},
 \qquad
 p_\star:=\frac{2a}{a+2}.
\]
Every noisy gate $\Gamma_j \coloneqq e^{-iH_j\delta} \mathcal{N}_\gamma^{\otimes k}(\cdot) e^{iH_j\delta} $ satisfies
\[
 \|\Gamma_j(X)\|_p\leq\|X\|_p,
 \qquad p_\star\leq p\leq2.
\]
Consequently, the same bound holds for any composition of these noisy gates.
\end{lemma}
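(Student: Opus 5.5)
The plan is to repeat, at the level of a single noisy gate, the interpolation argument behind \cref{lem:unweighted-local-pauli-p} and \cref{cor:unweighted-local-pauli-p}. We bound the action of $\Gamma_j$ on the ``active'' Pauli sector in the Pauli-$1$ and Pauli-$2$ norms, then apply Riesz--Thorin (\cref{lem:interpolation}). Let $S_j$ be the support of $H_j$, with $|S_j|\leq k$.

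\textbf{Sector decomposition.} Every Pauli factorizes as $P=P_{S_j}\otimes P_{\bar S_j}$. In the Heisenberg picture, both $\mathcal N_\gamma^{\otimes k}$ and conjugation by $U_j$ act only on the $S_j$ factor. Both maps are unital and trace preserving on the local algebra. Hence the subspace with $P_{S_j}=I$ is fixed pointwise, and the subspace with $P_{S_j}\neq I$ is invariant. Conjugation by $U_j$ maps traceless local operators to traceless local operators, which gives the invariance.

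The Pauli coefficient space is therefore a direct sum of these two sectors over disjoint coefficient sets, each tensored with an identity map on $\bar S_j$. For $\ell_p$ spaces over a product index set, $\norm{\Phi\otimes\mathrm{id}}_{p\to p}=\norm{\Phi}_{p\to p}$, since one applies $\Phi$ slice by slice. It therefore suffices to show that the restriction $\Gamma_j|_{\mathsf K}$ to the traceless local sector $\mathsf K$ has $p\to p$ norm at most one.

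\textbf{Endpoint bounds.} On $\mathsf K$, the noise multiplies every Pauli by $\e^{-\gamma|P_{S_j}|}\leq\e^{-\gamma}$. So $\mathcal N_\gamma^{\otimes k}|_{\mathsf K}$ is diagonal with $q\to q$ norm at most $\e^{-\gamma}$ for every $q$.

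Unitary conjugation is a Pauli-$2$ isometry. For the Pauli-$1$ norm, write the conjugation as $\e^{\delta\mathcal G}$ with $\mathcal G(X)=\pm i[H_j,X]$. Then
\[
\norm{\e^{\delta\mathcal G}}_{1\to1}\leq\e^{\delta\norm{\mathcal G}_{1\to1}},
\]
and $\norm{\mathcal G}_{1\to1}$ is the maximal column sum $\max_P\norm{[H_j,P]}_1$. Since $[H_j,P]$ lives on $4^k$ local Pauli strings, Cauchy--Schwarz gives
\[
\norm{[H_j,P]}_1\leq 2^k\norm{[H_j,P]}_2\leq 2^k\cdot2\norm{H_j}_{\mathrm{op}}\leq2^{k+1}.
\]
This step, obtaining a clean Pauli-$1$ growth rate $2^{k+1}$ for an arbitrary bounded $k$-local generator (not just a single Pauli rotation as in \cref{lem:local-pauli-one-growth}), is the one I expect to need the most care. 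It is exactly where the constant $2^{k+2}$ in $a$ originates.

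Composing the two maps, which in either order both preserve $\mathsf K$, gives
\[
\norm{\Gamma_j|_{\mathsf K}}_{1\to1}\leq\e^{2^{k+1}\delta-\gamma},\qquad
\norm{\Gamma_j|_{\mathsf K}}_{2\to2}\leq\e^{-\gamma}.
\]

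\textbf{Interpolation and conclusion.} With $1/p=(1-\lambda)+\lambda/2$, so that $1-\lambda=2/p-1$, Riesz--Thorin yields
\[
\norm{\Gamma_j|_{\mathsf K}}_{p\to p}\leq\exp\!\left[2^{k+1}\delta\left(\frac2p-1\right)-\gamma\right].
\]
For $p\geq p_\star=2a/(a+2)$ we have $2/p-1\leq2/a\leq\gamma/(2^{k+1}\delta)$, by the definition of $a$. The exponent is therefore nonpositive; when $\delta=0$ this is trivial. Combined with the identity action on the complementary sector, this proves $\norm{\Gamma_j(X)}_p\leq\norm X_p$.

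The statement for compositions then follows from submultiplicativity, \cref{eq:unweighted-induced-submultiplicativity}.
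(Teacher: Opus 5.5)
Your proof is correct and follows essentially the same route as the paper: it uses a Pauli-$1$ growth rate $2^{k+1}$ for the unitary conjugation, the Pauli-$2$ isometry, Riesz--Thorin interpolation between $p=1$ and $p=2$, and the $\e^{-\gamma}$ contraction on the locally traceless sector, with the local-identity sector fixed. The differences are cosmetic: the paper bounds the generator's $1\to1$ norm through the Pauli-$\ell_1$ norm of $H_j$ and interpolates the unitary alone before multiplying by the noise factor, whereas you apply Cauchy--Schwarz directly to $[H_j,P]$ and interpolate the composed restricted map; both yield the same exponent and the same $p_\star$.
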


\begin{proof}
Expand $H_j=\sum_P h_{j,P}P$ on its gate support. Pauli orthogonality and Cauchy--Schwarz give
\[
 \sum_{P\neq I}|h_{j,P}|
 \leq\sqrt{4^k-1}\,\|H_j\|_2
 \leq2^k\|H_j\|_{\mathrm{op}}
 \leq2^k.
\]
Since a commutator of two Pauli operators is either zero or a Pauli operator with coefficient of magnitude two,
\[
 \|i[H_j,\cdot]\|_{1\to1}
 \leq2\sum_{P\neq I}|h_{j,P}|
 \leq2^{k+1}.
\]
Thus, for $\mathcal V_j(X):=U_j^\dagger XU_j$,
\[
 \|\mathcal V_j\|_{1\to1}\leq e^{2^{k+1}\delta}.
\]
Unitary conjugation is a Pauli-$2$ isometry. Riesz--Thorin interpolation therefore yields
\[
 \|\mathcal V_j\|_{p\to p}
 \leq
 \exp\left[
 2^{k+1}\delta\left(\frac{2}{p}-1\right)
 \right],
 \qquad 1\leq p\leq2.
\]

Let $\mathcal N_j$ denote the product depolarizing channel on the gate support. Since noise follows the physical gate,
\[
 \Gamma_j(X)=U_j^\dagger\mathcal N_j(X)U_j.
\]
For each fixed Pauli label outside the gate support, the local identity and traceless sectors are invariant. The identity sector is fixed, whereas $\mathcal N_j$ contracts the Pauli-$p$ norm on the traceless sector by at least $e^{-\gamma}$. Hence, for $p\geq p_\star$,
\[
 \|\Gamma_j|_{\mathrm{traceless}}\|_{p\to p}
 \leq
 \exp\left[
 -\gamma+2^{k+1}\delta\left(\frac{2}{p}-1\right)
 \right]
 \leq
 \exp\left[-\gamma+\frac{2^{k+2}\delta}{a}\right]
 \leq1.
\]
\end{proof}

\begin{theorem}[Hilbert--Schmidt simulation guarantee]
\label{thm:appendix-gate-noise}
Let $O=O^\dagger$, $\|O\|_2\leq1$, be supplied as an explicit expansion of $\operatorname{poly}(n)$ Pauli operators. For every $0<\varepsilon\leq1$, a deterministic classical algorithm constructs a sparse Pauli expansion $\widetilde O$ satisfying
\[
 \|\mathcal C^*(O)-\widetilde O\|_2\leq\varepsilon
\]
in time
\[
 \left(\frac{n}{\varepsilon}\right)^{
 \mathcal O(1+\delta/\gamma)}.
\]
The implicit constants depend on the fixed gate size and the fixed polynomial bounds on the input size.
\end{theorem}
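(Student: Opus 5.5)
The plan is to run the truncated Pauli-propagation recursion of \cref{eq:classical-truncated-recursion} with the noisy gates $\Gamma_j$ playing the role of the colour layers $\Gamma_\ell$, and to reuse the accumulation argument of \cref{prop:classical-accumulated-error}. We apply the gates in Heisenberg order, one noisy gate per layer, so $L=\operatorname{poly}(n)$. After each gate we merge identical strings and apply the coefficient truncation $\mathcal T_\tau$. Since each $\Gamma_j$ is unital and trace preserving, it is Pauli-$2$ contractive. The error recursion \cref{eq:classical-error-recursion-bound} therefore goes through verbatim, and the total error is at most $L$ times the worst one-step truncation error.

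The Pauli-$p$ control is already supplied by \cref{lem:gate-pauli-p-contractivity}. We take $p=p_\star=2a/(a+2)<2$, which gives $2p/(2-p)=a=\max\{2,2^{k+2}\delta/\gamma\}=\mathcal O(1+\delta/\gamma)$. Every intermediate expansion then has $\norm{X_\ell}_p\leq\norm{O}_p\leq M_0^{1/p-1/2}\norm{O}_2$, where $M_0=\operatorname{poly}(n)$ is the input sparsity. \cref{lem:coefficient-pruning} bounds each one-step error by $\tau^{1-p/2}B_p^{p/2}$. Choosing $\tau$ as in \cref{eq:classical-coefficient-threshold} with budget $\varepsilon$ then keeps at most $(2LB_p/\varepsilon)^{a}=(n/\varepsilon)^{\mathcal O(1+\delta/\gamma)}$ coefficients after every step.

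The main difference from the continuous-time analysis is the branching cost. Here each gate acts on only $k=\mathcal O(1)$ qubits, so a single string is mapped to at most $4^k=\mathcal O(1)$ strings by one noisy gate. No weight cutoff, and hence no weighted Pauli-$2$ bound, is needed. Each step costs $\mathcal O(4^k N\cdot n)$ operations, including dictionary merging with $n$-bit keys and evaluation of the $k$-local conjugation, which is a fixed $4^k\times4^k$ transfer matrix computed once per gate. The total runtime is therefore $L\cdot n\cdot(n/\varepsilon)^{\mathcal O(1+\delta/\gamma)}$, which is again of the form $(n/\varepsilon)^{\mathcal O(1+\delta/\gamma)}$ because $L$ and $M_0$ are polynomial.

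The step most in need of care is confirming that \cref{lem:gate-pauli-p-contractivity} really gives contractivity on the whole operator space. Its proof splits each gate into the part acting trivially on the gate support, which is fixed, and the traceless sector, which is contracted by $e^{-\gamma}$ before the unitary. We must check that these sectors are invariant under $\mathcal V_j$ tensored with the identity elsewhere, and that the $\ell_p$ norm splits over them. Both hold because unital unitary conjugation preserves the local identity component and the coefficient sets are disjoint. We also need to check that truncation cannot raise $\norm{\cdot}_p$, so that the bound $\norm{X_\ell}_p\leq B_p$ persists through the recursion; this is immediate because truncation only removes coefficients. Finally, the RMS bound on expectation values follows from the Pauli-$2$ guarantee, as in the proof of \cref{thm:main-classical-simulation}.
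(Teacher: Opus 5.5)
Your proposal is correct and follows the paper's proof essentially step for step. You use coefficient-only truncation after each complete noisy gate, with no weight cutoff and at most $4^k$ branches per string, and you take Pauli-$p_\star$ contractivity from the preceding lemma together with H\"older's inequality $\|O\|_{p}\le M_0^{1/p-1/2}\|O\|_2$. You then bound each step by the pruning estimate and telescope using Pauli-$2$ contractivity. The only difference is cosmetic. The paper uses a single polynomial $q$ that bounds both the gate count and the input sparsity, and sets $\tau=q^{-1/2}(\varepsilon/q)^{(a+2)/2}$. This gives per-step error $\varepsilon/q$ and at most $q(q/\varepsilon)^a$ retained terms, whereas you keep $L$ and $B_p$ separate.
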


\begin{proof}
Take $p=p_\star$ and $a$ from Lemma~\ref{lem:gate-pauli-p-contractivity}, so that
\[
 \frac1p-\frac12=\frac1a,
 \qquad
 \frac{2p}{2-p}=a.
\]
Let $q=q(n)\geq2$ be a polynomial upper bound on both the number of gates and the number of input Pauli terms. Then
\[
 \|O\|_p
 \leq q^{1/p-1/2}\|O\|_2
 \leq q^{1/a}.
\]

Enumerate the complete noisy gates in Heisenberg propagation order. Starting from $\widehat O_0=O$, apply each update, merge equal Pauli strings, and discard coefficients below
\[
 \tau:=q^{-1/2}
       \left(\frac{\varepsilon}{q}\right)^{(a+2)/2}.
\]
Explicitly,
\[
 X_j:=\Gamma_j(\widehat O_{j-1}),
 \qquad
 \widehat O_j:=
 \sum_{P:|(X_j)_P|\geq\tau}(X_j)_P P.
\]
No Pauli-weight cutoff is imposed.

By Lemma~\ref{lem:gate-pauli-p-contractivity}, and because coefficient deletion cannot increase a Pauli-$p$ norm,
\[
 \|X_j\|_p,\ \|\widehat O_j\|_p\leq q^{1/a}.
\]
The squared error of one truncation consequently satisfies
\[
 \|X_j-\widehat O_j\|_2^2
 \leq\tau^{2-p}\|X_j\|_p^p
 \leq\tau^{2-p}q^{p/a}
 =\left(\frac{\varepsilon}{q}\right)^2.
\]

Let $O_0=O$ and $O_j=\Gamma_j(O_{j-1})$ denote the untruncated iterates. Pauli-$2$ contractivity gives
\[
 \|O_j-\widehat O_j\|_2
 \leq
 \|O_{j-1}-\widehat O_{j-1}\|_2
 +\|X_j-\widehat O_j\|_2
 \leq
 \|O_{j-1}-\widehat O_{j-1}\|_2+\frac{\varepsilon}{q}.
\]
There are at most $q$ updates, so the final truncated observable satisfies
\[
 \|\mathcal C^*(O)-\widetilde O\|_2\leq\varepsilon.
\]

It remains to bound the cost. Every retained coefficient has magnitude at least $\tau$, so each truncated iterate contains at most
\[
 \tau^{-p}q^{p/a}
 =q\left(\frac{q}{\varepsilon}\right)^a
\]
Pauli terms. This bound also covers the initial expansion. A gate on at most $k$ qubits creates at most $4^k$ candidates per incoming term, and diagonal noise introduces no additional branching.

Using length-$n$ Pauli keys, at most $q$ updates therefore require
\[
 \widetilde{\mathcal O}_k\!\left[
 n q^2\left(\frac{q}{\varepsilon}\right)^a
 \right]
\]
arithmetic operations, including initialization, construction of the local transfer matrices, coefficient merging, and truncation. The tilde suppresses dictionary logarithms. Since $q=\operatorname{poly}(n)$ and $a=\mathcal O_k(1+\delta/\gamma)$, this is bounded by
\[
 \left(\frac{n}{\varepsilon}\right)^{
 \mathcal O(1+\delta/\gamma)},
\]
as claimed.
\end{proof}

While we stated our result for gate-based noise, we remark that uniform local depolarizing noise after every circuit layer is included as a special case, as it can be recovered by inserting noisy identity gates on idle qubits.

\suppsection{Operational guarantees for normalized Hilbert--Schmidt error}
\label{supp:operational}

For an observable $O$ and an approximation $\widetilde O$, write
\begin{align}
    \Delta:=O-\widetilde O \coloneqq \sum_{P\in \cP_n} \Delta_P P.
\end{align}
Recall that
\begin{align}
    \norm{\Delta}_2^2
    =
    \frac{1}{2^n}\Tr(\Delta^2)
    =
    \sum_{P\in\cP_n}\Delta_P^2,
\end{align}
where we assumed $\Delta$ to be Hermitian.

A normalized Hilbert--Schmidt bound does not imply a uniform bound on $\Tr(\rho\Delta)$ over all states $\rho$. It does, however, directly control infinite-temperature correlation functions, randomized input states, and expectation values averaged over Pauli-invariant random-Hamiltonian ensembles. We will detail these guarantees in the remainder of this section.

\subsection{Correlation functions, out-of-time-order correlators and the DQC1 model of computation}
\label{supp:operational-correlations}

We first derive deterministic error bounds for correlation functions, out-of-time-order correlators, and DQC1 output expectations. These bounds require no averaging over input states or dynamics.

\begin{lemma}[Normalized correlation functions]
\label{lem:operational-correlation}
For every pair of operators $A$ and $\Delta$,
\begin{equation}
    \left|
        \frac{1}{2^n}\Tr(A\Delta)
    \right|
    \leq
    \norm{A}_2\norm{\Delta}_2.
    \label{eq:operational-correlation}
\end{equation}
Consequently, if
\begin{align}
    \norm{O-\widetilde O}_2\leq\varepsilon,
\end{align}
then
\begin{equation}
    \left|
        \frac{1}{2^n}\Tr(A O)
        -
        \frac{1}{2^n}\Tr(A\widetilde O)
    \right|
    \leq
    \varepsilon\norm{A}_2.
    \label{eq:operational-correlation-approximation}
\end{equation}
\end{lemma}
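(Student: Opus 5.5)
The plan is to read $2^{-n}\Tr(A^\dagger\Delta)$ as the normalized Hilbert--Schmidt inner product $\langle A,\Delta\rangle$ of \cref{supp:pauli-representation} and apply Cauchy--Schwarz to it. The statement has $\Tr(A\Delta)$ without a dagger, so the first step is to remove that mismatch by setting $B:=A^\dagger$. Then $2^{-n}\Tr(A\Delta)=\langle B,\Delta\rangle$, and $\norm{B}_2=\norm{A}_2$, since $\Tr(AA^\dagger)=\Tr(A^\dagger A)$.

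Next I expand both operators in the Pauli basis, $B=\sum_P b_PP$ and $\Delta=\sum_P\Delta_PP$. Orthonormality of $\cP_n$ gives $\langle B,\Delta\rangle=\sum_P\overline{b_P}\,\Delta_P$. Cauchy--Schwarz on these coefficient vectors, followed by Parseval \cref{eq:normalized-HS-norm}, yields
\[
\abs{\langle B,\Delta\rangle}\leq\Bigl(\sum_P\abs{b_P}^2\Bigr)^{1/2}\Bigl(\sum_P\abs{\Delta_P}^2\Bigr)^{1/2}=\norm{A}_2\norm{\Delta}_2 .
\]
This proves \cref{eq:operational-correlation}.

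For the consequence \cref{eq:operational-correlation-approximation}, I take $\Delta=O-\widetilde O$. Linearity of the trace gives $2^{-n}\Tr(AO)-2^{-n}\Tr(A\widetilde O)=2^{-n}\Tr(A\Delta)$. Applying the first part together with the hypothesis $\norm{\Delta}_2\leq\varepsilon$ then gives the bound $\varepsilon\norm{A}_2$.

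The argument has no real obstacle. The only point needing care is the adjoint in the inner product: it is handled by passing to $A^\dagger$ and checking that its normalized Hilbert--Schmidt norm equals that of $A$.
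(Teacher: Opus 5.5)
Your proof is correct and follows the paper's argument, which is simply the Cauchy--Schwarz inequality for the normalized Hilbert--Schmidt inner product. Passing to $B=A^\dagger$ with $\norm{A^\dagger}_2=\norm{A}_2$, and going through the Pauli coefficients, only spells out what the paper leaves implicit: the missing adjoint in $\Tr(A\Delta)$ does not affect the bound.
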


\begin{proof}
This is the Cauchy--Schwarz inequality for the normalized Hilbert--Schmidt inner product.
\end{proof}

In particular, if $O=O(t)$ is a Heisenberg-evolved observable, \cref{eq:operational-correlation-approximation} controls the infinite-temperature dynamical correlation function
\begin{align}
    \frac{1}{2^n}\Tr\!\left(A^\dagger O(t)\right).
\end{align}

For a unitary operator $V$, define
\begin{equation}
    F_V(X)
    :=
    \frac{1}{2^n}
    \Tr\!\left(
        X^\dagger V^\dagger X V
    \right).
    \label{eq:operational-OTOC-definition}
\end{equation}
When $X$ and $V$ are Hermitian unitaries, this is the standard infinite-temperature out-of-time-order correlator.

\begin{lemma}[Out-of-time-order correlators, Eq.\ 63 in Ref.\ \cite{dowling2026classical}]
\label{lem:operational-OTOC}
Let $V$ be unitary. If
\begin{align}
    \norm{O-\widetilde O}_2\leq\varepsilon,
\end{align}
then
\begin{equation}
    \left|
        F_V(O)-F_V(\widetilde O)
    \right|
    \leq
    \varepsilon
    \left(
        \norm{O}_2+\norm{\widetilde O}_2
    \right).
    \label{eq:operational-OTOC-error}
\end{equation}
In particular,
\begin{equation}
    \left|
        F_V(O)-F_V(\widetilde O)
    \right|
    \leq
    2\varepsilon\norm{O}_2+\varepsilon^2.
    \label{eq:operational-OTOC-error-simple}
\end{equation}
\end{lemma}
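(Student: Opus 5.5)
The argument is an add-and-subtract decomposition followed by Cauchy--Schwarz in the normalized Hilbert--Schmidt inner product, using unitarity of $V$. Write $\Delta:=O-\widetilde O$. Since $F_V(X)=2^{-n}\Tr(X^\dagger V^\dagger XV)$ is sesquilinear-quadratic in $X$, introduce the bilinear form
\[
    G_V(X,Y):=\frac{1}{2^n}\Tr\!\left(X^\dagger V^\dagger Y V\right),
\]
so that $F_V(X)=G_V(X,X)$. Then telescope the difference as
\[
    F_V(O)-F_V(\widetilde O)=G_V(\Delta,O)+G_V(\widetilde O,\Delta),
\]
which follows by writing $G_V(O,O)-G_V(\widetilde O,O)+G_V(\widetilde O,O)-G_V(\widetilde O,\widetilde O)$ and using linearity in each slot.

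Each term is bounded with \cref{lem:operational-correlation}. For the first term, $|G_V(\Delta,O)|=|\langle \Delta, V^\dagger O V\rangle|\leq\norm{\Delta}_2\norm{V^\dagger OV}_2$. Unitary conjugation preserves the normalized Hilbert--Schmidt norm, since $\Tr((V^\dagger OV)^\dagger V^\dagger OV)=\Tr(O^\dagger O)$. This gives $|G_V(\Delta,O)|\leq\varepsilon\norm{O}_2$. The same argument gives $|G_V(\widetilde O,\Delta)|\leq\norm{\widetilde O}_2\norm{V^\dagger\Delta V}_2\leq\varepsilon\norm{\widetilde O}_2$. Adding the two bounds yields \cref{eq:operational-OTOC-error}.

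For \cref{eq:operational-OTOC-error-simple}, apply the triangle inequality, $\norm{\widetilde O}_2\leq\norm{O}_2+\norm{\Delta}_2\leq\norm{O}_2+\varepsilon$, and substitute into the previous bound.

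No step is a real obstacle. The only point needing care is to place the conjugation by $V$ correctly in the inner product, so that unitary invariance of $\norm{\cdot}_2$ applies in both terms. Hermiticity of $O$ is not needed.
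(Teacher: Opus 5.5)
Your proposal is correct and follows the paper's proof: the same add-and-subtract split $F_V(O)-F_V(\widetilde O)=2^{-n}\Tr(\Delta^\dagger V^\dagger O V)+2^{-n}\Tr(\widetilde O^\dagger V^\dagger\Delta V)$, then Cauchy--Schwarz with unitary invariance of $\norm{\cdot}_2$, then $\norm{\widetilde O}_2\leq\norm{O}_2+\varepsilon$. One minor point of wording: $G_V$ is conjugate-linear in its first slot, not bilinear, but since only real differences enter the decomposition, this does not affect the argument.
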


\begin{proof}
Writing $\Delta=O-\widetilde O$, we have
\begin{align}
    F_V(O)-F_V(\widetilde O)
    ={}&
    \frac{1}{2^n}
    \Tr\!\left(
        \Delta^\dagger V^\dagger O V
    \right)
    \nonumber\\
    &+
    \frac{1}{2^n}
    \Tr\!\left(
        \widetilde O^\dagger V^\dagger\Delta V
    \right).
\end{align}
Cauchy--Schwarz and unitary invariance of the normalized Hilbert--Schmidt norm give
\begin{align}
    \left|
        F_V(O)-F_V(\widetilde O)
    \right|
    \leq
    \norm{\Delta}_2
    \left(
        \norm{O}_2+\norm{\widetilde O}_2
    \right).
\end{align}
The second claim follows from
\begin{align}
    \norm{\widetilde O}_2
    \leq
    \norm{O}_2+\varepsilon.
\end{align}
\end{proof}

The same correlation bound controls output expectations in the DQC1 (deterministic quantum computation with one clean qubit) model~\cite{knill1998power}. On $n$ qubits, the input consists of one pure qubit and $n-1$ maximally mixed qubits,
\begin{equation}
    \rho_{\mathrm{in}}
    = |0\rangle\langle0|\otimes\frac{I}{2^{n-1}}
    = \frac{I+Z_1}{2^n},
\end{equation}
where $Z_1=Z\otimes I^{\otimes(n-1)}$ is a full-system Pauli operator. For a circuit unitary $W$, measuring $Z_1$ at the output gives the expectation
\begin{equation}
    m=\Tr(\rho_{\mathrm{in}}O)
    =\frac{1}{2^n}\Tr(Z_1O),
    \qquad O=W^\dagger Z_1W,
    \label{eq:operational-DQC1-output}
\end{equation}
where the second equality uses $\Tr(O)=0$. Given $\norm{O-\widetilde O}_2\leq\varepsilon$, define $\widetilde m:=2^{-n}\Tr(Z_1\widetilde O)$. Since $\norm{Z_1}_2=1$, \cref{eq:operational-correlation-approximation} gives
\begin{equation}
    |m-\widetilde m|\leq\varepsilon.
    \label{eq:operational-DQC1-error}
\end{equation}

\subsection{Random input states}
\label{supp:operational-random-inputs}

We follow the low-average-ensemble framework introduced in \cite{schuster2024polynomial} and subsequently used in \cite{angrisani2025simulating}. This framework requires only that the average input state be sufficiently close to the maximally mixed state in operator norm; no second-moment or state-design assumption is needed.

\begin{definition}[Low-average ensemble]
\label{def:low-average-ensemble}
Let $\mathcal E$ be an ensemble of $n$-qubit density operators, and define its average state by
\begin{equation}
    \overline\rho_{\mathcal E}
    :=
    \mathbb E_{\rho\sim\mathcal E}[\rho].
    \label{eq:average-input-state}
\end{equation}
We say that $\mathcal E$ is a low-average ensemble with parameter $c\geq1$ if
\begin{equation}
    \norm{\overline\rho_{\mathcal E}}_\infty
    \leq
    \frac{c}{2^n},
    \qquad
    D=2^n.
    \label{eq:low-average-condition}
\end{equation}
\end{definition}

The case $c=1$ is equivalent to
\begin{align}
    \overline\rho_{\mathcal E}
    =
    \frac{I}{2^n}.
\end{align}
It includes, for example, a uniformly sampled state from any orthonormal basis and, more generally, any state $1$-design.

\begin{lemma}[Expectation values on low-average input ensembles, Lemma B1 in Ref.\ \cite{schuster2024polynomial}]
\label{lem:low-average-expectation-error}
Let $\mathcal E$ be a low-average ensemble with parameter $c$, and let $O$ and $\widetilde O$ be Hermitian operators. Then
\begin{equation}
    \mathbb E_{\rho\sim\mathcal E}
    \left[
        \left|
            \Tr\!\left[
                \rho\left(O-\widetilde O\right)
            \right]
        \right|^2
    \right]
    \leq
    c\,\norm{O-\widetilde O}_2^2.
    \label{eq:low-average-expectation-error}
\end{equation}
Equivalently,
\begin{equation}
    \left(
        \mathbb E_{\rho\sim\mathcal E}
        \left[
            \left|
                \Tr\!\left[
                    \rho\left(O-\widetilde O\right)
                \right]
            \right|^2
        \right]
    \right)^{1/2}
    \leq
    \sqrt{c}\,
    \norm{O-\widetilde O}_2.
    \label{eq:low-average-rms-error}
\end{equation}
\end{lemma}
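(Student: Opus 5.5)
The plan is to prove the inequality first for each fixed state and then average, using only Hermiticity of $\Delta:=O-\widetilde O$ and the operator-norm bound on $\overline\rho_{\mathcal E}$. For a fixed density operator $\rho$, I will write $\Tr(\rho\Delta)=\Tr\bigl(\rho^{1/2}\cdot\rho^{1/2}\Delta\bigr)$ and apply Cauchy--Schwarz for the Hilbert--Schmidt inner product. This gives
\begin{equation}
    \left|\Tr(\rho\Delta)\right|^2
    \leq
    \Tr(\rho)\,\Tr\!\left(\Delta\rho\Delta\right)
    =
    \Tr\!\left(\rho\,\Delta^2\right),
\end{equation}
where I use $\Tr\rho=1$, cyclicity of the trace, and $\Delta^\dagger=\Delta$. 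This is the same step already used in the proof of \cref{thm:main-classical-simulation} for the special case $c=1$.

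Next I take the expectation over $\rho\sim\mathcal E$. The right-hand side is linear in $\rho$, so
\begin{equation}
    \mathbb E_{\rho\sim\mathcal E}\left|\Tr(\rho\Delta)\right|^2
    \leq
    \Tr\!\left(\overline\rho_{\mathcal E}\,\Delta^2\right).
\end{equation}
Since $\Delta^2\succeq 0$ and $\overline\rho_{\mathcal E}\succeq0$, I will bound $\Tr(\overline\rho_{\mathcal E}\Delta^2)\leq\norm{\overline\rho_{\mathcal E}}_\infty\Tr(\Delta^2)$. One way to see this is to note that $\Tr(\overline\rho_{\mathcal E}\Delta^2)=\Tr(\Delta\overline\rho_{\mathcal E}\Delta)$ and that $\overline\rho_{\mathcal E}\preceq\norm{\overline\rho_{\mathcal E}}_\infty I$, which is preserved under the congruence $X\mapsto\Delta X\Delta$. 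The low-average condition \cref{eq:low-average-condition} then gives the upper bound $c\,2^{-n}\Tr(\Delta^2)=c\norm{\Delta}_2^2$, which is \cref{eq:low-average-expectation-error}. Taking square roots yields \cref{eq:low-average-rms-error}.

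There is no real obstacle in this argument. The only points needing care are that Hermiticity of $\Delta$ is what turns $\Tr(\Delta\rho\Delta)$ into $\Tr(\rho\Delta^2)$, and that positivity of $\Delta^2$ is what justifies the operator-norm/trace (Hölder-type) inequality. If the ensemble is continuous, I would also note that exchanging expectation and trace is fine because all quantities are bounded in finite dimension.
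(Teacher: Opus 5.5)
Your proof is correct and follows essentially the same route as the paper, which cites Ref.~\cite{schuster2024polynomial} for this lemma and writes out only the $c=1$ case, in the proof of \cref{thm:main-classical-simulation}: Cauchy--Schwarz gives $\abs{\Tr(\rho X)}^2\leq\Tr(\rho X^2)$, and one then averages over the ensemble. Your extra step $\Tr(\overline\rho_{\mathcal E}\Delta^2)\leq\norm{\overline\rho_{\mathcal E}}_\infty\Tr(\Delta^2)$, with $\norm{\cdot}_\infty$ read as the operator norm rather than a Pauli-coefficient norm, is exactly what the general-$c$ statement needs.
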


Consequently, any normalized Hilbert--Schmidt approximation
\begin{equation}
    \norm{O-\widetilde O}_2
    \leq
    \varepsilon
    \label{eq:random-input-HS-assumption}
\end{equation}
implies the root-mean-square operational guarantee
\begin{equation}
    \left(
        \mathbb E_{\rho\sim\mathcal E}
        \left[
            \left|
                \Tr(\rho O)
                -
                \Tr(\rho\widetilde O)
            \right|^2
        \right]
    \right)^{1/2}
    \leq
    \sqrt{c}\,\varepsilon.
    \label{eq:random-input-operational-guarantee}
\end{equation}
In particular, for an ensemble whose average state is maximally mixed, $c=1$, and the root-mean-square expectation-value error is at most $\varepsilon$.

By Markov's inequality, for every $0<\delta<1$,
\begin{equation}
    \Pr_{\rho\sim\mathcal E}
    \left[
        \left|
            \Tr(\rho O)
            -
            \Tr(\rho\widetilde O)
        \right|
        \geq
        \sqrt{\frac{c}{\delta}}\,\varepsilon
    \right]
    \leq
    \delta.
    \label{eq:random-input-high-probability-bound}
\end{equation}
Thus, with probability at least $1-\delta$ over the input state, the expectation-value error is at most $\sqrt{c/\delta}\,\varepsilon$.

\subsection{Pauli-invariant random-Hamiltonian ensembles}
\label{supp:pauli-invariant-Hamiltonians}

We now explain how a normalized Hilbert--Schmidt simulation guarantee can be converted into an expectation-value guarantee for an arbitrary \emph{fixed} input state when the Hamiltonian is random. We will use the following two ingredients:

\begin{enumerate}
    \item The exact evolution and the complete approximation procedure are    covariant under simultaneous conjugation by Pauli unitaries. This is a    deterministic statement, valid for every Hamiltonian realization    separately.

    \item The joint probability law of the local Hamiltonian terms is    invariant under simultaneous conjugation by Pauli unitaries.
\end{enumerate}

It is useful to keep these two statements separate. Covariance alone relates the error for one Hamiltonian realization to the error for a conjugated realization. Randomness is used only later, when the two realizations are identified in distribution.

For $R,P\in\cP_n$, define the Pauli conjugation character by
\begin{equation}
    RPR=\chi_R(P)P,
    \qquad
    \chi_R(P)\in\{-1,+1\}.
    \label{eq:Pauli-conjugation-character}
\end{equation}
The characters separate distinct Pauli strings: if $P\neq Q$, then there is a Pauli $R$ such that
\begin{equation}
    \chi_R(P)\chi_R(Q)=-1.
    \label{eq:Pauli-characters-separate}
\end{equation}
Indeed, choose a site on which $P$ and $Q$ differ. A single-site Pauli can be chosen to commute with one of the two local factors and anticommute with the other.

We will also need the following definitions.

\begin{definition}[Equality in distribution]
Let $A\sim\mathcal D_A$ and $B\sim\mathcal D_B$ be random objects taking values in the same measurable space. We write $A\stackrel{\mathrm d}{=}B$ if
\begin{equation}
    \mathbb E_{A\sim\mathcal D_A}\!\left[F(A)\right]
    =
    \mathbb E_{B\sim\mathcal D_B}\!\left[F(B)\right]
\end{equation}
for every bounded measurable function $F$. Equivalently, $\mathcal D_A=\mathcal D_B$.
\end{definition}

\begin{definition}[Pauli-invariant Hamiltonian ensemble]
\label{def:Pauli-invariant-Hamiltonian}
Let
\begin{equation}
    H=\sum_{a=1}^mH_a
    \label{eq:sampled-Hamiltonian-data}
\end{equation}
be a random Hamiltonian with a specified decomposition into local terms. The ensemble $(H_1, H_2,\dots, H_m)$ is Pauli invariant if for all $R\in \cP_n$ 
\begin{equation}
    (H_1, H_2,\dots, H_m)\stackrel{\mathrm d}{=} (RH_1R, RH_2R,\dots, RH_mR).
    \label{eq:Pauli-invariant-Hamiltonian}
\end{equation}
\end{definition}

A common sufficient condition is easy to check. Suppose
\begin{equation}
    H=\sum_{a=1}^m J_aP_a,
    \qquad
    P_a\in\cP_n.
    \label{eq:random-Pauli-Hamiltonian}
\end{equation}
Conjugation by $R$ sends the coupling vector to
\begin{equation}
    (J_1,\ldots,J_m)
    \longmapsto
    \bigl(\chi_R(P_1)J_1,\ldots,\chi_R(P_m)J_m\bigr).
    \label{eq:random-coupling-sign-action}
\end{equation}
Thus if the coupling $J_a$ are sampled independently from a sign-symmetric distribution, such as the uniform distribution over $[-J,J]$, then the ensemble is Pauli invariant.

\subsection{Deterministic covariance of the exact dynamics}
\label{supp:operational-exact-covariance}

Fix one realization $H=\sum_{a=1}^mH_a$ and retain the same decomposition into local terms as in \cref{eq:full-lindbladian}. Define
\begin{equation}
    \mathcal H_{a,H}(A):=i[H_a,A],
    \qquad
    \mathcal L_{a,H}:=\mathcal H_{a,H}+\mathcal D_a,
    \qquad
    \mathcal L_H:=\sum_{a=1}^m\mathcal L_{a,H},
    \label{eq:random-Hamiltonian-block-generator}
\end{equation}
and let
\begin{equation}
    \mathcal A_{H,s}:=\e^{s\mathcal L_H}.
    \label{eq:random-Hamiltonian-exact-map}
\end{equation}
We assume that each $\mathcal D_a$ is Pauli covariant:
\begin{equation}
    \mathcal D_a(RAR)=R\mathcal D_a(A)R,
    \qquad
    a\in [m],
    \label{eq:Pauli-covariant-block-dissipator}
\end{equation}
for every $R\in\cP_n$. Each Pauli-diagonal dissipator $\mathcal D_a$ used throughout this work has this property. 

\begin{lemma}[Covariance of the exact dynamics]
\label{lem:exact-Pauli-covariance}
For every realization $H$, Pauli $R$, time $s$, and operator $A$,
\begin{equation}
    \mathcal A_{RHR,s}(RAR)=R\mathcal A_{H,s}(A)R.
    \label{eq:exact-Pauli-covariance}
\end{equation}
\end{lemma}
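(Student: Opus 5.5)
The plan is to prove covariance at the level of generators and then exponentiate. Write $\mathcal R(A):=RAR$ for the Pauli conjugation superoperator. Since $R=R^\dagger=R^{-1}$, $\mathcal R$ is linear and an involution. The target identity \cref{eq:exact-Pauli-covariance} is equivalent to the operator identity
\begin{equation*}
    \e^{s\mathcal L_{RHR}}\circ\mathcal R=\mathcal R\circ\e^{s\mathcal L_H}.
\end{equation*}
It therefore suffices to show the intertwining relation $\mathcal L_{RHR}\circ\mathcal R=\mathcal R\circ\mathcal L_H$. Here $\mathcal L_{RHR}$ is built from the conjugated local terms $RH_aR$ with the same dissipators $\mathcal D_a$, following the decomposition \cref{eq:random-Hamiltonian-block-generator}.

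First I check the coherent part block by block. For each $a$ and every operator $A$,
\begin{equation*}
    \mathcal H_{a,RHR}(RAR)=i[RH_aR,RAR]=iR[H_a,A]R=\mathcal R(\mathcal H_{a,H}(A)),
\end{equation*}
using $R^2=I$. For the dissipative part, \cref{eq:Pauli-covariant-block-dissipator} gives $\mathcal D_a\circ\mathcal R=\mathcal R\circ\mathcal D_a$ directly. This is immediate in the Pauli-diagonal setting, because $RPR=\chi_R(P)P$ and the scalar $\chi_R(P)$ commutes with the diagonal rate $\lambda_a(P)$. Summing over $a$ yields $\mathcal L_{RHR}\mathcal R=\mathcal R\mathcal L_H$, equivalently $\mathcal L_{RHR}=\mathcal R\mathcal L_H\mathcal R^{-1}$.

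Next I exponentiate. By induction, $\mathcal L_{RHR}^j=\mathcal R\mathcal L_H^j\mathcal R^{-1}$ for every $j\geq0$. Summing the convergent power series in the finite-dimensional operator space then gives $\e^{s\mathcal L_{RHR}}=\mathcal R\,\e^{s\mathcal L_H}\mathcal R^{-1}$ for all real $s$. Applying both sides to $RAR=\mathcal R(A)$ gives the claim.

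There is no real obstacle here. The only step needing care is bookkeeping: $\mathcal L_{RHR}$ must be formed with the same local decomposition and the same dissipators, so that the generator-level identity holds term by term. The same term-by-term argument will later extend to the product formulas $S_h$ and to the Pauli-propagation updates, since each factor is an exponential of a single intertwined generator.
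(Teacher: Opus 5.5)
Your proposal is correct and follows essentially the same route as the paper: establish the generator-level intertwining $\mathcal L_{RHR}(RAR)=R\mathcal L_H(A)R$ term by term (using $R^2=I$ for the commutators and Pauli covariance of each $\mathcal D_a$), extend it to all powers $\mathcal L^k$, and sum the exponential series. Your added remark that Pauli-diagonal dissipators commute with conjugation because $RPR=\chi_R(P)P$ correctly justifies the covariance property that the paper's proof simply invokes.
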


\begin{proof}
The local generators satisfy the pointwise intertwining relation
\begin{align}
    \mathcal L_{RHR}(RAR)
    &=
    \sum_{a=1}^m
    \left(i[RH_aR,RAR]+\mathcal D_a(RAR)\right)
    \nonumber\\
    &=
    R\left[
        \sum_{a\in\mathcal A}
        \left(i[H_a,A]+\mathcal D_a(A)\right)
    \right]R
    \nonumber\\
    &=R\mathcal L_H(A)R.
    \label{eq:exact-generator-intertwining}
\end{align}
Applying this identity repeatedly gives
\begin{equation}
    \mathcal L_{RHR}^{k}(RAR)=R\mathcal L_H^{k}(A)R
\end{equation}
for every $k\geq0$. Substituting into the exponential series proves \cref{eq:exact-Pauli-covariance}.
\end{proof}

\subsection{Deterministic covariance of the Trotterized and truncated dynamics}
\label{supp:operational-approximate-covariance}

We next verify the same covariance for the complete approximation used by the classical simulator. The elementary objects are the split local maps associated with the local terms $\mathcal L_a$, and truncation is applied immediately after every such map.

Fix the deterministic ordered list
\begin{equation}
    \alpha_1,\ldots,\alpha_M
    \label{eq:random-H-ordered-local-terms}
\end{equation}
of local labels encountered during the entire Trotter circuit. The list includes all product-formula steps, so the same label may occur many times. The index $j$ counts local operations in the circuit, not physical time intervals. Let $\delta_j$ be the step parameter used at operation $j$, and define the dissipative map associated with that label by
\begin{equation}
    \mathcal N_j:=\e^{\delta_j\mathcal D_{\alpha_j}}.
    \label{eq:random-H-local-dissipative-map}
\end{equation}
Define
\begin{align}
    \mathcal U_{j,H}(A)
    &:=
    \e^{i\delta_j H_{\alpha_j}}
    A
    \e^{-i\delta_j H_{\alpha_j}},
    \label{eq:random-H-local-unitary-map}
    \\
    \mathcal G_{j,H}
    &:=
    \mathcal N_j\circ\mathcal U_{j,H}.
    \label{eq:random-H-noisy-local-rotation}
\end{align}
Thus $\mathcal G_{j,H}$ is precisely the split map of the local Lindbladian term $\mathcal L_{\alpha_j}$: it first applies the coherent rotation and then the associated dissipative evolution.

Both parts of this elementary map are covariant under simultaneous Pauli conjugation. For every $R\in\cP_n$,
\begin{equation}
    \mathcal U_{j,RHR}(RAR)=R\mathcal U_{j,H}(A)R,
    \label{eq:random-H-local-unitary-covariance}
\end{equation}
because $\e^{i\delta_j RH_{\alpha_j}R} =R\e^{i\delta_jH_{\alpha_j}}R$. The Pauli covariance of the dissipator in \cref{eq:Pauli-covariant-block-dissipator} implies
\begin{equation}
    \mathcal N_j(RAR)=R\mathcal N_j(A)R.
    \label{eq:random-H-local-noise-covariance}
\end{equation}
Consequently,
\begin{equation}
    \mathcal G_{j,RHR}(RAR)=R\mathcal G_{j,H}(A)R.
    \label{eq:random-H-noisy-local-covariance}
\end{equation}
This identity is deterministic and holds for each local operation separately. It does not require the Hamiltonians used at different operations to be independent.

The truncation maps obey the same covariance. If
\begin{equation}
    A=\sum_{P\in\cP_n}a_PP,
\end{equation}
then
\begin{equation}
    RAR=\sum_{P\in\cP_n}\chi_R(P)a_PP.
    \label{eq:Pauli-conjugated-expansion}
\end{equation}
Pauli conjugation preserves both the weight of each Pauli string and the magnitude of its coefficient. Therefore
\begin{equation}
    \Pi_{\leq w}(RAR)=R\Pi_{\leq w}(A)R,
    \label{eq:weight-truncation-equivariance}
\end{equation}
\begin{equation}
    \mathcal T_\tau(RAR)=R\mathcal T_\tau(A)R,
    \label{eq:coefficient-truncation-equivariance}
\end{equation}
and
\begin{equation}
    \mathcal C_{w,\tau}(RAR)=R\mathcal C_{w,\tau}(A)R.
    \label{eq:joint-truncation-equivariance}
\end{equation}
Coefficient truncation is nonlinear, but it is odd:
\begin{equation}
    \mathcal T_\tau(-A)=-\mathcal T_\tau(A),
    \qquad
    \mathcal C_{w,\tau}(-A)=-\mathcal C_{w,\tau}(A).
    \label{eq:truncation-sign-equivariance}
\end{equation}
Indeed, changing $A$ to $-A$ changes every retained coefficient by a minus sign without changing which coefficients pass the threshold.

Let $\Pi_j$ denote the weight-and-coefficient truncation used after the $j$th noisy local rotation; the parameters may be fixed throughout, in which case $\Pi_j=\mathcal C_{w,\tau}$. For an arbitrary input $A$, define the untruncated Trotter iterates by
\begin{equation}
    O^{\mathrm{Tr}}_{0,H}(A):=A,
    \qquad
    O^{\mathrm{Tr}}_{j,H}(A)
    :=
    \mathcal G_{j,H}
    \bigl(O^{\mathrm{Tr}}_{j-1,H}(A)\bigr),
    \label{eq:random-H-untruncated-local-recursion}
\end{equation}
and the truncated iterates by
\begin{align}
    \widetilde O_{0,H}(A)
    &:=
    A,
    \nonumber\\
    X_{j,H}(A)
    &:=
    \mathcal G_{j,H}
    \bigl(\widetilde O_{j-1,H}(A)\bigr),
    \nonumber\\
    \widetilde O_{j,H}(A)
    &:=
    \Pi_j\bigl(X_{j,H}(A)\bigr),
    \qquad
    j=1,\ldots,M.
    \label{eq:random-H-truncated-local-recursion}
\end{align}
Thus the output of one noisy local rotation is first formed as $X_{j,H}$, then truncated, and only then passed to the next local rotation. Equivalently, the complete untruncated and truncated circuit maps are
\begin{align}
    \mathcal V_{H,t}
    &:=
    \mathcal G_{M,H}\circ\cdots\circ\mathcal G_{1,H},
    \label{eq:random-H-untruncated-Trotter-map}
    \\
    \widetilde{\mathcal V}_{H,t}
    &:=
    (\Pi_M\circ\mathcal G_{M,H})\circ\cdots\circ
    (\Pi_1\circ\mathcal G_{1,H}).
    \label{eq:random-H-truncated-Trotter-map}
\end{align}
In particular,
\begin{equation}
    O^{\mathrm{Tr}}_{M,H}(A)=\mathcal V_{H,t}(A),
    \qquad
    \widetilde O_{M,H}(A)=\widetilde{\mathcal V}_{H,t}(A).
    \label{eq:random-H-final-circuit-iterates}
\end{equation}

\begin{lemma}[Covariance of the complete approximation]
\label{lem:approximate-Pauli-covariance}
For every realization $H$, Pauli $R$, and operator $A$,
\begin{align}
    \mathcal V_{RHR,t}(RAR)
    &=
    R\mathcal V_{H,t}(A)R,
    \label{eq:untruncated-Trotter-Pauli-covariance}
    \\
    \widetilde{\mathcal V}_{RHR,t}(RAR)
    &=
    R\widetilde{\mathcal V}_{H,t}(A)R.
    \label{eq:approximate-Pauli-covariance}
\end{align}
Moreover, both circuit maps are odd in their input:
\begin{align}
    \mathcal V_{H,t}(-A)
    &=
    -\mathcal V_{H,t}(A),
    \nonumber\\
    \widetilde{\mathcal V}_{H,t}(-A)
    &=
    -\widetilde{\mathcal V}_{H,t}(A).
    \label{eq:approximate-sign-equivariance}
\end{align}
\end{lemma}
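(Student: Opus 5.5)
The plan is to prove both statements by induction on the number $M$ of local operations, using the per-operation identities already established. For the covariance claim the induction hypothesis concerns the iterates: for every $0\leq j\leq M$,
\begin{equation*}
    O^{\mathrm{Tr}}_{j,RHR}(RAR)=R\,O^{\mathrm{Tr}}_{j,H}(A)R,
    \qquad
    \widetilde O_{j,RHR}(RAR)=R\,\widetilde O_{j,H}(A)R.
\end{equation*}
The case $j=0$ is trivial, since both iterates equal the input. For the untruncated step I will apply $\mathcal G_{j,RHR}$ to $R\,O^{\mathrm{Tr}}_{j-1,H}(A)R$ and use \cref{eq:random-H-noisy-local-covariance} with $A$ replaced by $O^{\mathrm{Tr}}_{j-1,H}(A)$. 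For the truncated step I will first obtain $X_{j,RHR}(RAR)=R\,X_{j,H}(A)R$ in the same way, and then apply $\Pi_j$. Here \cref{eq:joint-truncation-equivariance} moves the conjugation through the truncation; if $\Pi_j$ is only a pure weight or pure coefficient cutoff, \cref{eq:weight-truncation-equivariance} or \cref{eq:coefficient-truncation-equivariance} serves instead. Setting $j=M$ and using \cref{eq:random-H-final-circuit-iterates} gives \cref{eq:untruncated-Trotter-Pauli-covariance,eq:approximate-Pauli-covariance}.

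Oddness is handled by a parallel induction. Each $\mathcal G_{j,H}$ is linear, being a composition of a unitary conjugation and an exponential of a linear generator, so $\mathcal G_{j,H}(-B)=-\mathcal G_{j,H}(B)$. Each $\Pi_j$ is odd by \cref{eq:truncation-sign-equivariance}. A composition of odd maps is odd, which yields \cref{eq:approximate-sign-equivariance} for both $\mathcal V_{H,t}$ and $\widetilde{\mathcal V}_{H,t}$.

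The one point needing care is that $\Pi_j$ is nonlinear. I therefore must not distribute it over sums. The induction must propagate the exact conjugated operator $R\,X_{j,H}(A)R$, rather than a decomposition of it, before truncating. I should also check that each equivariance identity is used with the same fixed threshold $\tau$ and cutoff $w$ for both realizations $H$ and $RHR$. This holds because the parameters are deterministic and do not depend on the Hamiltonian. Finally, \cref{eq:Pauli-conjugated-expansion} shows that conjugation changes only the signs of coefficients, so it preserves every magnitude and weight entering the thresholds.
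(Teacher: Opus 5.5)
Your proposal is correct and takes essentially the same approach as the paper: induction on $j$ through the untruncated and truncated recursions, using the per-operation covariance of $\mathcal G_{j,H}$ and the Pauli equivariance of the truncation maps, with oddness following from linearity of each $\mathcal G_{j,H}$ and oddness of each $\Pi_j$. Your two extra remarks are points the paper's short argument leaves implicit: that you must truncate the exact conjugated operator rather than distribute the nonlinear $\Pi_j$ over a sum, and that the cutoffs do not depend on the realization $H$.
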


\begin{proof}
Equation~\eqref{eq:random-H-noisy-local-covariance} proves covariance of each noisy local rotation, while \cref{eq:joint-truncation-equivariance} proves covariance of every truncation. Induction on $j$ in \cref{eq:random-H-untruncated-local-recursion,eq:random-H-truncated-local-recursion} therefore gives covariance of every intermediate iterate and, at $j=M$, \cref{eq:untruncated-Trotter-Pauli-covariance,eq:approximate-Pauli-covariance}. Each $\mathcal G_{j,H}$ is linear and every $\Pi_j$ is odd by \cref{eq:truncation-sign-equivariance}; the same induction proves oddness.
\end{proof}

All statements above hold for a fixed realization of the Hamiltonian and its local terms. In particular, they remain true when the same sampled local terms are reused throughout the Trotter circuit.

\subsection{The simulation error as a hybrid exact--approximate evolution}
\label{supp:operational-hybrid-error}

We now define the error operator while keeping separate the two approximations made by the simulator. Fix a Pauli observable $P_0$ and set
\begin{align}
    O^{\mathrm{ex}}_H(t)
    &:=
    \mathcal A_{H,t}(P_0),
    \nonumber\\
    O^{\mathrm{Tr}}_H(t)
    &:=
    \mathcal V_{H,t}(P_0),
    \nonumber\\
    \widetilde O_H(t)
    &:=
    \widetilde{\mathcal V}_{H,t}(P_0).
    \label{eq:exact-approximate-random-H-observables}
\end{align}
Here $O^{\mathrm{ex}}_H(t)$ is the exact continuous-time observable, $O^{\mathrm{Tr}}_H(t)$ is the output of the same local Trotter circuit without any truncation, and $\widetilde O_H(t)$ is the actual truncated output.

Accordingly, define
\begin{align}
    \Delta^{\mathrm{PF}}_H(t)
    &:=
    O^{\mathrm{ex}}_H(t)-O^{\mathrm{Tr}}_H(t),
    \label{eq:random-H-product-formula-error}
    \\
    \Delta^{\mathrm{cut}}_H(t)
    &:=
    O^{\mathrm{Tr}}_H(t)-\widetilde O_H(t),
    \label{eq:random-H-truncation-error}
    \\
    \Delta_H(t)
    &:=
    O^{\mathrm{ex}}_H(t)-\widetilde O_H(t)
    \nonumber\\
    &=
    \Delta^{\mathrm{PF}}_H(t)+\Delta^{\mathrm{cut}}_H(t).
    \label{eq:random-Hamiltonian-operator-error}
\end{align}
The first term is the product-formula error. The second is the cumulative error produced by the truncations. We now resolve the latter at exactly the granularity at which truncation is performed.

For $j=0,1,\ldots,M$, define the hybrid observable
\begin{equation}
    Z_H^{(j)}(t)
    :=
    \mathcal G_{M,H}\circ\cdots\circ\mathcal G_{j+1,H}
    \bigl(\widetilde O_{j,H}(P_0)\bigr),
    \label{eq:random-H-hybrid-observables}
\end{equation}
where the suffix is the identity when $j=M$. In $Z_H^{(j)}(t)$, the first $j$ noisy local rotations are followed by their prescribed truncations, while the remaining $M-j$ rotations are applied without truncation. Thus
\begin{equation}
    Z_H^{(0)}(t)=O^{\mathrm{Tr}}_H(t),
    \qquad
    Z_H^{(M)}(t)=\widetilde O_H(t).
    \label{eq:random-H-hybrid-endpoints}
\end{equation}
The sequence therefore interpolates one local truncation at a time between the untruncated Trotter circuit and the actual algorithm.

The operator removed immediately after the $j$th noisy local rotation is
\begin{equation}
    D_{j,H}(t)
    :=
    X_{j,H}(P_0)-\widetilde O_{j,H}(P_0)
    =
    (\mathrm{Id}-\Pi_j)\bigl(X_{j,H}(P_0)\bigr).
    \label{eq:random-H-discarded-local-operator}
\end{equation}
Its contribution to the final output is obtained by propagating it through all subsequent untruncated noisy local rotations:
\begin{align}
    E_{j,H}(t)
    &:=
    Z_H^{(j-1)}(t)-Z_H^{(j)}(t)
    \nonumber\\
    &=
    \mathcal G_{M,H}\circ\cdots\circ\mathcal G_{j+1,H}
    \bigl(D_{j,H}(t)\bigr).
    \label{eq:random-H-propagated-local-error}
\end{align}
Telescoping the hybrid sequence gives
\begin{equation}
    \Delta^{\mathrm{cut}}_H(t)=\sum_{j=1}^{M}E_{j,H}(t),
    \label{eq:random-H-truncation-error-telescope}
\end{equation}
and hence the total simulation error has the explicit decomposition
\begin{equation}
    \Delta_H(t)
    =
    \Delta^{\mathrm{PF}}_H(t)+\sum_{j=1}^{M}E_{j,H}(t).
    \label{eq:random-H-total-error-decomposition}
\end{equation}
This identity distinguishes the product-formula error from the individual truncation errors and also records where every discarded operator enters the final observable.

The entire construction is covariant before any randomness is invoked. For an arbitrary input $A$, let the iterates, hybrids, and error terms above be defined with $P_0$ replaced by $A$. Exact covariance, \cref{eq:untruncated-Trotter-Pauli-covariance,eq:approximate-Pauli-covariance}, and induction through the hybrid sequence give
\begin{equation}
    Z_{RHR}^{(j)}(t;RAR)=RZ_H^{(j)}(t;A)R.
    \label{eq:hybrid-Pauli-covariance}
\end{equation}
All these maps are odd in $A$: the exact and untruncated evolutions are linear, and the truncated evolution is odd by \cref{eq:approximate-sign-equivariance}. Set
\begin{equation}
    \sigma_R:=\chi_R(P_0),
    \qquad
    RP_0R=\sigma_RP_0.
\end{equation}
Using $P_0=R(\sigma_RP_0)R$ then yields
\begin{align}
    \Delta^{\mathrm{PF}}_{RHR}(t)
    &=
    \sigma_R R\Delta^{\mathrm{PF}}_H(t)R,
    \label{eq:twisted-product-formula-error-covariance}
    \\
    E_{j,RHR}(t)
    &=
    \sigma_R RE_{j,H}(t)R
    \qquad
    (j=1,\ldots,M).
    \label{eq:twisted-local-error-covariance}
\end{align}
Summing these identities in \cref{eq:random-H-total-error-decomposition} proves the covariance of the total error.

\begin{lemma}[Twisted covariance of the simulation error]
\label{lem:twisted-error-covariance}
For every realization $H$ and every Pauli $R$,
\begin{equation}
    \Delta_{RHR}(t)=\chi_R(P_0)R\Delta_H(t)R.
    \label{eq:twisted-error-covariance}
\end{equation}
\end{lemma}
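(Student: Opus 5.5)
The statement to prove is \cref{lem:twisted-error-covariance}. It is essentially a bookkeeping consequence of the deterministic covariance results already established, so I would argue term by term. I will use the decomposition \cref{eq:random-H-total-error-decomposition}, which writes $\Delta_H(t)$ as the product-formula error plus the propagated local truncation errors $E_{j,H}(t)$. Each piece is shown to satisfy the twisted identity, and then the pieces are summed. The key input relation is $P_0=R(\sigma_RP_0)R$ with $\sigma_R=\chi_R(P_0)\in\{\pm1\}$, which follows from $R^2=I$ and \cref{eq:Pauli-conjugation-character}.

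\textbf{Step 1: the product-formula term.} By \cref{lem:exact-Pauli-covariance} applied to $A=\sigma_RP_0$, and by linearity of $\mathcal A_{H,t}$,
\[
\mathcal A_{RHR,t}(P_0)=\mathcal A_{RHR,t}\bigl(R(\sigma_RP_0)R\bigr)=R\,\mathcal A_{H,t}(\sigma_RP_0)\,R=\sigma_R\,R\,O^{\mathrm{ex}}_H(t)\,R .
\]
The same computation with \cref{eq:untruncated-Trotter-Pauli-covariance} gives $O^{\mathrm{Tr}}_{RHR}(t)=\sigma_RRO^{\mathrm{Tr}}_H(t)R$. Subtracting the two identities yields \cref{eq:twisted-product-formula-error-covariance}.

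\textbf{Step 2: the truncation terms.} Here the map is nonlinear, so I would argue by induction on $j$ in the recursion \cref{eq:random-H-truncated-local-recursion}. The inductive claim is
\[
\widetilde O_{j,RHR}(P_0)=\sigma_RR\widetilde O_{j,H}(P_0)R,\qquad X_{j,RHR}(P_0)=\sigma_RRX_{j,H}(P_0)R .
\]
The base case $j=0$ holds because $\widetilde O_{0,RHR}(P_0)=P_0=\sigma_RRP_0R$. For the inductive step, apply \cref{eq:random-H-noisy-local-covariance} together with linearity of $\mathcal G_{j,H}$ to pass the factor $\sigma_RR(\cdot)R$ through the noisy rotation. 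Then pass it through $\Pi_j$ using \cref{eq:joint-truncation-equivariance} for the conjugation and oddness \cref{eq:truncation-sign-equivariance} for the sign $\sigma_R$; a sign $\pm1$ commutes with an odd map.

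Since $\mathrm{Id}-\Pi_j$ is also conjugation-covariant and odd, the discarded operator satisfies $D_{j,RHR}(t)=\sigma_RRD_{j,H}(t)R$. Propagating this through the linear, covariant suffix $\mathcal G_{M,RHR}\circ\cdots\circ\mathcal G_{j+1,RHR}$ using \cref{eq:random-H-noisy-local-covariance} gives \cref{eq:twisted-local-error-covariance}.

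\textbf{Step 3: summation.} Summing the identities from Steps 1 and 2 in \cref{eq:random-H-total-error-decomposition} gives \cref{eq:twisted-error-covariance}, since $R(\cdot)R$ is linear.

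The only delicate point is the nonlinearity of coefficient truncation. A sign cannot simply be factored out of a nonlinear map, which is why I would track the sign through the induction explicitly rather than appeal to linearity. That step relies on oddness and on the fact that the threshold test depends only on $|c_P|$. Everything else is immediate from the earlier lemmas.
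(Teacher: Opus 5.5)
Your proposal is correct and follows essentially the paper's argument: you obtain twisted covariance of $\Delta^{\mathrm{PF}}_H(t)$ and of each propagated truncation error $E_{j,H}(t)$ from exact and Trotter covariance, the conjugation-equivariance and oddness of the truncations, and the identity $P_0=R(\sigma_RP_0)R$, and then sum over the error decomposition. The only cosmetic difference is that the paper first proves plain covariance $Z^{(j)}_{RHR}(t;RAR)=RZ^{(j)}_H(t;A)R$ of the hybrid observables for arbitrary $A$ and inserts the sign afterwards via oddness, whereas you carry $\sigma_R$ explicitly through the truncated recursion.
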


Up to this point, no average over Hamiltonians has been taken. The product-formula error, every propagated truncation error, and their sum satisfy the same deterministic transformation rule. They may be strongly correlated, because they are all functions of the same sampled Hamiltonian and its local terms.

\subsection{What Pauli-invariant randomness does}
\label{supp:operational-error-orthogonality}

Expand the total error in the Pauli basis:
\begin{equation}
    \Delta_H(t)=\sum_{P\in\cP_n}\delta_P(H,t)P,
    \qquad
    \delta_P(H,t)
    :=
    \frac{1}{2^n}\Tr\!\left(P\Delta_H(t)\right).
    \label{eq:random-error-Pauli-expansion}
\end{equation}
The normalized Hilbert--Schmidt norm is the total squared Pauli-coefficient weight:
\begin{equation}
    \norm{\Delta_H(t)}_2^2
    =
    \sum_{P\in\cP_n}|\delta_P(H,t)|^2.
    \label{eq:random-error-HS-coefficients}
\end{equation}
Taking the coefficient of $P$ in \cref{eq:twisted-error-covariance} gives the deterministic transformation law
\begin{equation}
    \delta_P(RHR,t)
    =
    \chi_R(P_0)\chi_R(P)\delta_P(H,t).
    \label{eq:random-error-coefficient-transformation}
\end{equation}
For a fixed realization, this equation only relates two different Hamiltonians. It does not by itself imply any cancellation. We now use Pauli invariance of the ensemble to say that those two Hamiltonians occur with the same probability law.

For Pauli strings $P,Q$, define the second-moment matrix
\begin{equation}
    M_{P,Q}(t)
    :=
    \mathbb E_H
    \left[
        \overline{\delta_P(H,t)}\delta_Q(H,t)
    \right].
    \label{eq:random-error-second-moment-matrix}
\end{equation}
By joint Pauli invariance of the local terms, for every fixed $R\in\cP_n$,
\begin{align}
    M_{P,Q}(t)
    &=
    \mathbb E_H
    \left[
        \overline{\delta_P(RHR,t)}\delta_Q(RHR,t)
    \right]
    \nonumber\\
    &=
    \chi_R(P_0)^2\chi_R(P)\chi_R(Q)M_{P,Q}(t)
    \nonumber\\
    &=
    \chi_R(P)\chi_R(Q)M_{P,Q}(t).
    \label{eq:random-error-covariance-character}
\end{align}
The sign associated with the fixed initial observable appears twice and cancels. If $P\neq Q$, choose $R$ as in \cref{eq:Pauli-characters-separate}. Then $\chi_R(P)\chi_R(Q)=-1$, so \cref{eq:random-error-covariance-character} gives $M_{P,Q}(t)=-M_{P,Q}(t)$.

\begin{lemma}[Orthogonality of distinct Pauli coefficients]
\label{lem:random-error-Pauli-orthogonality}
If the Hamiltonian ensemble is Pauli invariant, then, for every $P\neq Q$,
\begin{equation}
    \mathbb E_H
    \left[
        \overline{\delta_P(H,t)}\delta_Q(H,t)
    \right]
    =0.
    \label{eq:random-error-Pauli-orthogonality}
\end{equation}
\end{lemma}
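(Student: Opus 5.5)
The argument is essentially the character computation already displayed just before the lemma; the plan is to make each ingredient explicit and check that the expectations are well defined.

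First, I will note that $M_{P,Q}(t)$ is finite. For each realization, $\norm{\Delta_H(t)}_2\leq \norm{O^{\mathrm{ex}}_H(t)}_2+\norm{\widetilde O_H(t)}_2$. The exact evolution is Pauli-$2$ contractive by \cref{thm:supp-regularity} at $\theta=0$. The truncated circuit is a composition of Pauli-$2$ contractions and coefficient deletions, so it is also bounded by $\norm{P_0}_2=1$. Hence $|\delta_P(H,t)|\leq 2$ for every $P$, and all second moments exist.

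Next, fix $R\in\cP_n$. I will take the coefficient of $P$ on both sides of the twisted covariance \cref{lem:twisted-error-covariance}. Since $R P R=\chi_R(P)P$ and the Pauli basis is orthonormal, this gives the deterministic rule \cref{eq:random-error-coefficient-transformation}. Here I need that $RHR$ denotes the realization with local terms $(RH_aR)_a$, and that the dissipators are unchanged by \cref{eq:Pauli-covariant-block-dissipator}. This matches the hypothesis of \cref{lem:twisted-error-covariance}.

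The key probabilistic step is to apply the definition of equality in distribution to the bounded measurable function
\[
F(H_1,\dots,H_m):=\overline{\delta_P(H,t)}\,\delta_Q(H,t).
\]
Pauli invariance then gives $\mathbb E_H[F(H)]=\mathbb E_H[F(RHR)]$. The one point that needs care is measurability of $F$, because the truncation maps are nonlinear and discontinuous. I expect this to be the only real obstacle, though a mild one. My plan is to observe that $\widetilde{\mathcal V}_{H,t}(P_0)$ is a finite composition of continuous maps in the couplings and of threshold maps $c\mapsto c\,\mathbf 1[|c|\geq\tau]$. These are Borel, so $F$ is Borel and bounded. If needed, I will state this as the standing measurability convention for the ensemble. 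Boundedness follows from the previous paragraph.

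Finally, I will substitute the transformation rule. The factor $\chi_R(P_0)^2=1$ cancels and $\chi_R(P),\chi_R(Q)$ are real, so $M_{P,Q}=\chi_R(P)\chi_R(Q)M_{P,Q}$. For $P\neq Q$, I will choose $R$ by \cref{eq:Pauli-characters-separate}: on a site where the local factors $P_x\neq Q_x$, pick a single-site Pauli commuting with one of them and anticommuting with the other. Then $M_{P,Q}=-M_{P,Q}=0$, which proves the lemma.
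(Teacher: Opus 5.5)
Your proposal is correct and follows the paper's proof. You take Pauli coefficients in the twisted covariance of the error, use Pauli invariance to equate $M_{P,Q}$ with its conjugated version, cancel $\chi_R(P_0)^2=1$, and choose a separating $R$ to get $M_{P,Q}=-M_{P,Q}$. Your checks that the coefficients are bounded by $2$ and Borel in the local terms, so that the equality-in-distribution definition applies, are details the paper leaves implicit, and they are sound. One small fix: for general local $H_a$, justify Pauli-$2$ contractivity directly (unitary conjugation is an isometry and the Pauli-diagonal rates are nonnegative) rather than via the regularity theorem, which assumes $H_a=J_aP_a$.
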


This is the precise consequence of randomness. It does not assert that the Pauli coefficients are independent, nor that the error is small for every sample. It says that distinct Pauli coefficients are orthogonal as random variables in $L_2$ of the Hamiltonian ensemble. All correlations generated by reusing the same Hamiltonian throughout the simulation are allowed; the only off-diagonal correlations ruled out are those between different Pauli labels.

\subsection{Expectation values on a fixed input state}
\label{supp:operational-random-H-guarantee}

The operational consequence can now be seen directly. Let $\rho$ be a fixed density operator, chosen independently of the Hamiltonian realization, and write
\begin{equation}
    \rho=\frac{1}{2^n}\sum_{P\in\cP_n}r_PP,
    \qquad
    r_P:=\Tr(\rho P).
    \label{eq:state-Pauli-expansion}
\end{equation}
Since every Pauli is unitary,
\begin{equation}
    |r_P|\leq1.
    \label{eq:state-Pauli-coefficient-bound}
\end{equation}
Using \cref{eq:random-error-Pauli-expansion}, the expectation-value error is
\begin{equation}
    \Tr\!\left(\rho\Delta_H(t)\right)
    =
    \sum_{P\in\cP_n}r_P\delta_P(H,t).
    \label{eq:expectation-error-Pauli-expansion}
\end{equation}
Without averaging, the square of this sum contains interference terms between different Pauli coefficients. Those terms are the reason that a normalized Hilbert--Schmidt bound does not, in general, control a fixed-state expectation value. After averaging over a Pauli-invariant Hamiltonian ensemble, however,
\begin{align}
    &\mathbb E_H
    \left|
        \Tr\!\left(\rho\Delta_H(t)\right)
    \right|^2
    \nonumber\\
    &\quad=
    \sum_{P,Q\in\cP_n}\overline{r_P}r_Q
    \mathbb E_H
    \left[
        \overline{\delta_P(H,t)}\delta_Q(H,t)
    \right]
    \nonumber\\
    &\quad=
    \sum_{P\in\cP_n}|r_P|^2
    \mathbb E_H|\delta_P(H,t)|^2
    \nonumber\\
    &\quad\leq
    \sum_{P\in\cP_n}
    \mathbb E_H|\delta_P(H,t)|^2
    \nonumber\\
    &\quad=
    \mathbb E_H\norm{\Delta_H(t)}_2^2.
    \label{eq:random-H-fixed-state-master-bound}
\end{align}
The second equality is exactly where Pauli-invariant randomness is used: it removes every $P\neq Q$ term. The remaining diagonal sum is the normalized Hilbert--Schmidt error.

\begin{theorem}[Fixed-state guarantee for Pauli-invariant dynamics]
\label{thm:random-H}
\label{thm:supp-operational}
Let $H=\sum_{a\in\mathcal A}H_a$ be drawn from a Pauli-invariant ensemble, let $P_0$ be a fixed Pauli observable, and let $O^{\mathrm{ex}}_H(t)$ and $\widetilde O_H(t)$ be the exact and Trotterized-and-truncated observables in \cref{eq:exact-approximate-random-H-observables}. If
\begin{equation}
    \mathbb E_H\norm{\Delta_H(t)}_2^2\leq\varepsilon^2,
    \label{eq:random-H-mean-square-HS-error}
\end{equation}
then, for every density operator $\rho$ fixed independently of $H$,
\begin{equation}
    \mathbb E_H
    \left|
        \Tr\!\left(\rho O^{\mathrm{ex}}_H(t)\right)
        -
        \Tr\!\left(\rho\widetilde O_H(t)\right)
    \right|^2
    \leq\varepsilon^2.
    \label{eq:random-H-expectation-value-bound}
\end{equation}
\end{theorem}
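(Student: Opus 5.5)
The plan is essentially the master computation \cref{eq:random-H-fixed-state-master-bound}, applied to the error operator $\Delta_H(t)=O^{\mathrm{ex}}_H(t)-\widetilde O_H(t)$. By linearity of the trace, the quantity to bound is $\mathbb E_H|\Tr(\rho\Delta_H(t))|^2$. I would expand $\rho$ in the Pauli basis as in \cref{eq:state-Pauli-expansion} and $\Delta_H(t)$ as in \cref{eq:random-error-Pauli-expansion}. This writes the error as $\sum_P r_P\delta_P(H,t)$ with $|r_P|\leq1$. Since $\rho$ is fixed independently of $H$, the coefficients $r_P$ are deterministic. Squaring and taking the expectation therefore gives $\sum_{P,Q}\overline{r_P}r_Q M_{P,Q}(t)$, using the second-moment matrix of \cref{eq:random-error-second-moment-matrix}. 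This is a finite sum, so exchanging expectation and summation is harmless.

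The key step is killing the off-diagonal terms $P\neq Q$, which I would get from \cref{lem:random-error-Pauli-orthogonality}. That lemma rests on two ingredients. The first is the deterministic twisted covariance \cref{lem:twisted-error-covariance}. That covariance in turn follows from \cref{lem:exact-Pauli-covariance} for the exact dynamics and \cref{lem:approximate-Pauli-covariance} for the Trotterized and truncated circuit, together with the oddness of the truncation maps; the oddness is what handles the sign $\chi_R(P_0)$ coming from $RP_0R=\sigma_RP_0$. The second ingredient is Pauli invariance of the ensemble in the sense of \cref{def:Pauli-invariant-Hamiltonian}. It lets me replace $H$ by $RHR$ inside the expectation without changing it, provided $\Delta_H$ is a measurable function of the tuple of local terms.

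Given these, \cref{eq:random-error-covariance-character} gives $M_{P,Q}=\chi_R(P)\chi_R(Q)M_{P,Q}$ for every $R$. For $P\neq Q$, choosing $R$ as in \cref{eq:Pauli-characters-separate} forces $M_{P,Q}=0$. The remaining diagonal terms are bounded using $|r_P|^2\leq1$:
\[
\sum_P|r_P|^2\,\mathbb E_H|\delta_P|^2\leq\mathbb E_H\sum_P|\delta_P|^2=\mathbb E_H\norm{\Delta_H(t)}_2^2\leq\varepsilon^2,
\]
where the equality is Parseval, \cref{eq:random-error-HS-coefficients}, and the last inequality is the hypothesis \cref{eq:random-H-mean-square-HS-error}.

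I expect the main point of care to be the invariance step. I need the whole simulation error, including the nonlinear truncations and the reuse of the same sampled terms throughout the circuit, to transform with a single character $\chi_R(P_0)\chi_R(P)$. This requires that truncation thresholds and circuit structure do not depend on $H$ beyond the local terms themselves, and that $\rho$ is chosen independently of $H$. I would state both assumptions explicitly. Everything else is bookkeeping already carried out in the preceding subsections.
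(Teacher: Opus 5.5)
Your proposal is correct and matches the paper's proof: the paper simply invokes the master bound \cref{eq:random-H-fixed-state-master-bound}. That bound is exactly your argument, namely the Pauli expansion of $\rho$, removal of the $P\neq Q$ terms via \cref{lem:random-error-Pauli-orthogonality} (built on \cref{lem:twisted-error-covariance} and Pauli invariance), then $|r_P|\leq1$, Parseval, and the hypothesis.
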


\begin{proof}
Equation~\eqref{eq:random-H-fixed-state-master-bound} bounds the left-hand side by $\mathbb E_H\norm{\Delta_H(t)}_2^2$, and the latter is at most $\varepsilon^2$ by assumption.
\end{proof}

A deterministic guarantee
\begin{equation}
    \norm{\Delta_H(t)}_2\leq\varepsilon
    \qquad
    \text{for every realization $H$}
    \label{eq:random-H-deterministic-HS-error}
\end{equation}
immediately implies \cref{eq:random-H-mean-square-HS-error}. Thus the deterministic simulation bounds proved earlier in the manuscript can be inserted directly into \cref{thm:random-H}.

By Cauchy--Schwarz, \cref{eq:random-H-expectation-value-bound} also gives
\begin{equation}
    \mathbb E_H
    \left|
        \Tr\!\left(\rho\Delta_H(t)\right)
    \right|
    \leq\varepsilon.
    \label{eq:random-H-mean-absolute-error}
\end{equation}
For every $0<\eta<1$, Markov's inequality applied to the squared error yields
\begin{equation}
    \Pr_H\!\left[
        \left|
            \Tr\!\left(\rho\Delta_H(t)\right)
        \right|
        \geq
        \frac{\varepsilon}{\sqrt{\eta}}
    \right]
    \leq\eta.
    \label{eq:random-H-high-probability-error}
\end{equation}
Equivalently, with probability at least $1-\eta$ over the Hamiltonian sample,
\begin{equation}
    \left|
        \Tr\!\left(\rho O^{\mathrm{ex}}_H(t)\right)
        -
        \Tr\!\left(\rho\widetilde O_H(t)\right)
    \right|
    \leq
    \frac{\varepsilon}{\sqrt{\eta}}.
    \label{eq:random-H-high-probability-operational}
\end{equation}

The quantifiers are important. The state $\rho$ must be fixed independently of the Hamiltonian sample. The result does not cover a state chosen adversarially after the realization of $H$ is known.

\medskip

We conclude this section by extending our guarantees also to symmetric product formulas and Richardson extrapolation.

\begin{corollary}[Symmetric formulas and Richardson extrapolation]
\label{cor:random-H-richardson}
The conclusion of \cref{thm:random-H} also holds for a symmetric product formula and for its coefficient-truncated version. It also holds for an extrapolated approximation
\[
    \widetilde{\mathcal V}^{[q]}_{H,t}(A)
    :=\sum_{j=1}^{q}\omega_j
      \widetilde{\mathcal V}_{j,H,t}(A),
    \qquad \sum_{j=1}^{q}\omega_j=1,
\]
provided the circuit choices, cutoffs, and real coefficients $\omega_j$ are unchanged under simultaneous Pauli conjugation of the local Hamiltonian terms. Here each $\widetilde{\mathcal V}_{j,H,t}$ is the full approximation for the $j$th circuit, with truncation omitted for the quantum construction.
\end{corollary}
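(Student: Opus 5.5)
The plan is to reduce the corollary to the single identity \eqref{eq:random-H-fixed-state-master-bound}. That identity only used two ingredients: the twisted covariance $\Delta_{RHR}(t)=\chi_R(P_0)R\Delta_H(t)R$ of \cref{lem:twisted-error-covariance}, and Pauli invariance of the ensemble. Once twisted covariance holds for the new error operator, \cref{lem:random-error-Pauli-orthogonality} and the proof of \cref{thm:random-H} apply verbatim. So the task is to re-establish \cref{lem:twisted-error-covariance} for each of the three approximations: the symmetric formula, its truncated version, and the Richardson combination.

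For the symmetric product formula I would rerun \cref{supp:operational-approximate-covariance} with a different elementary map. Each local operation becomes $\mathcal Q_{a,s}=\e^{s\mathcal D_a/2}\e^{s\mathcal H_a}\e^{s\mathcal D_a/2}$, taken in the palindromic order of $S_h$.
\begin{itemize}
\item Each factor is covariant. The dissipative halves are covariant by \cref{eq:Pauli-covariant-block-dissipator}, applied with duration $s/2$. The coherent factor is covariant by \cref{eq:random-H-local-unitary-covariance}.
\item So $\mathcal Q_{a,s,RHR}(RAR)=R\,\mathcal Q_{a,s,H}(A)R$.
\item In the truncated version, truncation is applied after complete colour maps rather than after each local map. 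This is harmless: $\mathcal C_{w,\tau}$ is Pauli covariant and odd by \cref{eq:joint-truncation-equivariance,eq:truncation-sign-equivariance}.
\end{itemize}
The induction of \cref{lem:approximate-Pauli-covariance} therefore gives covariance and oddness of the full circuit map $\widetilde{\mathcal V}_{j,H,t}$. The exact map is covariant by \cref{lem:exact-Pauli-covariance}. Applying both to $P_0=R(\sigma_R P_0)R$ then gives \eqref{eq:twisted-error-covariance} for each circuit, with $\sigma_R=\chi_R(P_0)$.

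For the extrapolated approximation, I would use $\sum_j\omega_j=1$ to rewrite the error as a weighted sum of per-circuit errors:
\[
\Delta^{[q]}_H(t)=\sum_{j=1}^q\omega_j\bigl(\mathcal A_{H,t}(P_0)-\widetilde{\mathcal V}_{j,H,t}(P_0)\bigr).
\]
The remaining work is the following chain.
\begin{itemize}
\item Each summand obeys the twisted law, provided the circuit $j$ used for $RHR$ is the same as for $H$.
\item The hypothesis guarantees exactly this: the step counts $N_j$, the cutoffs $(w,\tau)$ and the weights $\omega_j$ are unchanged under the conjugation $H\mapsto RHR$.
\item Because the $\omega_j$ are real and $R$-independent, they pass through $R(\cdot)R$.
\item Hence $\Delta^{[q]}_{RHR}(t)=\chi_R(P_0)R\,\Delta^{[q]}_H(t)R$.
\item \cref{lem:random-error-Pauli-orthogonality} and \eqref{eq:random-H-fixed-state-master-bound} then give $\mathbb E_H|\Tr(\rho\Delta^{[q]}_H)|^2\le\mathbb E_H\norm{\Delta^{[q]}_H}_2^2$.
\item Finally, insert the deterministic bounds of \cref{thm:rich-lind-maximum-depth} and of \cref{supp:proof-main-classical} realization by realization.
\end{itemize}

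The main obstacle is the invariance of the data. In the paper, $N_j$, $\omega_j$ and the cutoffs are chosen from bounds such as $J$, $\gamma$, $\bar\gamma$, $B_p$ and $B_\theta$. If any of these were instead chosen from the sampled $H$ (for instance from the realized $|J_a|$), I need them to be functions invariant under sign flips of the couplings. I would check that every such choice depends only on $|J_a|$, $\lambda_a$ and the supports, all of which are preserved by Pauli conjugation. This is where the stated proviso is used.
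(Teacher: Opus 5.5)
Your proposal is correct and follows essentially the same route as the paper: covariance of each symmetric local update and covariance and oddness of coefficient truncation give covariance and oddness of every circuit map, a fixed real combination preserves both, and hence $\Delta^{[q]}_{RHR}(t)=\chi_R(P_0)R\Delta^{[q]}_H(t)R$, after which the orthogonality argument and \eqref{eq:random-H-fixed-state-master-bound} apply unchanged. The only difference is cosmetic: you use $\sum_j\omega_j=1$ to split the error into per-circuit errors, which is harmless but not needed, since the twisted law already follows from covariance and oddness of the combined map together with covariance of the exact evolution.
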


\begin{proof}
Each complete symmetric update is covariant, as shown above, and every coefficient truncation is covariant and odd. Hence each circuit approximation has these properties. A fixed real linear combination preserves both properties. For the input Pauli $P_0$, the total extrapolation error therefore obeys
\[
    \Delta^{[q]}_{RHR}(t)
    =\chi_R(P_0)R\Delta^{[q]}_H(t)R.
\]
The coefficient-orthogonality argument and \cref{eq:random-H-fixed-state-master-bound} apply without change.
\end{proof}

For a general fixed Hermitian observable $O=\sum_P o_PP$, a fixed-state guarantee follows by simulating each Pauli term separately. Set $B_1:=\sum_P|o_P|$ and suppose $B_1>0$. If the approximation to each $P$ has mean-square Pauli-$2$ error at most $(\epsilon/B_1)^2$, the theorem and the triangle inequality in $L_2$ of the Hamiltonian ensemble give
\begin{align*}
    &\left(\mathbb E_H\left|
      \Tr\!\left[\rho\left(\e^{t\mathcal L_H}(O)
          -\sum_Po_P\widetilde O_{P,H}(t)\right)\right]
       \right|^2\right)^{1/2}\\
    &\quad\leq\sum_P|o_P|
       \left(\mathbb E_H\left|
          \Tr\!\left[\rho\left(\e^{t\mathcal L_H}(P)
             -\widetilde O_{P,H}(t)\right)\right]
       \right|^2\right)^{1/2}
    \leq\epsilon.
\end{align*}
The case $B_1=0$ is trivial. If $O$ is supported on at most $w_0$ qubits, there are at most $4^{w_0}$ terms and $B_1\leq2^{w_0}\norm{O}_2$. Thus the overhead is independent of system size at fixed $w_0$. This construction uses separate Pauli simulations; it does not assert the same fixed-state bound for a single nonlinear truncated simulation of an arbitrary $O$.

\suppsection{Numerical details}
\label{supp:numerics}

We use the magnetization and Hamiltonian defined in the main text, with $J=1$, $J_z=-1.8$, and distance exponent $\alpha=3$. The size scan covers $n=3,\ldots,12$ at $T=0.3$ and $\gamma\in\{0,0.05,0.2,0.5,1,1.5,2,3,4\}$. The qubits occupy the first $n$ sites, filled row by row, of rectangles $2\times2$ for $n=3,4$, $3\times2$ for $n=5,6$, $3\times3$ for $n=7,8,9$, and $3\times4$ for $n=10,11,12$. All patches have open boundaries and unit lattice spacing. Couplings use Euclidean distances between every pair, without a size-dependent rescaling. The separate $7\times7$ dynamics runs use step size $\delta=3/320$ and are shown up to $t=0.9$.

For the size scan, we integrate $\dot M_x=i[H,M_x]+\cD_\gamma(M_x)$ with \texttt{QuantumToolbox.jl}. Local depolarization uses jump operators $L_{i,\mu}=\sqrt{\gamma/4}\,\sigma_i^\mu$, giving $\cD_\gamma(P)=-\gamma\abs{P}P$. Relative and absolute solver tolerances are $10^{-10}$ and $10^{-12}$. We evaluate the total error $\varepsilon_{\mathrm{tot}}(N,\tau)$ defined in \cref{eq:numerical-total-error} from the Pauli coefficients, using this continuous solution as the reference. For $\tau=0$, the same measure $\varepsilon_{\mathrm{tot}}(r,0)$ is the untruncated Trotter error. Classical runs are selected using $\varepsilon_{\mathrm{tot}}(N,\tau)\leq0.01$ directly.

Group the Hamiltonian into its $XX$, $YY$, and $ZZ$ blocks, $H=H_x+H_y+H_z$, with $\mathcal H_\mu(A)=i[H_\mu,A]$. For $\delta=T/N$, one first-order Trotter step is
\begin{equation}
    \Phi_\delta=
        \e^{\delta\cD_\gamma}
        \e^{\delta\mathcal H_z}
        \e^{\delta\mathcal H_y}
        \e^{\delta\mathcal H_x},
    \qquad
    \widetilde M_{x;N,0}(T)=\Phi_\delta^N(M_x(0)).
    \label{eq:numerical-trotter-step}
\end{equation}
The noise layer multiplies a Pauli of weight $w$ by $\e^{-\gamma w\delta}$, so refining the step size preserves the physical noise rate. PP discards coefficients with magnitude below $\tau$ after individual gates, keeping a fixed gate order. No additional Pauli-weight cutoff or Richardson extrapolation is used.

A Trotter step contains several hardware layers. Within each block, all gates commute, but simultaneous gates act on disjoint qubits. Assuming all-to-all connectivity and native two-qubit Pauli rotations, the pairs can be scheduled in $n-1$ layers for even $n$ and $n$ layers for odd $n$. For $\gamma>0$, we also count one parallel layer of single-qubit depolarizing channels per Trotter step. The total depth is therefore
\begin{equation}
    \begin{aligned}
        D(N,n,\gamma)&=N\bigl[3\ell(n)+\mathbf{1}_{\{\gamma>0\}}\bigr],\\
        \ell(n)&=\begin{cases}
            n-1,&n\text{ even},\\
            n,&n\text{ odd}.
        \end{cases}
    \end{aligned}
    \label{eq:numerical-circuit-depth}
\end{equation}
For simplicity, we exclude routing, channel synthesis, state preparation, and measurement. For the main figure, we interpolate the two untruncated-error measurements bracketing $0.01$ on logarithmic axes, round the inferred step count to the nearest integer, and apply this conversion.

We fit $\log D=a_D+b_D\log n$ and $\log t_{\mathrm{PP}}=a_{\mathrm{PP}}+b_{\mathrm{PP}}\log n$ by ordinary least squares over all ten sizes $n=3,\ldots,12$, separately for each noise strength displayed in \cref{fig:relative_resource}. Depth includes the noise layers in Eq.~\eqref{eq:numerical-circuit-depth}. The fitted exponents are
\begin{equation}
\begin{aligned}
\gamma&=(0,\,0.05,\,0.2,\,0.5,\,1,\,2,\,3,\,4),\\
b_D&=(1.73,\,1.64,\,1.63,\,1.56,\,1.47,\,1.31,\,1.20,\,1.15),\\
b_{\mathrm{PP}}&=(9.70,\,9.65,\,9.55,\,9.53,\,9.25,\,8.52,\,8.08,\,7.31).
\end{aligned}
\label{eq:resource-fit-exponents}
\end{equation}
PP times are medians of three propagations using 12 Julia threads on EPFL Jed CPUs. At $n=12$ and $\gamma=0.05$, the depth in \cref{fig:relative_resource} includes 2277 coherent layers and 69 noise layers. Applying the $55\,\mathrm{ms}$ layer duration reported for the 98-qubit Helios benchmark gives an illustrative coherent-evolution time of $125\,\mathrm{s}$, with the noise layers contributing additional time~\cite{ransford2025helios}.

\suppsection{Comparison to other algorithms}
\label{supp:algorithm-comparison}
We devote this section to comparing the performance of different algorithms to the two algorithms proposed here. 

\subsection{Quantum algorithm}\label{sec:quantum_compar}

Comparing the performance of different quantum algorithms is difficult, as it depends on intricate details of the block-encoding oracles and the implementation of the Lindblad operators, among other things. Here we make our best effort to compare different methods fairly by estimating their gate counts. We denote by $G$ the maximum gate count among the circuits used by an algorithm and report $G/m$ as a normalized gate count, where $m$ is the number of local blocks. Compiling these gates into a fixed universal gate set can introduce additional logarithmic factors.

\begin{table}[ht]
\centering
\footnotesize
\setlength{\tabcolsep}{1.5pt}
\renewcommand{\arraystretch}{1.4}
\caption{\textbf{Total gate count divided by the number of local blocks $m$.} The entries give coarse upper bounds for one circuit at fixed $k,d,w,J,\gamma,\bar\gamma$ in the large-time regime.}
\begin{tabular}{*{8}{p{\dimexpr(\textwidth-24pt)/8\relax}}}
\hline\hline
 \raggedright \textbf{This work}\newline most states, local $\mathcal L$, Richardson extrapolation
 & \raggedright \textbf{Ref.~\cite{wang2026lindbladian}}\newline worst case, Richardson extrapolation, local $\mathcal L$
 & \raggedright \textbf{Ref.~\cite{barthel2012quasilocality}}\newline worst case, light cone, lattice $\mathcal L$, local observables
 & \raggedright \textbf{Ref.~\cite{wang2026query,chen2026query}}\newline worst case, block encoding
 & \raggedright \textbf{Ref.~\cite{chen2025randomized}}\newline averaged channel, qDRIFT
 & \raggedright \textbf{Ref.~\cite{kato2026exponentially}}\newline worst case, randomised Taylor series
 & \raggedright \textbf{Ref.~\cite{ding2024simulating,ding2024single}}\newline state error, Stinespring dilation
 & \raggedright \textbf{Ref.~\cite{cleve2016efficient}}\newline worst case, LCU
 \tabularnewline
\hline
 \raggedright $\mathrm{poly}(t,\allowbreak \log(1/\varepsilon))$
 & \raggedright $\mathrm{poly}(t,\allowbreak m,\allowbreak \log(1/\varepsilon))$
 & \raggedright $t^{O(\kappa)}\,\allowbreak O(1/\varepsilon)$
 & \raggedright $ O(mt\allowbreak+\log(1/\varepsilon))$
 & \raggedright $\mathrm{poly}(m,\allowbreak t,\allowbreak 1/\varepsilon)$
 & \raggedright $\mathrm{poly}(m,\allowbreak t)\,\allowbreak \times\log(1/\varepsilon)$
 & \raggedright $\mathrm{poly}(m,t,1/\varepsilon)$
 & \raggedright $\mathrm{poly}(m,\allowbreak t,\allowbreak \log(1/\varepsilon))$
 \tabularnewline
\hline\hline
\end{tabular}
\label{tab:gate-quant-comparison}
\end{table}

Table~\ref{tab:gate-quant-comparison} presents our best effort at an honest comparison across different methods, which may also require multiple runs, ancilla qubits, and other resources. We focus on the works that align most closely with our algorithm. However, several other algorithms simulate Lindbladian or Hamiltonian evolution~\cite{low2019well,rendon2024improved,watson2024randomly,watson2025exponentially}. The methods included in the table are described below. Ref.~\cite{wang2026lindbladian} uses a Trotter formula with Richardson extrapolation and analyses nested-commutator bounds. Ref.~\cite{barthel2012quasilocality} uses a Lieb--Robinson light cone: only the terms inside the light cone of the observable are Trotterised. This requires a lattice of finite dimension $\kappa$, meaning that the number of sites within distance $r$ grows at most as $r^{\kappa}$ (a chain has $\kappa=1$, a square lattice $\kappa=2$). Refs.~\cite{wang2026query,chen2026query} use transducers on block encodings. Ref.~\cite{chen2025randomized} uses qDRIFT: at each step, a single Lindbladian term is sampled at random, with probability proportional to its strength, and applied. Ref.~\cite{kato2026exponentially} uses randomised Taylor series. A truncated Taylor expansion of $e^{t\mathcal L}$ is compiled into randomly sampled circuits, whose outcomes are averaged to estimate expectation values. Refs.~\cite{ding2024simulating,ding2024single} use Stinespring dilation with reset. Each step consists of Hamiltonian evolution on the system and ancillas, followed by resetting the ancillas to reproduce the dissipation. Ref.~\cite{cleve2016efficient} uses linear combination of unitaries (LCU) for channels. The channel for a short time step is written as a linear combination of unitaries and implemented coherently using oblivious amplitude amplification.

The main difference between these results and the work of this manuscript lies in the scaling with system size. For fixed local parameters, our circuit depth and normalized gate count $G/m$ are independent of the number of qubits, while $m$ may still depend on it. This improvement comes from using dissipation to suppress high-weight Pauli components, which allows larger Trotter steps. We also do not require a finite-dimensional lattice, as in the light-cone approach of Ref.~\cite{barthel2012quasilocality}, but only bounded locality and degree for the interaction structure. These advantages, however, come with restrictions: we consider initially local observables and Pauli-diagonal noise that damps every nonidentity Pauli mode on each local block, whereas several of the other methods cover more general Lindbladians. Crucially, our expectation-value guarantees hold for most states, while several of the other methods provide guarantees for all input states.

\subsection{Classical simulation algorithm}\label{sec:classical_compar}
Classical algorithms are easier to compare, in the sense that they share a crude common metric: runtime. However, the dependence on different parameters and the assumptions behind each result can still make the comparison difficult.

\begin{table*}[ht]
\centering
\caption{\textbf{Classical runtime.} Classical algorithms for open-system and Hamiltonian dynamics, compared by their runtime for estimating a local expectation value $f$ with error $\varepsilon$. We fix $k,d,w_0,J$ whenever these parameters apply.}
\label{tab:classical-comparison}
\scriptsize
\setlength{\tabcolsep}{1.5pt}
\renewcommand{\arraystretch}{1.2}
\begin{tabular}{p{\dimexpr0.18\textwidth-2\tabcolsep\relax}p{\dimexpr0.18\textwidth-2\tabcolsep\relax}p{\dimexpr0.24\textwidth-2\tabcolsep\relax}p{\dimexpr0.16\textwidth-2\tabcolsep\relax}p{\dimexpr0.24\textwidth-2\tabcolsep\relax}}
\hline\hline
 \raggedright \textbf{This work}\newline continuous time, most states, Pauli propagation, weak damping
 & \raggedright \textbf{Ref.~\cite{schuster2024polynomial}}\newline noisy circuits, deterministic, most states, Pauli paths and propagation
 & \raggedright \textbf{Ref.~\cite{barthel2012quasilocality}}\newline continuous time, worst case, light cone, lattice $\mathcal L$, local observables
 & \raggedright \textbf{Ref.~\cite{wild2023classical}}\newline continuous time, worst case over product states, cluster expansion, noiseless $H$
 & \raggedright \textbf{Ref.~\cite{xu2026classical}}\newline continuous (short) time, entangled states, low-weight Pauli dynamics, noiseless $H$
 \tabularnewline
\hline
 \raggedright $\left({t}/{\varepsilon}\right)^{O_k(J/\gamma)}$
 & \raggedright $(e/\varepsilon)^{ O({\rm poly}(1/(\gamma\delta)))}$
 & \raggedright $e^{O({\rm poly}(\overline\gamma,t, \log^\kappa(1/\varepsilon)))}$
 & \raggedright $(e/\varepsilon)^{e^{O(t)}}$
 & \raggedright $O\bigl({\rm poly}(t,1/\varepsilon)n^{O(\log(t/\varepsilon)/\log1/t)}\bigr)$
 \tabularnewline
\hline\hline
\end{tabular}
\end{table*}

As in the previous section, Table~\ref{tab:classical-comparison} summarizes the main differences between our method and the works most directly comparable to it. Ref.~\cite{schuster2024polynomial} treats uniform noise in circuits of depth $L$. Its single-qubit depolarising factor is $e^{-\gamma\delta}$, where $\delta$ is the noise duration per layer. For our local observable, the number of initial Pauli terms is $O(1)$. Ref.~\cite{barthel2012quasilocality} restricts evolution to a Lieb--Robinson neighbourhood on a finite-range lattice with volume growth $R^\kappa$. Ref.~\cite{wild2023classical} uses a cluster expansion and analytic continuation for a bounded-degree Hamiltonian. Ref.~\cite{xu2026classical} truncates high-weight Pauli operators for $t<t_0$ and requires the quantitative entanglement condition. Its entry uses a first-order product formula.

One of the main differences is that our bounds justify coefficient truncation after merging contributions to the same Pauli operator. This allows us to discard small combined coefficients, alongside high-weight terms, while controlling the accumulated error. We can therefore exploit cancellations between contributions without tracking every Pauli path separately. Our analysis also applies as the Trotter step shrinks and the noise per step vanishes, since it compares the coherent and dissipative rates directly. In addition, we do not require noise on every qubit at every layer. Noise is paired with the active local blocks, and idle qubits can remain noiseless in the gate-noise setting. Ref.~\cite{schuster2024polynomial} also considers gate noise, but its general bound in this setting is quasipolynomial, while ours is polynomial for fixed gate and noise parameters.

Our continuous-time bound does not require a finite-dimensional lattice or a short-time restriction, and its dependence on $t$ and $1/\varepsilon$ is polynomial at fixed positive damping and other local parameters. These advantages rely on Pauli-diagonal noise that damps every nonidentity Pauli mode on each active local block, as well as efficient access to the required initial-state Pauli expectations. As in Ref.~\cite{schuster2024polynomial}, our expectation-value guarantees hold for most states.

Beyond the results in Table~\ref{tab:classical-comparison} several other works study the simulation of Lindblad dynamics under or noisy circuits~\cite{donvil2022quantum,sander2025large,barthel2012quasilocality,wild2023classical,gao2018efficient,noh2020efficient,fontana2023classical,aharonov2023polynomial,schuster2024polynomial,gonzalez2024pauli,angrisani2025simulating, martinez2025efficient}.

\end{document}